\pgfplotsset{compat = newest}
\DeclareRobustCommand{\SkipTocEntry}[5]{}
\DeclareMathOperator{\vect}{span}
\DeclareMathOperator{\Ker}{Ker}
\begin{document}

%%%%%%%%%%%%%%%%%%%%%%%%%%%%%%%%%%%%%%%%%%%%%%%%%%%%
%%%%%%%%%%%%%%%%%%%%%%%%%%%%%%%%%%%%%%%%%%%%%%%%%%%%
\newcommand{\R}{\mathbb{R}}
\newcommand{\Z}{\mathbb{Z}}
\newcommand{\C}{\mathbb{C}}
\newcommand{\N}{\mathbb{N}}
\newcommand\K{\mathbb{K}}
\newcommand{\dd}{\,\textrm{d}}
\newcommand{\wto}{\rightharpoonup}
\newcommand{\pscal}[1]{\ensuremath{\left\langle #1 \right\rangle}}
\newcommand{\pscalSM}[1]{\ensuremath{\langle #1 \rangle}}
\newcommand{\norm}[1]{\left\vert\kern-0.25ex\left\vert #1 \right\vert\kern-0.25ex\right\vert}
\newcommand{\normSM}[1]{\vert\kern-0.25ex\vert #1 \vert\kern-0.25ex\vert}
\newcommand{\normt}[1]{\left\vert\kern-0.25ex\left\vert\kern-0.25ex\left\vert #1 \right\vert\kern-0.25ex\right\vert\kern-0.25ex\right\vert}
\renewcommand{\phi}{\varphi}
\renewcommand{\epsilon}{\varepsilon}
\renewcommand{\tau}{\uptau} %Needs package "upgreek"

\numberwithin{equation}{section}

%%%%%%%%%%%%%%%%%%%%%%%%%%
%%%%%%%%%% Categories %%%%%%%%%%
%%%%%%%%%%%%%%%%%%%%%%%%%%
\newtheorem{thm}{Theorem}
\newtheorem{prop}[thm]{Proposition}
\newtheorem*{prop*}{Proposition}
\newtheorem{lemme}[thm]{Lemma}
\newtheorem{cor}[thm]{Corollary}
\newtheorem{mydef}[thm]{Definition}
\newtheorem{rmq}[thm]{Remark}
\newtheorem*{rmq*}{Remark}
\newtheorem{cjt}[thm]{Conjecture}

\newcommand{\clm}[1]{\hyperref[#1]{Lemma~\ref*{#1}}}
\newcommand{\cth}[1]{\hyperref[#1]{Theorem~\ref*{#1}}}
\newcommand{\cpr}[1]{\hyperref[#1]{Proposition~\ref*{#1}}}
\newcommand{\ccr}[1]{\hyperref[#1]{Corollary~\ref*{#1}}}
\newcommand{\cfg}[1]{\hyperref[#1]{Figure~\ref*{#1}}}
\newcommand{\cAn}[1]{\hyperref[#1]{Annexe~\ref*{#1}}}
\newcommand{\crm}[1]{\hyperref[#1]{Remark~\ref*{#1}}}
\newcommand{\ccjt}[1]{\hyperref[#1]{Conjecture~\ref*{#1}}}
%%%%%%%%%%%%%%%%%%%%%%%%%%
%%%%%%%%%%%%%%%%%%%%%%%%%%
%%%%%%%%%%%%%%%%%%%%%%%%%%

%%%%%%%%%%%%%%%%%%%%%%%%%%%%%%%%%%%%%%%%%%%%%%%%%%%%
%%%%%%%%%%%%%%%%%%%%%%%%%%%%%%%%%%%%%%%%%%%%%%%%%%%%

\title[Symmetry breaking in the periodic TFDW model]{Symmetry breaking in the periodic Thomas--Fermi--Dirac--von~Weizsäcker model}

\author[J. Ricaud]{Julien RICAUD}
\address{CNRS, Université de Cergy-Pontoise, Département de Mathématiques, 95000 Cergy\nobreakdash-Pontoise, France}
\address{CNRS, Ceremade, Université Paris-Dauphine, PSL Research University, 75016 Paris, France}
\email{julien.ricaud@u-cergy.fr}

\thanks{The author is grateful to Mathieu Lewin for helpful discussions and advices, and to Enno Lenzmann for bringing our attention to the facts mentioned in the first remark after \ccjt{R3_eff_model_conjecture_uniqueness_and_monotony_M}. The author acknowledges financial support from the European Research Council under the European Community's Seventh Framework Program (FP7/2007-2013 Grant Agreement MNIQS 258023).}

\date{\today}

\begin{abstract}
We consider the Thomas--Fermi--Dirac--von~Weizsäcker model for a system composed of infinitely many nuclei placed on a periodic lattice and electrons with a periodic density. We prove that if the Dirac constant is small enough, the electrons have the same periodicity as the nuclei. On the other hand if the Dirac constant is large enough, the 2-periodic electronic minimizer is not 1-periodic, hence symmetry breaking occurs. We analyze in detail the behavior of the electrons when the Dirac constant tends to infinity and show that the electrons all concentrate around exactly one of the 8 nuclei of the unit cell of size 2, which is the explanation of the breaking of symmetry. Zooming at this point, the electronic density solves an effective nonlinear Schr\"odinger equation in the whole space with nonlinearity $u^{7/3}-u^{4/3}$. Our results rely on the analysis of this nonlinear equation, in particular on the uniqueness and non-degeneracy of positive solutions.
\end{abstract}

\maketitle

\tableofcontents

\section{Introduction}
Symmetry breaking is a fundamental question in Physics which is largely discussed in the literature. In this paper, we consider the particular case of electrons in a periodic arrangement of nuclei. We assume that we have classical nuclei located on a 3D periodic lattice and we ask whether the quantum electrons will have the symmetry of this lattice. We study this question for the Thomas--Fermi--Dirac--von~Weizsäcker (TFDW) model which is the most famous non-convex model occurring in Orbital-free Density Functional Theory. In short, the energy of this model takes the form
\begin{equation}\label{introdution_energy_formulae}
\int_\K{|\nabla \sqrt{\rho}|^2}+\frac35c_{TF}\int_\K{\rho^{\frac53}}-\frac34c\int_\K{\rho^{\frac43}}+\frac12\int_\K{(G\star\rho)\rho}-\int_\K{G \rho},
\end{equation}
where $\K$ is the unit cell, $\rho$ is the density of the electrons and $G$ is the periodic Coulomb potential. The non-convexity is (only) due to the term $-\frac34 c \int{\rho^{\frac43}}$. We refer to~\cite{GraSol-94, Friesecke-97, BokMau-99, BokGreMau-03, Seiringer-06} for a derivation of models of this type in various settings.

We study the question of symmetry breaking with respect to the parameter $c>0$. In this paper, we prove for $c>0$ that:
\begin{itemize}[label=$\bullet$]
	\item if $c$ is small enough, then the density $\rho$ of the electrons is unique and has the same periodicity as the nuclei, that is, there is no symmetry breaking;
	\item if $c$ is large enough, then there exist $2$-periodic arrangements of the electrons which have an energy that is lower than any $1$-periodic arrangement, that is, there is symmetry breaking.
\end{itemize}

Our method for proving the above two results is perturbative and does not provide any quantitative bound on the value of $c$ in the  two regimes. For small $c$ we perturb around $c=0$ and use the uniqueness and non degeneracy of the TFW minimizer, which comes from the strict convexity of the associated  functional. This is very similar in spirit to a result by Le Bris \cite{LeBris-93} in the whole space.

The main novelty of the paper, is the regime of large $c$. The $\rho^{\frac43}$ term in~\eqref{introdution_energy_formulae} favours concentration and we will prove that the electronic density concentrates at some points in the unit cell $\K$ in the limit $c\to\infty$ (it converges weakly to a sum of Dirac deltas). Zooming around one point of concentration at the scale $1/c$ we get a simple effective model posed on the whole space $\R^3$ where all the Coulomb terms have disappeared. The effective minimization problem is of NLS-type with two subcritical power nonlinearities:
\begin{equation}\label{introdution_R3_eff_model_minimization}
J_{\R^3}(\lambda)=\inf\limits_{\substack{v\in H^1(\R^3) \\ \norm{v}^2_{L^2(\R^3)}=\lambda}} \left\{\int_{\R^3}{|\nabla v|^2}+\frac35c_{TF}\int_{\R^3}{|v|^{\frac{10}3}}-\frac34\int_{\R^3}{|v|^{\frac83}} \right\}.
\end{equation}
The main argument is that it is favourable to put all the mass of the unit cell at one concentration point, due to the strict binding inequality 
$$J_{\R^3}(\lambda)< J_{\R^3}(\lambda')+J_{\R^3}(\lambda-\lambda')$$
that we prove in Section~\ref{section_R3_eff_model_existence}. Hence for the $2$-periodic problem, when $c$ is very large the $8$ electrons of the double unit cell prefer to concentrate at only one point of mass $8$, instead of $8$ points of mass $1$. This is the origin of the symmetry breaking for large $c$. Of course the exact same argument works for a union of $n^3$ unit cells.

Let us remark that the uniqueness of minimizers for the effective model $J_{\R^3}(\lambda)$ in~\eqref{introdution_R3_eff_model_minimization} is an open problem that we discuss in Section~\ref{Section_effective_model_R3}. We can however prove that any nonnegative solution of the corresponding nonlinear equation
$$-\Delta Q_\mu+c_{TF}{Q_\mu}^{\frac73}-{Q_\mu}^{\frac53}=-\mu Q_\mu$$
is unique and nondegenerate (up to translations). We conjecture (but are unable to prove) that the mass $\int {Q_\mu}^2$ is an increasing function of $\mu$. This would imply uniqueness of minimizers and is strongly supported by numerical simulations. Under this conjecture we can prove that there are exactly $8$ minimizers for $c$ large enough, which are obtained one from each other by applying $1$-translations.

The TFDW model studied in this paper is a very simple spinless empirical theory which approximates the true many-particle Schrödinger problem. The term $-\frac34 c \int\limits{\rho^{\frac43}}$ is an approximation to the~\emph{exchange-correlation energy} and $c$ is only determined on empirical grounds. The exchange part was computed by Dirac \cite{Dirac-30b} in 1930 using an infinite non-interacting Fermi gas leading to the value $c_D:=\sqrt[3]{6q^{-1}\pi^{-1}}$, where $q$ is the number of spin states. For the spinless model (i.e. $q=1$) that we are studying, this gives $\frac34c_D\approx0.93$, which is the constant generally appearing in the literature. It is natural to use a constant $c>c_D$ in order to account for correlation effects. On the other hand, the famous Lieb-Oxford inequality~\cite{Lieb-79, LieOxf-80, ChaHan-99, LieSei-10} suggests to take $\frac34 c_D\leq 1.64$. It has been argued in~\cite{Perdew-91,PerWan-92,LevPer-93} that for the classical interacting uniform electron gas one should use the value $\frac34 c\approx 1.44$ which is the energy of Jellium in the body-centered cubic (BCC) Wigner crystal and is implemented in the most famous Kohn-Sham functionals~\cite{PerBurErn-96}. However, this has recently been questioned in~\cite{LewLie-14} by Lewin and Lieb. In any case, all physically reasonable choices lead to $\frac34c$ of the order of $1$.

We have run some numerical simulations presented later in Section~\ref{num_sim}, using nuclei of (pseudo) charge $Z=1$ on a BCC lattice of side-length $4$\AA. We found that symmetry breaking occurs at about $\frac34 c\approx 2.48$. More precisely, the $2$-periodic ground state was found to be $1$-periodic if $\frac34c\lesssim2.474$ but really $2$-periodic for $\frac34c\gtrsim2.482$. The numerical value $\frac34 c\approx 2.48$ obtained as critical constant in our numerical simulations is above the usual values chosen in the literature. However, it is of the same order of magnitude and this critical constant could be closer to $1$ for other periodic arrangements of nuclei.

There exist various works on the TFDW model for $N$ electrons on the whole space $\R^3$. For example, Le Bris proved in~\cite{LeBris-93} that there exists $\epsilon>0$ such that minimizers exist for $N<Z+\epsilon$, improving the result for $N\leq Z$ by Lions~\cite{Lions-87}. It is also proved in~\cite{LeBris-93} that minimizers are unique for $c$ small enough if $N\leq Z$. Non existence if $N$ is large enough and $Z$ small enough has been proved by Nam and Van~Den~Bosch in~\cite{NamVDB-17}.

On the other hand, symmetry breaking has been studied in many situations. For discrete models on lattices, the instability of solutions having the same periodicity as the lattice is proved in~\cite{Frohlich-54b, Peierls} while~\cite{KenLie-86, Lieb-86, KenLie-87, LieNac-95, LieNac-95b, LieNac-95c, FraLie-11, GarSer-12} prove for different models (and different dimensions) that the solutions have a different periodicity than the lattice. On finite domains and at zero temperature, symmetry breaking is proved in~\cite{ProNor-01} for a one dimensional gas on a circle of finite length and in~\cite{Prodan-05} on toruses and spheres in dimension $d\leq3$. On the whole space $\R^3$, symmetry breaking is proved in \cite{BelGhi-16}, namely, the minimizers are not radial for $N$ large enough.

The paper is organized as follows. We present our main results for the periodic TFDW model and for the effective model, together with our numerical simulations, in Section~\ref{section_main_results}. In Section~\ref{section_R3_eff_model}, we study the effective model $J_{\R^3}(\lambda)$ on the whole space. Then, in Section~\ref{section_small_c}, we prove our results for the regime of small $c$. Finally, we prove the symmetry breaking in the regime of large $c$ in Section~\ref{section_large_c}.

\section{Main results}\label{section_main_results}
For simplicity, we restrict ourselves to the case of a cubic lattice with one atom of charge $Z=1$ at the center of each unit cell. We denote by ${\mathscr L}_\K$ our lattice which is based on the natural basis and its unit cell is the cube $\K:=\left[-\frac12;\frac12\right)^3$, which contains one atom of charge $Z=1$ at the position $R=0$. The Thomas--Fermi--Dirac--von~Weizsäcker model we are studying is then the functional energy
\begin{equation}\label{K_NRJ_u}
\mathscr E_{\K,c}(w)=\int_\K{|\nabla w|^2}+\frac35c_{TF}\int_\K{|w|^{\frac{10}3}}-\frac34c\int_\K{|w|^{\frac83}}+\frac12D_\K(|w|^2,|w|^2)-\int_\K{G_\K |w|^2},
\end{equation}
on the unit cell $\K$. Here
$$D_\K(f,g)=\int_\K{\int_\K{f(x)G_\K(x-y)g(y)\dd y}\dd x},$$
where $G_\K$ is the $\K$-periodic Coulomb potential which satisfies
\begin{equation}\label{Laplacian_G_K_equation}
-\Delta G_\K = 4\pi\left(\sum\limits_{k\in{\mathscr L}_\K} \delta_k -1\right)
\end{equation}
and is uniquely defined up to a constant that we fix by imposing $\min\limits_{x\in \K} G_\K(x) = 0$.

We are interested in the behavior when $c$ varies of the minimization problem
\begin{equation}\label{K_complete_model_minimization}
E_{\K,\lambda}(c)=\inf\limits_{\substack{w\in H^1_{\textrm{per}}(\K) \\ \norm{w}_{L^2(\K)}^2=\lambda}}\mathscr E_{\K,c}(w),
\end{equation}
where the subscript \emph{per} stands for $\K$-periodic boundary conditions. We want to emphasize that even if the true $\K$-periodic TFDW model requires that $\lambda=Z$ (see~\cite{CatBriLio-98b}), we study it for any $\lambda$ in this paper.

Finally, for any $N\in\N\setminus\{0\}$, we denote by $N\cdot\K$ the union of $N^3$ cubes $\K$ forming the cube $N\cdot\K=\left[-\frac{N}2;\frac{N}2\right)^3$. The $N^3$ charges are then located at the positions
$$\{R_j\}_{1\leq j\leq N^3}\subset\left\{\left.\left(n_1-\frac{N+1}2,n_2-\frac{N+1}2,n_3-\frac{N+1}2\right) \right| n_i\in \N\cap [1;N]\right\}.$$

\subsection{Symmetry breaking}\label{section_symmetry_breaking}
The main results presented in this paper are the two following theorems.

\begin{thm}[Uniqueness for small $c$]\label{main_result_2}
Let $\K$ be the unit cube and $c_{TF},\lambda$ be two positive constants. There exists $\delta>0$ such that for any $0\leq c<\delta$, the following holds:
\begin{enumerate}[label=\roman*.,leftmargin=1.4em]
	\item The minimizer $w_c$ of the periodic TFDW problem $E_{\K,\lambda}(c)$ in~\eqref{K_complete_model_minimization} is \emph{unique}, up to a phase factor. It is non constant, positive, in $H^2_{\textrm{per}}({\K})$ and the unique ground-state eigenfunction of the $\K$-periodic self-adjoint operator
$$H_{c}:=-\Delta +c_{TF}|w_c|^{\frac43}-c|w_c|^{\frac23}-G_\K +(|w_c|^2\star G_\K).$$
	\item The $N\K$-periodic  extension of the $\K$-periodic minimizer $w_c$ is the \emph{unique minimizer} of all the $N\K$-periodic TFDW problems $E_{N\cdot\K,N^3\lambda}(c)$, for any integer $N\geq1$. Moreover
	$$E_{N\cdot\K,N^3\lambda}(c)=N^3 E_{\K,\lambda}(c).$$
\end{enumerate}
\end{thm}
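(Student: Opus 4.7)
My strategy is to proceed by perturbation around the convex problem at $c=0$. At $c=0$ the four remaining terms of $\mathscr E_{\K,0}$, viewed as functionals of the density $\rho=w^2$, are all convex and the term $\int_\K \rho^{5/3}$ is strictly so, giving existence and uniqueness of the minimizing density $\rho_0$ for $E_{\K,\lambda}(0)$. Elliptic bootstrap in the Euler--Lagrange equation produces $w_0=\sqrt{\rho_0}\in H^2_{\textrm{per}}(\K)$; this $w_0$ is positive and non-constant (a constant density would contradict the Euler--Lagrange equation because $G_\K$ itself is non-constant); and a Perron--Frobenius argument applied to the periodic self-adjoint operator $H_0:=-\Delta+c_{TF}\rho_0^{2/3}-G_\K+(\rho_0\star G_\K)$ identifies $w_0$ as its unique, simple, positive ground state, with a strictly positive spectral gap above.

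Building on this, I would establish the coercivity of the constrained Hessian of $\mathscr E_{\K,0}$ at $w_0$: on $\{w_0\}^{\perp}$, this Hessian is bounded from below by $H_0-\mu_0$ (where $\mu_0$ is the Lagrange multiplier) plus two nonnegative quadratic contributions coming from the convexity in $\rho$ of the TF and Coulomb terms, so the spectral gap of $H_0$ furnishes genuine coercivity. The implicit function theorem, applied to the Euler--Lagrange system $F(w,\mu,c)=0$ with $F:H^2_{\textrm{per}}(\K)\times\R\times\R\to L^2_{\textrm{per}}(\K)\times\R$ encoding the eigenvalue equation and the mass constraint, then yields a smooth branch $c\mapsto (w_c,\mu_c)$ on some $[0,\delta_1)$, with $w_c$ positive and in $H^2_{\textrm{per}}(\K)$; Perron--Frobenius for $H_c$ (whose gap persists by continuity) identifies $w_c$ as its simple positive ground state. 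To upgrade local uniqueness to the global uniqueness of~(i), I would then run a compactness argument: an a priori bound of the form $\mathscr E_{\K,c}(w)\geq \tfrac12\int_\K|\nabla w|^2 + O(1)$ (valid for $c$ small by interpolating $\int\rho^{4/3}$ between $\int\rho$ and $\int\rho^{5/3}$) makes every sequence of minimizers $\tilde w_{c_n}$ bounded in $H^1_{\textrm{per}}(\K)$; weak/strong convergence, lower semicontinuity and the $c=0$ uniqueness force the limit to be $w_0$ up to a phase; elliptic regularity upgrades the convergence to $H^2_{\textrm{per}}(\K)$; hence for $n$ large $\tilde w_{c_n}$ falls in the IFT neighbourhood and must coincide with $w_{c_n}$.

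For part~(ii), the inequality $E_{N\cdot\K,N^3\lambda}(c)\leq N^3 E_{\K,\lambda}(c)$ is immediate from the $N\K$-periodic extension of $w_c$. For the converse and the uniqueness statement, one needs the previous analysis on $N\K$ to hold with a $\delta>0$ uniform in $N$. At $c=0$ this is automatic: strict convexity of $\mathscr E_{N\cdot\K,0}(\sqrt{\rho})$ in $\rho$ combined with the translation average $\rho\mapsto N^{-3}\sum_{k\in {\mathscr L}_\K\cap N\cdot\K}\rho(\cdot+k)$, which preserves the mass constraint and does not increase the energy, forces every $N\K$-minimizer to be the $N\K$-extension of $\rho_0$, so $E_{N\cdot\K,N^3\lambda}(0)=N^3 E_{\K,\lambda}(0)$. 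For small $c>0$ I would redo the Hessian/IFT analysis on $N\K$: by Bloch--Floquet, the spectrum of $H_0$ on $N\K$ is the sampling of its Bloch bands on $\R^3$ at wave vectors $k\in(2\pi/N)\Z^3\cap\textrm{BZ}$, and the shrinking gap at small $k$ should be compensated by the Coulomb quadratic contribution $D_{N\cdot\K}(vw_0,vw_0)$, whose low-frequency modes behave like $|k|^{-2}$. This compensation yields a lower bound on the constrained $N\K$-Hessian that is uniform in $N$; the compactness step transfers verbatim, producing a single $\delta>0$ for which the $N\K$-extension of $w_c$ is the unique minimizer on $N\K$ for every $N\geq 1$.

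The delicate point, and the main obstacle, is precisely this $N$-uniformity in part~(ii): the spectral gap of $H_0$ on $N\K$ collapses as $N\to\infty$, so any naive perturbative bound would give a $\delta_N$ degenerating with $N$. Quantifying the compensation between the Coulomb long-range divergence and the collapsing Bloch gap, and checking that this compensation remains stable under the perturbation $c\to 0^+$, is where most of the technical effort will be spent.
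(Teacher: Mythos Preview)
Your strategy for part~(i) is sound and would work, but it differs from the paper's route and, more importantly, the paper's route makes part~(ii) almost free whereas yours makes it the hard part.

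The paper does not use the implicit function theorem or Hessian coercivity. Instead it proves a \emph{conditional uniqueness} criterion (Proposition~\ref{K_complete_model_uniqueness_EulerLagrange_solutions_conditional}): if $w>0$ solves the Euler--Lagrange equation on $N\cdot\K$ and satisfies the pointwise bound $\min_{N\cdot\K} w > (c/c_{TF})^{3/2}$, then $w$ is the unique minimizer of $E_{N\cdot\K,\int|w|^2}(c)$. The proof is a direct second-order expansion of $\mathscr E_{N\cdot\K,c}(|w'|)-\mathscr E_{N\cdot\K,c}(w)$: the gradient and Coulomb terms are convex in $\rho$, and the combined Thomas--Fermi/Dirac term $F(\rho)=\tfrac35 c_{TF}\rho^{5/3}-\tfrac34 c\rho^{4/3}$, though not globally convex, satisfies $F(\rho')-F(\rho)-F'(\rho)(\rho'-\rho)\geq 0$ for all $\rho'\geq 0$ once $\rho$ exceeds a threshold depending only on $c/c_{TF}$. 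The threshold is \emph{pointwise} and does not see the size of the box.

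Part~(i) then follows from your compactness argument (which the paper also runs, Lemma~\ref{cvgce_minimizers_to_TFW}): $w_c\to w_0$ in $H^2_{\textrm{per}}(\K)\hookrightarrow L^\infty(\K)$, so for $c$ small $\min_\K w_c\geq\tfrac12\min_\K w_0>(c/c_{TF})^{3/2}$, and the conditional criterion applies. For part~(ii), the $N\K$-periodic extension of $w_c$ has the \emph{same} pointwise minimum, so the same criterion applies on $N\cdot\K$ with no further work and no $N$-dependence of $\delta$.

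This completely bypasses the obstacle you identified. Your Bloch--Floquet/Coulomb compensation program is plausible and interesting in its own right, but it is substantially harder to make rigorous (you would need to control the full linearized operator, not just $H_0$, and the $|k|^{-2}$ Coulomb singularity sits in the quadratic form $D_{N\cdot\K}(w_0 v,w_0 v)$ rather than in a clean multiplier), and it is unnecessary here: the paper's pointwise-convexity trick gives $N$-uniformity for free.
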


\begin{thm}[Asymptotics for large $c$]\label{main_result}
Let $\K$ be the unit cube, $c_{TF},\lambda$ be two positive constants, and $N\geq1$ be an integer. For $c$ large enough, the periodic TFDW problem $E_{N\cdot\K,N^3\lambda}(c)$ on $N\cdot\K$ admits (at least) $N^3$ distinct nonnegative minimizers which are obtained one from each other by applying translations of the lattice ${\mathscr L}_\K$. Denoting $w_c$ any one of these minimizers, there exists a subsequence $c_n\to\infty$ such that
\begin{equation}\label{main_result_convergence_minimizers_with_Rn}
{c_n}^{-\frac32}w_{c_n}\Big(R+\frac\cdot{c_n}\Big)\underset{n\to\infty}{\longrightarrow}Q,
\end{equation}
strongly in $L^p_{\textrm{loc}}(\R^3)$ for $2\leq p<+\infty$, with $R$ the position of one of the $N^3$ charges in $N\cdot\K$. Here $Q$ is a minimizer of the variational problem for the effective model
\begin{equation}\label{R3_eff_model_J_def}
J_{\R^3}(N^3\lambda)=\inf\limits_{\substack{u\in H^1(\R^3) \\ \norm{u}^2_{L^2(\R^3)}=N^3\lambda}} \left\{\int_{\R^3}{|\nabla u|^2}+\frac35c_{TF}\int_{\R^3}{|u|^{\frac{10}3}}-\frac34\int_{\R^3}{|u|^{\frac83}} \right\},
\end{equation}
which must additionally minimize
\begin{equation}\label{R3_eff_model_def_functional_coeff_of_c}
S(N^3\lambda)=\inf_v\left\{\frac12\int_{\R^3}{\int_{\R^3}{\frac{|v(x)|^2|v(y)|^2}{|x-y|}\dd y}\dd x} - \int_{\R^3} \frac{|v(x)|^2}{|x|} \dd x\right\},
\end{equation}
where the minimization is performed among all possible minimizers of~\eqref{R3_eff_model_J_def}. Finally, when $c\to\infty$, $E_{N\cdot\K,N^3\lambda}(c)$ has the expansion
\begin{equation}\label{K_complete_model_expansion_E}
E_{N\cdot\K,N^3\lambda}(c)=c^2 J_{\R^3}(N^3\lambda)+c S(N^3\lambda)+o(c).
\end{equation}
\end{thm}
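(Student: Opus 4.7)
The plan is to match an upper bound, built from a trial state concentrated on the microscopic scale $1/c$ around one nucleus, with a lower bound obtained via a concentration--compactness analysis of an arbitrary rescaled minimizer. The natural scaling is $v(y):=c^{-3/2}w(R+y/c)$, which preserves the $L^2$-mass; a direct change of variables shows that the TFDW energy then splits as
\begin{equation*}
\mathscr E_{N\cdot\K,c}(w)=c^2\!\left(\int_{\R^3}|\nabla v|^2+\tfrac35c_{TF}\!\int_{\R^3}|v|^{\frac{10}3}-\tfrac34\!\int_{\R^3}|v|^{\frac83}\right)+c\!\left(\tfrac12\!\int_{\R^3}\!\!\int_{\R^3}\!\tfrac{|v(y)|^2|v(y')|^2}{|y-y'|}\dd y\dd y'-\!\int_{\R^3}\!\tfrac{|v(y)|^2}{|y|}\dd y\right)+o(c),
\end{equation*}
\emph{provided the origin of the zoom is at a nucleus}: indeed, the standard splitting $G_\K(x)=|x|^{-1}+g(x)$ with $g$ smooth near~$0$ gives, for $z$ of order $1/c$, $G_\K(z)=c|z|^{-1}+O(1)$, so both the electron--nucleus attraction and the Hartree self-interaction contribute at the same order~$c$.

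For the upper bound, I fix a minimizer $Q$ of $J_{\R^3}(N^3\lambda)$ (existence is guaranteed by Section~\ref{section_R3_eff_model}), pick any nucleus $R\in N\cdot\K$, and use the trial state
\begin{equation*}
w_c^R(x):=\alpha_c\,c^{3/2}\,\chi\!\bigl(c(x-R)\bigr)\,Q\!\bigl(c(x-R)\bigr),
\end{equation*}
extended by $N\cdot\K$-periodicity, with $\chi$ a smooth cutoff identically one on a ball $B(0,\rho_c)$ where $\rho_c\to\infty$ and $\rho_c/c\to0$, and $\alpha_c\to 1$ restoring the mass constraint (the exponential decay of $Q$, which follows from its Euler--Lagrange equation, makes $\alpha_c-1$ negligible). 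Plugging into $\mathscr E_{N\cdot\K,c}$ and then minimizing over $Q$ among minimizers of $J_{\R^3}(N^3\lambda)$ gives the upper bound $E_{N\cdot\K,N^3\lambda}(c)\leq c^2 J_{\R^3}(N^3\lambda)+c\,S(N^3\lambda)+o(c)$.

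For the lower bound, I start from any nonnegative minimizer $w_c$ (existence by the direct method on the compact periodic cell; nonnegativity by replacing $w_c$ with $|w_c|$) and rescale $v_c(y):=c^{-3/2}w_c(R_c+y/c)$, where $R_c\in N\cdot\K$ is chosen to maximize the local Dirac mass $\int_{B(R_c,\delta)}|w_c|^{8/3}$. The matching upper bound forces $v_c$ to be bounded in $H^1$ on the growing periodic box $c\cdot N\cdot\K$. Applying the concentration--compactness lemma to $v_c$, viewed in $H^1(\R^3)$, \emph{vanishing} is excluded because it would send $\int|v_c|^{8/3}\to0$ and leave the effective energy nonnegative, contradicting the fact that $J_{\R^3}(N^3\lambda)<0$; \emph{dichotomy} is ruled out by the strict binding inequality $J_{\R^3}(N^3\lambda)<J_{\R^3}(\mu)+J_{\R^3}(N^3\lambda-\mu)$ established in Section~\ref{section_R3_eff_model_existence}. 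Hence, up to a subsequence, $v_c\to Q$ strongly in $L^p_{\textrm{loc}}(\R^3)$ for $2\leq p<\infty$, with $\norm{Q}^2_{L^2}=N^3\lambda$ and $Q$ a minimizer of $J_{\R^3}(N^3\lambda)$.

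The hardest step is the second-order analysis pinning down the concentration site as a nucleus. If $R_c$ stayed at positive distance from every nucleus, then on the $O(1/c)$-neighbourhood of $R_c$ the potential $G_\K$ would be smooth, so the electron--nucleus attraction would only contribute $-G_\K(R_c)N^3\lambda=O(1)$, whereas the Hartree self-interaction would still give $\frac{c}{2}\int\!\!\int|Q(y)|^2|Q(y')|^2|y-y'|^{-1}\dd y\dd y'+O(1)$; this would contradict the matching upper bound, since the missing term $-\int|Q(y)|^2|y|^{-1}\dd y$ is strictly negative. Thus $R_c$ must lie within $o(1/c)$ of some nucleus $R$, and a Taylor expansion of $G_\K$ around $R$ converts the Coulomb contribution into exactly the order-$c$ term of~\eqref{R3_eff_model_def_functional_coeff_of_c}, forcing $Q$ to additionally minimize~\eqref{R3_eff_model_def_functional_coeff_of_c} and producing the expansion~\eqref{K_complete_model_expansion_E}. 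The $N^3$ distinct nonnegative minimizers then follow by applying the $N^3$ translations of ${\mathscr L}_\K$ modulo $N{\mathscr L}_\K$, each of which preserves $\mathscr E_{N\cdot\K,c}$ and the mass constraint but moves the concentration site to a different nucleus.
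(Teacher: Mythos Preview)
Your overall strategy---upper bound via a trial state concentrated at a nucleus, leading-order lower bound via concentration--compactness on the rescaled problem ruling out vanishing and dichotomy through the strict binding inequality, then second-order analysis of the Coulomb terms---is correct and matches the paper's approach closely.

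There is, however, a gap in your localization step. You argue: ``if $R_c$ stayed at positive distance from every nucleus'' then the attraction is $O(1)$, contradiction; ``thus $R_c$ must lie within $o(1/c)$ of some nucleus''. The dichotomy is incomplete. The negation of ``$\liminf d(R_c,\{R_j\})>0$'' is not ``$d(R_c,\{R_j\})=o(1/c)$''. You are missing the regime where $R_c\to R_{j_0}$ but $c\,|R_c-R_{j_0}|\to\infty$ (attraction still $o(c)$, same contradiction---easy to add), and more importantly the regime where $c\,(R_c-R_{j_0})\to\eta\neq0$. In the latter case the attraction term becomes $-c\int|Q|^2/|y+\eta|\,\dd y+o(c)$, and comparing with the upper bound only yields $\tfrac12 D_{\R^3}(Q^2,Q^2)-\int|Q|^2/|y+\eta|\,\dd y\ge S(N^3\lambda)$, which is automatically true since $Q(\cdot-\eta)$ is another minimizer of $J_{\R^3}$. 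So your upper/lower-bound matching gives only $O(1/c)$, not $o(1/c)$. This is in fact enough for the theorem as stated (relabel $Q(\cdot-\eta)$ as the limit profile centered at $R_{j_0}$, and the matching forces it to minimize $S$), but your writeup should say so explicitly rather than claim $o(1/c)$.

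The paper obtains the sharper conclusion $x_n=c_nR_{j_0}+o(1)$ by a different trick: instead of comparing with the upper-bound trial state, it compares the minimizer $w_{c_n}$ with its own translate $w_{c_n}(\cdot+x_n/c_n-R_j)$, which shifts the concentration bubble exactly onto a nucleus while leaving all translation-invariant terms unchanged. This yields directly $\int Q^2/|y-\eta|\,\dd y\ge\int Q^2/|y|\,\dd y$, and since $Q$ is radial strictly decreasing (from the analysis of the effective problem), strict inequality would hold for $\eta\neq0$, forcing $\eta=0$. Your comparison-with-the-upper-bound argument is conceptually simpler but gives a slightly weaker localization; the paper's translate-the-minimizer argument is what pins down $\eta=0$.
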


\cth{main_result_2} will be proved in Section~\ref{section_small_c} while Section~\ref{section_large_c} will be dedicated to the proof of~\cth{main_result}. The leading order in~\eqref{K_complete_model_expansion_E} 
$$E_{N\cdot\K,N^3\lambda}(c)=c^2 J_{\R^3}(N^3\lambda)+o(c^2)$$
together with the strict binding inequality $J_{\R^3}(N^3\lambda)<N^3J_{\R^3}(\lambda)$ for $N\geq2$, proved later in~\cpr{R3_eff_model_strict_binding} of Section~\ref{section_R3_eff_model}, imply immediately that symmetry breaking occurs.
\begin{cor}[Symmetry breaking for large $c$]\label{main_resultsqdqdqsdqdsqs}
Let $\K$ be the unit cube, $c_{TF},\lambda$ be two positive constants, and $N\geq2$ be an integer. For $c$ large enough, symmetry breaking occurs:
	$$E_{N\cdot\K,N^3\lambda}(c)<N^3 E_{\K,\lambda}(c).$$
\end{cor}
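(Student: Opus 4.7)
The plan is to read the corollary as an immediate consequence of two ingredients already presented in the excerpt: the two-term asymptotic~\eqref{K_complete_model_expansion_E} from \cth{main_result}, and the strict binding inequality $J_{\R^3}(N^3\lambda) < N^3 J_{\R^3}(\lambda)$ for $N\geq 2$ stated just before the corollary and to be proved as \cpr{R3_eff_model_strict_binding}. Beyond these two inputs, only elementary bookkeeping of error terms is needed.

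Concretely, I would apply the expansion~\eqref{K_complete_model_expansion_E} twice: once to the problem $E_{N\cdot\K,N^3\lambda}(c)$, and once to $E_{\K,\lambda}(c)$ (the case $N=1$ of \cth{main_result}, which applies verbatim). Multiplying the second by $N^3$ and subtracting yields
\[
E_{N\cdot\K,N^3\lambda}(c) - N^3 E_{\K,\lambda}(c) = c^2\bigl(J_{\R^3}(N^3\lambda) - N^3 J_{\R^3}(\lambda)\bigr) + o(c^2)
\]
as $c\to\infty$ (the $cS$ contributions from the two copies of~\eqref{K_complete_model_expansion_E} are absorbed into the $o(c^2)$ remainder). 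Setting $\alpha := N^3 J_{\R^3}(\lambda) - J_{\R^3}(N^3\lambda)$, the strict binding inequality gives $\alpha>0$; choosing $c$ large enough that the remainder is bounded by $\alpha c^2/2$ in absolute value, I would conclude
\[
E_{N\cdot\K,N^3\lambda}(c) - N^3 E_{\K,\lambda}(c) \leq -\tfrac{\alpha}{2}\,c^2 < 0,
\]
which is the claimed symmetry breaking.

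The argument itself presents no genuine obstacle: all the difficulty lies inside the two ingredients. The real work is packed into \cth{main_result} (concentration analysis at scale $1/c$ around a single nucleus, sharp control of the periodic Coulomb terms, and identification of the $O(c)$ correction through $S$), and into the $N$-fold strict binding inequality for the effective NLS-type problem $J_{\R^3}$, which one typically proves by taking a trial state made of translates of an $J_{\R^3}(\lambda)$-minimizer pushed very far apart and quantifying the gain produced by the attractive $-\frac34\int |u|^{8/3}$ nonlinearity. Given those two inputs, the corollary is a two-line argument.
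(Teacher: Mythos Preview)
Your proposal is correct and matches the paper's own argument essentially line for line: the paper states just before the corollary that the leading order $E_{N\cdot\K,N^3\lambda}(c)=c^2 J_{\R^3}(N^3\lambda)+o(c^2)$ (from \cth{main_result}, applied also with $N=1$) together with the strict binding inequality $J_{\R^3}(N^3\lambda)<N^3J_{\R^3}(\lambda)$ from \cpr{R3_eff_model_strict_binding} immediately gives the result, and this is reiterated at the end of Section~\ref{section_symm_break_concentration_compactness}. Your only extraneous remark is the speculation about how strict binding is proved (far-apart translates); in fact the paper proves it by a scaling argument, but this does not affect the validity of your deduction of the corollary.
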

Although the leading order is sufficient to prove the occurrence of symmetry breaking, \cth{main_result} gives a precise description of the behavior of the electrons, which all concentrate at one of the $N^3$ nuclei of the cell $N\cdot\K$. A natural question that comes with~\cth{main_result} is to know if $c$ needs to be really large for the symmetry breaking to happen. We present some numerical answers to this question later in Section~\ref{num_sim}.

\begin{rmq*}[Generalizations]
For simplicity we have chosen to deal with a cubic lattice with one nucleus of charge $1$ per unit cell, but the exact same results hold in a more general situation. We can take a charge $Z$ larger than $1$, several charges (of different values) per unit cell and a more general lattice than $\Z^3$. More precisely, the $\K$-periodic Coulomb potential $G_\K$ appearing in~\eqref{K_NRJ_u}, in both $D_\K$ and $\int G|w|^2$, should then verify
$$-\Delta G_\K = 4\pi\left(\sum\limits_{k\in{\mathscr L}_\K} \delta_k -\frac1{|\K|}\right),$$
and the term $\int_\K{G_\K |w|^2}$ should be replaced by
$\int_\K{\sum_{i=1}^{N_q} z_i G_\K(\cdot-R_i)|w|^2}$
where $z_i$ and $R_i$ and the charges and locations of the $N_q$ nuclei in the unit cell $\K$.

Finally, in~\cth{main_result}, denoting by $z_+:=\max_{1\leq i\leq N_q}\{z_i\}>0$ the largest charge inside $\K$ and by $N_+\geq1$ the number of charges inside $\K$ that are equal to $z_+$, the location $R$ would now be one of the $N_+\K^3$ positions of charges $z_+$ --- which means that the minimizer concentrate on one of the nuclei with largest charge --- and $S$ would be replaced by
$$S(\lambda)=\inf_v\left\{\frac12\int_{\R^3}{\int_{\R^3}{\frac{|v(x)|^2|v(y)|^2}{|x-y|}\dd y}\dd x} - z_+\int_{\R^3} \frac{|v(x)|^2}{|x|} \dd x\right\}.$$
\end{rmq*}

\begin{rmq*}[Model on $\R^3$]
In this paper, we study the TFDW model for a periodic system, because such orbital-free theories are often used in practice for infinite systems. However, \cth{main_result} can be adapted to the TFDW model in the whole space $\R^3$, with finitely many nuclei of charges $z_1,\dots,z_n$ and $\lambda\leq\sum_i{z_i}$ electrons, using similar proofs. In the limit $c\to\infty$, the $\lambda$ electrons all concentrate at one of the nuclei with the largest charge $z_+:=\max\{z_i\}$ and solve the same effective problem. Therefore, uniqueness does not hold if there are several such nuclei of charge $z_+$.
\end{rmq*}

\subsection{Study of the effective model in \texorpdfstring{$\R^3$}{R3}}\label{Section_effective_model_R3}
We present in this section the effective model in the whole space $\R^3$. We want to already emphasize that the uniqueness of minimizers for this problem is an open difficult question as we will explain later in this section.

The functional to be considered is
\begin{equation}\label{R3_eff_model_functional}
u\mapsto{\mathscr J}_{\R^3}(u)=\int_{\R^3}{|\nabla u|^2}+\frac35c_{TF}\int_{\R^3}{|u|^{\frac{10}3}}-\frac34 \int_{\R^3}{|u|^{\frac83}}
\end{equation}
and the minimization problem~\eqref{R3_eff_model_J_def} is
\begin{equation}\label{R3_eff_model_minimization}
J_{\R^3}(\lambda)=\inf\limits_{\substack{u\in H^1(\R^3) \\ \norm{u}^2_{L^2(\R^3)}=\lambda}} {\mathscr J}_{\R^3}(u).
\end{equation}

The first important result for this effective model is about the existence of minimizers and the fact that they are radial decreasing. We state those results in the following theorem, the proof of which is the subject of Section~\ref{section_R3_eff_model_existence}.
\begin{thm}[Existence of minimizers for the effective model in $\R^3$]\label{R3_eff_model_existence_thm}
Let $c_{TF}>0$ and $\lambda>0$ be fixed constants.
\begin{enumerate}[label=\roman*.,leftmargin=1.4em]
	\item There exist minimizers for $J_{\R^3}(\lambda)$. Up to a phase factor and a space translation, any minimizer $Q$ is a positive radial strictly decreasing $H^2(\R^3)$-solution of
\begin{equation}\label{R3_eff_model_EulerLagrange}
	-\Delta Q+c_{TF}|Q|^{\frac43}Q-|Q|^{\frac23}Q=-\mu Q.
\end{equation}
Here $-\mu<0$ is simple and is the smallest eigenvalue of the self-adjoint operator $H_{Q}:=-\Delta +c_{TF}|Q|^{\frac43}-|Q|^{\frac23}$.
	\item We have the strictly binding inequality
\begin{equation}\label{R3_eff_model_strict_binding_ineq}
\forall \; 0<\lambda'<\lambda, \qquad J_{\R^3}(\lambda)< J_{\R^3}(\lambda')+J_{\R^3}(\lambda-\lambda').
\end{equation}
	\item For any minimizing sequence $(Q_n)_n$ of $J_{\R^3}(\lambda)$, there exists $\{x_n\}\subset\R^3$ such that $Q_n(\cdot-x_n)$ strongly converges in $H^1(\R^3)$ to a minimizer, up to the extraction of a subsequence.
\end{enumerate}
\end{thm}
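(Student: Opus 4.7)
The overall plan follows the Lions concentration-compactness strategy, with the strict binding inequality (ii) as the key ingredient. As preliminaries, Young's inequality $t^{8/3}\leq\epsilon t^{10/3}+C_\epsilon t^2$ applied pointwise absorbs the $-\tfrac{3}{4}\int|u|^{8/3}$ term into a fraction of the $\tfrac{3c_{TF}}{5}\int|u|^{10/3}$ term plus a constant multiple of the mass $\lambda$, which gives $J_{\R^3}(\lambda)>-\infty$. Strict negativity $J_{\R^3}(\lambda)<0$ comes from the mass-preserving dilation $u_\theta(x)=\theta^{3/2}u(\theta x)$: with $T(u)=\int|\nabla u|^2$, $K'(u)=\tfrac{3c_{TF}}{5}\int|u|^{10/3}$ and $D'(u)=\tfrac{3}{4}\int|u|^{8/3}$, one has $\mathscr J_{\R^3}(u_\theta)=\theta^2(T+K')-\theta D'$, minimized over $\theta>0$ at $\theta^\ast=D'/(2(T+K'))$ at the strictly negative value $-D'^2/(4(T+K'))$.

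I establish (ii) in the stronger form $J_{\R^3}(\alpha\lambda)<\alpha J_{\R^3}(\lambda)$ for every $\alpha>1$. This is equivalent to the strict decrease of $\mu\mapsto J_{\R^3}(\mu)/\mu$, and summing $\lambda' J_{\R^3}(\lambda)/\lambda<J_{\R^3}(\lambda')$ with $(\lambda-\lambda')J_{\R^3}(\lambda)/\lambda<J_{\R^3}(\lambda-\lambda')$ gives the strict binding for any $0<\lambda'<\lambda$. To establish the scaling inequality, I pick a minimizing sequence $(u_n)$ for $J_{\R^3}(\lambda)$ and, after replacing each $u_n$ by its $\theta^\ast$-rescaling, may assume the Pohozaev-type identity $D'_n=2(T_n+K'_n)$, so that $\mathscr J_{\R^3}(u_n)=-(T_n+K'_n)\to J_{\R^3}(\lambda)$. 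The Gagliardo-Nirenberg bound $\int|u|^{10/3}\leq C\|u\|_{L^2}^{4/3}\|\nabla u\|_{L^2}^{2}$ forces $K'_n\leq C\lambda^{2/3}T_n$, which combined with $T_n+K'_n\to-J_{\R^3}(\lambda)>0$ yields $T_n\geq\tau_0>0$ along a subsequence. The competitor $w_n(x)=\theta^{3/2}\sqrt\alpha\,u_n(\theta x)$ of mass $\alpha\lambda$, optimized in $\theta$, gives $J_{\R^3}(\alpha\lambda)\leq-\alpha^{5/3}(T_n+K'_n)^2/(T_n+\alpha^{2/3}K'_n)$; compared with $\alpha J_{\R^3}(\lambda)+o_n(1)=-\alpha(T_n+K'_n)+o_n(1)$, this is smaller by a uniform amount proportional to $(\alpha^{2/3}-1)T_n\geq(\alpha^{2/3}-1)\tau_0>0$, closing the proof of (ii).

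Points (i) (existence) and (iii) then follow from Lions' concentration-compactness applied to any bounded-in-$H^1$ minimizing sequence. Vanishing is impossible because it would force $\int|u_n|^p\to0$ for all $2<p<6$, hence $\liminf\mathscr J_{\R^3}(u_n)\geq0$, contradicting $J_{\R^3}(\lambda)<0$. Dichotomy is ruled out by (ii) via the classical split-support argument. Compactness up to translations $x_n\in\R^3$ gives $u_n(\cdot-x_n)\rightharpoonup Q$ in $H^1$ with conservation of mass; strong $L^p_{\textrm{loc}}$-convergence for $2\leq p<6$ handles the nonlinear terms and weak lower semicontinuity of the kinetic norm forces $\mathscr J_{\R^3}(Q)\leq J_{\R^3}(\lambda)$, so $Q$ is a minimizer and the convergence is upgraded to strong $H^1$-convergence via convergence of the kinetic norm.

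For the remaining properties of $Q$ in (i): since $|Q|$ is also a minimizer, one may take $Q\geq 0$; the Euler-Lagrange equation then holds with some Lagrange multiplier $-\mu$, and elliptic bootstrap gives $Q\in H^2(\R^3)$. Symmetric decreasing rearrangement preserves all $L^p$-norms and does not increase the kinetic energy (Polya-Szego), so after a translation $Q$ equals its own rearrangement, giving radiality and non-increase; strict monotonicity then follows from the uniqueness of the radial ODE at any $r_0>0$ where $Q'(r_0)$ would vanish. The strong maximum principle applied to $(-\Delta+c_{TF}Q^{4/3}-Q^{2/3}+\mu)Q=0$, whose potential is locally bounded, gives $Q>0$, and standard Perron-Frobenius for Schr\"odinger operators with vanishing-at-infinity potential then yields the simplicity of $-\mu$ as the smallest eigenvalue of $H_Q$. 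The hard step throughout is the strict binding (ii): because $\mathscr J_{\R^3}$ mixes a focusing $-|u|^{8/3}$ term and a defocusing $|u|^{10/3}$ term, the usual quick scaling proofs fail, and the Pohozaev-normalisation of the minimizing sequence together with the Gagliardo-Nirenberg lower bound on $T_n$ is what makes the argument go through.
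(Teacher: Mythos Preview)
Your proposal is correct but takes a genuinely different route from the paper, both in logical order and in the proof of the strict binding inequality.

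The paper first proves existence of minimizers directly, by symmetric decreasing rearrangement and the compact embedding $H^1_{\textrm{rad}}(\R^3)\hookrightarrow L^p(\R^3)$ for $2<p<6$, using only the strict monotonicity of $\lambda\mapsto J_{\R^3}(\lambda)$ (established beforehand). With an actual minimizer $Q_{\lambda'}$ in hand, strict binding becomes a one-line scaling argument: writing $J_{\R^3}(\lambda)=\lambda\inf_{\|u\|_2^2=1}\mathscr F_\lambda(u)$ with $\mathscr F_\lambda(u)=\lambda^{-2/3}\|\nabla u\|_2^2+\tfrac{3c_{TF}}{5}\|u\|_{10/3}^{10/3}-\tfrac34\|u\|_{8/3}^{8/3}$, the fact that $\|\nabla Q_{\lambda'}\|_2>0$ gives $\mathscr F_{\lambda'}(Q_{\lambda'})>\mathscr F_\lambda(Q_{\lambda'})$ for $\lambda>\lambda'$, hence $J_{\R^3}(\lambda')/\lambda'>J_{\R^3}(\lambda)/\lambda$. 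Lions' theorem is invoked only at the end for part~(iii).

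You invert the logic: you prove strict binding first using only minimizing sequences, via a Pohozaev normalisation and the Gagliardo--Nirenberg lower bound on $T_n$, and then run full concentration-compactness to get existence and~(iii) simultaneously. This works and is self-contained, but your closing remark is misleading: the ``usual quick scaling proof'' does \emph{not} fail here---it simply requires a minimizer, which the paper obtains cheaply via radial compactness. Your route has the advantage of treating existence and precompactness in one stroke without the radial embedding; the paper's route makes~(ii) essentially trivial at the cost of a separate appeal to Lions for~(iii).

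Two minor gaps in your write-up: you do not prove $-\mu<0$, which is part of the statement and which the paper handles in a dedicated step using the binding inequality together with an explicit upper bound on $J_{\R^3}$; and your ODE-uniqueness argument for strict radial decrease is incomplete as stated (a single $r_0$ with $Q'(r_0)=0$ does not by itself force constancy without further work). The paper instead uses real-analyticity of positive solutions to obtain $|\{Q=t\}|=0$ for every $t$, and then the Brothers--Ziemer equality case of the P\'olya--Szeg\H{o} inequality.
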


An important result about the effective model on $\R^3$ is the following result giving the uniqueness and the non-degeneracy of positive solutions $Q$ to the Euler--Lagrange equation~\eqref{R3_eff_model_EulerLagrange} for any admissible $\mu>0$. The proof of this theorem is the subject of Section~\ref{section_R3_eff_model_existence_and_nondeg}.

\begin{thm}[Uniqueness and non-degeneracy of positive solutions]\label{R3_eff_model_existence_and_nondeg}
Let $c_{TF}>0$. If $\frac{64}{15}c_{TF}\mu\geq1$, then the Euler--Lagrange equation~\eqref{R3_eff_model_EulerLagrange} has no non-trivial solution in $H^1(\R^3)$. For $0<\frac{64}{15}c_{TF}\mu<1$, the Euler--Lagrange equation~\eqref{R3_eff_model_EulerLagrange} has, up to translations, a unique nonnegative solution $Q_\mu\nequiv0$ in $H^1(\R^3)$. This solution is radial decreasing and non-degenerate: the linearized operator
\begin{equation}\label{R3_eff_model_def_Lplus}
L_\mu^+=-\Delta+\frac73 c_{TF} |Q_\mu|^{\frac43}-\frac53 |Q_\mu|^{\frac23}+\mu
\end{equation}
with domain $H^2(\R^3)$ and acting on $L^2(\R^3)$ has the kernel
\begin{equation}
\Ker L_\mu^+ =\vect\left\{\partial_{x_1}Q_\mu, \partial_{x_2}Q_\mu, \partial_{x_3}Q_\mu\right\}.
\end{equation}
\end{thm}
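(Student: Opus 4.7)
\emph{Proof plan.} The theorem splits into three parts: non-existence when $\mu$ is large, existence and uniqueness of a positive $H^1$-solution when $\mu$ is small, and non-degeneracy of that solution.

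For the \emph{non-existence} part my plan is variational. Testing~\eqref{R3_eff_model_EulerLagrange} against $Q$ and combining with the Pohozaev identity obtained by multiplying by $x\cdot\nabla Q$ gives
\begin{align*}
\int_{\R^3}|\nabla Q|^2 + c_{TF}\int Q^{\frac{10}{3}} - \int Q^{\frac{8}{3}} + \mu\int Q^2 &= 0,\\
\tfrac12\int|\nabla Q|^2 + \tfrac{9c_{TF}}{10}\int Q^{\frac{10}{3}} - \tfrac98\int Q^{\frac{8}{3}} + \tfrac{3\mu}{2}\int Q^2 &= 0.
\end{align*}
Eliminating the kinetic term produces a linear relation $\int Q^{8/3} = \alpha\, c_{TF}\int Q^{10/3} + \beta\mu\int Q^2$ with explicit positive coefficients $\alpha,\beta$. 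Coupling this with the H\"older interpolation $\int Q^{8/3}\leq(\int Q^{10/3})^{1/2}(\int Q^2)^{1/2}$ (strict unless $Q$ is constant, which is impossible in $H^1(\R^3)$) and AM--GM on the right-hand side yields an upper bound on $c_{TF}\mu$; when $\tfrac{64}{15}c_{TF}\mu\geq 1$ the bound is violated, forcing $Q\equiv 0$.

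For \emph{existence and uniqueness} when $0<\tfrac{64}{15}c_{TF}\mu<1$, the plan is classical. Elliptic regularity places any nonnegative $H^1$-solution in $H^2\cap C^2$, and rewriting the equation as $(-\Delta+\mu)Q_\mu = Q_\mu^{5/3}-c_{TF}Q_\mu^{7/3}$ plus a bootstrap yields exponential decay at infinity. The strong maximum principle upgrades $Q_\mu\geq 0$ to $Q_\mu>0$. The moving plane method of Gidas--Ni--Nirenberg then shows $Q_\mu$ is, up to translation, radial and strictly radially decreasing. Uniqueness of the positive ground state of the resulting ODE
$$-Q_\mu''-\tfrac{2}{r}Q_\mu' + c_{TF} Q_\mu^{7/3} - Q_\mu^{5/3} + \mu Q_\mu = 0,\qquad Q_\mu'(0)=0,\ \lim_{r\to\infty}Q_\mu(r)=0,$$
then follows from the Serrin--Tang/Pucci--Serrin uniqueness theory for ground states of semilinear elliptic equations, applied to the double-power nonlinearity $f(s)=s^{5/3}-c_{TF}s^{7/3}-\mu s$; the structural hypothesis to verify is an appropriate monotonicity of $sf'(s)/f(s)$ on $(0,b)$, where $b>0$ is the smallest positive zero of $f$ (which exists precisely when $4c_{TF}\mu<1$).

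For the \emph{non-degeneracy} claim I would decompose $\Ker L_\mu^+$ via spherical harmonics, $u(x)=\sum_{\ell\geq 0} f_\ell(r)Y_\ell(x/|x|)$. This reduces the analysis to the one-dimensional Schr\"odinger operators
$$L_{\mu,\ell}^+ = -\partial_r^2 - \tfrac{2}{r}\partial_r + \tfrac{\ell(\ell+1)}{r^2} + \tfrac{7c_{TF}}{3}Q_\mu^{4/3} - \tfrac{5}{3}Q_\mu^{2/3} + \mu.$$
In the $\ell=1$ sector, translation invariance yields $\partial_r Q_\mu\in\Ker L_{\mu,1}^+$; since $Q_\mu$ is strictly radially decreasing, $\partial_r Q_\mu$ has constant sign on $(0,\infty)$, so by Sturm--Liouville it is the simple ground state, accounting for the three kernel elements $\vect\{\partial_{x_1}Q_\mu,\partial_{x_2}Q_\mu,\partial_{x_3}Q_\mu\}$. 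For $\ell\geq 2$, strict monotonicity of the centrifugal potential gives $L_{\mu,\ell}^+>L_{\mu,1}^+\geq 0$ with strictly positive bottom of spectrum, so the kernel is trivial. The $\ell=0$ sector is the delicate case; here I would re-use the ODE machinery from the uniqueness step (notably strict monotonicity of the shooting map sending $Q_\mu(0)$ to $Q_\mu$) through a Kwong--McLeod-type Sturm comparison argument to exclude nontrivial radial kernel elements. The principal obstacle is precisely this $\ell=0$ non-degeneracy, which genuinely requires the full strength of the ODE analysis from Part~2; a secondary subtlety is tuning the Pohozaev/H\"older/AM--GM chain in Part~1 so as to land exactly on the stated constant $\tfrac{64}{15}$.
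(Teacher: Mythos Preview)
Your overall plan matches the paper's: Pohozaev for non-existence, moving planes for radial symmetry, Serrin--Tang for ODE uniqueness, and the spherical-harmonics/Sturm argument (which is exactly the Lewin--Rota~Nodari scheme the paper invokes) for non-degeneracy. Two points, however, need correction.

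\medskip
\textbf{The non-existence argument does not reach the sharp constant.} Carrying out your elimination, $2\times(\text{Pohozaev})-(\text{energy identity})$ gives
\[
\int Q^{8/3}=\frac{16c_{TF}}{25}\int Q^{10/3}+\frac{8\mu}{5}\int Q^{2}.
\]
Combining this with $\int Q^{8/3}\le\big(\int Q^{10/3}\big)^{1/2}\big(\int Q^{2}\big)^{1/2}$ and optimizing (your AM--GM step) yields a contradiction only when $c_{TF}\mu>\tfrac{125}{512}$, which is strictly weaker than the stated threshold $c_{TF}\mu\ge\tfrac{15}{64}$. The loss comes from H\"older: equality there requires $Q$ to be a characteristic function, which is incompatible with the pointwise shape of $G$. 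The paper's route (and the standard one, via Berestycki--Lions) is simpler and sharp: use Pohozaev alone in the form $\tfrac12\int|\nabla Q|^{2}=3\int G(Q)$ with $G(s)=-\tfrac{3c_{TF}}{10}s^{10/3}+\tfrac38 s^{8/3}-\tfrac{\mu}{2}s^{2}$. Writing $G(s)=s^{2}\big(-\tfrac{3c_{TF}}{10}t^{2}+\tfrac38 t-\tfrac{\mu}{2}\big)$ with $t=s^{2/3}$, the quadratic in $t$ has maximum $\tfrac{15}{128c_{TF}}-\tfrac{\mu}{2}$, so $G\le0$ on $[0,\infty)$ exactly when $\tfrac{64}{15}c_{TF}\mu\ge1$; then $\int|\nabla Q|^{2}\le0$ forces $Q\equiv0$. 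No interpolation is needed.

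\medskip
\textbf{The moving-plane step needs a different reference.} You invoke Gidas--Ni--Nirenberg, but the nonlinearity $F_\mu(y)=-c_{TF}y^{7/3}+y^{5/3}-\mu y$ is only $C^{1}$, not $C^{2}$, at $y=0$ (the term $y^{5/3}$ spoils this), and the version of GNN the paper has in mind requires $C^{2}$. The paper handles this by first showing that nonnegative solutions are strictly positive and $C^\infty$, and then applying Li's moving-plane theorem, which works under weaker regularity. Your bootstrap and strict-positivity steps are fine; just swap the citation and check that Li's hypotheses are met.

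\medskip
With these two fixes, your outline coincides with the paper's proof.
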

Note that the condition $\frac{64}{15}c_{TF}\mu\geq1$ comes directly from Pohozaev's identity, see e.g.~\cite{BerLio-83}.
\begin{rmq*}
The linearized operator $L_\mu$ for the equation~\eqref{R3_eff_model_EulerLagrange} at $Q_\mu$ is
$$L_\mu h=-\Delta h+\left(c_{TF}|Q_\mu|^{\frac43}-|Q_\mu|^{\frac23}\right)h+\left(\frac23c_{TF}|Q_\mu|^{\frac43}-\frac13|Q_\mu|^{\frac23}\right)(h+\bar{h})+\mu h.$$
Note that it is not $\C$-linear. Separating its real and imaginary parts, it is convenient to rewrite it as
$$L_\mu=\begin{pmatrix}L_\mu^+&0\\0&L_\mu^-\end{pmatrix},$$
where $L_\mu^+$ is as in~\eqref{R3_eff_model_def_Lplus} while $L_\mu^-$ is the operator
\begin{equation}\label{R3_eff_model_def_Lmoins}
L_\mu^-=-\Delta+c_{TF} |Q_\mu|^{\frac43}- |Q_\mu|^{\frac23}+\mu=H_{Q_\mu}+\mu.
\end{equation}
The result about the lowest eigenvalue of the operator $H_Q$ in~\cth{R3_eff_model_existence_thm} exactly gives that $\Ker L_\mu^-=\vect\left\{Q_\mu\right\}$. Hence,~\cth{R3_eff_model_existence_and_nondeg} implies that
$$\Ker L_\mu=\vect\left\{\begin{pmatrix}0\\Q_\mu\end{pmatrix},\begin{pmatrix}\partial_{x_1}Q_\mu\\0\end{pmatrix}, \begin{pmatrix}\partial_{x_2}Q_\mu\\0\end{pmatrix}, \begin{pmatrix}\partial_{x_3}Q_\mu\\0\end{pmatrix}\right\}.$$
\end{rmq*}

The natural step one would like to perform now is to deduce the uniqueness of minimizers from the uniqueness of Euler--Lagrange positive solutions as it has been done for many equations \cite{Lieb-77,TodMor-99,Lenzmann-09,FraLen-13,FraLenSil-15,Ricaud-16}. An argument of this type relies on the fact that $\mu\mapsto M(\mu):=\norm{Q_\mu}^2_{L^2(\R^3)}$ is a bijection, which is an easy result for models with trivial scalings like the nonlinear Schr\"odinger equation with only one power nonlineartity. However, for the effective problem of this section, we are unable to prove that this mapping is a bijection, proving the injection property being the issue.

In~\cite{KilOhPocVis-17}, Killip, Oh, Pocovnicu and Visan study extensively a similar problem with another non-linearity including two powers, namely the \emph{cubic-quintic NLS on $\R^3$} which is associated with the energy
\begin{equation}\label{cubic_quintic_NLS_energy}
\int_{\R^3}{\frac12|\nabla u|^2+\frac16|u|^6-\frac14 |u|^4}.
\end{equation}
They discussed at length the question of uniqueness of minimizers and could also not solve it for their model. An important difference between~\eqref{cubic_quintic_NLS_energy} and effective problem of this section is that the map $\mu\mapsto M(\mu)$ is for sure not a bijection in their case. But it is conjectured to be one if one only retains \emph{stable} solutions \cite[Conjecture 2.6]{KilOhPocVis-17}.

If we cannot prove uniqueness of minimizers, we can nevertheless prove that for any mass $\lambda>0$ there is a finite number of $\mu$'s in $(0; \frac{15}{64 c_{TF}})$ for which the unique positive solution to the associated Euler--Lagrange problem has a mass equal to $\lambda$ and, consequently, that there is a finite number of minimizers of the TFDW problem for any given mass constraint.
\begin{prop}\label{R3_eff_model_finite_number_sol_EL}
Let $\lambda>0$ and $c_{TF}>0$. There exist finitely many $\mu$'s for which the mass $M(\mu)$ of $Q_\mu$ is equal to $\lambda$.
\end{prop}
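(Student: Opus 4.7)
The plan is to show that $M:(0,\mu_{\max})\to(0,\infty)$, with $\mu_{\max}:=\tfrac{15}{64c_{TF}}$, is real-analytic and that its level set $M^{-1}(\lambda)$ is confined to a compact subinterval of $(0,\mu_{\max})$. Combined with the identity theorem for real-analytic functions, this yields the stated finiteness.

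First I would establish real-analyticity by applying the analytic implicit function theorem to the map
\[
F(\mu,Q)\ :=\ -\Delta Q+c_{TF}Q^{7/3}-Q^{5/3}+\mu Q,
\]
viewed (after restriction to a neighborhood of positive functions, on which the nonlinearities are real-analytic) as going from $(0,\mu_{\max})\times H^2_{\mathrm{rad}}(\R^3)$ into $L^2_{\mathrm{rad}}(\R^3)$. Its Fr\'echet derivative with respect to $Q$ at $(\mu_0,Q_{\mu_0})$ is exactly $L_{\mu_0}^+$ of~\eqref{R3_eff_model_def_Lplus}. By~\cth{R3_eff_model_existence_and_nondeg}, $\Ker L_{\mu_0}^+=\vect\{\partial_{x_i}Q_{\mu_0}\}$ consists only of non-radial functions, so $L_{\mu_0}^+:H^2_{\mathrm{rad}}\to L^2_{\mathrm{rad}}$ is invertible. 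Thus $\mu\mapsto Q_\mu$, and hence $M(\mu)=\norm{Q_\mu}_{L^2}^2$, is real-analytic on $(0,\mu_{\max})$.

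Next I would rule out accumulation of $M^{-1}(\lambda)$ at the endpoints. If $\mu_n\to\mu_{\max}^-$ satisfies $M(\mu_n)=\lambda$, then combining Pohozaev's identity with the equation tested against $Q_{\mu_n}$ yields $\int|\nabla Q_{\mu_n}|^2=\tfrac{3\mu_n}{13}\lambda-\tfrac{9c_{TF}}{65}\int Q_{\mu_n}^{10/3}\leq\tfrac{3\mu_{\max}}{13}\lambda$, so $(Q_{\mu_n})$ is bounded in $H^1_{\mathrm{rad}}$. The compact embedding $H^1_{\mathrm{rad}}\hookrightarrow L^p$ for $p\in(2,6)$ extracts a subsequential strong limit $Q_*$, which must solve~\eqref{R3_eff_model_EulerLagrange} at $\mu_{\max}$; by~\cth{R3_eff_model_existence_and_nondeg}, $Q_*\equiv0$. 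Passing to the limit in $\int|\nabla Q|^2=\int Q^{8/3}-c_{TF}\int Q^{10/3}-\mu\lambda$ then forces $\int|\nabla Q_{\mu_n}|^2\to-\mu_{\max}\lambda<0$, contradicting positivity. For $\mu\to0^+$, I would use the rescaling $\tilde Q_\mu(y):=\mu^{-3/2}Q_\mu(y/\sqrt\mu)$, which is the unique positive radial $H^1$-solution of $-\Delta\tilde Q+c_{TF}\mu\tilde Q^{7/3}-\tilde Q^{5/3}+\tilde Q=0$ and satisfies $M(\mu)=\mu^{3/2}\norm{\tilde Q_\mu}_{L^2}^2$. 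The limit equation at $\mu=0$ is the classical $L^2$-subcritical NLS $-\Delta R-R^{5/3}+R=0$, whose unique positive radial solution is non-degenerate (Kwong's theorem), so an analytic IFT at $(0,R)$ together with the uniqueness from~\cth{R3_eff_model_existence_and_nondeg} gives $\tilde Q_\mu\to R$ in $H^2$ as $\mu\to0^+$. Consequently $M(\mu)\sim\norm{R}_{L^2}^2\mu^{3/2}\to0$.

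Together these show $M^{-1}(\lambda)\subset[a,b]\subset(0,\mu_{\max})$ for some compact subinterval. Since $M$ tends to $0$ at both endpoints, it is not identically equal to $\lambda$, so real-analyticity renders $M^{-1}(\lambda)$ discrete in $(0,\mu_{\max})$, hence finite on the compact set $[a,b]$. The most delicate step is the rescaled convergence at $\mu=0$, which relies on the classical non-degeneracy of NLS ground states and a careful analytic continuation at the singular rescaling $\mu=0$, combined with the uniqueness statement of~\cth{R3_eff_model_existence_and_nondeg} to identify the IFT branch with the rescaled family $\tilde Q_\mu$.
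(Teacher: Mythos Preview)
Your argument is correct and shares with the paper the key step of establishing real-analyticity of $\mu\mapsto M(\mu)$ via the analytic implicit function theorem and the non-degeneracy of $L_\mu^+$ on radial functions. The paper, however, concludes finiteness differently: it observes that $M$ is \emph{onto} $(0,\infty)$ (every $\lambda>0$ is the mass of a minimizer by \cth{R3_eff_model_existence_thm}, hence of some $Q_\mu$), so $M$ is non-constant and the identity theorem makes $M^{-1}(\lambda)$ discrete. Your route --- explicit endpoint analysis via Pohozaev at $\mu_{\max}$ and rescaling to the subcritical NLS ground state at $\mu=0$ --- is longer but actually supplies something the paper's proof glosses over, namely that $M^{-1}(\lambda)$ cannot accumulate at the endpoints of $(0,\mu_{\max})$; surjectivity and analyticity alone do not preclude this, and the paper in effect relies on the endpoint asymptotics stated only in the remark following \ccjt{R3_eff_model_conjecture_uniqueness_and_monotony_M}. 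Two cosmetic points: your Pohozaev constants are off (the correct identity reads $\int|\nabla Q_\mu|^2=\tfrac{3\mu}{5}M(\mu)-\tfrac{9c_{TF}}{25}\int Q_\mu^{10/3}$), and your claim that ``$M$ tends to $0$ at both endpoints'' is only established at $\mu\to0^+$ (at $\mu_{\max}^-$ you argued by contradiction, and in fact $M\to+\infty$ there). Neither affects the validity of your proof.
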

\begin{proof}[Proof of~\cpr{R3_eff_model_finite_number_sol_EL}]
By~\cth{R3_eff_model_existence_thm}, we know that for any mass constraint $\lambda\in(0,+\infty)$, there exist at least one minimizer to the corresponding $J_{\R^3}(\lambda)$ minimization problem. Therefore, for any $\lambda\in(0,+\infty)$, there exists at least one $\mu$ such that the unique positive solution $Q_\mu$ to the associated Euler--Lagrange equation is a minimizer of $J_{\R^3}(\lambda)$ and thus is of mass $M(\mu)=\lambda$. We therefore obtain that $\left(0;\frac{15}{64 c_{TF}}\right)\ni\mu\mapsto M(\mu)\in(0;+\infty)$ is onto. Moreover, this map is real-analytic since the non-degeneracy in \cth{R3_eff_model_existence_and_nondeg} and the analytic implicit function theorem give that $\mu\mapsto Q_\mu$ is real analytic. The map $M$ being onto and real-analytic, then for any $\lambda\in(0;+\infty)$ there exists a finite number of $\mu$'s, which are all in $\left(0;\frac{15}{64 c_{TF}}\right)$, such that the mass $M(\mu)$ of the unique positive solution $Q_\mu$ is equal to $\lambda$.
\end{proof}

We have performed some numerical computations of the solution $Q_\mu$ and the results strongly support the uniqueness of minimizers since $M$ was found to be increasing, see Figure~\ref{figure_numerics_M_strict_croissante}.
\begin{figure}[h]
\centering
    \includegraphics[width=0.7\columnwidth,draft=false]{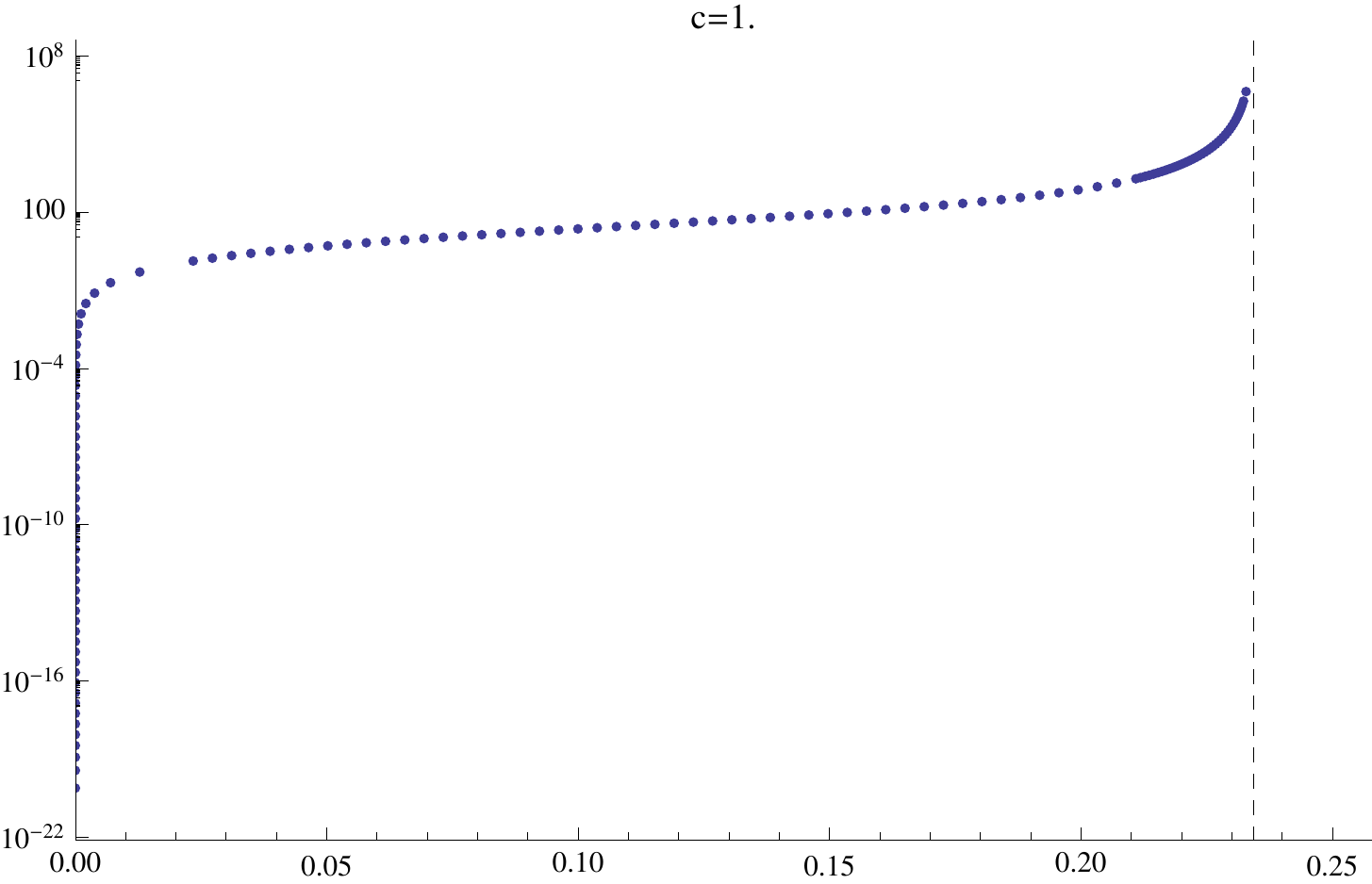}
\caption{Plot of $\mu\mapsto\ln\left(M(\mu)\right)$ on $\left(0;\frac{15}{64 c_{TF}}\right)$.}
\label{figure_numerics_M_strict_croissante}
\end{figure}

\begin{cjt}\label{R3_eff_model_conjecture_uniqueness_and_monotony_M}
The function
\begin{equation}
\begin{aligned}
\left(0;\frac{15}{64 c_{TF}}\right)&\to(0;+\infty)\\
\mu&\mapsto M(\mu)
\end{aligned}
\end{equation}
is strictly increasing and one-to-one. Consequently, for any $0< \mu<\frac{15}{64 c_{TF}}$, there exists a unique minimizer $Q_\mu$ of $J_{\R^3}(\lambda)$, up to a phase and a space translation.
\end{cjt}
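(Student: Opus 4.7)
The plan is to treat $\mu\mapsto M(\mu)$ as a real-analytic function on $\left(0;\tfrac{15}{64 c_{TF}}\right)$ --- a fact already used in the proof of \cpr{R3_eff_model_finite_number_sol_EL} --- and to show $M'(\mu)>0$ throughout that interval. Injectivity then follows at once, and surjectivity onto $(0;+\infty)$ is already established in \cpr{R3_eff_model_finite_number_sol_EL}; the uniqueness of minimizers stated in the second half of the conjecture follows in turn from the bijection $\mu\leftrightarrow\lambda=M(\mu)$ combined with the uniqueness of positive Euler--Lagrange solutions provided by \cth{R3_eff_model_existence_and_nondeg}. Differentiating the Euler--Lagrange equation~\eqref{R3_eff_model_EulerLagrange} in $\mu$ along the real-analytic branch $\mu\mapsto Q_\mu$ yields $L_\mu^+\,\partial_\mu Q_\mu=-Q_\mu$, and since $\Ker L_\mu^+=\vect\{\partial_{x_i}Q_\mu\}_{i=1,2,3}$ is anti-radial, $L_\mu^+$ is invertible on the radial class. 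One thus obtains
$$M'(\mu)=2\int_{\R^3}Q_\mu\,\partial_\mu Q_\mu=-2\,\pscal{Q_\mu,(L_\mu^+)^{-1}Q_\mu},$$
reducing the conjecture to the Vakhitov--Kolokolov-type spectral inequality $\pscal{Q_\mu,(L_\mu^+)^{-1}Q_\mu}<0$ for every admissible $\mu$.

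Assuming this reduction, I would attack the sign question through the algebraic structure of $L_\mu^+$. Since $Q_\mu$ is a constrained ground state, $L_\mu^+$ is non-negative on $\{Q_\mu\}^\perp$, so its negative spectrum on the radial class is at most one-dimensional and the sign of $\pscal{Q_\mu,(L_\mu^+)^{-1}Q_\mu}$ is governed by the projection of $Q_\mu$ onto that single low-lying mode (if any). Two explicitly computable actions constrain that projection. From $L_\mu^-Q_\mu=0$ (see the remark following \cth{R3_eff_model_existence_and_nondeg}) together with the decomposition $L_\mu^+-L_\mu^-=\tfrac43 c_{TF}|Q_\mu|^{4/3}-\tfrac23|Q_\mu|^{2/3}$ read off from~\eqref{R3_eff_model_def_Lplus} and~\eqref{R3_eff_model_def_Lmoins}, one gets
$$L_\mu^+Q_\mu=\tfrac43 c_{TF}Q_\mu^{7/3}-\tfrac23 Q_\mu^{5/3};$$
applying $L_\mu^+$ to the Pohozaev generator $\tfrac32 Q_\mu+x\cdot\nabla Q_\mu$ produces a second such identity. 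Combining these with the Euler--Lagrange and Pohozaev relations, one should in principle be able to express $\pscal{Q_\mu,(L_\mu^+)^{-1}Q_\mu}$ as an explicit linear combination of the positive quantities $\int Q_\mu^{10/3}$, $\int Q_\mu^{8/3}$ and $\int Q_\mu^2$ with $\mu$-dependent coefficients, whose signs one must then control --- possibly by matched asymptotics at the endpoints $\mu\to 0$ and $\mu\to\tfrac{15}{64 c_{TF}}$. A complementary variational route is to use the Lagrange multiplier relation $\mu=-\partial J_{\R^3}/\partial\lambda$ along the branch of minimizers to rewrite $M'(\mu)>0$ as strict concavity of $\lambda\mapsto J_{\R^3}(\lambda)$, which is at least consistent with the strict binding inequality~\eqref{R3_eff_model_strict_binding_ineq}.

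The hard part --- and the reason the statement is left as a conjecture --- is precisely this final sign question. Unlike single-power NLS, where a scaling symmetry forces $M(\mu)$ to be a pure power of $\mu$ and trivialises the sign of $M'$, the two nonlinear exponents $10/3$ and $8/3$ in~\eqref{R3_eff_model_functional} break all scaling symmetry, so no algebraic identity closes the computation. The same obstruction appears in the cubic--quintic analysis of~\cite{KilOhPocVis-17}, where internal critical points of the mass cannot be ruled out by soft arguments; overcoming it here would seem to require either a sharp ODE comparison argument for the radial pair $(Q_\mu,\partial_\mu Q_\mu)$, in the Coffman--McLeod--Serrin style that already lurks behind \cth{R3_eff_model_existence_and_nondeg}, or a quantitative spectral bound relating the magnitude of the negative eigenvalue of $L_\mu^+$ (if it exists) to the spectral gap above $\Ker L_\mu^+$. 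The numerics shown in \cfg{figure_numerics_M_strict_croissante} strongly support the conjecture, but I do not see how to convert that evidence into an unconditional proof with the tools currently available.
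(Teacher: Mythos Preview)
The statement is a \emph{conjecture} in the paper, not a theorem: the paper gives no proof and explicitly leaves the sign of $\pscal{Q_\mu,(L_\mu^+)^{-1}Q_\mu}$ as an open problem, supported only by the numerics of \cfg{figure_numerics_M_strict_croissante} and the remarks following the conjecture. Your proposal correctly reproduces the paper's own reduction to the Vakhitov--Kolokolov condition $M'(\mu)=-2\pscal{Q_\mu,(L_\mu^+)^{-1}Q_\mu}>0$, correctly invokes the analyticity and surjectivity already established in \cpr{R3_eff_model_finite_number_sol_EL}, and honestly concludes that the sign question cannot be closed with the available tools --- which is precisely the paper's position. There is therefore no gap to point out and no alternative proof to compare: your discussion and the paper's are in full agreement that this remains open.
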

\begin{rmq*}
It should be possible to show that the energy $\mu\mapsto {\mathscr J}_{\R^3}(Q_\mu)$ is strictly decreasing close to $\mu=0$ and $\mu=\mu_*$, and real-analytic on $(0,\mu_*)$. Using the concavity of $\lambda\mapsto J_{\R^3}(\lambda)$ (see \clm{R3_eff_model_apriori_properties_J}) one should be able to prove that the function $\lambda\mapsto \mu(\lambda)$ is increasing and continuous, except at a countable set of points where it can jump. From the analyticity there must be a finite number of jumps and we conclude that $\lambda\mapsto J_{\R^3}(\lambda)$ has a unique minimizer for all $\lambda$ except at these finitely many points.
\end{rmq*}
\begin{rmq*}
Following the method of \cite{KilOhPocVis-17}, one can prove there exist $C,C'>0$ such that $M(\mu)=C\mu^{\frac32}+o(\mu^{\frac32})_{\mu\to0^+}$ and $M(\mu)=C'(\mu-\mu_*)^{-3}+o\left((\mu-\mu_*)^{-3}\right)_{\mu\to\mu_*^-}$ where $\mu_*=\frac{15}{64 c_{TF}}$.
\end{rmq*}

This conjecture on $M$ is related to the stability condition on $({L_\mu^+})^{-1}$ that appears in works like~\cite{Weinstein-85, GriShaStra-87}. Indeed, differentiating the Euler--Lagrange equation~\eqref{R3_eff_model_EulerLagrange} with respect to $\mu$, we obtain that $L_\mu^+(\frac{\dd Q_\mu}{\dd \mu})=-Q_\mu$ which thus leads to
$$\frac{\dd}{\dd \mu}\int{Q_\mu}^2=2\pscal{Q_\mu,\frac{\dd Q_\mu}{\dd \mu}}=-2\pscal{Q_\mu,\left(L_\mu^+\right)^{-1} Q_\mu}.$$
Thus our conjecture is that $\ensuremath{\langle Q_\mu,\left(L_\mu^+\right)^{-1} Q_\mu \rangle}<0$ for all $0<\mu<\frac{15}{64 c_{TF}}$ and this corresponds to the fact that all the solutions are local strict minimizers.

\begin{thm}\label{R3_eff_model_consequence_conjecture_uniqueness_and_monotony_M}
If~\ccjt{R3_eff_model_conjecture_uniqueness_and_monotony_M} holds then, for $c$ large enough, there are exactly $N^3$ nonnegative minimizers for the periodic TFDW problem $E_{N\cdot\K,N^3\lambda}(c)$.
\end{thm}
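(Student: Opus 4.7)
My plan is that \cth{main_result} already supplies at least $N^3$ nonnegative minimizers, pairwise distinct modulo $N\mathscr{L}_\K$-periodicity, so it suffices to prove that for $c$ large no further nonnegative minimizer exists. I would combine three ingredients: the uniqueness of the effective blow-up profile (ensured by \ccjt{R3_eff_model_conjecture_uniqueness_and_monotony_M}), the fact that the $S$-minimization pins down the translation, and the non-degeneracy of that profile from \cth{R3_eff_model_existence_and_nondeg}, in an implicit function theorem setup.

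Let $w_c\geq 0$ be any minimizer. By \cth{main_result}, there exist a nucleus $R\in N\cdot\K$ and a subsequence $c_n\to\infty$ such that $c_n^{-3/2}w_{c_n}(R+\cdot/c_n)\to\widetilde Q$ in $L^p_{\mathrm{loc}}(\R^3)$, with $\widetilde Q$ a minimizer of $J_{\R^3}(N^3\lambda)$ also minimizing $S(N^3\lambda)$. Under the conjecture $M$ is a bijection, so there is a unique $\mu_\ast\in(0,\tfrac{15}{64c_{TF}})$ with $M(\mu_\ast)=N^3\lambda$ and every minimizer of $J_{\R^3}(N^3\lambda)$ is a translate of the positive radial decreasing $Q_{\mu_\ast}$. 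Among such translates, the translation-noninvariant term $\int|v|^2/|x|$ in $S$ is uniquely maximized when the radial profile is centered at $0$ (by strict rearrangement), forcing $\widetilde Q=Q_{\mu_\ast}$ centered at the origin. Hence the rescaled limit is canonical: every nonnegative minimizer concentrates at exactly one nucleus $R$ of $N\cdot\K$, with the same blow-up $Q_{\mu_\ast}$.

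It remains to show that each of the $N^3$ nuclei gives rise to exactly one nonnegative minimizer. Writing the Euler--Lagrange equation on the rescaled variable $y=c(x-R)$, it reads
\begin{equation*}
-\Delta u+c_{TF}\, u^{7/3}-u^{5/3}+\nu u+\tfrac{1}{c}\,V_c[u]=0,\qquad\int u^2=N^3\lambda,
\end{equation*}
where $V_c$ collects the rescaled periodic Coulomb contributions (uniformly bounded as $c\to\infty$) and $\nu=c^{-2}\mu_c\to\mu_\ast$. On real-valued functions the linearization at $(Q_{\mu_\ast},\mu_\ast,0)$ involves $L_{\mu_\ast}^+$ whose kernel, by \cth{R3_eff_model_existence_and_nondeg}, is exactly $\vect\{\partial_{x_1}Q_{\mu_\ast},\partial_{x_2}Q_{\mu_\ast},\partial_{x_3}Q_{\mu_\ast}\}$. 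Quotienting by the three translation directions (equivalent to fixing the concentration point at $R$) and letting the scalar $\nu$ absorb the mass constraint renders the linearization an isomorphism. A quantitative implicit function theorem then gives, for $c$ large enough, a unique nonnegative periodic minimizer concentrating at $R$, thereby producing exactly $N^3$ nonnegative minimizers in total.

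\textbf{Main obstacle.} The delicate point is the implicit function step: one must upgrade the $L^p_{\mathrm{loc}}$ convergence supplied by \cth{main_result} to a norm strong enough to carry out the IFT (for instance a weighted $H^2$ exploiting the exponential decay of $Q_{\mu_\ast}$); show that the periodic Coulomb terms $D_\K$ and $\int G_\K|w|^2$, which contribute at the sub-leading order $c$ in the energy, act as a genuine perturbation of order $1/c$ on the rescaled equation in a norm compatible with the inverse of $L_{\mu_\ast}^+$; and extract the Lagrange multiplier asymptotics $c^{-2}\mu_c\to\mu_\ast$, where the injectivity half of \ccjt{R3_eff_model_conjecture_uniqueness_and_monotony_M} is precisely what is needed to pin $\mu_\ast$ down and close the argument.
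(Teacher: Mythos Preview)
Your approach via the implicit function theorem is a genuinely different route from the paper's. The paper argues by direct comparison: assuming two nonnegative minimizers $w_{c_n}$ and $\nu_{c_n}$ concentrate at the same nucleus, it expands $\mathscr{E}_{\K_N,c_n}(\nu_{c_n})=\mathscr{E}_{\K_N,c_n}(w_{c_n}+f_n)$ to second order in $f_n=\nu_{c_n}-w_{c_n}$, obtaining on the rescaled box a quadratic form $\langle L^+_n \breve f_n,\breve f_n\rangle$. The difference $\breve f_n$ automatically satisfies two orthogonality relations: $\langle \breve w_n+\breve\nu_n,\breve f_n\rangle=0$ from the mass constraint, and $\langle\mathscr G(c_n^{-1}\cdot),\nabla((\breve w_n+\breve\nu_n)\breve f_n)\rangle=0$ from the identity $\langle\mathscr G,w_c\nabla w_c\rangle=0$ valid for any minimizer. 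The heart of the argument is then a Weinstein--Grillakis--Shatah--Strauss coercivity statement: under the conjecture, $L^+_\mu$ is uniformly positive on $\{Q\}^\perp\cap\{Q\,\partial_{x_i}|\cdot|^{-1}\}^\perp$, which is proved sector-by-sector in spherical harmonics (the conjecture enters only in the $\ell=0$ sector, via $\langle Q,(L^+_\mu)^{-1}Q\rangle<0$). This coercivity is shown to transfer to $L^+_n$ for $n$ large, forcing $f_n\equiv 0$.

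Compared with your plan, the paper's argument sidesteps exactly the obstacle you flag: it never invokes an IFT on the expanding domain $\K_{Nc}$, so there is no need to build a uniform functional-analytic framework as $c\to\infty$. It also handles the translation kernel more cleanly than ``quotienting'': since the full energy is \emph{not} translation-invariant, your Lyapunov--Schmidt reduction would have to solve a genuine three-dimensional reduced equation and show its unique zero sits at the nucleus, which is more than the $S$-minimization argument you sketch. In the paper, the substitute for fixing translations is the second orthogonality relation above, which holds automatically because both functions are minimizers. Your approach is viable in principle, but the paper's energy-comparison-plus-coercivity route is shorter and avoids the perturbation-on-moving-domains machinery.
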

The proof of~\cth{R3_eff_model_consequence_conjecture_uniqueness_and_monotony_M} is the subject of Section~\ref{section_number_minimizers_under_conjecture}.

\subsection{Numerical simulations}\label{num_sim}
The occurrence of symmetry breaking is an important question in practical calculations. Concerning the general behavior of DFT on this matter, we refer to the discussion in~\cite{SheLeeHea-99} and the references therein.

Our numerical simulations have been run with a constant $c_W=0.186$ in front of the gradient term (see \cite{Lieb-81b} for the choice of this value) and using the software \emph{PROFESS v.3.0} \cite{PROFESS3} which is based on pseudo-potentials (see~\crm{rmq_pseudo} below): we have used a (BCC) Lithium crystal of side-length $4$\AA{} (in order to be physically relevant as the two first alkali metals Lithium and Sodium organize themselves on BCC lattices with respective side length $3.51$\AA{} and $4.29$\AA{}) for which one electron is treated while the two others are included in the pseudo-potential, simulating therefore a lattice of pseudo-atoms with pseudo-charge $Z=\lambda=1$. The relative gain of energy of $2$-periodic minimizers compared to $1$-periodic ones is plotted in Figure~\ref{figure_symm_breaking_joined}. Symmetry breaking occurs at about $\frac34 c\approx 2.48$.
\begin{figure}[h]
\centering
\subcaptionbox{$0\leq \frac34c \leq 4$\label{figure_symm_breaking_0_5}}
{%
\begin{tikzpicture}
\begin{axis}[
scale only axis, % The height and width argument only apply to the actual axis
height=2.5cm,
width=\textwidth/2-width("$?0.5\%$")-0.2cm, % \textwidth minus width of longest label text minus label offset
xmin = -0.1,
xmax = 4.1, 
grid=major,
major grid style={dashed,gray!50},
yticklabel=\pgfmathparse{\tick}\pgfmathprintnumber{\pgfmathresult}\,\%,
yticklabel style = {font=\scriptsize,xshift=0.5ex}]
\addplot[
color = black,
fill = black,
mark = *,
only marks] coordinates {
(0,0)
(0.7386,0)
(1.4771,0)
(2.2157,0)
(2.4372,0)
(2.4520,0)
(2.4668,0)
(2.4742,0)
(2.4816,-0.00725)
(2.4889,-0.02072)
(2.4963,-0.03483)
(2.5111,-0.06487)
(2.5850,-0.25329)
(2.6588,-0.51335)
(2.7327,-0.81314)
(2.8065,-1.18702)
(2.9542,-2.05948)
(3.6928,-7.90875)
};
\end{axis}
\end{tikzpicture}%
}%
\subcaptionbox{Zoom: $2.435\leq \frac34c \leq 2.515$\label{figure_symm_breaking_zoomed}}
{%
\begin{tikzpicture}
\begin{axis}[
scale only axis, % The height and width argument only apply to the actual axis
height=2.5cm,
width=\textwidth/2-width("$?0.5\%$")-0.2cm, % \textwidth minus width of longest label text minus label offset
xmin = 2.435,
xmax = 2.515, 
minor tick num=1,
xminorticks=false,
grid=both,
major grid style={dashed,gray!50},
minor grid style={dotted,gray!25},
yticklabel=\pgfmathparse{\tick}\pgfmathprintnumber{\pgfmathresult}\,\%,
yticklabel style = {font=\scriptsize,xshift=0.5ex},
scaled ticks=false,
tick label style={/pgf/number format/fixed}]
\addplot[
color = black,
fill = black,
mark = *,
only marks] coordinates {
(0,0)
(0.7386,0)
(1.4771,0)
(2.2157,0)
(2.4372,0)
(2.4520,0)
(2.4668,0)
(2.4742,0)
(2.4816,-0.00725)
(2.4889,-0.02072)
(2.4963,-0.03483)
(2.5111,-0.06487)
(2.5850,-0.25329)
(2.6588,-0.51335)
(2.7327,-0.81314)
(2.8065,-1.18702)
(2.9542,-2.05948)
(3.6928,-7.90875)
};
\end{axis}
\end{tikzpicture}%
}%
\caption{Relative gain of energy $\frac{8E_{\K,\lambda}(c)-E_{2\cdot\K,8\lambda}(c)}{8E_{\K,\lambda}(c)}$.}\label{figure_symm_breaking_joined}
\end{figure}
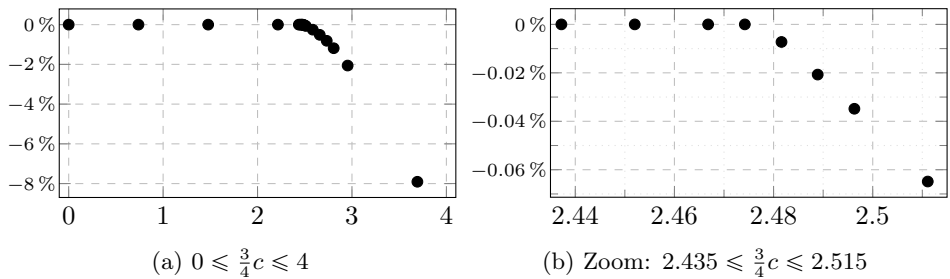
More precisely, minimizing the $2\cdot\K$ problem and the $1\cdot\K$ problem result in the same minimum energy (up to a factor 8) if $\frac34c\lesssim2.474$ while, for $\frac34c\gtrsim2.482$, we have found (at least) one \mbox{$2$-periodic} function for which the energy is lower than the minimal energy for the $1\cdot\K$ problem. Note that changing $c_W$ would affect the critical value of the Dirac constant at which symmetry breaking occurs but the value of $c_W$ does not affect the mathematical proofs (which are presented with $c_W=1$ for convenience).

The plots of the computed minimizers presented in Figure~\ref{figure_symm_breaking_density} visually confirm the symmetry breaking. They also suggest that the electronic density is very much concentrated. However, since the computation uses pseudo-potentials, only one outer shell electron is computed and the density is sharp on an annulus for these values of $c$.
\begin{figure}[h]
\centering
\subcaptionbox{$\frac34c=\frac34 3.35\sqrt[3]{\frac3\pi}\approx2.474$\label{figure_symm_breaking_density_before}}
{%
    \includegraphics[width=0.32\columnwidth,draft=false]{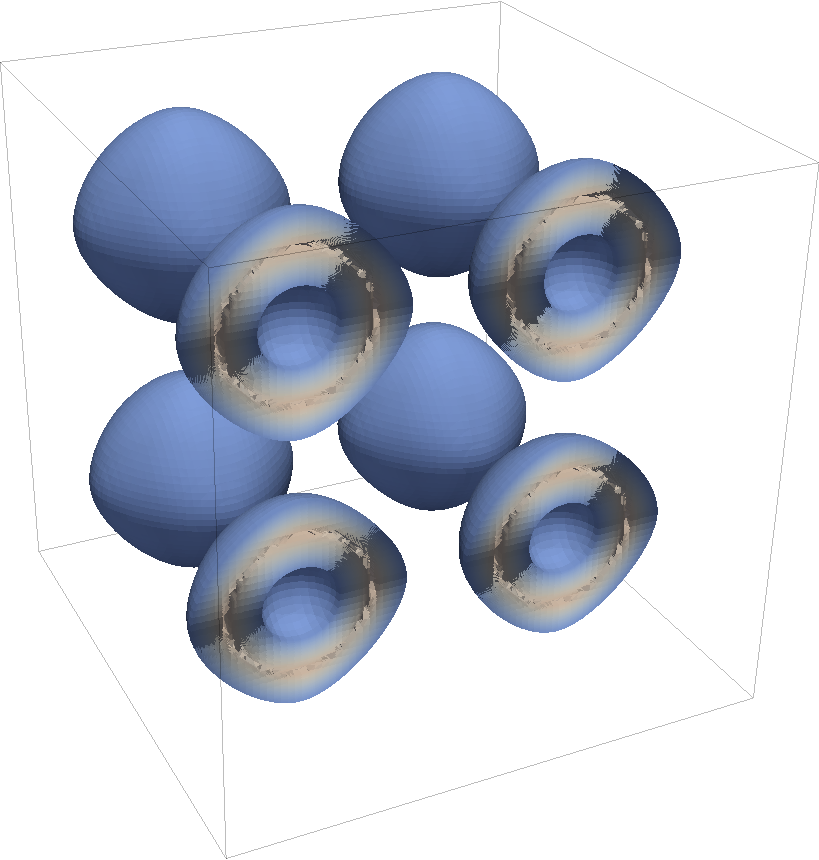}
    \hspace*{\fill}
}%
\subcaptionbox{$\frac34c=\frac34 3.36\sqrt[3]{\frac3\pi}\approx2.482$\label{figure_symm_breaking_density_just_after}}
{%
    \includegraphics[width=0.32\columnwidth,draft=false]{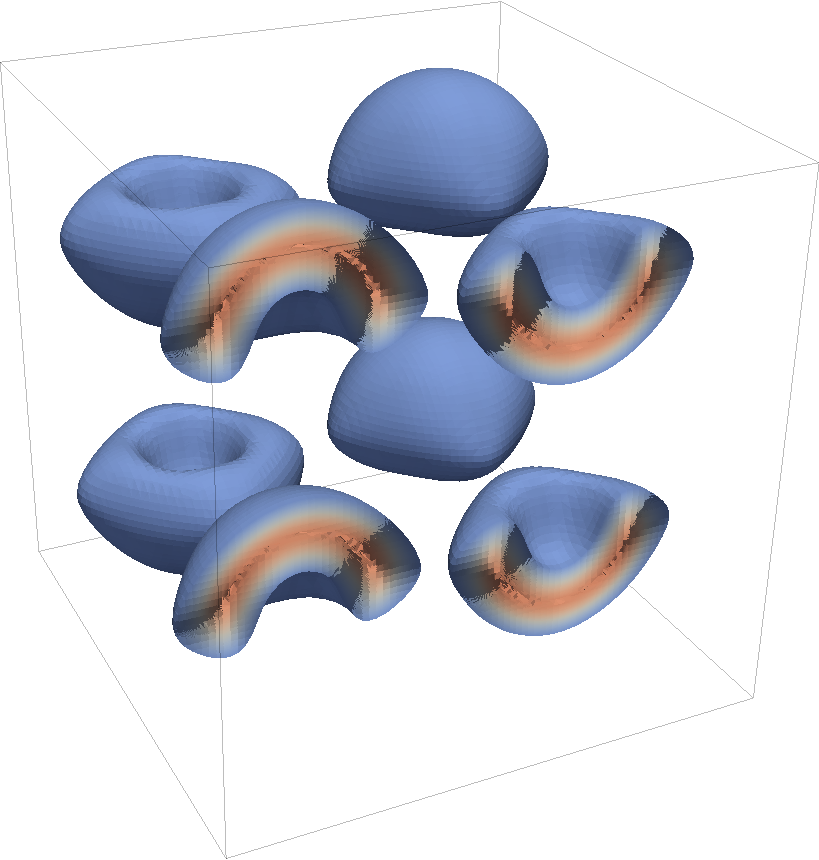}
}%
\subcaptionbox{$\frac34c=\frac34 3.50\sqrt[3]{\frac3\pi}\approx2.585$\label{figure_symm_breaking_density_after}}
{%
    \includegraphics[width=0.32\columnwidth,draft=false]{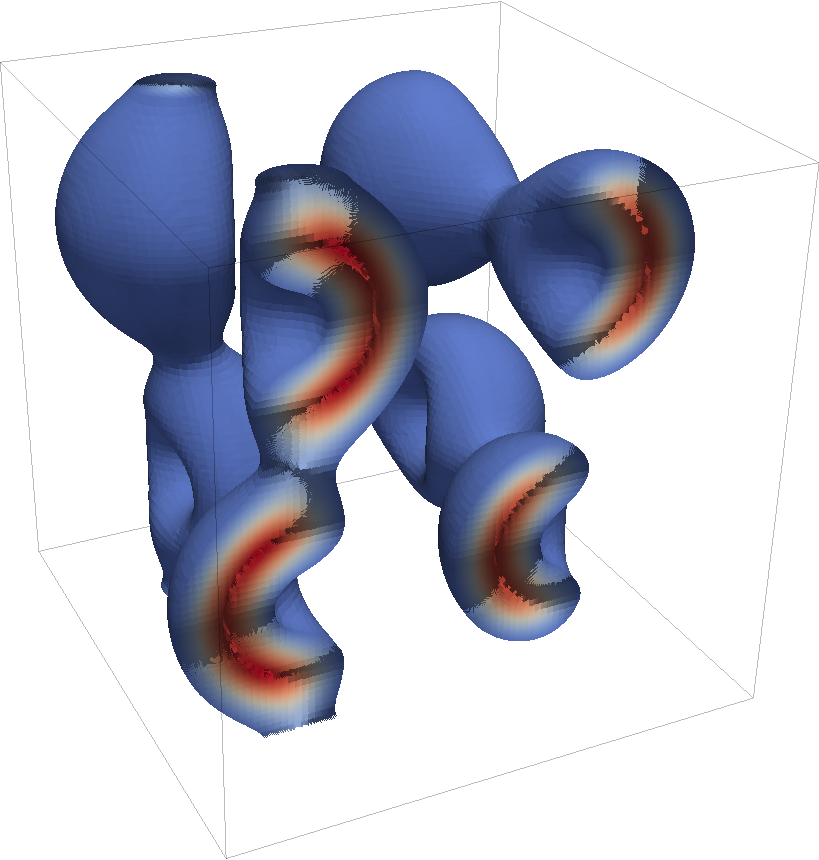}
}%
\caption{Electron density for $Z=1$ and length side $4$\AA. Same "dark-blue to white to dark-red" density scale for (a), (b) and (c).\\(a) \hspace{0.3cm}The computed $2$-periodic minimizer is still $1$-periodic.\\(b-c) The computed $2$-periodic minimizer is not $1$-periodic.}\label{figure_symm_breaking_density}
\end{figure}

The numerical value of the critical constant $\frac34 c\approx 2.48$ obtained in our numerical simulations is outside the usual values $\frac34 c\in[0.93;1.64]$ chosen in the literature. However, it is of the same order of magnitude and one cannot exclude that symmetry breaking would happen inside this range for different systems, meaning for different values of $Z$ and/or of the size of the lattice.

\begin{rmq}[Pseudo-potentials]\label{rmq_pseudo}
The software \emph{PROFESS v.3.0} that we used in our simulations is based on pseudo-potentials~\cite{Johnson-73}. This means that only $n$ outer shell electrons among the $N$ electrons of the unit cell are considered. The $N-n$ other ones are described through a pseudo-potential, together with the nucleus. Mathematically, this means that we have $\lambda=n$ and that the nucleus-electron interaction $-N\int_\K G_\K|w|^2$ is replaced by $-\int_\K G_{\textrm{ps}}|w|^2$ where the $\K$-periodic function $G_{\textrm{ps}}(x)$ behaves like $n/|x|$ when $|x|\to0$. All our results apply to this case as well. More precisely, we only need that $G_{\textrm{ps}}(x)-n/|x|$ is bounded on $\K$. We emphasize that the electron-electron interaction $D_\K$ is not changed by this generalization, and still involves the periodic Coulomb potential $G_\K$.
\end{rmq}

\section{The effective model in \texorpdfstring{$\R^3$}{R3}}\label{section_R3_eff_model}
This section is dedicated to the proof of~\cth{R3_eff_model_existence_thm} and~\cth{R3_eff_model_existence_and_nondeg}. We first give a lemma on the functional ${\mathscr J}_{\R^3}$, which has been defined in \eqref{R3_eff_model_functional}.
\begin{lemme}\label{R3_eff_model_lower_bound_on_NRJ}
For $c_{TF},\lambda>0$ and $u\in H^1(\R^3)$ such that $\norm{u}_2^2=\lambda$, we have
\begin{equation}\label{R3_eff_model_general_lower_bound_on_NRJ_c_quadratic_minoration_energy}
{\mathscr J}_{\R^3}(u)\geq\norm{\nabla u}^2_{L^2(\R^3)}-\frac{15}{64}\frac{\lambda}{c_{TF}}.
\end{equation}
\end{lemme}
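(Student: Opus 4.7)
The proof reduces to a pointwise inequality. The key observation is that the non-gradient part of ${\mathscr J}_{\R^3}(u)$ can be bounded pointwise from below by a constant multiple of $|u|^2$, so that integrating and using the $L^2$ constraint $\norm{u}_2^2=\lambda$ yields the claim.

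Concretely, my plan is to show that for every real $t\geq 0$ one has
\begin{equation*}
\frac{3}{5}c_{TF}\,t^{\frac{10}{3}}-\frac{3}{4}\,t^{\frac{8}{3}}+\frac{15}{64\,c_{TF}}\,t^{2}\;\geq\;0.
\end{equation*}
To see this, set $s=t^{2/3}\geq 0$ so that $t^{10/3}=s^{5}$, $t^{8/3}=s^{4}$, $t^{2}=s^{3}$. The inequality becomes
\begin{equation*}
s^{3}\left(\frac{3}{5}c_{TF}\,s^{2}-\frac{3}{4}\,s+\frac{15}{64\,c_{TF}}\right)\;\geq\;0,
\end{equation*}
and the quadratic factor has discriminant $\left(\tfrac34\right)^{2}-4\cdot\tfrac{3}{5}c_{TF}\cdot\tfrac{15}{64\,c_{TF}}=\tfrac{9}{16}-\tfrac{9}{16}=0$. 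Hence the quadratic is a nonnegative perfect square (with double root $s_{\ast}=\tfrac{15}{16\,c_{TF}}$), which gives the pointwise bound.

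Applying this pointwise inequality to $t=|u(x)|$ and integrating over $\R^{3}$ yields
\begin{equation*}
\frac{3}{5}c_{TF}\int_{\R^{3}}|u|^{\frac{10}{3}}-\frac{3}{4}\int_{\R^{3}}|u|^{\frac{8}{3}}\;\geq\;-\frac{15}{64\,c_{TF}}\int_{\R^{3}}|u|^{2}\;=\;-\frac{15}{64}\frac{\lambda}{c_{TF}},
\end{equation*}
and adding the kinetic term $\int_{\R^{3}}|\nabla u|^{2}$ produces exactly \eqref{R3_eff_model_general_lower_bound_on_NRJ_c_quadratic_minoration_energy}.

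There is no real obstacle here: the only thing to notice is that the coefficients $\tfrac{3}{5}c_{TF}$, $\tfrac{3}{4}$ and the constant $\tfrac{15}{64\,c_{TF}}$ are precisely tuned so that the associated degree-two polynomial in $s=|u|^{2/3}$ is a perfect square. This is the sharp pointwise estimate; any worse constant than $\tfrac{15}{64\,c_{TF}}$ would follow more easily, but the exact value $\tfrac{15}{64\,c_{TF}}$ arises as the (unique) threshold making the discriminant vanish, which also explains why $\tfrac{15}{64\,c_{TF}}$ appears as the critical value of $\mu$ in \cth{R3_eff_model_existence_and_nondeg}.
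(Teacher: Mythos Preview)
Your proof is correct and essentially identical to the paper's: both establish the pointwise bound by recognizing that the quadratic in $s=|u|^{2/3}$ is a perfect square (the paper writes it explicitly as a square, you check that the discriminant vanishes). One cosmetic slip: the double root is $s_{\ast}=\tfrac{5}{8\,c_{TF}}$, not $\tfrac{15}{16\,c_{TF}}$, but this does not affect the argument.
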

\begin{proof}[Proof of~\clm{R3_eff_model_lower_bound_on_NRJ}]
It follows from
$$\frac35c_{TF}|u|^{\frac{10}3}-\frac34|u|^{\frac83} = \left(\sqrt{\frac35c_{TF}|u|^{\frac{10}3}}-\sqrt{\frac34|u|^{\frac83}}\right)^2 \geq -\frac{15\lambda}{64c_{TF}}|u|^2.$$
\end{proof}
We deduce from this some preliminary properties for the effective model in $\R^3$.
\begin{lemme}[A priori properties of $J_{\R^3}(\lambda)$]\label{R3_eff_model_apriori_properties_J}
Let $c_{TF}$ and $\lambda$ be positive constants.
We have
\begin{equation}\label{R3_eff_model_apriori_bounds}
-\frac{15}{64}\frac{\lambda}{c_{TF}} < J_{\R^3}(\lambda)<0.
\end{equation}
The function, $\lambda\mapsto J_{\R^3}(\lambda)$ is continuous on $[0;+\infty)$ and negative, concave and strictly decreasing on $(0;+\infty)$.
\end{lemme}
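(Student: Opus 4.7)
The plan is to establish the two-sided bound on $J_{\R^3}(\lambda)$ first, and then deduce the qualitative properties from it.

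\textbf{Bounds.} The lower bound $J_{\R^3}(\lambda)\geq-\tfrac{15\lambda}{64c_{TF}}$ is immediate from \clm{R3_eff_model_lower_bound_on_NRJ} by dropping $\norm{\nabla u}_{L^2(\R^3)}^2\geq 0$. For the strict upper bound $J_{\R^3}(\lambda)<0$, I would fix any nonzero $\phi\in C_c^\infty(\R^3)$ with $\norm{\phi}_{L^2(\R^3)}^2=\lambda$ and use the mass-preserving scaling $\phi_t(x):=t^{3/2}\phi(tx)$; a direct computation gives
\begin{equation*}
{\mathscr J}_{\R^3}(\phi_t) = t^2\Bigl(\norm{\nabla\phi}_{L^2}^2 + \tfrac{3}{5}c_{TF}\norm{\phi}_{L^{10/3}}^{10/3}\Bigr) - t\cdot \tfrac{3}{4}\norm{\phi}_{L^{8/3}}^{8/3},
\end{equation*}
which is strictly negative for $t>0$ sufficiently small. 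To upgrade the lower bound to a strict one, I would argue by contradiction: if $J_{\R^3}(\lambda)=-\tfrac{15\lambda}{64c_{TF}}$, then any minimizing sequence $(u_n)$ would have to force $\norm{\nabla u_n}_{L^2}\to 0$ (since the pointwise bound in \clm{R3_eff_model_lower_bound_on_NRJ} saturates only when the Dirichlet term vanishes); the Sobolev embedding would then give $\norm{u_n}_{L^6}\to 0$, and interpolating with the fixed mass constraint $\norm{u_n}_{L^2}^2=\lambda$ would yield $\norm{u_n}_{L^p}\to 0$ for every $p\in(2,6]$, hence for $p\in\{8/3,10/3\}$. This forces ${\mathscr J}_{\R^3}(u_n)\to 0$, contradicting the strict negativity of $J_{\R^3}(\lambda)$.

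\textbf{Subadditivity and strict decrease.} I would prove $J_{\R^3}(\lambda_1+\lambda_2)\leq J_{\R^3}(\lambda_1)+J_{\R^3}(\lambda_2)$ by the standard disjoint-support construction: for $\varepsilon>0$, take compactly supported near-minimizers $u_1,u_2$ at masses $\lambda_1,\lambda_2$ (obtained by truncating and readjusting the mass by a small multiplicative factor) and form $v_n(x):=u_1(x)+u_2(x-x_n)$ with $|x_n|\to\infty$. Since every term in ${\mathscr J}_{\R^3}$ is local, once the supports are disjoint $\norm{v_n}_{L^2}^2=\lambda_1+\lambda_2$ exactly and ${\mathscr J}_{\R^3}(v_n)={\mathscr J}_{\R^3}(u_1)+{\mathscr J}_{\R^3}(u_2)$. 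Combined with the already-established $J_{\R^3}(\mu)<0$, subadditivity immediately yields $J_{\R^3}(\lambda+\mu)\leq J_{\R^3}(\lambda)+J_{\R^3}(\mu)<J_{\R^3}(\lambda)$ for every $\lambda,\mu>0$.

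\textbf{Concavity and continuity.} Continuity at $\lambda=0$ follows from the sandwich $-\tfrac{15\lambda}{64c_{TF}}<J_{\R^3}(\lambda)<0$, which forces $J_{\R^3}(\lambda)\to 0=J_{\R^3}(0)$ as $\lambda\to 0^+$. Concavity on $(0,+\infty)$ is the main technical obstacle, and continuity on that open interval will be a free consequence of it. The strategy I would follow is to introduce the Legendre-type auxiliary
\begin{equation*}
K(\mu):=\inf_{u\in H^1(\R^3)}\bigl\{{\mathscr J}_{\R^3}(u)+\mu\norm{u}_{L^2(\R^3)}^2\bigr\},
\end{equation*}
which is concave in $\mu$ as an infimum of affine functions of $\mu$ and is finite (in fact identically $0$) for $\mu\geq\tfrac{15}{64c_{TF}}$ thanks to \clm{R3_eff_model_lower_bound_on_NRJ}. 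The identity $J_{\R^3}(\lambda)=\sup_\mu\bigl[K(\mu)-\mu\lambda\bigr]$ would then realize $J_{\R^3}$ as an upper envelope of affine functions of $\lambda$, hence concave. The nontrivial part (and the main difficulty, since unlike single-power NLS the functional is not homogeneous, so one scaling alone does not close the argument) is to establish the matching inequality "$J_{\R^3}(\lambda)\leq\sup_\mu[K(\mu)-\mu\lambda]$", which I would attack using the two-parameter rescaled trial family $u_{\lambda,\sigma}(x):=(\lambda\sigma^3)^{1/2}\phi(\sigma x)$ and optimizing in $\sigma>0$ for each admissible $\phi$ to build, together with the subadditivity of the previous step, tangent affine minorants of $J_{\R^3}$ at every $\lambda>0$.
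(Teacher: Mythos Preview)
Your treatment of the bounds, the strict negativity, the subadditivity, and the strict monotonicity is correct. The subadditivity route to strict decrease is different from the paper's (which reads it off from the scaled representation below) but perfectly valid.

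The concavity argument, however, has a genuine error. A supremum of affine functions is \emph{convex}, not concave, so the identity $J_{\R^3}(\lambda)=\sup_\mu\bigl[K(\mu)-\mu\lambda\bigr]$ would make $J_{\R^3}$ convex; likewise ``tangent affine minorants at every point'' is the characterization of convexity. In fact your dual function is degenerate: from \clm{R3_eff_model_lower_bound_on_NRJ} one has $K(\mu)=0$ for $\mu\geq\tfrac{15}{64c_{TF}}$, while for $0<\mu<\tfrac{15}{64c_{TF}}$ one checks (using test functions that nearly saturate the pointwise bound and letting the mass tend to infinity) that $K(\mu)=-\infty$. Hence $\sup_\mu[K(\mu)-\mu\lambda]=-\tfrac{15\lambda}{64c_{TF}}$, which is strictly below $J_{\R^3}(\lambda)$ by the first part of your own proof. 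So the Legendre route cannot close.

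The paper's device is a single mass-normalizing scaling: writing $u=\lambda^{1/2}v(\lambda^{1/3}\,\cdot)$ with $\norm{v}_{L^2}^2=1$ gives
\[
J_{\R^3}(\lambda)=\lambda\,F\bigl(\lambda^{-2/3}\bigr),\qquad F(t):=\inf_{\norm{v}_2^2=1}\Bigl\{t\norm{\nabla v}_2^2+\tfrac35 c_{TF}\norm{v}_{10/3}^{10/3}-\tfrac34\norm{v}_{8/3}^{8/3}\Bigr\}.
\]
Here $F$ is an infimum of functions affine in $t$, hence concave; it is nondecreasing in $t$ and negative (by the first part). A short calculus argument (regularize $F$ by convolution and differentiate) shows that for any concave nondecreasing negative $f$, the map $\lambda\mapsto\lambda f(\lambda^{-2/3})$ is concave and strictly decreasing on $(0,\infty)$. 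This is the missing idea; once you have it, continuity on $(0,\infty)$ follows from concavity as you noted.
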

\begin{proof}[Proof of~\clm{R3_eff_model_apriori_properties_J}]
The negativity of $J_{\R^3}(\lambda)$ is obtained by taking $\nu$ large enough in the computation of ${\mathscr J}_{\R^3}(\nu^{-\frac32}u(\nu^{-1}\cdot))$.
\clm{R3_eff_model_lower_bound_on_NRJ} gives the lower bound in~\eqref{R3_eff_model_apriori_bounds}, which implies the continuity at $\lambda=0$. Moreover, after scaling, we have
\begin{equation*}
J_{\R^3}(\lambda)=\lambda \underset{\hspace{1.4cm}=:F(\lambda^{-2/3})}{\underbrace{\inf\limits_{\substack{u\in H^1(\R^3) \\ \norm{u}_{L^2(\R^3)}^2=1}}\left\{\lambda^{-\frac23}\norm{\nabla u}_{L^2(\R^3)}^2+\frac35c_{TF}\norm{u}^{\frac{10}3}_{L^{\frac{10}3}(\R^3)}-\frac34\norm{u}^{\frac83}_{L^{\frac83}(\R^3)}\right\}}}
\end{equation*}
where $t\mapsto F(t)$ is concave on $[0;+\infty)$, since $a\mapsto\inf_u (a f(u)+g(u))$ is concave for all $f,g$, hence continuous on $(0;+\infty)$ on which it is also negative (because $J_{\R^3}$ is negative) and non-decreasing. The continuity of $F$ gives that $\lambda\mapsto J_{\R^3}(\lambda)$ is continuous as well. Moreover, if $f$ is a concave non-decreasing negative function, then $\lambda\mapsto \lambda f(\lambda^{-2/3})$ is concave strictly decreasing on $(0,\infty)$, which proves that our energy $J$ is concave. To prove that, one can regularize $f$ by means of a convolution and then compute its first two derivatives.
\end{proof}

\subsection{Proof of~\texorpdfstring{\cth{R3_eff_model_existence_thm}}{existence of J(R3) minimizers}}\label{section_R3_eff_model_existence}
We divide the proof into several steps for clarity.
\addtocontents{toc}{\SkipTocEntry} %Hide subsection entry in Table of contents
\subsubsection*{\textbf{Step 1: Large binding inequality}}\label{Proof_THM_existence_R3_step_large_binding}
\begin{lemme}\label{R3_eff_model_large_binding}
Let $c_{TF}\geq0$ be a constant. Then
\begin{equation}\label{R3_eff_model_large_binding_ineq}
J_{\R^3}(\lambda)\leq J_{\R^3}(\lambda')+J_{\R^3}(\lambda-\lambda'), \qquad \forall \; 0\leq\lambda'\leq\lambda
\end{equation}
\end{lemme}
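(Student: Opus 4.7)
The plan is to establish \eqref{R3_eff_model_large_binding_ineq} by the standard trial-function argument: place well-chosen near-minimizers for $J_{\R^3}(\lambda')$ and $J_{\R^3}(\lambda-\lambda')$ far apart in $\R^3$ and use their sum as a competitor for $J_{\R^3}(\lambda)$. The key observation is that every term in ${\mathscr J}_{\R^3}$ is a local integrand (no non-local interaction such as Coulomb appears in the effective model on $\R^3$), so the energy becomes additive as soon as the two densities have disjoint supports.

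The boundary cases $\lambda'\in\{0,\lambda\}$ are trivial, since the constraint $\norm{u}_{L^2(\R^3)}^2=0$ forces $u\equiv 0$ and hence $J_{\R^3}(0)=0$, consistently with the continuity of $J_{\R^3}$ on $[0;+\infty)$ given by \clm{R3_eff_model_apriori_properties_J}. For $0<\lambda'<\lambda$, I fix $\epsilon>0$ and first produce compactly supported $u_\epsilon,v_\epsilon\in H^1(\R^3)$ with $\norm{u_\epsilon}_{L^2(\R^3)}^2=\lambda'$, $\norm{v_\epsilon}_{L^2(\R^3)}^2=\lambda-\lambda'$, and
$${\mathscr J}_{\R^3}(u_\epsilon)\leq J_{\R^3}(\lambda')+\epsilon,\qquad {\mathscr J}_{\R^3}(v_\epsilon)\leq J_{\R^3}(\lambda-\lambda')+\epsilon.$$
This is standard: start from any near-minimizer in $H^1(\R^3)$, truncate with a smooth cut-off supported in a large ball (using the Sobolev embedding $H^1(\R^3)\hookrightarrow L^p(\R^3)$ for $p\in[2,6]$, which covers the two nonlinear exponents $8/3$ and $10/3$), and then rescale by a factor close to $1$ to restore the exact mass, invoking continuity of ${\mathscr J}_{\R^3}$ to absorb the perturbation into~$\epsilon$.

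Once $u_\epsilon$ and $v_\epsilon$ are in hand, I pick $\tau\in\R^3$ with $|\tau|$ large enough that $\operatorname{supp} u_\epsilon$ and $\operatorname{supp} v_\epsilon(\cdot-\tau)$ are disjoint, and use $w_\epsilon:=u_\epsilon+v_\epsilon(\cdot-\tau)$ as a trial function. Disjointness of the supports combined with translation invariance of~${\mathscr J}_{\R^3}$ yields both $\norm{w_\epsilon}_{L^2(\R^3)}^2=\lambda$ and ${\mathscr J}_{\R^3}(w_\epsilon)={\mathscr J}_{\R^3}(u_\epsilon)+{\mathscr J}_{\R^3}(v_\epsilon)$, so that $J_{\R^3}(\lambda)\leq J_{\R^3}(\lambda')+J_{\R^3}(\lambda-\lambda')+2\epsilon$. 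Letting $\epsilon\to 0$ concludes. The argument is essentially routine; the only mildly delicate point is the truncation-and-rescaling step, which introduces no real difficulty precisely because ${\mathscr J}_{\R^3}$ has no long-range term.
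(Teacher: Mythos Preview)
Your proof is correct and follows essentially the same approach as the paper, which simply says the inequality is obtained by computing ${\mathscr J}_{\R^3}(\phi+\chi)$ for two bubbles $\phi,\chi$ of disjoint compact supports and masses $\lambda'$, $\lambda-\lambda'$. You have just spelled out the standard truncation-and-rescaling details that the paper leaves implicit.
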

\begin{proof}[Proof of~\clm{R3_eff_model_large_binding}]
The inequality~\eqref{R3_eff_model_large_binding_ineq} is obtained by computing ${\mathscr J}_{\R^3}(\phi+\chi)$ where $\phi$ and $\chi$ are two bubbles of disjoint compact supports and of respective masses $\lambda'$ and $\lambda-\lambda'$.
\end{proof}
\begin{rmq*}
The strict inequality in~\eqref{R3_eff_model_large_binding_ineq}, which is important for applying Lions' concentration-compactness method, actually holds and is proved later in~\cpr{R3_eff_model_strict_binding}.
\end{rmq*}

\addtocontents{toc}{\SkipTocEntry} %Hide subsection entry in Table of contents
\subsubsection*{\textbf{Step 2: For any $\lambda>0$, $J_{\R^3}(\lambda)$ has a minimizer.}}\label{Proof_THM_existence_R3_step_existence_minimizers}
This is a classical result to which we will only give a sketch of proof (for a detailed proof, see~\cite{Ricaud-PhD}).
First, by rearrangement inequalities, we have ${\mathscr J}_{\R^3}(v)\geq{\mathscr J}_{\R^3}(v^*)$ for every $v\in H^1(\R^3)$. Therefore, one can restrict the minimization to nonnegative radial decreasing functions. By the compact embedding $H_{rad}^1(\R^3) \hookrightarrow L^p(\R^3)$, for $2<p<6$, we find
\begin{equation}\label{Proof_THM_existence_R3_step_existence_minimizers_sequenceOFineq}
{J}_{\R^3}(\lambda')\leq{\mathscr J}_{\R^3}(Q)\leq\lim\inf{\mathscr J}_{\R^3}(Q_n)={J}_{\R^3}(\lambda)
\end{equation}
for a minimizing sequence $Q_n\wto Q$ and where $\lambda':=\norm{Q}_{L^2(\R^3)}^2\leq\lambda$. Then, $J_{\R^3}$ being strictly decreasing by~\clm{R3_eff_model_apriori_properties_J}, $\lambda'=\lambda$ and the limit is strong in $L^2(\R^3)$, hence in $H^1(\R^3)$ by classical arguments. This proves that the limit $Q$ is a minimizer.

\addtocontents{toc}{\SkipTocEntry} %Hide subsection entry in Table of contents
\subsubsection*{\textbf{Step 3: Any minimizer is in} \texorpdfstring{$\boldsymbol{H^2(\R^3)}$}{\textbf{H2}} \textbf{and solves the E-L equation~\eqref{R3_eff_model_EulerLagrange}}}\label{Proof_THM_existence_R3_step_EL_and_H2}
The proof that any minimizer solves the Euler--Lagrange equation is classical and implies, together with $u\in H^1(\R^3)$, that $u\in H^2(\R^3)$ by elliptic regularity. Moreover, we have
\begin{equation}\label{R3_eff_model_EulerLagrange_mu_formulae}
\mu=-\frac{\norm{\nabla Q}_{L^2(\R^3)}^2+c_{TF}\norm{Q}^{10/3}_{L^{10/3}(\R^3)}-\norm{Q}^{8/3}_{L^{8/3}(\R^3)}}{\lambda}.
\end{equation}

\addtocontents{toc}{\SkipTocEntry} %Hide subsection entry in Table of contents
\subsubsection*{\textbf{Step 4: Strict binding inequality}}\label{Proof_THM_existence_R3_step_strict_binding}
\begin{prop}\label{R3_eff_model_strict_binding}
Let $c_{TF}>0$ and $\lambda>0$.
\begin{equation*}\label{R3_eff_model_strict_binding_ineq_in_section}\tag{\ref{R3_eff_model_strict_binding_ineq}}
\forall \; 0<\lambda'<\lambda, \; J_{\R^3}(\lambda)< J_{\R^3}(\lambda')+J_{\R^3}(\lambda-\lambda').
\end{equation*}
In particular, for any integer $N\geq2$,
\begin{equation}\label{R3_eff_model_strict_binding_ineq_Nlambda}
J_{\R^3}(N^3\lambda)<N^3J_{\R^3}(\lambda)<0.
\end{equation}
\end{prop}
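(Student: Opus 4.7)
The plan is to reduce strict binding to strict monotonicity of the ratio $g(\lambda) := J_{\R^3}(\lambda)/\lambda$ via a scaling argument. Fix a minimizer $Q$ of $J_{\R^3}(\lambda)$, which exists by Step 2. First, I would test the mass-preserving dilation $Q_\tau(x) := \tau^{3/2} Q(\tau x)$ and use that $\tau = 1$ is a critical point of
$$\tau \mapsto {\mathscr J}_{\R^3}(Q_\tau) = \tau^2 (a_1 + a_2) - \tau a_3,$$
where $a_1 := \norm{\nabla Q}^2_{L^2(\R^3)}$, $a_2 := \frac{3c_{TF}}{5}\norm{Q}^{10/3}_{L^{10/3}(\R^3)}$ and $a_3 := \frac{3}{4}\norm{Q}^{8/3}_{L^{8/3}(\R^3)}$. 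Differentiating at $\tau = 1$ yields the Pohozaev-type identity $a_3 = 2(a_1 + a_2)$, so in particular $2a_2 - a_3 = -2a_1 < 0$.

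Next, I would test the mass-changing scaling $Q^\theta := \sqrt{\theta}\, Q$, which has mass $\theta\lambda$. A direct expansion gives
$${\mathscr J}_{\R^3}(Q^\theta) - \theta\, J_{\R^3}(\lambda) = \theta(\theta^{1/3} - 1)\bigl[(\theta^{1/3} + 1) a_2 - a_3\bigr],$$
and by the previous step the bracket tends to $-2a_1 < 0$ as $\theta \to 1$, so the difference is strictly negative for $\theta$ slightly greater than $1$. Consequently $J_{\R^3}(\theta\lambda) \leq {\mathscr J}_{\R^3}(Q^\theta) < \theta\, J_{\R^3}(\lambda)$, which rewrites as $g(\theta\lambda) < g(\lambda)$. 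Thus $g$ is strictly decreasing on a right-neighborhood of every $\lambda > 0$. Combined with the continuity of $g$ from~\clm{R3_eff_model_apriori_properties_J}, this upgrades to global strict decrease of $g$ on $(0, +\infty)$ by the standard fact that a continuous function with strict local right-decrease at every point is strictly decreasing.

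Finally, for $0 < \lambda' < \lambda$ the strict decrease of $g$ gives both $g(\lambda) < g(\lambda')$ and $g(\lambda) < g(\lambda - \lambda')$. Multiplying by $\lambda' > 0$ and $\lambda - \lambda' > 0$ respectively (which preserves the inequalities without sign flip) and summing yields $\lambda g(\lambda) < \lambda' g(\lambda') + (\lambda - \lambda') g(\lambda - \lambda')$, which is exactly~\eqref{R3_eff_model_strict_binding_ineq}. The inequality $J_{\R^3}(N^3\lambda) < N^3 J_{\R^3}(\lambda) < 0$ for $N \geq 2$ then follows by a straightforward induction on $N$ using the case $\lambda' = \lambda$ of~\eqref{R3_eff_model_strict_binding_ineq}, together with $J_{\R^3}(\lambda) < 0$ from~\clm{R3_eff_model_apriori_properties_J}.

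The main obstacle I anticipate is the passage from local to global strict monotonicity of $g$. Although standard for continuous functions, the argument requires some care: if $g(a) \leq g(b)$ for some $a < b$, one considers $t^* := \sup\{t \in [a,b] : g(t) \leq g(a)\}$ and uses continuity plus strict local right-decrease at $t^*$ to derive a contradiction, with a separate short argument for the equality case $g(a) = g(b)$ based on applying local right-decrease at $a$ to the constant function $g\big|_{[a,b]}$.
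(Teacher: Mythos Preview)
Your argument is correct, but it takes a longer route than the paper's. Both proofs establish the same key fact---that $g(\lambda):=J_{\R^3}(\lambda)/\lambda$ is strictly decreasing on $(0,\infty)$---and then deduce strict binding from it in the same way. The difference is in how you obtain strict monotonicity of $g$.

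The paper uses the single scaling $u\mapsto u(\lambda^{-1/3}\cdot)$ to write
\[
J_{\R^3}(\lambda)=\lambda\inf_{\norm{u}_2^2=1}\Big\{\lambda^{-2/3}\norm{\nabla u}_2^2+\tfrac35 c_{TF}\norm{u}_{10/3}^{10/3}-\tfrac34\norm{u}_{8/3}^{8/3}\Big\}=:\lambda\,\mathscr F_\lambda(Q_\lambda),
\]
and then, for $\lambda'<\lambda$, tests the minimizer $Q_{\lambda'}$ in $\mathscr F_\lambda$: since the only $\lambda$-dependence is the coefficient $\lambda^{-2/3}$ in front of the (strictly positive) gradient term, one gets $\mathscr F_{\lambda'}(Q_{\lambda'})>\mathscr F_\lambda(Q_{\lambda'})$ immediately, hence $g(\lambda')>g(\lambda)$ \emph{globally} in one line. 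No Pohozaev identity and no local-to-global passage are needed.

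Your approach instead combines two scalings: the mass-preserving dilation $Q_\tau=\tau^{3/2}Q(\tau\cdot)$ to extract the virial relation $a_3=2(a_1+a_2)$, and the amplitude scaling $\sqrt\theta\,Q$ to produce a test function showing $g(\theta\lambda)<g(\lambda)$ for $\theta$ slightly above $1$. This yields only a \emph{local} right-decrease of $g$, which you then upgrade to global strict decrease via the continuity-and-sup argument you sketch. That passage is valid (your outline handles both the strict case $g(a)<g(b)$ and the equality case $g(a)=g(b)$ correctly), but it is an extra step you could avoid entirely with the paper's choice of scaling. A minor notational point: when you write ``the case $\lambda'=\lambda$'' for the induction giving $J_{\R^3}(N^3\lambda)<N^3 J_{\R^3}(\lambda)$, you mean applying~\eqref{R3_eff_model_strict_binding_ineq} with total mass $k\lambda$ split as $(k-1)\lambda+\lambda$; this is fine, though the strict decrease of $g$ already gives $J_{\R^3}(N^3\lambda)<N^3 J_{\R^3}(\lambda)$ directly without induction.
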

\begin{proof}[Proof of~\cpr{R3_eff_model_strict_binding}]By the same scaling as in~\clm{R3_eff_model_apriori_properties_J}, we have
\begin{equation}\label{R3_eff_model_strict_binding_definition_F_lambda}
J_{\R^3}(\lambda)=\lambda \inf\limits_{\substack{u\in H^1(\R^3) \\ \norm{u}_{L^2(\R^3)}^2=1}}\underset{\hspace{1.2cm}=:{\mathscr F}_{\lambda}(u)}{\underbrace{\left\{\lambda^{-\frac23}\norm{\nabla u}_{L^2(\R^3)}^2+\frac35c_{TF}\norm{u}^{\frac{10}3}_{L^{\frac{10}3}(\R^3)}-\frac34\norm{u}^{\frac83}_{L^{\frac83}(\R^3)}\right\}}}.
\end{equation}
Let $\lambda>\lambda'>0$. By~\hyperref[Proof_THM_existence_R3_step_existence_minimizers]{Step~2}, the minimization problem
$$ \inf\limits_{\substack{u\in H^1(\R^3) \\ \norm{u}_{L^2(\R^3)}^2=1}}\left\{\norm{\nabla u}_{L^2(\R^3)}^2+\frac35c_{TF}{\lambda'}^{\frac23}\norm{u}^{\frac{10}3}_{L^{\frac{10}3}(\R^3)}-\frac34{\lambda'}^{\frac23}\norm{u}^{\frac83}_{L^{\frac83}(\R^3)}\right\}$$
has a minimizer $Q_{\lambda'}$ which, by~\hyperref[Proof_THM_existence_R3_step_EL_and_H2]{Step~3}, is in $H^2(\R^3)$ thus continuous. In particular, $\norm{\nabla Q_{\lambda'}}_{L^2(\R^3)}>0$ thus ${\mathscr F}_{\lambda'}(Q_{\lambda'})> {\mathscr F}_{\lambda}(Q_{\lambda'})$, where ${\mathscr F}_{\lambda}$ is defined in~\eqref{R3_eff_model_strict_binding_definition_F_lambda}. Therefore
\begin{align*}
J_{\R^3}(\lambda')=\lambda' {\mathscr F}_{\lambda'}(Q_{\lambda'})>\lambda' {\mathscr F}_{\lambda}(Q_{\lambda'})&=\frac{\lambda'}{\lambda} \mathscr J_{\R^3}(Q_{\lambda'}(\lambda^{-1/3}\cdot))\geq \frac{\lambda'}{\lambda} J_{\R^3}(\lambda),
\end{align*}
and we finally obtain
\begin{equation*}
J_{\R^3}(\lambda-\lambda')+J_{\R^3}(\lambda')>\frac{\lambda-\lambda'}{\lambda}J_{\R^3}(\lambda)+\frac{\lambda'}{\lambda}J_{\R^3}(\lambda)=J_{\R^3}(\lambda),
\end{equation*}
as we wanted.
\end{proof}

\addtocontents{toc}{\SkipTocEntry} %Hide subsection entry in Table of contents
\subsubsection*{\textbf{Step 5: }\texorpdfstring{$\boldsymbol{-\mu<0}$}{-mu<0}}\label{Proof_THM_existence_R3_step_mu_strictneg}
Let us choose $v$ in the minimization domain of $J_{\R^3}(1)$. Then, defining the positive number
$$\alpha_0=\frac38\frac{\norm{v}^{8/3}_{8/3}\lambda^{1/3}}{\norm{\nabla v}_{2}^2+\frac35c_{TF}\norm{v}^{10/3}_{10/3}\lambda^{2/3}},$$
we can obtain for any $\lambda>0$ an upper bound on $J_{\R^3}(\lambda)$. Namely
\begin{equation}\label{Proof_THM_existence_R3_step_mu_strictneg_first_ineq}
J_{\R^3}(\lambda)\leq{\mathscr J}_{\R^3}\left(\sqrt{\lambda}{{\alpha_0}^{3/2}v(\alpha_0}\cdot)\right)= -\frac9{64}\lambda^{5/3}\frac{\norm{v}^{16/3}_{8/3}}{\norm{\nabla v}_{2}^2+\frac35c_{TF}\norm{v}^{10/3}_{10/3}\lambda^{2/3}}.
\end{equation}
Moreover, for all $\epsilon$ and for $Q$ a minimizer to $J_{\R^3,c}(\lambda)$, we have
\begin{align*}
{\mathscr J}_{\R^3}((1-\epsilon)Q)&={\mathscr J}_{\R^3}(Q)+2\epsilon\lambda\mu+O(\epsilon^2),
\end{align*}
which leads, together with~\eqref{R3_eff_model_large_binding_ineq} and the fact that $Q$ is a minimizer of $J_{\R^3}(\lambda)$, to
$$2\epsilon\lambda\mu+O(\epsilon^2)\geq J_{\R^3}((1-\epsilon)^2\lambda)- J_{\R^3}(\lambda)\geq - J_{\R^3}(\epsilon(2-\epsilon)\lambda),$$
for any $\epsilon\in(0;2)$. Using this last inequality together with the upper bound~\eqref{Proof_THM_existence_R3_step_mu_strictneg_first_ineq}, we get for any $\epsilon\in(0;1)$ that
\begin{align*}
2\lambda\mu&\geq \frac9{64}\epsilon^{2/3}(2-\epsilon)^{5/3}\lambda^{5/3}\frac{\norm{v}^{16/3}_{8/3}}{\norm{\nabla v}_{2}^2+\frac35c_{TF}\norm{v}^{10/3}_{10/3}\epsilon^{2/3}(2-\epsilon)^{2/3}\lambda^{2/3}}+O(\epsilon)
\end{align*}
which leads to $\mu>0$ by taking $\epsilon$ small enough.

\addtocontents{toc}{\SkipTocEntry} %Hide subsection entry in Table of contents
\subsubsection*{\textbf{Step 6: Positivity of nonnegative minimizers}}\label{Proof_THM_existence_R3_step_Q_strict_pos}
Let $Q\geq0$ be a minimizer. By~\hyperref[Proof_THM_existence_R3_step_EL_and_H2]{Step~3}, $0 \nequiv Q \in H^2(\R^3)\subset C(\R^3)$ and $W:=c_{TF}|Q|^{\frac43}-|Q|^{\frac23}+\mu$ is in $\in L^\infty(\R^3)$. Therefore, the Euler--Lagrange equation gives $Q>0$ thanks to~\cite[Theorem 9.10]{LieLos-01}.

\addtocontents{toc}{\SkipTocEntry} %Hide subsection entry in Table of contents
\subsubsection*{\textbf{Step 7: nonnegative minimizers are radial strictly decreasing up to translations}}\label{Proof_THM_existence_R3_step_pos_minimizers_radial_strictDecreas}
This step is a consequence of~\hyperref[Proof_THM_existence_R3_step_Q_strict_pos]{Step~6} and is the subject of the following proposition.
\begin{prop}\label{R3_eff_model_radial_strict_decr_minimizer}
Let $\lambda>0$. Any positive minimizer to $J_{\R^3}(\lambda)$ is radial strictly decreasing, up to a translation.
\end{prop}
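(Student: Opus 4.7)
The plan is to combine Schwarz rearrangement with Cauchy--Lipschitz uniqueness for the radial ODE. Let $Q>0$ be a positive minimizer of $J_{\R^3}(\lambda)$ and let $Q^*$ denote its Schwarz symmetrization. The three potential-type norms $\norm{u}^2_{L^2}$, $\norm{u}^{10/3}_{L^{10/3}}$ and $\norm{u}^{8/3}_{L^{8/3}}$ are invariant under rearrangement, while the Polya--Szego inequality gives $\norm{\nabla Q^*}_{L^2}\leq\norm{\nabla Q}_{L^2}$. Hence ${\mathscr J}_{\R^3}(Q^*)\leq{\mathscr J}_{\R^3}(Q)=J_{\R^3}(\lambda)$, so $Q^*$ is itself a minimizer with the matching Dirichlet norm $\norm{\nabla Q^*}_{L^2}=\norm{\nabla Q}_{L^2}$, and \hyperref[Proof_THM_existence_R3_step_Q_strict_pos]{Step~6} applied to $Q^*$ yields $Q^*>0$ on $\R^3$. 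Writing $Q^*(x)=f(|x|)$ with $f:[0,\infty)\to(0,\infty)$ continuous and nonincreasing, elliptic regularity together with a bootstrap argument on the ODE gives $f\in C^2((0,\infty))$, and the radial form of the Euler--Lagrange equation from \hyperref[Proof_THM_existence_R3_step_EL_and_H2]{Step~3} reads
\begin{equation}\label{eq_radial_ODE_plan}
f''(r)+\frac{2}{r}f'(r)=c_{TF}f(r)^{7/3}-f(r)^{5/3}+\mu f(r),\qquad r>0.
\end{equation}

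Next I would argue that $f$ is strictly decreasing. Suppose by contradiction that $f'(r_0)=0$ at some $r_0>0$, and set $c:=f(r_0)>0$. The Taylor expansion $f(r_0+h)=c+\tfrac{1}{2}f''(r_0)h^2+o(h^2)$, combined with the monotonicity inequalities $f(r_0-|h|)\geq c\geq f(r_0+|h|)$, forces $f''(r_0)=0$: a non-zero value of $f''(r_0)$ would violate monotonicity on one of the two sides of $r_0$. Inserting $f''(r_0)=f'(r_0)=0$ into \eqref{eq_radial_ODE_plan} yields the algebraic relation $c_{TF}c^{4/3}-c^{2/3}+\mu=0$, so the constant function $r\mapsto c$ is itself a solution of \eqref{eq_radial_ODE_plan} sharing the Cauchy data $(c,0)$ with $f$ at $r_0$. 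Since the right-hand side of \eqref{eq_radial_ODE_plan} is smooth in $f$ on $\{f>0\}$, the Cauchy--Lipschitz theorem applied on any interval $(\epsilon,\infty)$ on which $f$ stays positive yields $f\equiv c$ on all of $(0,\infty)$, in direct contradiction with the decay $f(r)\to 0$ as $r\to\infty$ --- a consequence of $Q^*\in L^2(\R^3)$ and monotonicity. Therefore $f'(r)<0$ for every $r>0$, so $Q^*$ is radial strictly decreasing.

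Finally, to transfer the conclusion from $Q^*$ to $Q$, I would invoke the characterization of the equality case of the Polya--Szego inequality due to Brothers and Ziemer: the strict monotonicity of $f$ implies that the critical set $\{|\nabla Q^*|=0\}\cap\{0<Q^*<\norm{Q^*}_{L^\infty}\}$ is empty, and the equality $\norm{\nabla Q^*}_{L^2}=\norm{\nabla Q}_{L^2}$ therefore forces $Q$ to coincide almost everywhere --- and, by continuity, pointwise --- with a translate of $Q^*$. The main technical obstacle is verifying the Brothers--Ziemer hypothesis, which is precisely what the ODE-uniqueness argument above is designed to supply; the remaining ingredients are standard properties of Schwarz rearrangements and elliptic regularity.
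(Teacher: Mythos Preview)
Your proof is correct and follows the same overall architecture as the paper's: pass to the Schwarz rearrangement $Q^*$, use equality in the P\'olya--Szeg\H{o} inequality, verify the Brothers--Ziemer hypothesis on $Q^*$, and conclude that $Q$ is a translate of $Q^*$. The only substantive difference lies in how you establish that $Q^*$ is \emph{strictly} decreasing (equivalently, that the critical set $\{|\nabla Q^*|=0\}\cap\{0<Q^*<\norm{Q^*}_\infty\}$ is null). The paper invokes real-analyticity of positive solutions to~\eqref{R3_eff_model_EulerLagrange} (via Morrey's theorem), which forces all level sets of $Q^*$ to have measure zero; you instead run a direct Cauchy--Lipschitz argument on the radial ODE, showing that a critical point $f'(r_0)=0$ would propagate to $f\equiv\mathrm{const}$ and contradict the $L^2$ decay. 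Your route is more elementary in that it avoids the machinery of analytic regularity for nonlinear elliptic equations, at the modest cost of the short monotonicity-plus-Taylor step to get $f''(r_0)=0$; the paper's route is terser but imports a heavier regularity result. Both are perfectly valid ways to feed into \cite[Theorem~1.1]{BroZie-88}.
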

\begin{proof}[Proof of~\cpr{R3_eff_model_radial_strict_decr_minimizer}] Let $0\leq Q\in H^1(\R^3; \R)$ be a minimizer of $J_{\R^3}(\lambda)$. We denote by $Q^*$ its Schwarz rearrangement which is, as mentioned in first part of~\hyperref[Proof_THM_existence_R3_step_existence_minimizers]{Step~2}, also a minimizer and, consequently, $\int_{\R^3} |\nabla Q^*|^2 = \int_{\R^3} |\nabla Q|^2$. Moreover, by~\hyperref[Proof_THM_existence_R3_step_EL_and_H2]{Step~3} and~\hyperref[Proof_THM_existence_R3_step_Q_strict_pos]{Step~6}, $Q>0$ and $Q^*>0$ are in $H^2(\R^3; \R)$ and solutions of the Euler--Lagrange equation~\eqref{R3_eff_model_EulerLagrange}. They are therefore real-analytic (see e.g.~\cite{Morrey-58}) which implies that $\left|\{x|Q(x)=t\}\right|=\left|\{x|Q^*(x)=t\}\right|=0$ for any $t$. In particular, the radial non-increasing function $Q^*$ is in fact radial strictly decreasing. We then use~\cite[Theorem 1.1]{BroZie-88} to obtain $Q^*=Q$ a.e., up to a translation. Finally, $Q$ and $Q^*$ being continuous, the equality holds in fact everywhere.
\end{proof}

\addtocontents{toc}{\SkipTocEntry} %Hide subsection entry in Table of contents
\subsubsection*{\textbf{Step 8: }\texorpdfstring{$\boldsymbol{-\mu}$}{-mu} \textbf{is the lowest eigenvalue of}~\texorpdfstring{$\boldsymbol{H_Q}$,}{H,} \textbf{is simple, and} \texorpdfstring{$\boldsymbol{Q=z|Q|}$}{Q=z|Q|}}\label{Proof_THM_existence_R3_step_mu_lowest_simple_Q_up_phase_factor}
It is classical that the first eigenvalue of a Schr\"odinger operator $-\Delta+V$ is non-degenerate and that any nonnegative eigenfunction must be the first, see e.g.~\cite[Chapter 11]{LieLos-01}.

\addtocontents{toc}{\SkipTocEntry} %Hide subsection entry in Table of contents
\subsubsection*{\textbf{Step 9: Minimizing sequences are precompact up to a translations.}}\label{Proof_R3_eff_model_existence_thm_Step9}
Since the strict binding inequality~\eqref{R3_eff_model_strict_binding_ineq_in_section} holds, this follows from a result of Lions in~\cite[Theorem I.2]{Lions-84b}.

\medskip
This concludes the proof of~\cth{R3_eff_model_existence_thm}.

\qed

\subsection{Proof of~\texorpdfstring{\cth{R3_eff_model_existence_and_nondeg}}{uniqueness and non-degeneracy of positive solutions to E--L equation on R3}}\label{section_R3_eff_model_existence_and_nondeg}
The uniqueness of radial solutions has been proved by Serrin and Tang in~\cite{SerTan-00}. However, we need the non-degeneracy of the solution. Both uniqueness and non-degeneracy can be proved following line by line the method in~\cite[Thm. 2]{LewRot-15} (the argument is detailed in \cite{Ricaud-PhD}). One slight difference is the application of the moving plane method to prove that positive solutions are radial. Contrarily to~\cite{LewRot-15} we cannot use~\cite[Thm. 2]{GidNiNir-81} because our function
\begin{equation}\label{R3_eff_model_def_F}
F_\mu(y)=-c_{TF}y^{\frac73}+y^{\frac53}-\mu y
\end{equation}
is not $C^2$. However, given that nonnegative solutions are positive, one can show that they are $C^\infty$ and, therefore, we can apply~\cite[Thm. 1.1]{Li-91}.
\qed

\section{Regime of small \texorpdfstring{$c$}{c}: uniqueness of the minimizer to \texorpdfstring{$E_{\K,\lambda}(c)$}{the periodic TFDW model}}\label{section_small_c}
We first give some useful properties of $G_\K$ in the following lemma.
\begin{lemme}[The periodic Coulomb potential $G_\K$]\label{G_K_in_Lp}
The function $G_{\K}-|\cdot|^{-1}$ is bounded on $\K$. Thus, there exits $C$ such that for any $x\in\K\setminus\{0\}$, we have
\begin{equation}\label{equation_G_K_in_Lp}
0\leq G_\K(x) \leq \frac{C}{|x|}.
\end{equation}
In particular, $G_\K\in L^p(\K)$ for $1\leq p<3$. The Fourier transform of $G_\K$ is
\begin{equation}\label{G_K_fourier}
\widehat{G}_\K(\xi)=4\pi\sum\limits_{k\in{\mathscr L}^*_{\K}\setminus\{0\}} \frac{\delta_k(\xi)}{|k|^2}+\delta_0(\xi)\int_\K G_\K(x)\dd x
\end{equation}
where ${\mathscr L}^*_{\K}$ is the reciprocal lattice of ${\mathscr L}_\K$. Hence, for any $f\nequiv0$ for which $D_\K(f,f)$ is defined, we have $D_\K(f,f)>0$.
\end{lemme}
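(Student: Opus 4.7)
The plan is to handle the pointwise bound on $G_\K$ near $0$ first, deduce the $L^p$ membership, and then derive the Fourier identity \eqref{G_K_fourier}, from which the strict positivity of $D_\K$ will follow by Plancherel.

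First, to control $G_\K$ near the origin, I would introduce $h(x):=G_\K(x)-|x|^{-1}$ on a small open ball $B\subset\K$ around $0$ containing no other lattice point. Since $-\Delta(|x|^{-1})=4\pi\delta_0$ in $\mathcal D'(\R^3)$, subtracting from \eqref{Laplacian_G_K_equation} yields $-\Delta h = -4\pi$ on $B$ in the distributional sense, so elliptic regularity gives $h\in C^\infty(B)$. Away from the origin, $G_\K$ is smooth (it solves $-\Delta G_\K=-4\pi$ on $\K\setminus\{0\}$) and $|x|^{-1}$ is bounded, hence $h$ is bounded on all of $\K\setminus\{0\}$. Combined with $G_\K\geq 0$ from the normalization $\min_\K G_\K=0$, this gives \eqref{equation_G_K_in_Lp}: the additive constant is absorbed into $C|x|^{-1}$ thanks to $|x|\leq\mathrm{diam}(\K)$. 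The inclusion $G_\K\in L^p(\K)$ for $1\leq p<3$ is then immediate from the fact that $|x|^{-1}\in L^p_{\mathrm{loc}}(\R^3)$ iff $p<3$.

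For \eqref{G_K_fourier} I would expand $G_\K$ in its Fourier series on the torus $\K$ and match modes in \eqref{Laplacian_G_K_equation}. By Poisson summation, $\sum_{j\in\mathscr{L}_\K}\delta_j$ rewrites as a Dirac comb on the dual lattice $\mathscr{L}^*_\K$, whose $k=0$ component is exactly cancelled by the constant $-1$ in the right-hand side. Matching the remaining Fourier modes against $-\Delta G_\K$ then produces a coefficient proportional to $|k|^{-2}$ at each $k\neq 0$, with proportionality $4\pi$ in the convention implicit in \eqref{G_K_fourier}, while the $k=0$ coefficient (not fixed by the PDE) is determined by the normalization $\min_\K G_\K=0$ to equal $\int_\K G_\K$. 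Viewing $G_\K$ as a tempered distribution on $\R^3$ then yields \eqref{G_K_fourier}.

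Finally, for $D_\K(f,f)>0$ when $f\not\equiv 0$, Plancherel on the torus $\K$ rewrites the bilinear form as
\begin{equation*}
D_\K(f,f) \;=\; \sum_{k\in\mathscr{L}^*_\K} \hat c_k^{G_\K}\,|\hat f_k|^2,
\end{equation*}
where $\hat c_k^{G_\K}$ and $\hat f_k$ denote the Fourier coefficients on $\K$ (using $|\K|=1$). Each $\hat c_k^{G_\K}$ is strictly positive: for $k\neq 0$ by the explicit formula $\propto|k|^{-2}$, and for $k=0$ because $\hat c_0^{G_\K}=\int_\K G_\K>0$, since $G_\K\geq 0$ is not identically zero (it blows up at the origin). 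As any $f\not\equiv 0$ has at least one nonzero Fourier coefficient, $D_\K(f,f)>0$ follows. The only delicate points are the sign bookkeeping for $h$ in the first step and tracking the Fourier conventions in the second; no step presents a substantial obstacle.
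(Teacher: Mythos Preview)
Your proposal is correct and follows essentially the same approach as the paper. The paper's proof is extremely terse: it cites Lieb--Simon for the fact that $\lim_{x\to 0}\big(G_\K(x)-|x|^{-1}\big)=M\in\R$ (which, combined with smoothness of $G_\K$ away from the lattice, gives boundedness of $h$ on $\K$), and then says the Fourier expression is ``a direct computation''. Your elliptic regularity argument for $h$ is exactly how one would prove the cited result from scratch, your Fourier-series matching is the direct computation the paper alludes to, and your Plancherel argument for $D_\K(f,f)>0$ spells out what the paper leaves implicit in the word ``Hence''.
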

\begin{proof}[Proof of~\clm{G_K_in_Lp}]
The first part follows from the fact that
$$\lim\limits_{x\to0}{G_\K(x)-|x|^{-1}}=M\in\R,$$
see~\cite[VI.2]{LieSim-77b}. The expression of the Fourier transform is a direct computation.
\end{proof}

\subsection{Existence of minimizers to \texorpdfstring{$E_{\K,\lambda}(c)$}{the periodic TFDW model}}\label{Section_existence_minimizer_E}
In order to prove~\cth{main_result_2}, we need the existence of minimizers to $E_{\K,\lambda}(c)$, for any $c\geq0$, which is done in this section.
\begin{prop}[Existence of minimizers to $E_{\K,\lambda}(c)$]\label{K_complete_model_EulerLagrange}
Let $\K$ be the unit cube and, $c_{TF}>0$, $\lambda>0$ and $c\geq0$ be real constants.
\begin{enumerate}[label=\roman*.,leftmargin=1.4em]
	\item There exists a nonnegative minimizer to $E_{\K,\lambda}(c)$ and any minimizing sequence $(w_n)_n$ strongly converges in $H^1_{\textrm{per}}({\K})$ to a minimizer, up to extraction of a subsequence.
	\item Any minimizer $w_c$ is in $H^2_{\textrm{per}}({\K})$, is non-constant and solves the Euler--Lagrange equation
\begin{equation}\label{K_complete_model_EulerLagrange_equation}
\left(-\Delta +c_{TF}|w_c|^{\frac43}-c|w_c|^{\frac23}-G_\K +(|w_c|^2\star G_\K)\right)w_c=-\mu_{w_c} w_c,
\end{equation}
with
\begin{equation}\label{K_complete_model_EulerLagrange_mu_formulae}
\mu_{w_c}=-\frac{\norm{\nabla w_c}_2^2+c_{TF}\norm{w_c}^{10/3}_{10/3}-c\norm{w_c}^{8/3}_{8/3}+D_\K(|w_c|^2,|w_c|^2)-\pscal{G_\K,|w_c|^2}_{L^2({\K})}}{\lambda}.
\end{equation}
	\item Up to a phase factor, a minimizer $w_c$ is positive and the unique ground-state eigenfunction of the self-adjoint operator, with domain $H^2_{\textrm{per}}(\K)$,
$$H_{w_c}:=-\Delta +c_{TF}|w_c|^{\frac43}-c|w_c|^{\frac23}-G_\K +(|w_c|^2\star G_\K).$$
\end{enumerate}
\end{prop}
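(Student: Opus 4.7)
The plan is to run the direct method of the calculus of variations on the compact torus $\K$, which is considerably simpler than the $\R^3$ analysis of Section~\ref{section_R3_eff_model_existence} because the Sobolev embedding $H^1_{\textrm{per}}(\K)\hookrightarrow L^p(\K)$ is compact for $1\leq p<6$. The main inputs will be a coercivity estimate analogous to \clm{R3_eff_model_lower_bound_on_NRJ} and the integrability and positivity properties of $G_\K$ collected in \clm{G_K_in_Lp}.

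For (i), I would first establish coercivity. Combining the pointwise bound $\frac{3}{5}c_{TF}|w|^{10/3}-\frac{3}{4}c|w|^{8/3}\geq -\frac{15c^2}{64c_{TF}}|w|^2$, the positivity $D_\K(|w|^2,|w|^2)\geq 0$, and the estimate $\int_\K G_\K|w|^2\leq \|G_\K\|_{L^{3/2}(\K)}\|w\|_{L^6(\K)}^2$ absorbed through the Sobolev inequality into a fraction of $\|\nabla w\|_{L^2}^2$, I get a lower bound of the form $\mathscr E_{\K,c}(w)\geq \frac{1}{2}\|\nabla w\|_{L^2}^2 - C$ on the $L^2$-sphere of radius $\sqrt{\lambda}$. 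Hence $E_{\K,\lambda}(c)>-\infty$ and any minimizing sequence $(w_n)_n$ is bounded in $H^1_{\textrm{per}}(\K)$. Replacing $w_n$ by $|w_n|$ using Kato's inequality (which does not increase the energy), I may assume $w_n\geq 0$. Extracting a weak limit $w_n\wto w_c$ in $H^1_{\textrm{per}}(\K)$, which by Rellich is strong in $L^p(\K)$ for $2\leq p<6$, all nonlinear terms pass to the limit: the polynomial ones by $L^p$-convergence, the attraction term by Hölder with $G_\K\in L^{3/2}$, and $D_\K(|w_n|^2,|w_n|^2)\to D_\K(|w_c|^2,|w_c|^2)$ by Young's convolution inequality together with $|w_n|^2\to|w_c|^2$ in $L^3(\K)$. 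Weak lower semicontinuity of the gradient term then shows $w_c$ is a minimizer, and the equality of the limiting energies forces $\|\nabla w_n\|_{L^2}\to\|\nabla w_c\|_{L^2}$, upgrading weak $H^1$-convergence to strong.

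For (ii), the Euler--Lagrange equation~\eqref{K_complete_model_EulerLagrange_equation} and the formula~\eqref{K_complete_model_EulerLagrange_mu_formulae} for $\mu_{w_c}$ follow from the standard Lagrange multiplier computation on the $L^2$-sphere, pairing the equation with $w_c$. For the $H^2$-regularity I would rewrite the equation as $-\Delta w_c=-Vw_c$ with $V:=c_{TF}|w_c|^{4/3}-c|w_c|^{2/3}-G_\K+|w_c|^2\star G_\K+\mu_{w_c}$ and bootstrap: since $|w_c|^2\star G_\K\in L^\infty(\K)$ by Young (from $|w_c|^2\in L^3$ and $G_\K\in L^{3/2}$), while $G_\K\in L^p(\K)$ for every $p<3$, I obtain $Vw_c\in L^q$ for some $q>3/2$; elliptic regularity gives $w_c\in W^{2,q}(\K)\subset L^\infty(\K)$, after which $Vw_c\in L^2$ and hence $w_c\in H^2_{\textrm{per}}(\K)$. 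Non-constancy is automatic: inserting a constant $w_c\equiv \sqrt{\lambda}$ into~\eqref{K_complete_model_EulerLagrange_equation} would force $G_\K$ itself to be constant, contradicting \clm{G_K_in_Lp}.

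For (iii), with $w_c\geq 0$ as in (i) and $w_c\in H^2_{\textrm{per}}(\K)$ as in (ii), the equation reveals $w_c$ to be a nonnegative eigenfunction of the self-adjoint operator $H_{w_c}$. Harnack's inequality applied to~\eqref{K_complete_model_EulerLagrange_equation} upgrades $w_c\geq 0$ to $w_c>0$, and Perron--Frobenius for the positivity-improving semigroup $e^{-tH_{w_c}}$ (the potential being relatively form-bounded thanks to the integrability of $G_\K$) yields that the bottom of the spectrum of $H_{w_c}$ is simple with a strictly positive eigenfunction. Since $w_c>0$ is itself an eigenfunction, it must coincide with this ground state, so $-\mu_{w_c}$ is the smallest eigenvalue and $w_c$ is the unique ground state up to a phase. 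The main technical obstacle will be the regularity bootstrap of (ii): because $G_\K$ just fails to lie in $L^3(\K)$, a one-shot estimate $Vw_c\in L^2$ is not available, and one is forced to first gain $L^\infty$-control on $w_c$ via an intermediate $W^{2,q}$-embedding before closing the argument.
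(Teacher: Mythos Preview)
Your overall strategy is correct and matches the paper's sketch, but two points deserve comment.

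First, your coercivity step in (i) is not quite complete: the estimate $\int_\K G_\K|w|^2\leq \|G_\K\|_{L^{3/2}(\K)}\|w\|_{L^6(\K)}^2$ cannot simply be ``absorbed into a fraction of $\|\nabla w\|_{L^2}^2$'' because $\|G_\K\|_{L^{3/2}(\K)}$ times the Sobolev constant may well exceed $1$. The paper fixes this by splitting $G_\K=\mathds{1}_{\{|x|<r\}}G_\K+\mathds{1}_{\{|x|\geq r\}}G_\K$ and choosing $r$ small so that the $L^{3/2}$-part has norm at most $\epsilon$; the bounded remainder then contributes only a harmless constant times $\lambda$. This is a standard trick and easy to insert.

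Second, and more interestingly, your remark that ``a one-shot estimate $Vw_c\in L^2$ is not available'' is in fact wrong: the paper avoids your bootstrap entirely by invoking the periodic Hardy inequality. From \eqref{equation_G_K_in_Lp} one has $|G_\K(x)|\leq C|x|^{-1}$ on $\K$, hence
\[
\|G_\K w_c\|_{L^2(\K)}\leq C\,\big\||x|^{-1}w_c\big\|_{L^2(\K)}\leq C'\|w_c\|_{H^1_{\textrm{per}}(\K)},
\]
which puts the right-hand side of the Euler--Lagrange equation directly in $L^2(\K)$ and yields $w_c\in H^2_{\textrm{per}}(\K)$ in one step. Your $W^{2,q}$ bootstrap is correct and self-contained, but the Hardy route is shorter and is also what the paper reuses later (e.g.\ in the proof of \clm{cvgce_EK_JK}). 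For (iii) your Perron--Frobenius/positivity-improving argument is equivalent to the paper's, which phrases the same conclusion via the ground-state representation $\langle u,(H_{w_c}+\mu_{w_c})u\rangle=\int_\K w_c^2\,|\nabla(u/w_c)|^2\geq 0$.
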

Since the problem is posed on a bounded domain, this is a classical result to which we only give a sketch of proof. For a detailed proof, see \cite{Ricaud-PhD}. Note that for shortness, we have denoted $\norm{\cdot}_p=\norm{\cdot}_{L^p(\K)}$.

\begin{proof}[Sketch of proof of~\cpr{K_complete_model_EulerLagrange}]
In order to prove \emph{i.}, we need the following result that will be useful all along the paper, and is somewhat similar to \clm{R3_eff_model_lower_bound_on_NRJ}.
\begin{lemme}\label{K_complete_model_lower_bound_on_NRJ}
There exist positive constants $a<1$ and $C$ such that for any $c\geq0$, $c_{TF}, \lambda>0$ and any $u\in H^1_{\textrm{per}}(\mathbb{K})$ with $\norm{u}_2^2=\lambda$, we have
\begin{equation}
{\mathscr E}_{\K,c}(u)\geq a\norm{\nabla u}_{L^2(\K)}^2-\frac{15}{64}\frac{\lambda}{c_{TF}}{c}^2-\lambda C.
\end{equation}
\end{lemme}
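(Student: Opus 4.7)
The plan is to bound each of the five terms of $\mathscr E_{\K,c}(u)$ separately: discard the nonnegative ones, absorb the Thomas--Fermi--Dirac block into the stated $c^2$-correction via a perfect-square identity, and control the attractive Coulomb potential by a small multiple of the kinetic energy plus a multiple of $\lambda$. The key pointwise estimate is that
$$\frac35 c_{TF}\,t^{10/3}-\frac{3c}{4}\,t^{8/3}+\frac{15c^2}{64c_{TF}}\,t^2 = t^2\,P(t^{2/3}),\qquad P(s)=\frac{3c_{TF}}{5}\left(s-\frac{5c}{8c_{TF}}\right)^2\geq 0,$$
the discriminant of $P$ vanishing identically. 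Taking $t=|u(x)|$ and integrating against $\norm{u}^2_{L^2(\K)}=\lambda$ gives
$$\frac35 c_{TF}\int_\K|u|^{10/3}-\frac{3c}{4}\int_\K|u|^{8/3}\geq -\frac{15}{64}\frac{\lambda}{c_{TF}}c^2,$$
which is exactly the advertised $c^2$ term. The repulsive self-interaction $\frac12 D_\K(|u|^2,|u|^2)$ is nonnegative by the Fourier representation of $G_\K$ recorded in~\clm{G_K_in_Lp} and is simply dropped.

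It remains to show that $\int_\K G_\K|u|^2\leq\varepsilon\norm{\nabla u}^2_{L^2(\K)}+C_\varepsilon\lambda$ for some $\varepsilon<1$. By~\clm{G_K_in_Lp}, $0\leq G_\K(x)\leq|x|^{-1}+M$ on $\K$, so it suffices to estimate $\int_\K |u|^2/|x|\dd x$. I would split this at a radius $\delta\in(0,1/2)$: the outer part is bounded trivially by $\delta^{-1}\lambda$; on $\{|x|\leq\delta\}$, the function $|x|^{-1}$ lies in $L^{3/2}(\K)$ with norm of order $\delta$, so H\"older yields a bound by $C_0\delta\,\norm{u}^2_{L^6(\K)}$. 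The periodic Sobolev embedding $H^1_{\textrm{per}}(\K)\hookrightarrow L^6(\K)$, applied after the usual splitting $u=\bar u+(u-\bar u)$ of $u$ into its mean and zero-mean fluctuation, gives $\norm{u}^2_{L^6(\K)}\leq C_1(\norm{\nabla u}^2_{L^2(\K)}+\lambda)$. Choosing $\delta$ so that $C_0C_1\delta<1$ produces the required estimate.

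Combining the three bounds gives $\mathscr E_{\K,c}(u)\geq(1-C_0C_1\delta)\norm{\nabla u}^2_{L^2(\K)}-\frac{15}{64}\frac{\lambda}{c_{TF}}c^2-C''\lambda$, so $a:=1-C_0C_1\delta\in(0,1)$ and $C:=C''$ finish the proof. There is no serious obstacle, but the delicate qualitative point is the \emph{uniformity} of $a$ and $C$ in $c$, $c_{TF}$ and $\lambda$: this holds because the entire $c$- and $c_{TF}$-dependence is absorbed into the explicit $c^2/c_{TF}$ coefficient produced by the perfect-square identity, while all other constants ($\delta$, $M$, $C_0$, $C_1$) are determined purely by the geometry of $\K$ and standard H\"older and Sobolev inequalities.
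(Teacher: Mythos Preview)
Your proof is correct and follows essentially the same route as the paper: the pointwise completing-the-square identity for the $|u|^{10/3}$--$|u|^{8/3}$ block, the nonnegativity of $D_\K$, and the $L^{3/2}+L^\infty$ splitting of the Coulomb singularity combined with H\"older and the periodic Sobolev embedding $H^1_{\textrm{per}}(\K)\hookrightarrow L^6(\K)$. The only cosmetic difference is that the paper splits $G_\K$ itself as $\mathds{1}_{\{|\cdot|<r\}}G_\K+\mathds{1}_{\K\setminus\{|\cdot|<r\}}G_\K$ rather than first bounding $G_\K\leq |x|^{-1}+M$ and then splitting $|x|^{-1}$, but this is the same estimate.
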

\begin{proof}[Proof of~\clm{K_complete_model_lower_bound_on_NRJ}]
As in \clm{R3_eff_model_lower_bound_on_NRJ} (but on $\K$) we have
$$\frac35c_{TF}\norm{u}_{L^{\frac{10}3}(\K)}^{\frac{10}3}-\frac34c\norm{u}_{L^{\frac83}(\K)}^{\frac83} \geq -\frac{15}{64}\frac{\lambda}{c_{TF}}{c}^2.$$
Moreover, for any $\epsilon>0$, we have
$$\Big|\int_{\K}{G_\K |u|^2}\Big| \leq \epsilon\norm{u}^2_{L^6(\K)}+\lambda C_\epsilon.$$
Indeed $G_\K=\mathds{1}_{\{|\cdot|< r\}} G_\K+\mathds{1}_{\K\setminus\{|\cdot|< r\}} G_\K \in L^{\frac32}(\K)+L^\infty(\K)$, by~\eqref{equation_G_K_in_Lp}, and $r$ can be chosen such that $\norm{\mathds{1}_{\{|\cdot|< r\}} G_\K}_{L^{\frac32}(\K)}\leq\epsilon$ to obtain the claimed inequality. The above results, together with Sobolev embeddings and $D_\K(u^2,u^2)\geq0$, gives
\begin{align*}
{\mathscr E}_{\K,c}(u)&= \norm{\nabla u}_{L^2(\K)}^2+\frac35c_{TF}\norm{u}_{L^{\frac{10}3}(\K)}^{\frac{10}3}-\frac34c\norm{u}_{L^{\frac83}(\K)}^{\frac83}+\frac12D_\K(u^2,u^2)-\int_\K{G_\K u^2}\\
	&\geq \norm{\nabla u}_{L^2(\K)}^2-\frac{15}{64}\frac{\lambda}{c_{TF}}{c}^2-\epsilon \norm{u}^2_{L^6(\K)}-\lambda C_\epsilon\\
	&\geq (1- \epsilon S)\norm{\nabla u}_{L^2(\K)}^2-\frac{15}{64}\frac{\lambda}{c_{TF}}{c}^2-\lambda (C_\epsilon+\epsilon S)
\end{align*}
for any $\epsilon>0$ and where $S$ is the constant from the Sobolev embedding. Choosing $\epsilon$ such that $\epsilon S <1$ concludes the proof.
\end{proof}

The above result together with the fact that $H^1(\K)$ is compactly embedded in $L^p(\K)$ for $1\leq p<6$ (since the cube $\K$ is bounded) and with Fatou's Lemma implies the existence of a minimizer and the strong convergence in $H^1({\K})$ of any minimizing sequence. Moreover, the convexity inequality for gradients (see~\cite[Theorem 7.8]{LieLos-01}) implies the existence of a nonnegative minimizer and concludes the proof of \emph{i}.

To prove that any minimizer $w_c$ is in $H^2_{\textrm{per}}({\K})$, we write
$$-\Delta w_c =-c_{TF}|w_c|^{\frac43}w_c+c|w_c|^{\frac23}w_c+G_\K w_c-(|w_c|^2\star G_\K)w_c-\mu_c w_c$$
and prove that the right hand side is in $L^2(\K)$, which will give $w_c \in H^2_{\textrm{per}}(\K)$ by elliptic regularity for the periodic Laplacian. We note that $|w_c|^{\frac43}w_c$ and $|w_c|^{\frac23}w_c$ are in $L^2(\K)$, by Sobolev embeddings, since $w_c\in H^1_{\textrm{per}}(\K)$ which also gives, together with $G_\K\in L^2(\K)$ by~\clm{G_K_in_Lp}, that $|w_c|^2\star G_\K \in L^\infty(\K)$. It remains to prove that $G_\K w_c\in L^2(\K)$: equation \eqref{equation_G_K_in_Lp} and the periodic Hardy inequality on $\K$ give
$$\norm{G_\K w_c}_{L^2(\K)}\leq C\norm{|\cdot|^{-1}w_c}_{L^2(\K)}\leq C'\norm{w_c}_{H^1_{\textrm{per}}(\K)}.$$
Finally, since $G_\K$ is not constant, the constant functions are not solutions of the Euler--Lagrange equation hence are not minimizers. This concludes the proof of \emph{ii}.

Let $w_c$ be a nonnegative minimizer, then $0\nequiv w_c\geq0$ is in $H^2(\K)\subset L^\infty(\K)$ and is a solution of $\left(-\Delta +C \right)u=\left(f + G_\K + C\right)u$, with $G_\K$ bounded below and
$$f=-c_{TF}|w_c|^{\frac43}+c|w_c|^{\frac23} -(|w_c|^2\star G_\K)-\mu_{w_c} \in L^\infty(\K),$$
thus $\left(-\Delta +C \right)w_c\geq0$ for $C\gg1$. Hence, $w_c>0$ on $\K$ since the periodic Laplacian is positive improving \cite[Theorem 9.10]{LieLos-01}. Consequently, $w_c>0$ verifies $H_{w_c}w_c=-\mu_{w_c} w_c$ and this implies that for any $u\in H^1_{\textrm{per}}(\K)$ it holds
$$\pscalSM{u,(H_{w_c}+\mu_{w_c})u}_{L^2(\K)}=\pscalSM{{w_c}^2 ,\left|\nabla(u{w_c}^{-1})\right|^2}_{L^2(\K)}\geq0.$$
This vanishes only if there exists $\alpha\in\C$ such that $u=\alpha w_c$ ae. It proves $w_c$ is the unique ground state of $H_{w_c}$ and concludes the proof of~\cpr{K_complete_model_EulerLagrange}.
\end{proof}

From this existence result, we deduce the following corollary.
\begin{cor}\label{K_complete_model_convergence_E_of_c}
On $[0, +\infty)$, $c\mapsto E_{\K,\lambda}(c)$ is continuous and strictly decreasing.
\end{cor}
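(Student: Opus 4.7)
The argument rests on two elementary observations: first, for every admissible $w$ the energy $\mathscr{E}_{\K,c}(w)$ depends affinely on $c$; second, by \cpr{K_complete_model_EulerLagrange} a minimizer exists for every $c\geq0$ and is (up to a phase) strictly positive, so in particular $\int_\K|w_c|^{8/3}>0$.

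I would start by writing
$$\mathscr{E}_{\K,c}(w)=\mathscr{E}_{\K,0}(w)-\frac{3c}{4}\int_\K|w|^{8/3},$$
so that for each fixed $w$ in the minimization set of \eqref{K_complete_model_minimization}, $c\mapsto\mathscr{E}_{\K,c}(w)$ is affine and non-increasing in $c$. Taking the infimum, $E_{\K,\lambda}$ is concave and non-increasing on $[0,+\infty)$ as an infimum of affine non-increasing functions. Concavity then immediately delivers continuity on the open half-line $(0,+\infty)$.

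For the strict monotonicity, I would fix $0\leq c_1<c_2$, let $w_{c_1}$ be a minimizer of $E_{\K,\lambda}(c_1)$ given by \cpr{K_complete_model_EulerLagrange}, and use $w_{c_1}$ as a trial function at $c_2$:
$$E_{\K,\lambda}(c_2)\leq \mathscr{E}_{\K,c_2}(w_{c_1})=E_{\K,\lambda}(c_1)-\frac{3(c_2-c_1)}{4}\int_\K|w_{c_1}|^{8/3}<E_{\K,\lambda}(c_1),$$
where the strict inequality uses the positivity of $w_{c_1}$ guaranteed by \cpr{K_complete_model_EulerLagrange}.

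It only remains to handle continuity at the boundary point $c=0$. The inequality $\limsup_{c\to 0^+} E_{\K,\lambda}(c)\leq E_{\K,\lambda}(0)$ is immediate from monotonicity. For the reverse inequality I would take a minimizer $w_c$ of $E_{\K,\lambda}(c)$ and write
$$E_{\K,\lambda}(0)\leq \mathscr{E}_{\K,0}(w_c)=E_{\K,\lambda}(c)+\frac{3c}{4}\int_\K|w_c|^{8/3}.$$
The only thing to verify is that $c\int_\K|w_c|^{8/3}\to 0$ as $c\to 0^+$; this follows by applying \clm{K_complete_model_lower_bound_on_NRJ} together with the upper bound $E_{\K,\lambda}(c)\leq E_{\K,\lambda}(0)$ (from monotonicity) to obtain a uniform $H^1_{\textrm{per}}(\K)$-bound on $\{w_c\}_{0\leq c\leq 1}$, and then invoking the Sobolev embedding $H^1(\K)\hookrightarrow L^{8/3}(\K)$ to transfer it into a uniform $L^{8/3}$-bound. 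No step here is genuinely delicate — the argument is essentially the same tool used to prove \clm{R3_eff_model_apriori_properties_J} combined with the existence result of \cpr{K_complete_model_EulerLagrange}.
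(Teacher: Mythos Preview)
Your proof is correct and follows essentially the same approach as the paper: both use the affine dependence of $\mathscr{E}_{\K,c}(w)$ on $c$, the existence of minimizers from \cpr{K_complete_model_EulerLagrange}, the uniform $H^1$ bound coming from \clm{K_complete_model_lower_bound_on_NRJ}, and the Sobolev embedding into $L^{8/3}$. Your use of concavity (as an infimum of affine functions) to obtain interior continuity in one stroke is a mild streamlining of the paper's argument, which instead reads off left- and right-continuity directly from the chain of inequalities; the ingredients and ideas are otherwise identical.
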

\begin{proof}[Proof of~\ccr{K_complete_model_convergence_E_of_c}]Let $0\leq c_1< c_2$ and, let $w_1$ and $w_2$ be corresponding minimizers, which exist by~\cpr{K_complete_model_EulerLagrange}. On one hand, we have
\begin{align*}
E_{\K,\lambda}(c_2)\leq {\mathscr E}_{\K,c_2}(w_1)&=E_{\K,\lambda}(c_1)-\frac34(c_2-c_1)\norm{w_1}^{\frac83}_{L^{\frac83}(\K)}\\
	&<E_{\K,\lambda}(c_1)\leq{\mathscr E}_{\K,c_1}(w_2)=E_{\K,\lambda}(c_2)+\frac34(c_2-c_1)\norm{w_2}^{\frac83}_{L^{\frac83}(\K)}.
\end{align*}
This gives that $E_{\K,\lambda}(c)$ is strictly decreasing on $[0,+\infty)$ but also the left-continuity for any $c_2>0$. Moreover, $c_2\mapsto\norm{w_2}_{H^1(\K)}$ is uniformly bounded on any bounded interval since
\begin{equation}\label{maj_NRJ_Kin_by_fonctional}
E_{\K,\lambda}(0)\geq E_{\K,\lambda}(c_2)={\mathscr E}_{\K,c_2}(w_2)\geq a\norm{\nabla w_2}_{L^2(\K)}^2-\frac{15}{64}\frac{\lambda}{c_{TF}}{c_2}^2-\lambda C
\end{equation}
by \clm{K_complete_model_lower_bound_on_NRJ}. Hence, by the Sobolev embedding, we have
$$E_{\K,\lambda}(c_2)<E_{\K,\lambda}(c_1)\leq E_{\K,\lambda}(c_2)+\frac34(c_2-c_1)C_1\lambda^{5/6}\norm{w_2}_{H^1(\K)},$$
which gives the right-continuity and concludes the proof of~\ccr{K_complete_model_convergence_E_of_c}.
\end{proof}

\subsection{Limit case \texorpdfstring{$c=0$}{c=0}: the TFW model}
In order to prove~\cth{main_result_2}, we need some results on the TFW model which corresponds to the TFDW model for $c=0$. For clarity, we denote
\begin{equation}\label{K_TFW_NRJ_u}
{\mathscr E}^{TFW}_{\K}(w):={\mathscr E}_{\K,0}(w)=\int\limits_\K{|\nabla w|^2}+\frac35c_{TF}\int\limits_\K{|w|^{\frac{10}3}}+\frac12D_\K(|w|^2,|w|^2)-\int\limits_\K{G_\K |w|^2},
\end{equation}
and similarly $E^{TFW}_{\K,\lambda}:=E_{\K,\lambda}(0)$.

By~\cpr{K_complete_model_EulerLagrange}, there exist minimizers to $E^{TFW}_{\K,\lambda}$, and we now prove the uniqueness of minimizer for the TFW model.
\begin{prop}\label{K_TFW_model_EulerLagrange_and_properties_of_minimizer}
The minimization problem $E^{TFW}_{\K,\lambda}$ admits, up to phase, a unique minimizer $w_0$ which is non constant and positive. Moreover, $w_0$ is the unique ground-state eigenfunction of the self-adjoint operator
$$H:=-\Delta+c_{TF}|w_0|^{\frac43}-G_\K+(|w_0|^2\star G_\K),$$
with domain $H^2_{\textrm{per}}(\K)$, acting on $L^2_{\textrm{per}}(\K)$, and with ground-state eigenvalue
\begin{equation}\label{K_complete_TFW_model_EulerLagrange_mu_formulae}
-\mu_0=\frac{\norm{\nabla w_0}_2^2+c_{TF}\norm{w_0}^{10/3}_{10/3}+D_\K(w_0^2,w_0^2)-\pscal{G_\K,w_0^2}_{L^2(\K)}}{\lambda}.
\end{equation}
\end{prop}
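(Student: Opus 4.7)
The plan is to exploit the well-known convexity-in-density structure of the TFW functional. Since \cpr{K_complete_model_EulerLagrange} (applied at $c=0$) already guarantees existence of a nonnegative minimizer, the $H^2_{\textrm{per}}$ regularity, the Euler--Lagrange equation, the non-constancy, the positivity, and the fact that any such minimizer is the unique ground state of its mean-field operator, what is really new here is the \emph{uniqueness} of the minimizer up to a phase. The uniqueness will come from strict convexity of the energy when viewed as a functional of the density $\rho=|w|^2$.

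First, I would reduce to real nonnegative minimizers. For any $w\in H^1_{\textrm{per}}(\K)$, the convexity inequality for gradients (\cite[Theorem 7.8]{LieLos-01}) gives $\int_\K|\nabla|w||^2\leq\int_\K|\nabla w|^2$, while the other terms of ${\mathscr E}^{TFW}_\K$ only depend on $|w|$. Hence if $w$ is a minimizer, so is $|w|$, and moreover equality in the convexity inequality forces $w=e^{i\theta(x)}|w|$ a.e. Combined with the fact (from \cpr{K_complete_model_EulerLagrange}) that any nonnegative minimizer is strictly positive and $H^2_{\textrm{per}}$, one gets that $\theta$ must be a constant, so every minimizer is $e^{i\theta_0}|w|$ for some $\theta_0\in\R$.

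Second, I would set $\rho=|w|^2$ and rewrite
\begin{equation*}
{\mathscr E}^{TFW}_\K(\sqrt{\rho})=\int_\K|\nabla\sqrt{\rho}|^2+\frac35 c_{TF}\int_\K\rho^{5/3}+\frac12 D_\K(\rho,\rho)-\int_\K G_\K\,\rho =: {\mathscr F}_\K(\rho),
\end{equation*}
and minimize ${\mathscr F}_\K$ over $\rho\geq0$, $\sqrt{\rho}\in H^1_{\textrm{per}}(\K)$, $\int_\K\rho=\lambda$. Each term is convex in $\rho$: the Hoffmann-Ostenhof-type identity making $\rho\mapsto\int|\nabla\sqrt{\rho}|^2$ convex is classical; $\rho\mapsto\int\rho^{5/3}$ is \emph{strictly} convex because $t\mapsto t^{5/3}$ is strictly convex on $[0,+\infty)$; $\rho\mapsto D_\K(\rho,\rho)$ is convex because \clm{G_K_in_Lp} shows $D_\K$ is a nonnegative quadratic form; and the last term is linear. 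Consequently ${\mathscr F}_\K$ is strictly convex on the (convex) admissible set, which forces the minimizing density $\rho_0$ to be unique. Going back to $w$, this gives a unique nonnegative minimizer $w_0=\sqrt{\rho_0}$, and the preceding paragraph then yields uniqueness up to a global phase.

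Finally, the remaining assertions are direct transcriptions of \cpr{K_complete_model_EulerLagrange} at $c=0$: $w_0$ solves~\eqref{K_complete_TFW_model_EulerLagrange_mu_formulae}, is in $H^2_{\textrm{per}}(\K)$, is strictly positive (by the Harnack/strong maximum principle argument used there, noting that $G_\K$ is bounded below by \clm{G_K_in_Lp}), is non-constant (because the Euler--Lagrange equation together with the non-constancy of $G_\K$ forbids constants), and is the unique ground state of $H$ via the factorization $\langle u,(H+\mu_0)u\rangle=\langle w_0^2,|\nabla(u/w_0)|^2\rangle$ which vanishes only when $u\propto w_0$. The main (and essentially only) obstacle is checking carefully that the passage $w\leftrightarrow\rho$ is faithful on the admissible set (i.e.\ that $\sqrt{\rho_0}\in H^1_{\textrm{per}}$ is indeed in the admissible class for ${\mathscr E}^{TFW}_\K$), which is where strict positivity of $w_0$ on $\K$ is crucial.
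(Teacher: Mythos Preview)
Your proof is correct and follows essentially the same route as the paper: reduce to nonnegative minimizers via \cpr{K_complete_model_EulerLagrange}, pass to the density variable $\rho=|w|^2$, and conclude uniqueness from strict convexity of the functional in $\rho$. The only (minor) difference is \emph{which} term supplies the strictness: the paper invokes the strict convexity of $\rho\mapsto D_\K(\rho,\rho)$ (from $D_\K(f,f)>0$ for $f\nequiv0$ in \clm{G_K_in_Lp}), whereas you use the strict convexity of $\rho\mapsto\int\rho^{5/3}$. Either choice works; your argument has the small advantage of not relying on the Coulomb structure, while the paper's choice makes the role of $D_\K$ explicit.
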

\begin{proof}[Proof of~\cpr{K_TFW_model_EulerLagrange_and_properties_of_minimizer}]
By~\cpr{K_complete_model_EulerLagrange}, we only have to prove the uniqueness. It follows from the convexity of the $\rho\mapsto|\nabla\sqrt{\rho}|^2$ (see \cite[Proposition 7.1]{Lieb-81b}) and the strict convexity of $\rho\mapsto D_\K(\rho,\rho)$.
\end{proof}

\subsection{Proof of~\texorpdfstring{\cth{main_result_2}: uniqueness}{uniqueness} in the regime of small \texorpdfstring{$c$}{c}}
We first prove one convergence result and a uniqueness result under a condition on $\min\limits_\K \rho$.
\begin{lemme}\label{cvgce_minimizers_to_TFW}
Let $\{c_n\}_n\subset\R_+$ be such that $c_n\to\bar{c}$. If $\{w_{c_n}\}_{n}$ is a sequence of respective positive minimizers to $E_{\K,\lambda}(c_n)$ and $\{\mu_{w_{c_n}}\}_{n}$ the associated Euler--Lagrange multipliers, then there exists a subsequence $c_{n_k}$ such that the convergence
$$\big(w_{c_{n_k}}, \mu_{w_{c_{n_k}}}\big)\underset{k\to\infty}{\longrightarrow}\left(\bar{w}, \mu_{\bar{w}}\right)$$
holds strongly in $H^2_{\textrm{per}}(\K)\times\R$, where $\bar{w}$ is a positive minimizer to $E_{\K,\lambda}(\bar{c})$ and $\mu_{\bar{w}}$ is the associated multiplier.

Additionally, if $E_{\K,\lambda}(\bar{c})$ has a unique positive minimizer $\bar{w}$ then the result holds for the whole sequence $c_n\to\bar{c}$:
$$\big(w_{c_n}, \mu_{w_{c_n}}\big)\underset{n\to\infty}{\longrightarrow}\left(\bar{w}, \mu_{\bar{c}}\right).$$
\end{lemme}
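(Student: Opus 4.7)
The approach is the standard compactness scheme on a bounded domain: a priori bounds, weak subsequential extraction, identification of the weak limit as a minimizer of the limit problem using the continuity of $c \mapsto E_{\K,\lambda}(c)$, and finally upgrade to strong $H^2_{\textrm{per}}(\K)$ convergence via the Euler--Lagrange equation.

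First I would obtain a uniform bound. Since $c_n \to \bar{c}$, the sequence $(c_n)$ is bounded, and applying \clm{K_complete_model_lower_bound_on_NRJ} together with the trivial upper bound $E_{\K,\lambda}(c_n) \leq E_{\K,\lambda}(0)$ yields a uniform bound on $\norm{\nabla w_{c_n}}_{L^2(\K)}$; combined with the mass constraint this gives a uniform $H^1_{\textrm{per}}(\K)$ bound. By weak compactness and the compact embedding $H^1(\K) \hookrightarrow L^p(\K)$ for $1 \leq p < 6$, there is a subsequence (still denoted by $n$) with $w_{c_n} \wto \bar{w}$ weakly in $H^1$, strongly in $L^p$ for all such $p$, and pointwise almost everywhere. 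Hence $\bar{w} \geq 0$ and $\norm{\bar{w}}_{L^2(\K)}^2 = \lambda$.

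Next I would pass to the limit in the energy. The nonlinear terms $\norm{w_{c_n}}_{L^{10/3}(\K)}^{10/3}$ and $c_n \norm{w_{c_n}}_{L^{8/3}(\K)}^{8/3}$ converge to their counterparts by strong $L^p$ convergence and $c_n \to \bar{c}$. For the Coulomb terms, $G_\K \in L^p(\K)$ for all $p < 3$ by \clm{G_K_in_Lp}, and a standard H\"older argument combined with the uniform $L^6$ bound and strong convergence in $L^p$, $p<6$, gives $\int_\K G_\K |w_{c_n}|^2 \to \int_\K G_\K |\bar{w}|^2$ and $D_\K(|w_{c_n}|^2,|w_{c_n}|^2) \to D_\K(|\bar{w}|^2,|\bar{w}|^2)$. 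Together with the weak lower semicontinuity of the gradient norm and \ccr{K_complete_model_convergence_E_of_c}, we obtain
\[E_{\K,\lambda}(\bar{c}) \leq \mathscr{E}_{\K,\bar{c}}(\bar{w}) \leq \liminf_n \mathscr{E}_{\K,c_n}(w_{c_n}) = \lim_n E_{\K,\lambda}(c_n) = E_{\K,\lambda}(\bar{c}),\]
so $\bar{w}$ is a minimizer of $E_{\K,\lambda}(\bar{c})$ and the liminf is an actual limit, forcing $\norm{\nabla w_{c_n}}_{L^2(\K)} \to \norm{\nabla \bar{w}}_{L^2(\K)}$ and hence strong convergence in $H^1_{\textrm{per}}(\K)$. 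Positivity of $\bar{w}$ then follows from the Harnack-type argument at the end of the proof of \cpr{K_complete_model_EulerLagrange}.

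To upgrade the convergence to $H^2_{\textrm{per}}(\K)$ and obtain the convergence of the multipliers, I would use formula \eqref{K_complete_model_EulerLagrange_mu_formulae}: strong $H^1$ convergence (and strong $L^p$ for $p<6$) immediately gives $\mu_{w_{c_n}} \to \mu_{\bar{w}}$. Rewriting the Euler--Lagrange equation \eqref{K_complete_model_EulerLagrange_equation} as $-\Delta w_{c_n} = f_n$ with
\[f_n := \bigl(-c_{TF}|w_{c_n}|^{4/3} + c_n|w_{c_n}|^{2/3} + G_\K - (|w_{c_n}|^2 \star G_\K) - \mu_{w_{c_n}}\bigr) w_{c_n},\]
I would prove $f_n \to f_{\bar{w}}$ strongly in $L^2(\K)$. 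The power terms and the convolution term (which is uniformly bounded in $L^\infty(\K)$) are controlled by Sobolev embeddings and strong $H^1$ convergence; the delicate term is $G_\K w_{c_n}$, for which one uses the pointwise bound $0 \leq G_\K(x) \leq C/|x|$ together with the periodic Hardy inequality as in the proof of \cpr{K_complete_model_EulerLagrange}. Elliptic regularity for the periodic Laplacian then promotes the $L^2$ convergence of $\Delta w_{c_n}$ into strong convergence in $H^2_{\textrm{per}}(\K)$. The main technical subtlety throughout is precisely this handling of the Coulomb singularity of $G_\K$; it is always overcome by the splitting $G_\K \in L^{3/2}(\K) + L^\infty(\K)$ or by Hardy's inequality. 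Finally, the last assertion is a standard subsequence-extraction argument: if under the uniqueness assumption the whole sequence $(w_{c_n}, \mu_{w_{c_n}})$ did not converge to $(\bar{w}, \mu_{\bar{w}})$ in $H^2_{\textrm{per}}(\K) \times \R$, a subsequence bounded away from this limit would by the first part of the lemma admit a further sub-subsequence converging to a positive minimizer of $E_{\K,\lambda}(\bar{c})$, which by uniqueness must be $\bar{w}$, a contradiction.
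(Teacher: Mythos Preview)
Your proof is correct and follows essentially the same compactness-then-bootstrap scheme as the paper. The only organizational difference is in the $H^2$ upgrade: you write $-\Delta w_{c_n}=f_n$ and show $f_n\to f_{\bar w}$ in $L^2$ (handling the singular term $G_\K w_{c_n}$ via Hardy), whereas the paper absorbs $G_\K$ into the operator and inverts $(-\Delta-G_\K+\beta)$ using Rellich--Kato; both routes amount to the same estimate.
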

We will only use the case $\bar{c}=0$, for which we have proved the uniqueness of the positive minimizer, but we state this lemma for any $\bar{c}\geq0$.
\begin{proof}[Proof of~\clm{cvgce_minimizers_to_TFW}]
We first prove the convergence in $H^1_{\textrm{per}}(\K)\times\R$.
By the continuity of $c\mapsto E_{\K,\lambda}(c)$ proved in~\ccr{K_complete_model_convergence_E_of_c}, $\{w_{c_n}\}_{n\to\infty}$ is a positive minimizing sequence of $E_{\K,\lambda}(\bar{c})$. Thus, by~\cpr{K_complete_model_EulerLagrange}, up to a subsequence (denoted the same for shortness), $w_{c_n}$ converges strongly in $H^1_{\textrm{per}}(\K)$ to a minimizer $\bar{w}$ of $E_{\K,\lambda}(\bar{c})$.

Moreover, for any $c$, $(w_c,\mu_{w_c})$ is a solution of the Euler--Lagrange equation
$$\left(-\Delta +c_{TF}{w_c}^{\frac43}-c{w_c}^{\frac23}-G_\K +({w_c}^2\star G_\K)\right)w_c=-\mu_{w_c} w_c.$$
Thus, as $c_n$ goes to $\bar{c}$, $\mu_{w_{c_n}}$ converges to $\mu\in\overline{\R}$ satisfying
$$-\Delta\bar{w}+c_{TF}\bar{w}^{\frac73}-\bar{c}\bar{w}^{\frac53}-G_\K\bar{w}+(\bar\rho\star G_\K)\bar{w}=-\mu\bar{w}.$$
In particular, $\mu=\mu_{\bar{w}}$. At this point, we proved the convergence in $H^1_{\textrm{per}}(\K)\times\R$: $$\left(w_{c_n}, \mu_{w_{c_n}}\right)\underset{n\to\infty}{\longrightarrow}\left(\bar{w}, \mu_{\bar{w}}\right).$$

If, additionally, the positive minimizer $\bar{w}$ of $E_{\K,\lambda}(\bar{c})$ is unique, then any positive minimizing sequence must converge in $H^1_{\textrm{per}}(\K)$ to $\bar{w}$, so the whole sequence $\{w_{c_n}\}_{n\to\infty}$ in fact converges to the unique positive minimizer $\bar{w}$.

We turn to the proof of the convergence in $H^2_{\textrm{per}}(\K)$. For any $c_n\geq0$, by~\cpr{K_complete_model_EulerLagrange}, $w_{c_n}$ is in $H^2_{\textrm{per}}(\K)$ thus we have
\begin{align*}
\left(-\Delta-G_\K+\beta\right)\left(w_{c_n}-\bar{w}\right)=&-c_{TF}({w_{c_n}}^{\frac73}-\bar{w}^{\frac73})+(c_n-\bar{c}) {w_{c_n}}^{\frac53} + \bar{c} \left({w_{c_n}}^{\frac53}-{\bar{w}}^{\frac53}\right)\\
	&-\left(({w_{c_n}}^2-{\bar{w}}^2)\star G_\K\right)w_{c_n}-\left({\bar{w}}^2\star G_\K\right)\left(w_{c_n}-\bar{w}\right)\\
	&-(\mu_{w_{c_n}}-\mu_{\bar{w}})w_{c_n} +(\beta-\mu_{\bar{w}})\left(w_{c_n}-\bar{w}\right)=:\epsilon_n.
\end{align*}
The right side $\epsilon_n$ converges to $0$ in $L^2_{\textrm{per}}(\K)$. Moreover, by the Rellich-Kato theorem, the operator $-\Delta-G_\K$ is self-adjoint on $H^2_{\textrm{per}}(\K)$ and bounded below, hence we conclude that
\begin{multline*}
\norm{w_{c_n}-\bar{w}}_{H^2(\K)}=\norm{\left(-\Delta-G_\K+\beta\right)^{-1}\epsilon_n}_{H^2(\K)}\\
	\leq\normt{\left(-\Delta-G_\K+\beta\right)^{-1}}_{L^2(\K)\to H^2_{\textrm{per}}(\K)}\norm{\epsilon_n}_{L^2(\K)}\underset{n\to\infty}{\longrightarrow}0.
\end{multline*}
This concludes the proof of~\clm{cvgce_minimizers_to_TFW}.
\end{proof}

\begin{prop}[Conditional uniqueness]\label{K_complete_model_uniqueness_EulerLagrange_solutions_conditional}
Let $\K$ be the unit cube, $N\geq1$ be an integer, $c_{TF}>0$, $c\geq0$ and $\mu\in\R$ be  constants. Let $w>0$ be such that $w\in H^1(N\cdot\K)$ and $w$ is a $N\cdot\K-$periodic solution of
\begin{equation}\label{Euler_TFDW}
\left(-\Delta +c_{TF}w^{\frac43}-cw^{\frac23}+(w^2\star G_\K)-G_\K\right) w=-\mu w.
\end{equation}
If $\min\limits_{N\cdot\K} w > \left(\frac{c}{c_{TF}}\right)^{\frac32}$, then $w$ is the unique minimizer of $E_{N\cdot\K,\int_{N\cdot\K}{|w|^2}}(c)$.
\end{prop}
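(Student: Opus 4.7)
The plan is to show that any nonnegative minimizer $\tilde w$ of $E_{N\cdot\K,\lambda}(c)$, with $\lambda := \int_{N\cdot\K}|w|^2$, must coincide with $w$; uniqueness up to a phase then follows from the phase-factor statement of \cpr{K_complete_model_EulerLagrange} applied on $N\cdot\K$. Existence, strict positivity, and $H^2_{\textrm{per}}(N\cdot\K)$-regularity of $\tilde w$ are given by the same proposition. Since the hypothesis $\min_{N\cdot\K} w > (c/c_{TF})^{3/2} > 0$ together with the continuity of $w$ (via $H^2 \hookrightarrow C$) ensure that $1/w \in L^\infty$, the ratio $\phi := \tilde w/w$ is a well-defined nonnegative element of $H^1_{\textrm{per}}(N\cdot\K)$ satisfying the mass constraint $\int\phi^2 w^2 = \int\tilde w^2 = \int w^2$. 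Proving $\tilde w = w$ therefore amounts to showing $\phi \equiv 1$.

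The central identity is an explicit expansion of $\mathscr E_{N\cdot\K,c}(\tilde w) - \mathscr E_{N\cdot\K,c}(w)$. An integration by parts on the torus $N\cdot\K$ gives $\int|\nabla(w\phi)|^2 = \int w^2|\nabla\phi|^2 - \int \phi^2 w\,\Delta w$. Substituting the Euler--Lagrange equation~\eqref{Euler_TFDW} to eliminate $\Delta w$, using the algebraic identity $\tfrac{1}{2}D(\phi^2 w^2, \phi^2 w^2) - D(\phi^2 w^2, w^2) + \tfrac{1}{2}D(w^2, w^2) = \tfrac{1}{2}D((\phi^2-1)w^2, (\phi^2-1)w^2)$, and invoking the mass constraint to absorb the Lagrange multiplier $\mu$, one arrives at
\begin{equation*}
\mathscr E_{N\cdot\K,c}(\tilde w) - \mathscr E_{N\cdot\K,c}(w) = \int w^2|\nabla\phi|^2 + \int w^{8/3}\bigl[c_{TF}\, w^{2/3} f(\phi) - c\, g(\phi)\bigr] + \tfrac{1}{2}D\bigl((\phi^2-1)w^2, (\phi^2-1)w^2\bigr),
\end{equation*}
where $f(\phi) := \tfrac{3}{5}\phi^{10/3} - \phi^2 + \tfrac{2}{5}$ and $g(\phi) := \tfrac{3}{4}\phi^{8/3} - \phi^2 + \tfrac{1}{4}$.

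The first and third terms on the right-hand side are manifestly non-negative, the Coulomb one by the positive-definiteness stated in \clm{G_K_in_Lp}, each vanishing only when $\phi \equiv 1$. The heart of the proof is the pointwise inequality $c_{TF}\, w^{2/3} f(\phi) - c\, g(\phi) \geq 0$, for which I would use the decomposition
\begin{equation*}
c_{TF}\, w^{2/3} f(\phi) - c\, g(\phi) = \bigl(c_{TF}\, w^{2/3} - c\bigr) f(\phi) + c\bigl(f(\phi) - g(\phi)\bigr).
\end{equation*}
The prefactor $c_{TF}\, w^{2/3} - c$ is positive exactly by the hypothesis $\min w > (c/c_{TF})^{3/2}$, and elementary one-variable calculus yields $f, g \geq 0$ on $[0,+\infty)$ with $\phi = 1$ as the sole zero, together with $f - g = \tfrac{3}{20}\bigl(4\phi^{10/3} - 5\phi^{8/3} + 1\bigr) \geq 0$, again with unique zero at $\phi = 1$ (differentiate and check monotonicity on either side of $1$). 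Both summands are therefore non-negative and vanish only at $\phi = 1$.

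Combining the three non-negative contributions gives $\mathscr E_{N\cdot\K,c}(\tilde w) \geq \mathscr E_{N\cdot\K,c}(w)$, with strict inequality unless $\phi \equiv 1$. Since $\tilde w$ is a minimizer, the reverse inequality is automatic, so $\phi \equiv 1$ and $\tilde w = w$. The main obstacle of the argument is the absorption of the concave Dirac contribution $-\tfrac{3}{4}c\int w^{8/3}$ into the convex Thomas--Fermi contribution $\tfrac{3}{5}c_{TF}\int w^{10/3}$; this is precisely where the pointwise lower bound on $w$ is used, and the proof breaks down without it.
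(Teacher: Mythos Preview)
Your proof is correct and follows essentially the same route as the paper's: both expand the energy difference using the Euler--Lagrange equation and split it into the gradient-convexity term (your identity $\int|\nabla(w\phi)|^2=\int w^2|\nabla\phi|^2-\int\phi^2 w\Delta w$ is exactly the content of the paper's \clm{K_complete_model_positivity_2ndTL}), the positive Coulomb quadratic $D_\K$, and the TF/Dirac remainder, then show each piece is nonnegative under the hypothesis on $\min w$. Your decomposition $(c_{TF}w^{2/3}-c)f(\phi)+c\bigl(f(\phi)-g(\phi)\bigr)$ for the last piece is a slightly cleaner repackaging of the paper's direct monotonicity analysis of the second-order Taylor remainder of $X\mapsto\frac35c_{TF}X^{5/3}-\frac34cX^{4/3}$, but the content is the same.
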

\begin{proof}[Proof of~\cpr{K_complete_model_uniqueness_EulerLagrange_solutions_conditional}]
First, the hypothesis give $w\in H^2_{\textrm{per}}(N\cdot\K)$, by the same proof as in \cpr{K_complete_model_EulerLagrange}. Moreover, we have the following lemma.
\begin{lemme}\label{K_complete_model_positivity_2ndTL} Let $\rho>0$ and $\rho'\geq0$ such that $\sqrt\rho\in H^2_{\textrm{per}}(\K)$ and $\sqrt{\rho'}\in H^1_{\textrm{per}}(\K)$. Then $$\int_{ \K}{\left|\nabla\sqrt{\rho'}\right|^2}-\int_{ \K}{\left|\nabla\sqrt{\rho}\right|^2}+\int_{ \K}{\frac{\Delta\sqrt{\rho}}{\sqrt{\rho}}(\rho'-\rho)}\geq0.$$
\end{lemme}
\begin{proof}[Proof of~\clm{K_complete_model_positivity_2ndTL}]
Using the fact that
$$\sqrt{\rho}\Delta\sqrt{\rho}=\frac{\sqrt{\rho}}2\nabla\left[\sqrt{\rho}\nabla(\ln\rho)\right]=\frac12\rho\Delta(\ln\rho)+\frac14\rho\left|\nabla(\ln\rho)\right|^2$$
and defining $h=\rho'-\rho$, one obtains
$$\int_{N\cdot\K}{\left|\nabla\sqrt{\rho+h}\right|^2}-\int_{N\cdot\K}{\left|\nabla\sqrt{\rho}\right|^2}+\int_{N\cdot\K}{\frac{\Delta\sqrt{\rho}}{\sqrt{\rho}}h}=\frac14\int_{N\cdot\K}{\left|\frac{h\nabla\rho}{\rho\sqrt{\rho+h}}-\frac{\nabla h}{\sqrt{\rho+h}}\right|^2}\geq0.$$
\end{proof}
Let $w'$ be in $H^1_{\textrm per}(N\cdot\K)$ such that $\int_{N\cdot\K}{w^2}=\int_{N\cdot\K}{|w'|^2}$ and $|w'|\nequiv w$. Defining $\rho=w^2$ and $\rho'=|w'|^2$, this means that $\int_{N\cdot\K}{h}=0$ where $h:=\rho'-\rho\nequiv0$. We have
\begin{align*}
&{\mathscr E}_{N\cdot\K,c}(|w'|)-{\mathscr E}_{N\cdot\K,c}(w)\\
&=\pscal{\left(-\Delta+c_{TF}w^{\frac43}-c w^{\frac23}+w^2\star G_{N\cdot\K}-G_{N\cdot\K}+\mu\right)w,h w^{-1}}_{L^2(N\cdot\K)}\\
&\;\;+\int_{{N\cdot\K}}{|\nabla\sqrt{\rho+h}|^2}-\int_{{N\cdot\K}}{|\nabla\sqrt{\rho}|^2}+\int_{{N\cdot\K}}{\frac{\Delta\sqrt{\rho}}{\sqrt{\rho}}h}+\frac12D_{N\cdot\K}(h,h)\\
&\;\;+\frac35c_{TF}\left(\int_{{N\cdot\K}}{(\rho+h)^{\frac53}-\rho^{\frac53}-\frac53\rho^{\frac23}h}\right)-\frac34c\left(\int_{{N\cdot\K}}{(\rho+h)^{\frac43}-\rho^{\frac43}-\frac43\rho^{\frac13}h}\right)\\
&>\int_{{N\cdot\K}}{F(\rho')-F(\rho)-F'(\rho)(\rho'-\rho)},
\end{align*}
with $F(X)=\frac35c_{TF}X^{\frac53}-\frac34cX^{\frac43}$. The above inequality comes from~\eqref{Euler_TFDW} together with~\clm{K_complete_model_positivity_2ndTL} and with $D_\K(h,h)>0$ for $h\nequiv0$. Defining now
$$F_X(Y)=F(Y)-F(X)-F'(X)(Y-X),$$
one can check, as soon as $X\geq\sqrt[3]{\frac{c}{c_{TF}}}$, that $F_X'<0$ on $(0,X)$ and $F_X'>0$ on $(X,+\infty)$. Moreover, $F_X'(0)<0$ if $X>\sqrt[3]{\frac{c}{c_{TF}}}$. Thus $F_X$ has a global strict minimum on $\R_+$ at $X$ and this minimum is zero. Consequently, if $\min\limits_{N\cdot\K} w\geq\big(\frac{c}{c_{TF}}\big)^{3/2}$, then ${\mathscr E}_{\K,c}(w')\geq{\mathscr E}_{\K,c}(|w'|)>{\mathscr E}_{\K,c}(w)$ for any $w'\in H^1_{\textrm per}(N\cdot\K)$ such that $|w'|\nequiv w$ and $\int_{N\cdot\K}{|w'|^2}=\int_{N\cdot\K}{w^2}$. This ends the proof of~\cpr{K_complete_model_uniqueness_EulerLagrange_solutions_conditional}.
\end{proof}

We have now all the tools to prove the uniqueness of minimizers for $c$ small.
\begin{proof}[Proof of~\cth{main_result_2}]
We have already proved all the results of~\emph{i.} of \cth{main_result_2} in \cpr{K_complete_model_EulerLagrange} except for the uniqueness that we prove now. Let $(w_c)_{c\to0^+}$ be a sequence of respective positive minimizers to $E_{\K,\lambda}(c)$. By~\cpr{K_TFW_model_EulerLagrange_and_properties_of_minimizer}, $E_{\K,\lambda}(0)$ has a unique minimizer thus, by~\cpr{cvgce_minimizers_to_TFW}, $w_c$ converges strongly in $H^2(\K)$ hence in $L^\infty(\K)$ to the unique positive minimizer $w_0$ to $E_{\K,\lambda}(0)$. Therefore, for $c$ small enough we have
$$\min\limits_\K w_c \geq \frac12\min\limits_\K w_0>\left(\frac{c}{c_{TF}}\right)^{\frac32}$$
and we can apply~\cpr{K_complete_model_uniqueness_EulerLagrange_solutions_conditional} (with $N=1$) to the minimizer $w_c>0$ to conclude that it is the unique minimizer of $E_{\K,\lambda}(c)$.

We now prove~\emph{ii.} of~\cth{main_result_2}. We fix $c$ small enough such that $E_{\K,\lambda}(c)$ has an unique minimizer $w_c$. Then $w_c$ being $\K$-periodic, it is $N\cdot\K-$periodic for any integer $N\geq1$ and verifies all the hypothesis of~\cpr{K_complete_model_uniqueness_EulerLagrange_solutions_conditional} hence it is also the unique minimizer of $E_{N\cdot\K,\int_{N\cdot\K}{|w_c|^2}}(c)=E_{N\cdot\K,N^3\lambda}(c)$.
\end{proof}

\section{Regime of large \texorpdfstring{$c$}{c}: symmetry breaking}\label{section_large_c}
This section is dedicated to the proof of the main result of the paper, namely \cth{main_result}.
We introduce for clarity some notations for the rest of the paper. We will denote the minimization problem for the effective model on the unit cell $\K$ by
\begin{equation}\label{K_eff_model_minimization}
\boxed{J_{\K,\lambda}(c)=\inf\limits_{\substack{v\in H^1_{\textrm{per}}(\K) \\ \norm{v}_{L^2(\K)}^2=\lambda}} {\mathscr J}_{\K,c}(v),}
\end{equation}
where
\begin{equation}\label{K_eff_model_functional}
{\mathscr J}_{\K,c}(v)=\int_{\K}{|\nabla v|^2}+\frac35c_{TF}\int_{\K}{|v|^{\frac{10}3}}-\frac34 c \int_{\K}{|v|^{\frac83}}.
\end{equation}

The first but important result is that we have for $J_{\K,\lambda}$ the existence result equivalent to the existence result of \cpr{K_complete_model_EulerLagrange} for $E_{\K,\lambda}$.

The minima of the effective model and of the TFDW model also verify the following a priori estimates which will be useful all along this section.
\begin{lemme}[A priori estimates on minimal energy]\label{K_models_apriori_estimates_J}
Let $\K$ be the unit cube and $c_{TF}>0$ be a constant. There exists $C>0$ such that for any $c>0$ we have
\begin{align}
- \lambda C-\frac{15}{64}\frac{\lambda}{c_{TF}}{c}^2\leq{}& E_{\K,\lambda}(c)\label{K_complete_model_apriori_bounds}\\
\intertext{and}
-\frac{15}{64}\frac{\lambda}{c_{TF}}{c}^2\leq{}& J_{\K,\lambda}(c)\leq -\frac34\frac{\lambda^{\frac43}}{|\K|^{\frac13}}c+\frac35c_{TF}\frac{\lambda^{\frac53}}{|\K|^{\frac23}}.\label{K_eff_model_apriori_bounds}
\end{align}
Moreover, for all $K$ such that $0<K< -J_{\R^3,\lambda}$, there exists $c_*>0$ such that for all $c\geq c_*$ we have
\begin{equation}
-\frac{15}{64}\frac{\lambda}{c_{TF}}{c}^2\leq J_{\K,\lambda}(c)\leq -{c}^2 K<0.\label{K_eff_model_apriori_upper_bound}
\end{equation}
\end{lemme}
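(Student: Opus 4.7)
The three inequalities split into essentially independent arguments, two very easy and one slightly more delicate.

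For the two lower bounds I would simply reuse the pointwise inequality
$$\frac35 c_{TF}\, t^{\frac{10}{3}} - \frac34 c\, t^{\frac83} \;\geq\; -\frac{15}{64}\frac{c^2}{c_{TF}} t^2$$
from the proof of \clm{R3_eff_model_lower_bound_on_NRJ}. For $E_{\K,\lambda}(c)$ this is already packaged inside \clm{K_complete_model_lower_bound_on_NRJ}; dropping the nonnegative kinetic term $a\|\nabla w\|_{L^2(\K)}^2$ in that bound and taking the infimum over admissible $w$ yields the claim directly. For $J_{\K,\lambda}(c)$ the Coulomb terms are absent, so integrating the pointwise inequality over $\K$ together with $\int_\K |\nabla v|^2 \geq 0$ gives $J_{\K,\lambda}(c) \geq -\frac{15}{64}\lambda c^2/c_{TF}$ immediately. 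For the upper bound in~\eqref{K_eff_model_apriori_bounds} I would test $J_{\K,\lambda}(c)$ against the constant trial function $v_0 \equiv \sqrt{\lambda/|\K|} \in H^1_{\textrm{per}}(\K)$, which has mass exactly $\lambda$ and vanishing gradient; the two power integrals then compute trivially to the stated expression.

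The only substantive step is the upper bound in~\eqref{K_eff_model_apriori_upper_bound}, for which I would use a concentration-rescaling argument. Fix $K < -J_{\R^3}(\lambda)$. By \cth{R3_eff_model_existence_thm} a minimizer of $J_{\R^3}(\lambda)$ exists, and by density of $C_c^\infty(\R^3)$ in $H^1(\R^3)$ followed by a small renormalization restoring the $L^2$ mass to $\lambda$, one can choose $u \in C_c^\infty(\R^3)$ with $\|u\|_{L^2(\R^3)}^2 = \lambda$, $\mathrm{supp}(u) \subset B(0,R)$ for some finite $R$, and ${\mathscr J}_{\R^3}(u) < -K$. The key point is then a scaling identity: for $v_c(x) := c^{3/2} u(cx)$ a direct change of variables gives $\|v_c\|_{L^2}^2 = \|u\|_{L^2}^2$, $\|\nabla v_c\|_{L^2}^2 = c^2 \|\nabla u\|_{L^2}^2$, $\int |v_c|^{10/3} = c^2 \int |u|^{10/3}$, and $\int |v_c|^{8/3} = c \int |u|^{8/3}$. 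The factor $c$ in front of the $L^{8/3}$ term of ${\mathscr J}_{\K,c}$ is therefore exactly absorbed, so that
$${\mathscr J}_{\K,c}(v_c) = c^2 {\mathscr J}_{\R^3}(u).$$
For $c > 2R$ the support of $v_c$ lies strictly inside $\K = [-\tfrac12,\tfrac12)^3$, so its $\K$-periodic extension by zero is admissible for $J_{\K,\lambda}(c)$, which gives $J_{\K,\lambda}(c) \leq c^2 {\mathscr J}_{\R^3}(u) < -c^2 K$ as required.

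The main --- really only --- point of the proof is the particular scaling of the TFDW powers: the homogeneities of $\int|\nabla u|^2$, $\int |u|^{10/3}$ and $\int|u|^{8/3}$ under the $L^2$-mass preserving map $u \mapsto c^{3/2} u(c\,\cdot)$ are precisely such that, after factoring out $c^2$, the coefficient of the $L^{8/3}$ term becomes $1$ and all other subleading contributions would scale as $o(c^2)$. This is exactly the effective model on $\R^3$ studied in Section~\ref{Section_effective_model_R3}, and this lemma is simply the first, energetic, manifestation of the fact that $E_{N\cdot\K,N^3\lambda}(c) \sim c^2 J_{\R^3}(N^3\lambda)$ which will drive the whole large-$c$ analysis of Section~\ref{section_large_c}.
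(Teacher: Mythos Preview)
Your proof is correct and follows essentially the same route as the paper: the lower bounds come from \clm{K_complete_model_lower_bound_on_NRJ} and the pointwise inequality of \clm{R3_eff_model_lower_bound_on_NRJ}, the first upper bound from the constant trial function, and the second upper bound from rescaling a compactly supported approximate minimizer of $J_{\R^3}(\lambda)$ via $u\mapsto c^{3/2}u(c\,\cdot)$. Your presentation is in fact slightly more explicit than the paper's on the mass renormalization needed after the $C^\infty_c$ approximation, and your closing remark on why the scaling works is a useful gloss.
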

\begin{proof}[Proof of~\clm{K_models_apriori_estimates_J}]
The inequality \eqref{K_complete_model_apriori_bounds} has been proved in \clm{K_complete_model_lower_bound_on_NRJ}, the proof of which also leads to the inequality
\begin{equation}\label{K_eff_model_lower_bound_on_NRJ}
{\mathscr J}_{\K,c}(v)\geq\norm{\nabla v}^2_{L^2(\K)}-\frac{15}{64}\frac{\lambda}{c_{TF}}{c}^2,
\end{equation}
hence the lower bound in~\eqref{K_eff_model_apriori_bounds}. The upper bound in~\eqref{K_eff_model_apriori_bounds} is a simple computation of ${\mathscr J}_{\K,c}(\bar{v})$ for the constant function $\bar{v}=\sqrt{\frac{\lambda}{|\K|}}$, defined on $\K$, which belongs to the minimizing domain.

To prove \eqref{K_eff_model_apriori_upper_bound}, let $K$ be such that $0<K< -J_{\R^3,\lambda}$.
Fix $f\in C^\infty_c(\R^3)$ such that $K=-{\mathscr J}_{\R^3}(f)>0$. Such a $f$ exists since $J_{\R^3,\lambda}<0$ and $C^\infty_c(\R^3)$ is dense in $H^1(\R^3)$. Thus, there exists $c_*>0$ such that for any $c\geq c_*$, the support of $f_c:=c^{3/2}f(c\cdot)$ is strictly included in $\K$. This implies, for any $c\geq c_*$, that
\begin{align*}
J_{\K,\lambda}(c)\leq{\mathscr J}_{\K,c}(f_c)&=\int_{\R^3}{|\nabla f_c|^2}+\frac35c_{TF}\int_{\R^3}{|f_c|^{\frac{10}3}}-\frac34c \int_{\R^3}{|f_c|^{\frac83}}=c^2{\mathscr J}_{\R^3}(f),
\end{align*}
and this concludes the proof of~\clm{K_models_apriori_estimates_J}.
\end{proof}

We introduce the notation $\K_c$ which will be the dilation of $\K$ by a factor $c>0$. Namely, if $\K$ is the unit cube, then
\begin{equation}\label{Kc_definition}
\K_c:=c\cdot\K:=\left[-\frac{c}2;\frac{c}2\right)^3.
\end{equation}
Moreover, we use the notation $\breve{v}$ to denote the dilation of $v$: for any $v$ defined on $\K$, $\breve{v}$ is defined on $\K_c$ by $\breve{v}(x):=c^{-3/2}v(c^{-1} x)$.

A direct computation gives
\begin{equation}\label{dilatation_of_J}
{\mathscr J}_{\K,c}(v)=c^2{\mathscr J}_{\K_c,1}(\breve{v}),
\end{equation}
for any $v\in H^1_{\textrm{per}}(\K)$. Consequently, $J_{\K,\lambda}(c)=c^2 J_{\K_c,\lambda}(1)$ and $v$ is a minimizer of $J_{\K,\lambda}(c)$ if and only if $\breve{v}$ is a minimizer of $J_{\K_c,\lambda}(1)$.
Finally, when $v$ is a minimizer of $J_{\K,\lambda}(c)$, we have some a priori bounds on several norms of $\breve{v}$ which are given in the following corollary of~\clm{K_models_apriori_estimates_J}.
\begin{cor}[Uniform norm bounds on minimizers of $J_{\K_c,\lambda}(1)$]\label{K_eff_dilated_minimizers_norm_unif_bound}
Let $\K$ be the unit cube and $\lambda$ be positive. Then there exist $C>0$ and $c_*>0$ such that for any $c\geq c_*$, a minimizer $\breve{v}_c$ of $J_{\K_c,\lambda}(1)$ verifies
$$\frac1C\leq \norm{\nabla \breve{v}_c}_{L^2(\K_c)}, \norm{\breve{v}_c}_{L^{10/3}(\K_c)}, \norm{\breve{v}_c}_{L^{8/3}(\K_c)}\leq C.$$
\end{cor}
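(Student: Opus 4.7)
The plan is to combine the scaling identity $J_{\K,\lambda}(c)=c^{2}\,J_{\K_{c},\lambda}(1)$ from \eqref{dilatation_of_J} with \clm{K_models_apriori_estimates_J}. Fixing once and for all some $0<K<-J_{\R^{3}}(\lambda)$ and dividing \eqref{K_eff_model_apriori_bounds}--\eqref{K_eff_model_apriori_upper_bound} by $c^{2}$, one obtains
$$-\frac{15\lambda}{64 c_{TF}}\leq J_{\K_{c},\lambda}(1)={\mathscr J}_{\K_{c},1}(\breve v_{c})\leq -K<0$$
for every $c\geq c_{*}$. From this two-sided pinch the six bounds will be extracted using only the pointwise inequality $\tfrac{3c_{TF}}{5}t^{10/3}-\tfrac34 t^{8/3}\geq -\tfrac{15}{64c_{TF}}t^{2}$, the H\"older interpolation $\norm{\breve v_{c}}_{L^{8/3}}^{8/3}\leq \lambda^{1/2}\norm{\breve v_{c}}_{L^{10/3}}^{5/3}$ (valid since $2<8/3<10/3$), and a scale-tracked Sobolev embedding on $\K_{c}$.

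The upper bounds come out fast. The pointwise inequality, already used in \eqref{K_eff_model_lower_bound_on_NRJ}, combined with the mass constraint yields ${\mathscr J}_{\K_{c},1}(\breve v_{c})\geq \norm{\nabla \breve v_{c}}_{L^{2}(\K_{c})}^{2}-\tfrac{15\lambda}{64 c_{TF}}$, hence a uniform upper bound on $\norm{\nabla \breve v_{c}}_{L^{2}(\K_{c})}$. Then ${\mathscr J}_{\K_{c},1}(\breve v_{c})\leq 0$ rewrites as $\tfrac{3c_{TF}}{5}\norm{\breve v_{c}}_{L^{10/3}}^{10/3}\leq \tfrac34\norm{\breve v_{c}}_{L^{8/3}}^{8/3}\leq \tfrac34\lambda^{1/2}\norm{\breve v_{c}}_{L^{10/3}}^{5/3}$, which closes a loop and provides uniform upper bounds on both $\norm{\breve v_{c}}_{L^{10/3}(\K_{c})}$ and $\norm{\breve v_{c}}_{L^{8/3}(\K_{c})}$.

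The lower bounds on the $L^{p}$-norms follow from the strict inequality ${\mathscr J}_{\K_{c},1}(\breve v_{c})\leq -K$, which forces
$$\tfrac34\norm{\breve v_{c}}_{L^{8/3}(\K_{c})}^{8/3}\geq K+\norm{\nabla \breve v_{c}}_{L^{2}(\K_{c})}^{2}+\tfrac{3c_{TF}}{5}\norm{\breve v_{c}}_{L^{10/3}(\K_{c})}^{10/3}\geq K,$$
hence $\norm{\breve v_{c}}_{L^{8/3}(\K_{c})}^{8/3}\geq 4K/3$; running the same H\"older interpolation in reverse then yields $\norm{\breve v_{c}}_{L^{10/3}(\K_{c})}^{5/3}\geq 4K/(3\lambda^{1/2})$.

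The delicate step, and the only one that genuinely uses the largeness of $c$, is the lower bound on $\norm{\nabla \breve v_{c}}_{L^{2}(\K_{c})}$. My plan is to scale the Sobolev embedding $H^{1}(\K)\hookrightarrow L^{6}(\K)$ up to the torus $\K_{c}$, obtaining
$$\norm{\breve v_{c}}_{L^{6}(\K_{c})}\leq C\bigl(\norm{\nabla \breve v_{c}}_{L^{2}(\K_{c})}+c^{-1}\sqrt{\lambda}\bigr),$$
with $C$ independent of $c$ (the term $c^{-1}\sqrt{\lambda}$ being the unavoidable residue of the constant mode on a bounded domain). Interpolating $L^{8/3}$ between $L^{2}$ and $L^{6}$ gives $\norm{\breve v_{c}}_{L^{8/3}}^{8/3}\leq \lambda^{5/6}\norm{\breve v_{c}}_{L^{6}}$, which together with $\norm{\breve v_{c}}_{L^{8/3}}^{8/3}\geq 4K/3$ produces
$$\norm{\nabla \breve v_{c}}_{L^{2}(\K_{c})}\geq \frac{4K}{3C\lambda^{5/6}}-c^{-1}\sqrt{\lambda},$$
bounded below by a strictly positive constant as soon as $c_{*}$ is enlarged enough to absorb the correction. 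This closes all six estimates.
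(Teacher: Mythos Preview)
Your proof is correct and follows essentially the same skeleton as the paper's: the energy pinch $-\tfrac{15\lambda}{64c_{TF}}\leq J_{\K_c,\lambda}(1)\leq -K$ from \clm{K_models_apriori_estimates_J}, the H\"older interpolation $\norm{\breve v_c}_{L^{8/3}}^{8/3}\leq\lambda^{1/2}\norm{\breve v_c}_{L^{10/3}}^{5/3}$, and a Sobolev embedding for the gradient lower bound. One small methodological difference is worth noting. For the \emph{upper} bounds on the $L^{10/3}$ and $L^{8/3}$ norms, the paper pulls back to the fixed cube $\K$ and applies H\"older together with the Sobolev embedding there (using $\norm{\nabla v_c}_{L^2(\K)}=O(c)$), whereas you extract them directly on $\K_c$ from ${\mathscr J}_{\K_c,1}(\breve v_c)\leq 0$ and the interpolation inequality, closing an algebraic loop without any Sobolev input at that stage. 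Your route is slightly more self-contained; the paper's route reuses the Sobolev machinery that is needed anyway for the gradient lower bound. For that last step the two arguments coincide: the paper's one-line ``by the Sobolev embeddings'' is exactly your scale-tracked inequality $\norm{\breve v_c}_{L^6(\K_c)}\leq C(\norm{\nabla\breve v_c}_{L^2(\K_c)}+c^{-1}\sqrt\lambda)$, just stated on $\K$ rather than on $\K_c$.
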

\begin{proof}[Proof of~\ccr{K_eff_dilated_minimizers_norm_unif_bound}]
By~\eqref{K_eff_model_apriori_bounds} and~\eqref{K_eff_model_lower_bound_on_NRJ}, we obtain for $c$ large enough that any any minimizer $v_c$ of $J_{\K,\lambda}(c)$ verifies
$$\norm{\nabla\breve{v}_c}_{L^2(\K_c)}^2=c^{-2}\norm{\nabla v_c}^2_{L^2(\K)}\leq \frac{15}{64}\frac{\lambda}{c_{TF}}.$$
Applying, on $\K$, H\"older's inequality and Sobolev embeddings to $v_c$, we obtain that there exists $C$ such that
$$\forall c\geq c_*, \quad \norm{\nabla \breve{v}_c}_{L^2(\K_c)}, \norm{\breve{v}_c}_{L^{10/3}(\K_c)}, \norm{\breve{v}_c}_{L^{8/3}(\K_c)}\leq C.$$
By~\eqref{K_eff_model_apriori_upper_bound}, for any $K$ such that $0<K< -J_{\R^3,\lambda}$, there exists $c_\star>0$ such that
$$\forall c\geq c_\star, \; \; 0<\frac43 K\leq -\frac43 J_{\K_c,\lambda}(1) \leq \norm{\breve{v}_c}_{L^{8/3}(\K_c)}^{8/3}$$ and, consequently, such that
$$\forall c\geq c_\star, \; \; \norm{\breve{v}_c}_{L^{10/3}(\K_c)}^{10/3}\geq \frac1{\lambda}\left(\norm{\breve{v}_c}_{L^{8/3}(\K_c)}^{8/3}\right)^2>\frac{16}9\frac{K^2}\lambda>0.$$
We then obtain the lower bound for the gradient by the Sobolev embeddings. This concludes the proof of~\ccr{K_eff_dilated_minimizers_norm_unif_bound}.
\end{proof}

\subsection{Concentration-compactness}\label{section_symm_break_concentration_compactness}
To prove the symmetry breaking stated in \cth{main_result}, we prove the following result using the concentration-compactness method as a key ingredient.
\begin{prop}\label{Main_term_expansion_of_minimum}
Let $\K$ be the unit cube and $\lambda$ be positive. Then
$$\lim\limits_{c\to\infty} {c}^{-2} E_{\K,\lambda}(c)=J_{\R^3,\lambda}=\lim\limits_{c\to\infty} {c}^{-2} J_{\K,\lambda}(c).$$
Moreover, for any sequence ${w}_c$ of minimizers to $E_{\K,\lambda}(c)$, there exists a subsequence $c_n\to\infty$ and a sequence translations $\{x_n\}\subset\R^3$ such that the sequence of dilated functions $\breve{w}_n:={c_n}^{-3/2}{w}_{c_n}({c_n}^{-1}\cdot)$ verifies
\begin{enumerate}[label=\roman*.]
	\item $\mathds{1}_{\K_{c_n}}\breve{w}_n(\cdot+x_n)$ converges to a minimizer $u$ of $J_{\R^3,\lambda}$ strongly in $L^p(\R^3)$ for $2\leq p<6$, as $n$ goes to infinity;
	\item $\mathds{1}_{\K_{c_n}}\nabla\breve{w}_n(\cdot+x_n)\to \nabla u$ strongly in $L^2(\R^3)$.
\end{enumerate}
The same holds for any sequence ${v}_c$ of minimizers of $J_{\K,\lambda}(c)$.
\end{prop}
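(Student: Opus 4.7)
The overall strategy is based on dilation. If $w_c\in H^1_{\textrm{per}}(\K)$ is a minimizer of $E_{\K,\lambda}(c)$, the function $\breve{w}_c(y)=c^{-3/2}w_c(y/c)$ defined on $\K_c$ has the same $L^2$-mass $\lambda$ and, by~\eqref{dilatation_of_J}, satisfies ${\mathscr J}_{\K,c}(w_c)=c^2{\mathscr J}_{\K_c,1}(\breve{w}_c)$. After this scaling, the Coulomb contributions $\tfrac12 D_\K(w_c^2,w_c^2)$ and $\int_\K G_\K w_c^2$ turn into integrals against $G_\K(\cdot/c)$; combined with the pointwise bound $G_\K\leq C|\cdot|^{-1}+C'$ from~\clm{G_K_in_Lp} and the uniform bounds of~\ccr{K_eff_dilated_minimizers_norm_unif_bound} (which I first extend to $E$-minimizers using~\clm{K_complete_model_lower_bound_on_NRJ}), these nonlocal terms contribute only $O(c)$ and become negligible compared to $c^2 J_{\R^3,\lambda}$. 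The problem thus reduces asymptotically to the effective problem on the growing box $\K_c$, and the concentration-compactness alternative will be controlled by the strict binding inequality~\eqref{R3_eff_model_strict_binding_ineq_in_section} of~\cpr{R3_eff_model_strict_binding}.

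For the upper bound $\limsup c^{-2}E_{\K,\lambda}(c)\leq J_{\R^3,\lambda}$ (and the analogous one for $J_{\K,\lambda}(c)$), my plan is to pick an almost-minimizer $u\in C^\infty_c(\R^3)$ of $J_{\R^3,\lambda}$ (which exists by~\cth{R3_eff_model_existence_thm} and density) and use the rescaled test function $v_c(x)=c^{3/2}u(cx)$. For $c$ large enough, $\operatorname{supp}(v_c)\Subset\K$, so $v_c$ extends trivially to an element of $H^1_{\textrm{per}}(\K)$ of mass $\lambda$, a direct change of variable yields ${\mathscr J}_{\K,c}(v_c)=c^2{\mathscr J}_{\R^3}(u)$, and the Coulomb terms remain $O(c)$. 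Letting $c\to\infty$ and then $u$ tend to a minimizer produces the upper bound on both $c^{-2}E_{\K,\lambda}(c)$ and $c^{-2}J_{\K,\lambda}(c)$.

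For the lower bound together with the compactness statement, I start from a sequence $w_c$ of minimizers. The upper bound combined with~\clm{K_complete_model_lower_bound_on_NRJ} gives uniform $H^1$-bounds for $\breve{w}_c$ on $\K_c$ as well as a strict lower bound $\int_{\K_c}|\breve{w}_c|^{8/3}\geq\delta>0$, so vanishing is ruled out. By Lions' concentration-compactness lemma applied to the translation-invariant measure $|\breve{w}_c|^2$ on $\R^3$ (viewing $\breve{w}_c$ as $\K_c$-periodic, extended by zero outside one unit cell), there exist, up to a subsequence, translations $x_c\in\K_c$ such that $\mathds{1}_{\K_c}\breve{w}_c(\cdot+x_c)$ converges weakly in $H^1(\R^3)$ and strongly in $L^p_{\textrm{loc}}$ for $2\leq p<6$ to some nonzero $u$ of mass $\lambda'\leq\lambda$. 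To exclude the dichotomy case $\lambda'<\lambda$, I localize the energy with a smooth partition of unity separating the concentration region from its exterior and show, after passing to the limit and using that the Coulomb terms contribute only $O(c)$,
\begin{equation*}
\liminf_{c\to\infty} c^{-2}E_{\K,\lambda}(c)\geq J_{\R^3}(\lambda')+J_{\R^3}(\lambda-\lambda'),
\end{equation*}
which, combined with the upper bound, contradicts the strict binding inequality~\eqref{R3_eff_model_strict_binding_ineq_in_section}. Hence $\lambda'=\lambda$.

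Full mass preservation then upgrades the $L^2$-convergence from weak to strong, and interpolation with the uniform $H^1$-bound yields strong $L^p(\R^3)$ convergence for all $2\leq p<6$. Weak lower semi-continuity of the kinetic and $L^{10/3}$ terms, strong $L^{8/3}$-convergence of the negative term and the vanishing of the Coulomb contributions give ${\mathscr J}_{\R^3}(u)\leq\liminf c^{-2}E_{\K,\lambda}(c)\leq J_{\R^3,\lambda}$, so $u$ is a minimizer and $\lim c^{-2}E_{\K,\lambda}(c)=J_{\R^3,\lambda}$; convergence of the full kinetic norm then forces strong $L^2$-convergence of the gradients. The same scheme applies, and is actually simpler since no Coulomb terms need to be controlled, to any sequence $v_c$ of minimizers of $J_{\K,\lambda}(c)$. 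The main obstacle is the dichotomy step on the growing periodic cubes $\K_c$: the clean cutoff-based splitting of the local energy is standard, but one must carefully check that the rescaled nonlocal contributions coming from $D_\K$ and $\int G_\K w_c^2$ remain subleading uniformly in the partition scale. This is precisely where the estimate $G_\K\leq C|\cdot|^{-1}+C'$ from~\clm{G_K_in_Lp}, applied after the $1/c$-rescaling, is used crucially to keep these terms of order $O(c)$ rather than $O(c^2)$.
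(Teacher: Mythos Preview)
Your overall architecture---rescale, show the Coulomb terms are $O(c)$, prove upper and lower bounds via test functions and concentration-compactness, then invoke the strict binding inequality~\eqref{R3_eff_model_strict_binding_ineq_in_section}---matches the paper's. The upper bound and the treatment of the Coulomb terms are fine.

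The gap is in your dichotomy step. You claim that after extracting one concentration profile of mass $\lambda'$ and localizing, the remainder contributes at least $J_{\R^3}(\lambda-\lambda')$ to the $\liminf$. But the remainder lives on the periodic box $\K_{c_n}$, not on $\R^3$; the only direct lower bound available for its ${\mathscr J}_{\K_{c_n},1}$-energy is $J_{\K_{c_n}}(\lambda-\lambda')$, and converting this to $J_{\R^3}(\lambda-\lambda')$ is exactly the convergence statement you are trying to prove (for the smaller mass $\lambda-\lambda'$). The easy direction of the upper bound, $J_{\K_{c}}(\mu)\leq J_{\R^3}(\mu)+o(1)$, goes the wrong way here. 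Lions' dichotomy alternative only localizes \emph{one} of the two pieces in a ball of fixed radius; the second piece sits outside a ball of growing radius and need not itself concentrate, so you cannot simply extend it by zero to $\R^3$ and invoke $J_{\R^3}(\lambda-\lambda')$.

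The paper flags this precise point (see the remark after~\clm{Splitting_localized_bubbles}) and resolves it by using a full profile decomposition rather than a single-bubble extraction: one keeps extracting bubbles $\phi^{(1)},\dots,\phi^{(J)}$ until the residual $\psi_k$ has $\int|\psi_k|^{8/3}\leq\epsilon$, so that its energy is bounded below by $-\tfrac34\epsilon$ regardless of any unknown limit. Each extracted bubble, being compactly supported inside $\K_{c_k}$, is a legitimate $\R^3$-competitor and contributes at least $J_{\R^3}(\lambda^{(j)})$. Summing and applying the strict binding inequality to the finite collection of masses then yields the lower bound $\liminf J_{\K_{c_k},\lambda}(1)\geq J_{\R^3}(\lambda)-\tfrac34\epsilon$. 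Once this lower bound is established for \emph{all} $\lambda$, the paper revisits the one-bubble argument you sketch: now the remainder energy can legitimately be bounded below by $J_{\K_{c_k}}(\lambda-\lambda_1+\epsilon)\to J_{\R^3}(\lambda-\lambda_1+\epsilon)$, forcing $\lambda_1=\lambda$ and the strong convergences you claim. Your proof becomes correct if you insert this iterated extraction (or equivalently cite a profile-decomposition lemma) in place of the single dichotomy step.
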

Before proving~\cpr{Main_term_expansion_of_minimum}, we give and prove several intermediate results, the first of which is the following proposition which will allow us to deduce the results for $E_{\K,\lambda}$ from those for $J_{\K,\lambda}$.
\begin{lemme}\label{cvgce_EK_JK}
Let $\lambda>0$. Then $$\frac{E_{\K,\lambda}(c)}{J_{\K,\lambda}(c)}\underset{c\to\infty}{\longrightarrow}1.$$
\end{lemme}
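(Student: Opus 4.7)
Since $|J_{\K,\lambda}(c)|\ge K c^{2}$ for $c$ large by~\eqref{K_eff_model_apriori_upper_bound}, it suffices to prove $E_{\K,\lambda}(c)-J_{\K,\lambda}(c)=o(c^{2})$. The plan is to obtain this by testing each of the two minimization problems on a minimizer of the other, without appealing to the concentration--compactness machinery of the forthcoming \cpr{Main_term_expansion_of_minimum}.

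Let $w_c$ minimize $E_{\K,\lambda}(c)$ and $v_c$ minimize $J_{\K,\lambda}(c)$; both exist by \cpr{K_complete_model_EulerLagrange} and the corresponding existence result for $J_{\K,\lambda}$. Testing $v_c$ in $\mathscr E_{\K,c}$ and using $\tfrac12 D_\K(|w_c|^2,|w_c|^2)\ge 0$ gives the two-sided bound
\[
-\int_\K G_\K |w_c|^2 \;\le\; E_{\K,\lambda}(c)-J_{\K,\lambda}(c) \;\le\; \tfrac12 D_\K(|v_c|^2,|v_c|^2)-\int_\K G_\K |v_c|^2.
\]
A priori, $\|\nabla u\|_{L^2(\K)}^2\le Cc^2$ for both $u=w_c$ and $u=v_c$: this follows from~\clm{K_complete_model_lower_bound_on_NRJ} and~\clm{K_models_apriori_estimates_J} combined with the trivial upper bounds $E_{\K,\lambda}(c),J_{\K,\lambda}(c)\le O(c^2)$ obtained by testing against any fixed admissible function (e.g.\ the constant $\sqrt{\lambda/|\K|}$). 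Sobolev embedding then yields $\|u\|_{L^6(\K)}^2\le Cc^2$. The splitting $G_\K=\mathds 1_{|x|<r}G_\K+\mathds 1_{|x|\ge r}G_\K$ used inside the proof of~\clm{K_complete_model_lower_bound_on_NRJ} produces, for every $\epsilon>0$, the bound $|\int_\K G_\K|u|^2|\le \epsilon\|u\|_{L^6}^2+C_\epsilon\lambda$, whence $\int_\K G_\K|u|^2=o(c^2)$ for each of the two minimizers.

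For the remaining repulsion term I would use that $G_\K$ is bounded away from the lattice and behaves like $1/|z-k|$ near each of the finitely many lattice points $k\in\K-\K$, so that Hardy--Littlewood--Sobolev applied cell by cell yields
\[
D_\K(\rho,\rho)\le C\bigl(\|\rho\|_{L^{6/5}(\K)}^2+\lambda^2\bigr)
\]
for every nonnegative $\rho$ with $\int_\K\rho=\lambda$. Interpolating $\|v_c\|_{L^{12/5}}\le \|v_c\|_{L^2}^{3/4}\|v_c\|_{L^6}^{1/4}$ then gives $\||v_c|^2\|_{L^{6/5}(\K)}^2=\|v_c\|_{L^{12/5}(\K)}^4\le \lambda^{3/2}\|v_c\|_{L^6(\K)}=O(c)$, hence $D_\K(|v_c|^2,|v_c|^2)=O(c)=o(c^2)$. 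Combining with the previous paragraph, $|E_{\K,\lambda}(c)-J_{\K,\lambda}(c)|=o(c^2)$, and dividing by $|J_{\K,\lambda}(c)|\ge Kc^2$ gives the convergence of the ratio to $1$.

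The main obstacle is precisely this sharpening of the $D_\K$ estimate: the crude bound $D_\K(\rho,\rho)\le C\lambda\|v_c\|_{L^6}^2$ obtained from Young's inequality with $G_\K\in L^{3/2}(\K)$ only gives $O(c^2)$, which would leave the ratio bounded but not convergent to~$1$. The Coulomb-critical HLS exponent $6/5$ is essential because it allows interpolation against the mass $\|v_c\|_{L^2}^2=\lambda$, which is fixed in $c$, producing the strict gain over $c^2$ needed to close the argument.
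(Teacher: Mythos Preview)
Your proof is correct and follows essentially the same route as the paper's: test each problem on the other's minimizer, control $\int_\K G_\K|u|^2$ and $D_\K(|v_c|^2,|v_c|^2)$ separately, and divide by $|J_{\K,\lambda}(c)|\gtrsim c^2$. The only cosmetic difference is that the paper bounds $\int_\K G_\K|u|^2$ by $O(c)$ via the periodic Hardy inequality rather than your $o(c^2)$ via the $L^{3/2}+L^\infty$ splitting, while the treatment of $D_\K$ through Hardy--Littlewood--Sobolev at the critical exponent $12/5$ and the interpolation $\|v_c\|_{L^{12/5}}^4\le\lambda^{3/2}\|v_c\|_{L^6}=O(c)$ is identical.
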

\begin{proof}[Proof of~\clm{cvgce_EK_JK}]
Let $w_c$ and $v_c$ be minimizers of $E_{\K,\lambda}(c)$ and $J_{\K,\lambda}(c)$ respectively which exist by~\cpr{K_complete_model_EulerLagrange} and the equivalent result for $J_{\K,\lambda}(c)$. Thus
$$\frac12 D_\K({w_c}^2,{w_c}^2)-\int_\K G_\K {w_c}^2\leq E_{\K,\lambda}(c)-J_{\K,\lambda}(c)\leq \frac12 D_\K({v_c}^2,{v_c}^2)-\int_\K G_\K {v_c}^2.$$
By the Hardy inequality on $\K$ and \eqref{equation_G_K_in_Lp}, we have
$$\left|\int_{\K}{G_\K {v_c}^2}\right|\leq \lambda\norm{G_\K v_c}_{L^2(\K)}\leq C\lambda\norm{v_c}_{H^1(\K)}$$
and similarly $\left|\int_{\K}{G_\K {w_c}^2}\right|\lesssim \norm{w_c}_{H^1(\K)}$. Moreover, we claim that
\begin{equation}\label{HLS_on_D_K}
D_\K({v_c}^2,{v_c}^2)\lesssim\norm{v_c}_{H^1(\K)}.
\end{equation}
To prove \eqref{HLS_on_D_K} we define, for each spatial direction $i\in\{1,2,3\}$ of the lattice, the intervals $I^{(-1)}_i:=\left[-1;-1/2\right)$, $I^{(0)}_i:=\left[-1/2;1/2\right)$ and $I^{(+1)}_i:=\left[1/2;1\right)$, and the parallelepipeds $\K^{(\sigma_1,\sigma_2,\sigma_3)}=I^{(\sigma_1)}_1\times I^{(\sigma_2)}_2 \times I^{(\sigma_3)}_3$ which let us rewrite $\K=\K^{(0,0,0)}$ and $\K_2=2\cdot\K:=\left[-1;1\right)^3$ as the union of the $27$ sets
$$\K_2=\bigcup\limits_{\boldsymbol\sigma\in\{-1;0;+1\}^3} \K^{\boldsymbol\sigma}.$$
We thus have by
\eqref{equation_G_K_in_Lp} and the Hardy--Littlewood--Sobolev inequality that
\begin{align*}
\iint\limits_{\substack{\K\times \K\\x-y\in \K^{\boldsymbol\sigma}}}{v_c}^2(x)G_\K(x-y){v_c}^2(y)\dd x\dd y&\lesssim\iint_{\K\times \K}{\frac{v_c^2(x)v_c^2(y)}{|x-y-\boldsymbol\sigma|}\dd y\dd x} \lesssim \norm{v_c}_{L^{\frac{12}5}(\K)}^4.
\end{align*}
Consequently, by H\"older's inequality and Sobolev embeddings, we have
\begin{multline}\label{D_K_HLS}
\left|D_\K({v_c}^2,{v_c}^2)\right|=\Big|\sum_{\boldsymbol\sigma\in\{-1;0;+1\}^3} \iint\limits_{\substack{\K\times \K\\x-y\in \K^{\boldsymbol\sigma}}}{v_c}^2(x)G_\K(x-y){v_c}^2(y)\dd x\dd y\Big|\\
	\lesssim \norm{v_c}_{L^{\frac{12}5}(\K)}^4\lesssim \norm{v_c}_{H^1(\K)}\norm{v_c}_{L^2(\K)}^3.
\end{multline}
This proves \eqref{HLS_on_D_K} which also holds for $w_c$.

Then, on one hand, by~\eqref{maj_NRJ_Kin_by_fonctional} applied to $c_1=0\leq c_2=c$, there exist positive constants $a<1$ and $C$ such that for any $c>0$ we have
\begin{equation*}
a\norm{\nabla w_c}_{L^2(\K)}^2\leq  \frac{15}{64}\frac{\lambda}{c_{TF}}{c}^2+E_{\K,\lambda}(0) +\lambda C.
\end{equation*}
On the other hand, the upper bound in~\eqref{K_eff_model_apriori_upper_bound} together with the \eqref{K_eff_model_lower_bound_on_NRJ} applied to $v_c$, give that there exists $c_*>0$ such that
\begin{align}\label{maj_norm_grad_jD}
\exists \; K>0, \forall \; c\geq c_*, \qquad \norm{\nabla{v_c}}^2_{L^2(\K)} &\leq \left(\frac{15}{64}\frac{\lambda}{c_{TF}}-K\right){c}^2.
\end{align}
Consequently, for $c$ large enough, we have
\begin{equation*}
\left|J_{\K,\lambda}(c)-E_{\K,\lambda}(c)\right| \lesssim c
\end{equation*}
hence, using \eqref{K_eff_model_apriori_upper_bound}, we finally obtain
\begin{equation*}
\left|\frac{E_{\K,\lambda}(c)}{J_{\K,\lambda}(c)}-1\right| \lesssim {c}^{-1}.
\end{equation*}
This concludes the proof of~\clm{cvgce_EK_JK}.
\end{proof}

We now prove that the periodic effective model converges,
$$\lim\limits_{c\to\infty} {c}^{-2} J_{\K,\lambda}(c)=J_{\R^3,\lambda},$$
by proving the two associated inequalities. We first prove the upper bound then use the concentration-compactness method to prove the converse inequality. The strong convergence of minimizers stated in~\cpr{Main_term_expansion_of_minimum} will be a by-product of the method.

\begin{lemme}[Upper bound]\label{R3_eff_model_main_term_expansion_upper_bound}
Let $\K$ be the unit cube and $\lambda$ be positive. Then there exists $\beta>0$ such that
\begin{equation*}\label{K_eff_model_below_expansion_exponential_rate_equation}
J_{\K,\lambda}(c)\leq c^2 J_{\R^3}(\lambda)+o(e^{-\beta c}).
\end{equation*}
\end{lemme}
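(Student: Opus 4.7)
The plan is to construct an explicit trial function for $J_{\K,\lambda}(c)$ by truncating and renormalizing a minimizer of the whole-space problem $J_{\R^3}(\lambda)$, and then to exploit the exponential decay of that minimizer in order to beat any polynomial error rate. Using the dilation identity \eqref{dilatation_of_J}, it is equivalent to prove that $J_{\K_c,\lambda}(1) \leq J_{\R^3}(\lambda) + o(e^{-\beta c})$, and I will build the test function on the large cube $\K_c = c\cdot \K$.

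First I would take a minimizer $Q\in H^2(\R^3)$ of $J_{\R^3}(\lambda)$ provided by \cth{R3_eff_model_existence_thm}. This $Q$ is positive, radial, strictly decreasing and solves
\[
-\Delta Q + c_{TF}\,Q^{7/3} - Q^{5/3} = -\mu Q
\]
with $\mu>0$. The key analytic ingredient I need is exponential decay of $Q$ and $\nabla Q$: rewriting the equation as $(-\Delta+\mu)Q = Q^{5/3}(1-c_{TF}Q^{2/3})$ and using that $Q\in H^2(\R^3)\subset C^0$ is radial decreasing and vanishes at infinity, the right-hand side is pointwise dominated by $\tfrac{\mu}{2}Q$ outside a sufficiently large ball. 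A standard comparison with a Yukawa-type supersolution then gives $|Q(x)|\leq Ce^{-\beta|x|}$ for some $\beta\in(0,\sqrt{\mu})$, and elliptic regularity extends the same rate to $\nabla Q$.

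Next, I build the trial function. Fix a smooth cutoff $\chi\in C^\infty_c(\R^3)$ with $\chi\equiv 1$ on $[-\tfrac14,\tfrac14]^3$ and $\mathrm{supp}\,\chi\subset(-\tfrac12,\tfrac12)^3$, and set $\chi_c(x):=\chi(x/c)$, which is supported strictly inside $\K_c$ and satisfies $|\nabla \chi_c|\lesssim 1/c$. Let $\tilde Q_c:=\chi_c Q$; the decay estimate yields
\[
\int_{\R^3} \tilde Q_c^{\,2} = \lambda-\delta_c,\qquad 0\leq\delta_c=\int_{\R^3}(1-\chi_c^2)\,Q^2 = O(e^{-\beta c/2}).
\]
Then $\hat Q_c := \sqrt{\lambda/(\lambda-\delta_c)}\,\tilde Q_c$ has $L^2$-mass exactly $\lambda$ and is compactly supported in the interior of $\K_c$, so its periodic extension belongs to $H^1_{\mathrm{per}}(\K_c)$ and is admissible for $J_{\K_c,\lambda}(1)$.

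Finally, I would estimate ${\mathscr J}_{\K_c,1}(\hat Q_c)$. Since $\hat Q_c$ is supported strictly inside $\K_c$, this integral coincides with ${\mathscr J}_{\R^3}(\hat Q_c)$. Expanding the gradient term produces a main piece $\int \chi_c^2|\nabla Q|^2$ together with corrections of the form $\int|\nabla \chi_c|^2 Q^2$ and $\int \chi_c Q\,\nabla\chi_c\cdot\nabla Q$, each supported in $\K_c\setminus[-c/4,c/4]^3$ and hence of size $O(e^{-\beta'c})$ for some $0<\beta'<\beta$ by the pointwise decay of $Q$ and $\nabla Q$. The $L^{10/3}$ and $L^{8/3}$ terms are controlled in exactly the same way, and the renormalization factor $\sqrt{\lambda/(\lambda-\delta_c)}=1+O(e^{-\beta c/2})$ contributes an error of the same order. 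Collecting everything gives ${\mathscr J}_{\R^3}(\hat Q_c)=J_{\R^3}(\lambda)+O(e^{-\beta'c})$, and combining with $J_{\K,\lambda}(c)=c^2 J_{\K_c,\lambda}(1)\leq c^2{\mathscr J}_{\K_c,1}(\hat Q_c)$ yields the claimed bound after replacing $\beta'$ by any strictly smaller constant to absorb the factor $c^2$ into the $o$-notation. The only step requiring genuine analysis is the exponential-decay lemma; the rest is bookkeeping around the cutoff.
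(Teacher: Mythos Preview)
Your proposal is correct and follows essentially the same approach as the paper: take a minimizer $Q$ of $J_{\R^3}(\lambda)$, exploit its exponential decay (from the Euler--Lagrange equation with $\mu>0$), truncate by a smooth cutoff supported in $\K_c$, renormalize the mass, and use the scaling identity~\eqref{dilatation_of_J} to conclude. The only cosmetic difference is your choice of cutoff $\chi_c(x)=\chi(x/c)$ (equal to $1$ on $\K_{c/2}$, with $|\nabla\chi_c|\lesssim 1/c$) versus the paper's cutoff that equals $1$ on a cube of side $c-1$ with uniformly bounded gradient; both yield exponentially small errors and the argument goes through identically.
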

\begin{proof}[Proof of~\clm{R3_eff_model_main_term_expansion_upper_bound}]
Using the scaling equality \eqref{dilatation_of_J}, this result is obtained by computing ${\mathscr J}_{\K_c,1}(Q_c)$ where
$$Q_c = \frac{\sqrt{\lambda} \chi_c Q}{\norm{\chi_c Q}_{L^2(\R^3)}},$$
for $Q\in H^1(\R^3)$ a minimizer of $J_{\R^3,\lambda}$, with $\chi_c\in C^\infty_c(\R^3)$, $0\leq\chi_c\leq1$, $\chi_c\equiv0$ on $\R^3\setminus \K_{c+1}$, $\chi_c\equiv1$ on $\K_c$ and $\norm{\nabla \chi_c}_{L^\infty(\R^3)}$ bounded. Indeed, by the well-known exponential decay of continuous positive solution to the Euler--Lagrange equations with strictly negative Lagrange multiplier, one obtains the exponential decay when $r$ goes to infinity of the norm $\norm{\nabla Q}_{L^2({}^\complement\!B(0,r))}$ and the norms $\norm{Q}_{L^p({}^\complement\!B(0,r))}$ for $p>0$, and consequently the claimed upper bound.
\end{proof}

\begin{lemme}[Lower bound]\label{R3_eff_model_main_term_expansion_lower_bound}
Let $\K$ be the unit cube and $\lambda$ be positive. Then
$$\liminf\limits_{c\to\infty} c^{-2}J_{\K,\lambda}(c)\geq J_{\R^3,\lambda}.$$
\end{lemme}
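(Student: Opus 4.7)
The strategy is a concentration-compactness argument in the dilated variable, ruling out vanishing and dichotomy so as to reduce to a minimization problem on $\R^3$ and conclude by weak lower semicontinuity.

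First, let $c_n\to\infty$ be a subsequence realizing the liminf, pick a minimizer $v_n$ of $J_{\K,\lambda}(c_n)$, and set $\breve{v}_n(x):=c_n^{-3/2}v_n(c_n^{-1}x)$ on $\K_{c_n}$. By the scaling identity~\eqref{dilatation_of_J}, $c_n^{-2}J_{\K,\lambda}(c_n)=\mathscr{J}_{\K_{c_n},1}(\breve{v}_n)$ and $\norm{\breve{v}_n}_{L^2(\K_{c_n})}^2=\lambda$. Extend $\breve{v}_n$ by zero to $\R^3$. Thanks to~\ccr{K_eff_dilated_minimizers_norm_unif_bound}, $(\breve{v}_n)$ is uniformly bounded in $H^1(\R^3)$ and its $L^{8/3}(\R^3)$-norm is uniformly bounded away from $0$. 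Apply Lions' concentration-compactness lemma to the sequence of nonnegative densities $\rho_n:=|\breve{v}_n|^2$ of common mass $\lambda$.

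Vanishing is impossible: if $\sup_{y\in\R^3}\int_{B(y,R)}\rho_n\to0$ for every $R>0$, then $\breve{v}_n\to0$ strongly in $L^p(\R^3)$ for $2<p<6$, contradicting the uniform lower bound on $\norm{\breve{v}_n}_{L^{8/3}}$. Dichotomy is also impossible: if $\rho_n$ splits into two pieces of masses $\alpha\in(0,\lambda)$ and $\lambda-\alpha$ with supports separated at distance tending to $\infty$, then a standard localization via a smooth cut-off and the fact that the additional error terms $\norm{\nabla\chi}_{L^\infty}^2$ are controlled by a vanishing sequence yield
\begin{equation*}
\liminf_{n\to\infty}\mathscr{J}_{\K_{c_n},1}(\breve{v}_n)\geq J_{\R^3}(\alpha)+J_{\R^3}(\lambda-\alpha).
\end{equation*}
By the strict binding inequality~\eqref{R3_eff_model_strict_binding_ineq} of~\cpr{R3_eff_model_strict_binding}, the right-hand side is strictly greater than $J_{\R^3}(\lambda)$, which contradicts the upper bound $c_n^{-2}J_{\K,\lambda}(c_n)\leq J_{\R^3}(\lambda)+o(1)$ given by~\clm{R3_eff_model_main_term_expansion_upper_bound}.

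Hence the compactness alternative holds: up to extraction, there exist $x_n\in\R^3$ such that for every $\epsilon>0$ one can find $R>0$ with $\int_{B(x_n,R)}\rho_n\geq\lambda-\epsilon$ for all $n$. Combined with the $H^1$ bound and Rellich's theorem, the translated sequence $\breve{v}_n(\cdot+x_n)$ converges weakly in $H^1(\R^3)$ to some $u$ with $\norm{u}_{L^2(\R^3)}^2=\lambda$ and strongly in $L^p_{\textrm{loc}}(\R^3)$, and tightness upgrades this to strong convergence in $L^p(\R^3)$ for $2\leq p<6$ (in particular for $p=8/3$ and $p=10/3$). Weak lower semicontinuity of the Dirichlet norm then gives
\begin{equation*}
\liminf_{n\to\infty}\mathscr{J}_{\K_{c_n},1}(\breve{v}_n)\geq\mathscr{J}_{\R^3}(u)\geq J_{\R^3}(\lambda),
\end{equation*}
as desired. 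The main obstacle is the careful implementation of the dichotomy step on the moving, growing domain $\K_{c_n}$: one must ensure that the truncation by a cut-off adapted to the splitting is compatible with the zero boundary values coming from the extension by zero, and that the mass constraints of the two pieces can be adjusted with vanishing energy cost by a rescaling, which is where the scaling structure $\lambda\mapsto \lambda F(\lambda^{-2/3})$ of~\clm{R3_eff_model_apriori_properties_J} enters.
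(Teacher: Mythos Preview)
Your overall architecture (Lions' three-alternative lemma in the dilated variable) is sound in spirit and differs from the paper's approach, which uses a profile decomposition into finitely many bubbles (\clm{Splitting_localized_bubbles}). However, there is a genuine gap in your dichotomy step, and it is precisely the obstruction the paper is designed to avoid.

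The issue is the bound $\mathscr{J}_{\K_{c_n},1}(\text{piece 2})\geq J_{\R^3}(\lambda-\alpha)+o(1)$. Piece~1 lives in a ball of fixed radius, so for $n$ large it sits strictly inside $\K_{c_n}$ and, after cut-off, is a genuine $H^1(\R^3)$ function; the inequality for piece~1 is fine. Piece~2, however, is only known to be supported in the (large) complement of that ball \emph{on the torus}. Your ``extend by zero'' step does not give an $H^1(\R^3)$ function, because $\breve{v}_n\in H^1_{\textrm{per}}(\K_{c_n})$ need not vanish on $\partial\K_{c_n}$; and piece~2 may well touch the boundary. The only lower bound you actually have for piece~2 is
\[
\mathscr{J}_{\K_{c_n},1}(\text{piece 2})\;\geq\; J_{\K_{c_n},\,\lambda-\alpha+o(1)}(1)\;=\;c_n^{-2}\,J_{\K,\,\lambda-\alpha+o(1)}(c_n),
\]
which is exactly the kind of quantity you are trying to bound from below by $J_{\R^3}(\lambda-\alpha)$. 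The argument is circular. The paper's remark after \clm{Splitting_localized_bubbles} makes this explicit: the inequality $\mathscr{J}(\psi)\geq J_{\K_c}(\int|\psi|^2)$ ``does not allow us to conclude'' because the right-hand side still depends on $c$.

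The paper circumvents this by extracting \emph{as many} compactly supported bubbles $\xi_k^{(j)}$ as needed so that the residual $\psi_k$ has $\|\psi_k\|_{L^{8/3}(\K_{c_k})}^{8/3}\leq\epsilon$. One then bounds the residual's energy \emph{directly} by $\mathscr{J}_{\K_{c_k},1}(\psi_k)\geq -\tfrac34\epsilon$ (kinetic and $L^{10/3}$ terms are nonnegative), with no reference to any $J$-type minimum. Each bubble is supported in a ball $B(0,R_k)\subset\K_{c_k}$ and hence is a bona fide $H^1(\R^3)$ competitor, giving $\mathscr{J}_{\R^3}(\xi_k^{(j)})\geq J_{\R^3}(\lambda^{(j)})$; large binding plus monotonicity of $J_{\R^3}$ then yields $\sum_j J_{\R^3}(\lambda^{(j)})\geq J_{\R^3}(\lambda)$. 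In short: the paper trades your ``rule out dichotomy'' for ``iterate the extraction until the remainder is harmless in $L^{8/3}$'', which is exactly what breaks the circularity. If you want to repair your route, you would have to iterate the dichotomy on piece~2 until only vanishing remainders are left --- at which point you have essentially reproduced the bubble decomposition.
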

\begin{proof}[Sketch of proof of~\clm{R3_eff_model_main_term_expansion_lower_bound}]
See \cite{Ricaud-PhD} for a detailed proof. This result relies on Lions' con\-cen\-tra\-tion-compactness method and on the following result. Since this lemma is well-known, we omit its proof. Similar statements can be found for example in \cite{Gerard-98, BahGer-99,HmiKer-05,KilVis-08,Lewin-VMQM,Ricaud-PhD}.

\begin{lemme}[Splitting in localized bubbles]\label{Splitting_localized_bubbles}
Let $\K$ be the unit cube, $\{\phi_c\}_{c\geq1}$ be a sequence of functions such that $\phi_c\in H^1_{\textrm{per}}(\K_c)$ for all $c$, with $\norm{\phi_c}_{H^1(\K_c)}$ uniformly bounded. Then there exists a sequence of functions $\{\phi^{(1)}, \phi^{(2)}, \cdots\}$ in $H^1(\R^3)$ such that the following holds. For any $\epsilon>0$ and any fixed sequence $0\leq R_k\to\infty$, there exist: $J\geq0$, a subsequence $\{\phi_{c_k}\}$, sequences $\{\xi_k^{(1)}\}, \cdots, \{\xi_k^{(J)}\}, \{\psi_k\}$ in $H^1_{\textrm{per}}(\K_{c_k})$ and sequences of space translations $\{x_k^{(1)}\}, \cdots, \{x_k^{(J)}\}$ in $\R^3$ such that
$$\lim\limits_{k\to\infty}\Big\vert\kern-0.25ex\Big\vert{\phi_{c_k} - \sum\limits_{j=1}^J \xi_k^{(j)}(\cdot-x_k^{(j)}) - \psi_{k}}\Big\vert\kern-0.25ex\Big\vert_{H^1(\K_{c_k})}=0,$$
where
\begin{itemize}[label=$\bullet$]
	\item $\{\xi_k^{(1)}\}, \cdots, \{\xi_k^{(J)}\}, \{\psi_k\}$ have uniformly bounded $H^1(\K_{c_k})$-norms,
	\item $\mathds{1}_{\K_{c_k}}\xi_k^{(j)}\wto \phi^{(j)}$ weakly in $H^1(\R^3)$ and strongly in $L^p(\R^3)$ for $2\leq p<6$,
	\item $\textrm{supp}(\mathds{1}_{\K_{c_k}}\xi_k^{(j)})\subset B(0,R_k)$ for all $j=1,\cdots,J$ and all $k$,
	\item $\textrm{supp}(\mathds{1}_{\K_{c_k}}\psi_k)\subset \K_{c_k}\setminus \bigcup\limits_{j=1}^J B(x_k^{(j)},2R_k)$ for all $k$,
	\item $|x_k^{(i)}-x_k^{(j)}|\geq5R_k$ for all $i\neq j$ and all $k$,
	\item $\int_{\K_{c_k}}|\psi_{k}|^\frac83\leq\epsilon$.
\end{itemize}
\end{lemme}
\begin{rmq*}
The proof of \clm{R3_eff_model_main_term_expansion_lower_bound} relies on the concentration-compactness method. Extracting only one bubble ($J=1$) by a localization method would not allow us to conclude since we have little information on the energy of the remainder $\psi_{k}$. In similar proofs in the literature, it is often possible to conclude after extracting few bubbles, using that ${\mathscr J}(\psi_k)\geq J(\int|\psi_k|^2)$. In our case, $J_{\K_c}(\int|\psi_k|^2)$ depends on $c$ hence the same inequality of course holds but does not allow us to conclude. We therefore need to extract as many bubbles as necessary such as to sufficiently decrease the energy of $\psi_{k}$.
\end{rmq*}
We apply \clm{Splitting_localized_bubbles} to the sequence $\left(\breve{v}_c\right)_{c\geq1}$ of minimizers to $J_{\K_c,\lambda}(1)$ which verifies the hypothesis by the upper bound proved in \ccr{K_eff_dilated_minimizers_norm_unif_bound}. The lower bound in that corollary excludes the case $J=0$. Indeed, in that case we would have $\lim\limits_{k\to\infty}\norm{\phi_{c_k}- \psi_{k}}_{H^1(\K_{c_k})}=0$ and $\int_{\K_{c_k}}|\psi_{k}|^\frac83\leq\epsilon$ hence $\int_{\K_{c_k}}|\phi_{k}|^\frac83\leq2\epsilon$, for $k$ large enough, contradicting the mentioned lower bound. Consequently, there exists $J\geq1$ such that
$$\breve{v}_{c_k}=\psi_k +\epsilon_k + \sum\limits_{j=1}^J \breve{v}_k^{(j)}(\cdot-x_k^{(j)})$$
where $\norm{\epsilon_k}_{H^1(\K_{c_k})}\to0$ and, for a each $k$, the supports of the $\breve{v}_k^{(j)}(\cdot-x_k^{(j)})$'s and $\psi_k$ are pairwise disjoint. The support properties, the Minkowski inequality, Sobolev embeddings and the fact that $\textrm{supp}(\mathds{1}_{\K_{c_k}}\breve{v}_k^{(j)})\subset B(0,R_k)\subset \K_{c_k}$, give that
\begin{align*}
J_{\K_{c_k}}(\lambda)={\mathscr J}_{\K_{c_k}}(\breve{v}_{c_k})&={\mathscr J}_{\K_{c_k}}(\psi_k)+\sum\limits_{j=1}^J {\mathscr J}_{\R^3}(\mathds{1}_{\K_{c_k}} \breve{v}_k^{(j)})+o(1)_{c_k\to\infty}\\
	&\geq -\frac34\epsilon+\sum\limits_{j=1}^J {\mathscr J}_{\R^3}(\mathds{1}_{\K_{c_k}} \breve{v}_k^{(j)})+o(1)_{c_k\to\infty}.
\end{align*}
Moreover, the strong convergence of $\mathds{1}_{\K_{c_k}} \breve{v}_k^{(j)}$ in $L^2$ and the continuity of $\lambda\mapsto J_{\R^3,\lambda}$, proved in~\clm{R3_eff_model_apriori_properties_J}, imply, for all $j=1,\cdots,J$, that
$${\mathscr J}_{\R^3}(\mathds{1}_{\K_{c_k}} \breve{v}_k^{(j)})\geq J_{\R^3}\Big(\normSM{\breve{v}_k^{(j)}}^2_{L^2(\K_{c_k})}\Big)\underset{k\to\infty}{\longrightarrow}J_{\R^3}(\lambda^{(j)}),$$
where, for any $j$, $\lambda^{(j)}:=\normSM{\breve{v}^{(j)}}_{L^2(\R^3)}$ is the mass of the limit of $\mathds{1}_{\K_{c_k}} \breve{v}_k^{(j)}$. We also have denoted $J_{\R^3}(\lambda):=J_{\R^3,\lambda}$ to simplify notations here.
Those inequalities together with the strict binding proved in~\cpr{R3_eff_model_strict_binding} lead to
\begin{align*}
\frac34 \epsilon+\liminf\limits_{k\to\infty} J_{\K_{c_k}}(\lambda)&\geq\sum\limits_{j=1}^J J_{\R^3}(\lambda^{(j)})>J_{\R^3}(\lambda) - J_{\R^3}\Big(\lambda-\sum\limits_{j=1}^J\lambda^{(j)}\Big)\geq J_{\R^3}(\lambda).
\end{align*}
The last inequality comes from the fact that $0\leq\norm{\psi_k}^2_{L^2(\K_{c_k})}=\lambda-\sum\limits_{j=1}^J\lambda^{(j)}+o(1)$ thus $\lambda-\sum\limits_{j=1}^J\lambda^{(j)}\geq0$ and this implies that $J_{\R^3}\Big(\lambda-\sum\limits_{j=1}^J\lambda^{(j)}\Big)\leq0$. This concludes the proof of~\clm{R3_eff_model_main_term_expansion_lower_bound}.
\end{proof}

We can now compute the main term of $E_{\K,\lambda}(c)$ stated in~\cpr{Main_term_expansion_of_minimum}.
\begin{proof}[Proof of~\cpr{Main_term_expansion_of_minimum}]
Propositions~\ref{R3_eff_model_main_term_expansion_upper_bound} and~\ref{R3_eff_model_main_term_expansion_lower_bound} give, for $\lambda>0$, the limit
$$\lim\limits_{c\to\infty} c^{-2}J_{\K,\lambda}(c)= J_{\R^3,\lambda}$$
and \clm{cvgce_EK_JK} gives then the same limit for $E_{\K,\lambda}(c)$. \cpr{R3_eff_model_main_term_expansion_lower_bound} also gives that $\left(\breve{v}_c\right)_{c\geq1}$ has at least a first extracted bubble $0\nequiv\breve{v}\in H^1(\R^3)$ to which $\mathds{1}_{\K_{c_k}}\breve{v}_{c_k}(\cdot+x_k)$ converges weakly in $L^2(\R^3)$. This leads to
\begin{equation}\label{Proof_Main_term_expansion_of_minimum_J_expansion}
J_{\K_{c_k},\lambda}(1)={\mathscr J}_{\K_{c_k},1}(\breve{v}_{c_k}(\cdot+x_k))={\mathscr J}_{\R^3}(\breve{v})+{\mathscr J}_{\K_{c_k},1}(\breve{v}_{c_k}(\cdot+x_k)-\breve{v})+o(1)
\end{equation}
by the following lemma.
\begin{lemme}\label{K_eff_dilated_strongconvergence_results_of_weakly_convergent_sequence}
Let $\K$ be the unit cube and $\{\phi_c\}_{c\geq1}$ be a sequence of functions on $\R^3$ with $\norm{\phi_c}_{H^1(\K_c)}$ uniformly bounded such that $\mathds{1}_{\K_c}\phi_c \underset{c\to\infty}{\wto} \phi$ weakly in $L^2(\R^3)$. Then $\phi \in H^1(\R^3)$ and, up to the extraction of a subsequence, we have
\begin{enumerate}[label=\roman*.]
	\item $\mathds{1}_{\K_c}\nabla \phi_c\wto \nabla \phi$ weakly in $L^2(\R^3)$,
	\item $\norm{\nabla(\phi_c- \phi) }_{L^2(\K_c)}^2=\norm{\nabla \phi_c}_{L^2(\K_c)}^2 - \norm{\nabla \phi}_{L^2(\R^3)}^2+\underset{c\to\infty}{o}(1)$,
	\item $\norm{\phi_c- \phi }_{L^p(\K_c)}^p=\norm{\phi_c}_{L^p(\K_c)}^p - \norm{\phi}_{L^p(\R^3)}^p+\underset{c\to\infty}{o}(1), \textrm{ for } 2\leq p\leq6.$
\end{enumerate}
\end{lemme}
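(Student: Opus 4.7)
The three claims decompose naturally as (i) identification of the weak-$L^2$ limit of the gradient, (ii) a Hilbert-space Pythagorean splitting that follows from (i), and (iii) a Brezis--Lieb type splitting in $L^p$ for $p\in[2,6]$. For (i), the uniform $L^2$-bound on $\nabla\phi_c$ coming from the $H^1$-hypothesis yields, up to extraction, a weak $L^2(\R^3)$-limit $V$ of $\mathds{1}_{\K_c}\nabla\phi_c$. To identify $V=\nabla\phi$ I would test against an arbitrary vector field $\psi\in C^\infty_c(\R^3;\R^3)$: for $c$ large enough the support of $\psi$ lies in $\K_c$, so integration by parts on $\K_c$ combined with the weak $L^2$-convergence of $\mathds{1}_{\K_c}\phi_c$ to $\phi$ lets us pass to the limit and obtain $\int_{\R^3} V\cdot\psi = -\int_{\R^3}\phi\,\operatorname{div}\psi$. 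Thus $V=\nabla\phi$ in the distributional sense, so $\phi\in H^1(\R^3)$ and (i) holds.

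For (ii), I would expand
\begin{equation*}
\norm{\mathds{1}_{\K_c}\nabla\phi_c - \nabla\phi}_{L^2(\R^3)}^2 = \norm{\nabla\phi_c}_{L^2(\K_c)}^2 - 2\pscal{\mathds{1}_{\K_c}\nabla\phi_c,\nabla\phi}_{L^2(\R^3)} + \norm{\nabla\phi}_{L^2(\R^3)}^2,
\end{equation*}
and use (i) to drive the cross term to $2\norm{\nabla\phi}^2_{L^2(\R^3)}$. Disjointness of $\K_c$ and $\R^3\setminus\K_c$ then rewrites the left-hand side as $\norm{\nabla\phi_c-\nabla\phi}_{L^2(\K_c)}^2+\norm{\nabla\phi}_{L^2(\R^3\setminus\K_c)}^2$, and the tail vanishes by dominated convergence since $\nabla\phi\in L^2(\R^3)$ and $\K_c\uparrow\R^3$. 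Combining the two identities proves (ii).

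Claim (iii) will follow from the Brezis--Lieb lemma applied to $f_c:=\mathds{1}_{\K_c}\phi_c$ on $\R^3$, provided I first secure (a) a.e.\ convergence $f_c\to\phi$ and (b) a uniform $L^p(\R^3)$-bound. For (a), on each ball $B(0,R)$ (contained in $\K_c$ for $c$ large) the sequence $\{\phi_c|_{B(0,R)}\}$ is bounded in $H^1(B(0,R))$, hence strongly $L^2$-convergent up to a subsequence by Rellich--Kondrachov, with limit necessarily equal to $\phi|_{B(0,R)}$ by the weak-$L^2$ hypothesis; a diagonal extraction in $R$ produces a subsequence converging a.e.\ on $\R^3$ to $\phi$. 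For (b), the functions $\phi_c$ are $\K_c$-periodic in the intended application, and a scaling argument on the flat torus $\mathbb{T}^3_c$ transfers the Sobolev embedding from $\mathbb{T}^3_1$ with a constant uniformly bounded for $c\geq1$, yielding $\sup_c\norm{\phi_c}_{L^p(\K_c)}<\infty$ for $p\in[2,6]$ by interpolation. Brezis--Lieb then gives
\begin{equation*}
\int_{\R^3}|\mathds{1}_{\K_c}\phi_c-\phi|^p = \int_{\K_c}|\phi_c|^p-\int_{\R^3}|\phi|^p+o(1),
\end{equation*}
and splitting the left-hand side as $\int_{\K_c}|\phi_c-\phi|^p+\int_{\R^3\setminus\K_c}|\phi|^p$ together with $\phi\in L^p(\R^3)$ kills the tail and proves (iii). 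The only mildly delicate point is the uniform Sobolev constant on the expanding torus $\K_c$; in the intended use of the lemma, applied to the dilated minimizers of \ccr{K_eff_dilated_minimizers_norm_unif_bound}, the relevant $L^p$-bound is already available a priori, so this subtlety can be bypassed there.
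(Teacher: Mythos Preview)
Your proof is correct and follows essentially the same route as the paper: (i) is established by testing against compactly supported $\psi$ and integrating by parts, (ii) by expanding the square and using (i) on the cross term, and (iii) via a.e.\ convergence plus the Brezis--Lieb (missing term in Fatou's lemma) argument with a uniform $L^2\cap L^6$ bound. The only cosmetic difference is that for the a.e.\ convergence in (iii) the paper invokes \cite[Corollary~8.7]{LieLos-01} directly from the weak convergence of $\mathds{1}_{\K_c}\nabla\phi_c$, whereas you obtain it by Rellich--Kondrachov on balls plus a diagonal extraction; both are standard and equivalent here. Your scaling computation showing that the Sobolev constant on $\K_c$ stays bounded for $c\geq1$ is correct and in fact resolves the ``mildly delicate point'' you flag, so there is no need to fall back on the a priori bounds of \ccr{K_eff_dilated_minimizers_norm_unif_bound}.
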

\begin{proof}[Proof of~\clm{K_eff_dilated_strongconvergence_results_of_weakly_convergent_sequence}]
By the mean of a regularization function (as in the proof of \clm{R3_eff_model_main_term_expansion_upper_bound}) together with the uniform boundedness of $\phi_c$ in $H^1(\K_c)$ and the uniqueness of the limit, one obtains that the limit $\phi$ is in $H^1(\R^3)$. Since \emph{i.} is a classical result and \emph{ii.} a direct consequence of it, we only prove here \emph{iii.}.

The weak convergence in $L^2(\R^3)$ of $\mathds{1}_{\K_c}\nabla \phi_c$ gives the convergence a.e. of $\phi_c$ to $\phi$, up to a subsequence, by \cite[Corollary 8.7]{LieLos-01}. Since $|\phi_c- \phi|$ is uniformly bounded in $L^2(\R^3)\cap L^6(\R^3)$, this implies \emph{iii.} by the \emph{Missing  term in Fatou's lemma} Theorem (see \cite[Theorem 1.9]{LieLos-01}).
\end{proof}

To obtain for $E_{\K,\lambda}(c)$ an expansion similar to \eqref{Proof_Main_term_expansion_of_minimum_J_expansion}, we proceed the same way. We first show that the sequence of minimizers $\breve{w}_c$ is uniformly bounded in $H^1_{\textrm{per}}(\K_c)$ using the upper bound in the following lemma, which is equivalent to \ccr{K_eff_dilated_minimizers_norm_unif_bound} for $\breve{v}_c$.
\begin{lemme}[Uniform norm bounds on minimizers of $E_{\K,\lambda}(c)$]\label{K_complete_dilated_model_minimizers_norm_unif_bound}
Let $\K$ be the unit cube, $\lambda, c_{TF}$ and $c$ be positive. Then there exist $C>0$ and $c_*>0$ such that for any $c\geq c_*$, the dilation $\breve{w}_c(x):=c^{-3/2}w_c(c^{-1} x)$ of a minimizer $w_c$ to $E_{\K,\lambda}(c)$ verifies
$$\frac1C\leq \norm{\nabla \breve{w}_c}_{L^2(\K_c)}, \norm{\breve{w}_c}_{L^{10/3}(\K_c)}, \norm{\breve{w}_c}_{L^{8/3}(\K_c)}\leq C.$$
\end{lemme}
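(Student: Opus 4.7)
The plan is to mimic the proof of the analogous \ccr{K_eff_dilated_minimizers_norm_unif_bound} for the effective model, with the extra ingredient being that the TFDW energy $\mathscr E_{\K,c}$ differs from $\mathscr J_{\K,c}$ by the two Coulomb terms $\tfrac12 D_\K(w_c^2, w_c^2)$ and $-\int_\K G_\K w_c^2$, which I must first show are of subleading order $O(c)$ after dilation rather than $O(c^2)$.

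For the upper bounds, I would apply \clm{K_complete_model_lower_bound_on_NRJ} to the minimizer $w_c$ together with the trivial bound $E_{\K,\lambda}(c)\leq E_{\K,\lambda}(0)$ to deduce $\norm{\nabla w_c}_{L^2(\K)}^2 \leq C c^2$, equivalently $\norm{\nabla \breve{w}_c}_{L^2(\K_c)} \leq C$. The Sobolev embedding $H^1(\K) \hookrightarrow L^6(\K)$ then yields $\norm{w_c}_{L^6(\K)} \lesssim c$, and interpolating against the constraint $\norm{w_c}_{L^2(\K)}^2 = \lambda$ produces $\norm{w_c}_{L^{10/3}(\K)}^{10/3} \lesssim c^2$ and $\norm{w_c}_{L^{8/3}(\K)}^{8/3} \lesssim c$. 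The change of variables $y=cx$ absorbs these factors of $c^2$ and $c$ and delivers the claimed upper bounds on $\norm{\breve{w}_c}_{L^{10/3}(\K_c)}$ and $\norm{\breve{w}_c}_{L^{8/3}(\K_c)}$.

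For the lower bounds I would first prove $\mathscr J_{\K_c,1}(\breve{w}_c) \leq -K' < 0$ for $c$ large. The scaling relation \eqref{dilatation_of_J} gives
\begin{equation*}
\mathscr E_{\K,c}(w_c) = c^2\, \mathscr J_{\K_c,1}(\breve{w}_c) + \tfrac12 D_\K(w_c^2, w_c^2) - \int_\K G_\K\, w_c^2,
\end{equation*}
and I would control both Coulomb terms by $C \norm{w_c}_{H^1(\K)} \lesssim c$, using the HLS-type estimate \eqref{HLS_on_D_K} for $D_\K$ and the periodic Hardy inequality (combined with $G_\K \lesssim 1/|\cdot|$ from \clm{G_K_in_Lp}) for the attraction term. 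Together with \clm{cvgce_EK_JK} and the upper bound \eqref{K_eff_model_apriori_upper_bound} on $J_{\K,\lambda}(c)$, which jointly yield $E_{\K,\lambda}(c) \leq -K c^2$ for $c$ large, this gives $\mathscr J_{\K_c,1}(\breve{w}_c) \leq -K + O(c^{-1}) \leq -K/2$. Since the gradient and $L^{10/3}$ terms in $\mathscr J_{\K_c,1}$ are nonnegative, one reads off $\norm{\breve{w}_c}_{L^{8/3}(\K_c)}^{8/3} \geq \tfrac{2K}{3}$, and H\"older interpolating $L^{8/3}$ between $L^2$ and $L^{10/3}$ transfers this into a lower bound on $\norm{\breve{w}_c}_{L^{10/3}(\K_c)}$, exactly as in \ccr{K_eff_dilated_minimizers_norm_unif_bound}.

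The step I expect to require the most care is the lower bound on $\norm{\nabla \breve{w}_c}_{L^2(\K_c)}$, since the domain $\K_c$ grows with $c$ and the naive Sobolev constant on $\K_c$ is not uniform in $c$. My plan is to decompose $\breve{w}_c = \bar{w}_c + \tilde{w}_c$ into its $\K_c$-mean and a mean-zero part; the constant satisfies $\norm{\bar{w}_c}_{L^{10/3}(\K_c)}^{10/3} = \bar{w}_c^{10/3}|\K_c| \leq \lambda^{5/3}\,|\K_c|^{-2/3} = O(c^{-2})$, so for $c$ large the lower bound on $\norm{\breve{w}_c}_{L^{10/3}(\K_c)}^{10/3}$ transfers to $\norm{\tilde{w}_c}_{L^{10/3}(\K_c)}^{10/3}$. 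I then invoke the scale-invariant Gagliardo--Nirenberg inequality for mean-zero periodic functions on $\K_c$,
\begin{equation*}
\norm{\tilde u}_{L^{10/3}(\K_c)}^{10/3} \leq C\, \norm{\nabla \tilde u}_{L^2(\K_c)}^{2}\, \norm{\tilde u}_{L^2(\K_c)}^{4/3},
\end{equation*}
whose $c$-uniform constant follows by the dilation $\tilde v(y) := c^{3/2}\tilde u(cy)$ from the corresponding inequality on $\K_1$, where it is a direct consequence of Poincar\'e and Sobolev. Since $\nabla \breve{w}_c = \nabla \tilde{w}_c$ and $\norm{\tilde{w}_c}_{L^2(\K_c)}^2 \leq \lambda$, this closes the argument and yields the desired lower bound on $\norm{\nabla \breve{w}_c}_{L^2(\K_c)}$.
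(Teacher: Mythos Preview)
Your proposal is correct and matches the paper's proof almost exactly: the upper bounds come from \clm{K_complete_model_lower_bound_on_NRJ} combined with $E_{\K,\lambda}(c)\leq E_{\K,\lambda}(0)$ and Sobolev on the fixed cube $\K$, and the lower bound on $\norm{\breve{w}_c}_{L^{8/3}(\K_c)}$ comes from $E_{\K,\lambda}(c)\leq -Kc^2$ together with the $O(c)$ control on the two Coulomb terms, followed by the same H\"older interpolation as in \ccr{K_eff_dilated_minimizers_norm_unif_bound} for the $L^{10/3}$ bound.

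The only point where you diverge from the paper is the gradient lower bound. Your mean/mean-zero decomposition plus scale-invariant Gagliardo--Nirenberg on $\K_c$ is valid, but unnecessarily elaborate. The paper simply undoes the dilation and works on the fixed unit cube $\K$: from the lower bound on $\norm{\breve{w}_c}_{L^{10/3}(\K_c)}$ one has $\norm{w_c}_{L^{10/3}(\K)}^{10/3}\gtrsim c^2$; interpolating with $\norm{w_c}_{L^2(\K)}=\sqrt\lambda$ gives $\norm{w_c}_{L^6(\K)}\gtrsim c$; then the ordinary Sobolev embedding $H^1(\K)\hookrightarrow L^6(\K)$ on the \emph{fixed} domain $\K$ forces $\norm{\nabla w_c}_{L^2(\K)}\gtrsim c$, i.e.\ $\norm{\nabla\breve{w}_c}_{L^2(\K_c)}\gtrsim 1$. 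Your concern about the non-uniform Sobolev constant on $\K_c$ is legitimate if one insists on working on $\K_c$, but scaling back to $\K$ sidesteps it entirely. Your approach buys nothing extra here, though it would be the natural one if no fixed reference domain were available.
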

\begin{proof}[Proof of~\clm{K_complete_dilated_model_minimizers_norm_unif_bound}]
As seen in the proof of~\clm{cvgce_EK_JK}, $\norm{\nabla w_c}_{L^2(\K)}=O(c)$ hence
$$\norm{\nabla\breve{w}_c}_{L^2(\K_c)}^2={c}^{-2}\norm{\nabla w_c}_{L^2(\K)}^2=O(1)$$
and, using Sobolev embeddings for the two other norms, we have
$$\forall c\geq c_*, \quad \norm{\nabla \breve{w}_c}_{L^2(\K_c)}, \norm{\breve{w}_c}_{L^{10/3}(\K_c)}, \norm{\breve{w}_c}_{L^{8/3}(\K_c)}\leq C'.$$

Let $K$ be such that $0<K< -J_{\R^3,\lambda}$ and $\epsilon>0$, then by~\eqref{K_eff_model_apriori_upper_bound} and~\clm{cvgce_EK_JK}, there exists $C>0$ such that
$${c}^2K - \epsilon\leq -J_{\K,\lambda}(c) - \epsilon \leq -E_{\K,\lambda}(c)\leq c\left(C+\frac34\norm{w_c}^{\frac83}_{L^{\frac83}(\K)}\right)$$
for $c$'s large enough and, consequently that
$$K-\frac{C+\epsilon}{{c}^2}\leq \frac34\norm{\breve{w}_c}^{8/3}_{L^{8/3}(\K_c)}.$$
We conclude this proof of~\clm{K_complete_dilated_model_minimizers_norm_unif_bound} as we did in the proof of~\ccr{K_eff_dilated_minimizers_norm_unif_bound}.
\end{proof}
We now come back to the proof of \cpr{Main_term_expansion_of_minimum}. We apply \clm{Splitting_localized_bubbles} to $\{\breve{w}_c\}$ and, as for $\breve{v}_c$, the lower bound in \clm{K_complete_dilated_model_minimizers_norm_unif_bound} implies that $J\geq1$, namely that there exist at least a first extracted bubble $0\nequiv\breve{w}\in H^1(\R^3)$ such that $\mathds{1}_{\K_{c_k}}\breve{w}_{c_k}(\cdot+y_k)\wto \breve{w}$ weakly in $L^2(\R^3)$.
\clm{K_eff_dilated_strongconvergence_results_of_weakly_convergent_sequence} then leads to
\begin{align*}
{c_k}^{-2}E_{\K,\lambda}(c_k)&={\mathscr J}_{\K_{c_k},1}(\breve{w}_{c_k}(\cdot+y_k))+O({c_k}^{-1})\\
	&={\mathscr J}_{\R^3}(\breve{w})+{\mathscr J}_{\K_{c_k},1}(\breve{w}_{c_k}(\cdot+y_k)-\breve{w})+o(1),
\end{align*}
where the term $O(c^{-1})$ comes from $D_\K({w_c}^2,{w_c}^2)=O(c)$ and $\int_{\K}{G_\K {w_c}^2}=O(c)$ obtained in the proof of \clm{cvgce_EK_JK}.

Since in both cases $J$ and $E$, the left hand side converges to $J_{\R^3}(\lambda)$, the end of the argument will be the same and we will therefore only write it in the case of $E$. Defining $\lambda_1:=\norm{\breve{w}}_{L^2(\R^3)}^2$, which is positive since $\breve{w}\nequiv0$, we thus have
$${c_k}^{-2}E_{\K,\lambda}(c_k)\geq J_{\R^3}(\lambda_1)+J_{\K_{c_k}}\big(\normSM{\breve{w}_{c_k}(\cdot+y_k)-\breve{w}}^2_{L^2(\K_{c_k})}\big)+o(1).$$
Since $\normSM{\breve{w}_c(\cdot+y_k)-\breve{w}}^2_{L^2(\K_c)}= \lambda-\lambda_1 +o(1)$, then for any $\epsilon>0$, we have
$${c_k}^{-2}E_{\K,\lambda}(c_k)\geq J_{\R^3}(\lambda_1)+J_{\K_{c_k}}(\lambda-\lambda_1+\epsilon)+o(1),$$
By the convergence of ${c}^{-2}E_{\K,\lambda}(c)$ for any $\lambda>0$, this leads to
$$J_{\R^3}(\lambda)\geq J_{\R^3}(\lambda_1)+J_{\R^3}(\lambda-\lambda_1+\epsilon)$$
and, sending $\epsilon$ to $0$, the continuity of $\lambda\mapsto J_{\R^3}(\lambda)$, proved in~\clm{R3_eff_model_apriori_properties_J}, gives
$$J_{\R^3}(\lambda)\geq J_{\R^3}(\lambda_1)+J_{\R^3}(\lambda-\lambda_1).$$
We recall that $\lambda_1>0$ hence, if $\lambda_1<\lambda$ then the above large inequality would contradict the strict binding proved in~\cpr{R3_eff_model_strict_binding}, hence $\lambda_1=\lambda$. This convergence of the norms combined with the original weak convergence in $L^2(\R^3)$ gives the strong convergence in $L^2(\R^3)$ of $\mathds{1}_{\K_c}\breve{w}_c(\cdot+y_k)$ to $\breve{w}$ hence in $L^p(\R^3)$ for $2\leq p<6$ by H\"older's inequality, Sobolev embeddings and the facts that $\breve{w}_c$ is uniformly bounded in $H^1_{\textrm{per}}(\K_c)$ and that $\breve{w}\in H^1(\R^3)$. The strong convergence holds in particular in $L^{\frac83}(\R^3)$ thus we have proved that $\breve{w}$ is the first and only bubble.

Finally, for any $\epsilon>0$, we now have, for $k$ large enough, that
\begin{align*}
{c_k}^{-2}E_{\K,\lambda}(c_k)&={\mathscr J}_{\R^3}(\breve{w})+{\mathscr J}_{\K_{c_k},1}(\breve{w}_{c_k}(\cdot+y_k)-\breve{w})+o(1)\\
	&\geq {\mathscr J}_{\R^3}(\breve{w})+J_{\K_{c_k}}(\epsilon)+o(1).
\end{align*}
This leads to $J_{\R^3}(\lambda)\geq {\mathscr J}_{\R^3}(\breve{w})+J_{\R^3}(\epsilon)$, then to $J_{\R^3}(\lambda)\geq {\mathscr J}_{\R^3}(\breve{w})$ by the continuity of $J_{\R^3}(\lambda)$ proved in \clm{R3_eff_model_apriori_properties_J}. Since $\norm{\breve{w}}^2_{L^2(\R^3)}=\lambda$, this concludes the proof of~\cpr{Main_term_expansion_of_minimum} up to the convergence of $\mathds{1}_{\K_{c_n}}\nabla\breve{w}_n(\cdot+x_n)$ and $\mathds{1}_{\K_{c_n}}\nabla\breve{v}_n(\cdot+x_n)$ that we deduce now from the above results. Indeed, by the convergence in $L^p(\R^3)$ of $\breve{w}_n(\cdot+x_n)$ and since $\left|\int_{\K}{G_\K{w_n}^2}\right|+\left|D_\K({w_n}^2,{w_n}^2)\right|=O(c_n)$, we know, except for the gradient term, that all terms of ${c_n}^{-2}{E}_{\K,\lambda}(c_n)$ (resp. ${c_n}^{-2}{J}_{\K,\lambda}(c_n)$) converge thus the gradient term too. Then we apply~\clm{K_eff_dilated_strongconvergence_results_of_weakly_convergent_sequence} to obtain the strong convergence in $L^2(\R^3)$ from this convergence in norm just obtained.
\end{proof}

Let us emphasize that all the results stated in this section still hold true in the case of several charges per cell (for example for the union $N\cdot\K$) with same proofs. The modifications only come from the factor $\int_{\K}{G_\K{w_c}^2}$ being replaced by $\int_\K{\sum_{i=1}^{N_q} G_\K(\cdot-R_i)|w_c|^2}$ --- see~\eqref{K_NRJ_u_on_NK} --- therefore only the proofs of~\cpr{Main_term_expansion_of_minimum}, \clm{cvgce_EK_JK} and \clm{K_complete_dilated_model_minimizers_norm_unif_bound} are slightly changed by a factor $N_q$ in the bounds of the modified term, but their statement is unchanged. Consequently, as mentioned in Section~\ref{section_symmetry_breaking}, the results
$$\lim\limits_{c\to\infty} {c}^{-2} E_{N\cdot\K,N^3\lambda}(c)=J_{\R^3,N^3\lambda} \qquad \textrm{ and } \qquad \lim\limits_{c\to\infty} {c}^{-2} E_{\K,\lambda}(c)=J_{\R^3,\lambda}$$
from~\cpr{Main_term_expansion_of_minimum} and the result
$$J_{\R^3}(N^3\lambda)<N^3J_{\R^3}(\lambda)$$
from~\cpr{R3_eff_model_strict_binding} imply together the symmetry breaking
$$E_{N\cdot\K,N^3\lambda}(c)<N^3 E_{\K,\lambda}(c).$$

We now give two corollaries of \cpr{Main_term_expansion_of_minimum}. We state and prove them in the case of one charge per unit cell but they hold, with similar proof, for several charges.
\begin{cor}[Convergence of Euler--Lagrange multiplier]\label{convergence_EulerLagrange_multiplier}
Let $\{w_c\}$ be a sequence of minimizers to ${E}_{\K,\lambda}(c)$ and $\{\mu_c\}$ the sequence of associated Euler--Lagrange multipliers, as in~\cpr{K_complete_model_EulerLagrange}. Then there exists a subsequence $c_n\to\infty$ such that
$${c_n}^{-2}\mu_{c_n}\underset{n\to\infty}{\longrightarrow}\mu_{\R^3, \{w_{c_n}\}}$$
with $\mu_{\R^3, \{w_{c_n}\}}\!$ the Euler--Lagrange multiplier associated with the minimizer to $J_{\R^3}(\lambda)$ to which the subsequence of dilated functions $\mathds{1}_{\K_{c_n}}\breve{w}_{c_n}(\cdot+x_n)$ converges strongly.

The same holds for sequences $\{v_c\}$ of Euler--Lagrange multipliers associated with minimizers to $ J_{\K,\lambda}(c)$.
\end{cor}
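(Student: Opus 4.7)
The plan is to start from the explicit formula for the Lagrange multiplier given in~\cpr{K_complete_model_EulerLagrange},
\begin{equation*}
\mu_{w_c}=-\frac{\norm{\nabla w_c}_2^2+c_{TF}\norm{w_c}^{10/3}_{10/3}-c\norm{w_c}^{8/3}_{8/3}+D_\K(|w_c|^2,|w_c|^2)-\pscal{G_\K,|w_c|^2}_{L^2({\K})}}{\lambda},
\end{equation*}
and rewrite each term in terms of the dilated function $\breve{w}_c(x)={c}^{-3/2}w_c({c}^{-1}x)$ on $\K_c$. The same scaling identities that were used to derive~\eqref{dilatation_of_J} yield $\norm{\nabla w_c}_{L^2(\K)}^2={c}^2\norm{\nabla \breve{w}_c}_{L^2(\K_c)}^2$, $\norm{w_c}_{L^{10/3}(\K)}^{10/3}={c}^2\norm{\breve{w}_c}_{L^{10/3}(\K_c)}^{10/3}$ and $c\norm{w_c}_{L^{8/3}(\K)}^{8/3}={c}^2\norm{\breve{w}_c}_{L^{8/3}(\K_c)}^{8/3}$. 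Moreover, the estimates from the proof of~\clm{cvgce_EK_JK} give $D_\K(|w_c|^2,|w_c|^2)=O(c)$ and $\pscal{G_\K,|w_c|^2}_{L^2(\K)}=O(c)$, hence these two Coulomb contributions disappear after division by $c^2$. This yields the identity
\begin{equation*}
{c}^{-2}\mu_{w_c}=-\frac{\norm{\nabla \breve{w}_c}_{L^2(\K_c)}^2+c_{TF}\norm{\breve{w}_c}_{L^{10/3}(\K_c)}^{10/3}-\norm{\breve{w}_c}_{L^{8/3}(\K_c)}^{8/3}}{\lambda}+O({c}^{-1}).
\end{equation*}

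Next, I would invoke~\cpr{Main_term_expansion_of_minimum} to extract a subsequence $c_n\to\infty$ and translations $x_n\in\R^3$ such that $\mathds{1}_{\K_{c_n}}\breve{w}_{c_n}(\cdot+x_n)$ converges strongly in $L^p(\R^3)$ for $2\leq p<6$ to a minimizer $u$ of $J_{\R^3}(\lambda)$, and such that $\mathds{1}_{\K_{c_n}}\nabla\breve{w}_{c_n}(\cdot+x_n)\to\nabla u$ strongly in $L^2(\R^3)$. By the $\K_{c_n}$-periodicity of $\breve{w}_{c_n}$, these translations do not change the integrals on $\K_{c_n}$, so passing to the limit in each of the three terms above gives
\begin{equation*}
{c_n}^{-2}\mu_{w_{c_n}}\underset{n\to\infty}{\longrightarrow}-\frac{\norm{\nabla u}_{L^2(\R^3)}^2+c_{TF}\norm{u}_{L^{10/3}(\R^3)}^{10/3}-\norm{u}_{L^{8/3}(\R^3)}^{8/3}}{\lambda},
\end{equation*}
which is precisely the expression~\eqref{R3_eff_model_EulerLagrange_mu_formulae} of $\mu_{\R^3,u}$ given in~\hyperref[Proof_THM_existence_R3_step_EL_and_H2]{Step~3} of the proof of~\cth{R3_eff_model_existence_thm}. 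This identifies the limit as the Euler--Lagrange multiplier associated with $u$.

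For the purely effective sequence $\{v_c\}$ the argument is strictly simpler: the Lagrange multiplier has an analogous expression but contains no Coulomb term, so no $O(c^{-1})$ remainder even appears, and the same extraction from~\cpr{Main_term_expansion_of_minimum} (applied to minimizers of $J_{\K,\lambda}(c)$) together with the strong convergences in $L^{10/3}$, $L^{8/3}$ and $\nabla L^2$ lets one pass to the limit directly. The only real subtlety is ensuring that the Coulomb contributions in the TFDW case truly yield a $O(c)$ bound on \emph{minimizers} rather than just on minimal energies; however, this is exactly what is established in~\eqref{HLS_on_D_K} and the lines surrounding it in the proof of~\clm{cvgce_EK_JK}, using the uniform $H^1(\K)$-bound on $w_c$ coming from~\eqref{maj_NRJ_Kin_by_fonctional}. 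No further difficulty is expected.
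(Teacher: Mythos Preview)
Your proposal is correct and follows essentially the same approach as the paper: both start from the explicit formula~\eqref{K_complete_model_EulerLagrange_mu_formulae}, rescale to express the non-Coulomb terms via $\breve{w}_c$ on $\K_c$, discard the Coulomb contributions as $O(c)$ using the bounds from the proof of~\clm{cvgce_EK_JK}, and then pass to the limit along the subsequence provided by~\cpr{Main_term_expansion_of_minimum} to recover~\eqref{R3_eff_model_EulerLagrange_mu_formulae}. Your write-up is in fact slightly more explicit than the paper's about where the $O(c)$ Coulomb bounds come from and why the translations $x_n$ do not affect the $\K_{c_n}$-integrals.
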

\begin{proof}[Proof of~\ccr{convergence_EulerLagrange_multiplier}]
Let $u$ be the minimizer of $J_{\R^3}(\lambda)$ to which $\mathds{1}_{\K_{c_n}}\breve{w}_{c_n}(\cdot+x_n)$ converges strongly in $L^p(\R^3)$ for $2\leq p<6$, by~\cpr{Main_term_expansion_of_minimum} which also gives that $\mathds{1}_{\K_{c_n}}\nabla\breve{w}_{c_n}(\cdot+x_n)\to\nabla u$ strongly in $L^2(\R^3)$, and $\mu_{\R^3, u}$ the Euler--Lagrange multiplier associated with this $u$ by~\cth{R3_eff_model_existence_thm}. 

By~\clm{K_complete_dilated_model_minimizers_norm_unif_bound} and the formula~\eqref{K_complete_model_EulerLagrange_mu_formulae} giving an expression of $\mu_c$, we then obtain
$$-{c_n}^{-2}\mu_{c_n}\lambda\to \norm{\nabla u}_{L^2(\R^3)}^2+c_{TF}\norm{u}^{10/3}_{L^{10/3}(\R^3)}-\norm{u}^{8/3}_{L^{8/3}(\R^3)}.$$
Therefore, by~\eqref{R3_eff_model_EulerLagrange_mu_formulae} which gives an expression of the Euler--Lagrange parameter $\mu_{\R^3, u}$ associated with this $u$, we have
$${c_n}^{-2}\mu_{c_n}\underset{c\to\infty}{\longrightarrow}\mu_{\R^3, u}.$$
Since $u$ depends on $\{w_{c_n}\}$, we can of course rename $\mu_{\R^3, \{w_{c_n}\}}:=\mu_{\R^3, u}$.
The result for $J_{\K,\lambda}(c)$ is proved the same way.
\end{proof}
\begin{lemme}[$L^\infty$-convergence]\label{K_models_Linfinity_convergence_minimizers}
Let $\{w_c\}_{c}$ be a sequence of minimizers to ${E}_{\K,\lambda}(c)$ and $u$ be the minimizer to $J_{\R^3}(\lambda)$ to which the subsequence of rescaled functions $\mathds{1}_{\K_{c_n}}\breve{w}_{c_n}(\cdot+x_n)$ converges. Then
$$\norm{\breve{w}_{c_n}(\cdot+x_n)-u}_{H^2(\K_{c_n})}\underset{n\to+\infty}{\longrightarrow}0 \quad \textrm{ and } \quad \norm{\mathds{1}_{\K_{c_n}}\breve{w}_{c_n}(\cdot+x_n)-u}_{L^\infty(\K_{c_n})}\underset{n\to+\infty}{\longrightarrow}0.$$
The same result holds for a sequence $\{v_c\}_{c}$ of minimizers to ${J}_{\K,\lambda}(c)$.
\end{lemme}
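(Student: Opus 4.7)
The plan is to bootstrap the strong $L^p$-convergence from~\cpr{Main_term_expansion_of_minimum} to $H^2$-convergence via elliptic regularity applied to the nonlinear equation satisfied by the rescaled minimizers, and then to deduce the $L^\infty$-bound from the local Sobolev embedding $H^2(B)\hookrightarrow L^\infty(B)$ applied on a fixed-radius cover of $\K_{c_n}$. Setting $\widetilde w_n:=\breve w_{c_n}(\cdot+x_n)$ and rescaling the Euler--Lagrange equation~\eqref{K_complete_model_EulerLagrange_equation} (via $w_{c_n}(y)=c_n^{3/2}\breve w_{c_n}(c_n y)$, dividing by $c_n^{7/2}$ and translating), one finds
\begin{equation*}
-\Delta\widetilde w_n + c_{TF}\widetilde w_n^{7/3} - \widetilde w_n^{5/3} + \Phi_n\widetilde w_n = -\widetilde\mu_n\widetilde w_n,
\end{equation*}
where $\widetilde\mu_n:=c_n^{-2}\mu_{w_{c_n}}\to\mu_{\R^3,u}$ by~\ccr{convergence_EulerLagrange_multiplier} and $\Phi_n$ collects the rescaled Coulomb terms. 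The limit $u$ satisfies $-\Delta u+c_{TF}u^{7/3}-u^{5/3}=-\mu_{\R^3,u}u$ with $\mu_{\R^3,u}>0$ by~\cth{R3_eff_model_existence_thm}.

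Setting $f_n:=\widetilde w_n-u$ and subtracting,
\begin{equation*}
(-\Delta+\mu_{\R^3,u})f_n = -c_{TF}\bigl(\widetilde w_n^{7/3}-u^{7/3}\bigr)+\bigl(\widetilde w_n^{5/3}-u^{5/3}\bigr)-\Phi_n\widetilde w_n+(\mu_{\R^3,u}-\widetilde\mu_n)\widetilde w_n=:g_n,
\end{equation*}
and I would show $\norm{g_n}_{L^2(\K_{c_n})}\to0$. For the nonlinear terms, the pointwise inequality $|a^{7/3}-b^{7/3}|\lesssim(|a|^{4/3}+|b|^{4/3})|a-b|$ combined with H\"older (exponents $7/2$ and $14/5$), the uniform $L^{14/3}$-bounds from~\ccr{K_eff_dilated_minimizers_norm_unif_bound} and Sobolev embeddings, and the $L^{14/5}$-convergence of~\cpr{Main_term_expansion_of_minimum} give $\norm{\widetilde w_n^{7/3}-u^{7/3}}_{L^2}\to0$, and similarly for the $5/3$-nonlinearity. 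The multiplier mismatch is handled by~\ccr{convergence_EulerLagrange_multiplier}. For the Coulomb contribution, undoing the rescaling (substituting $y=(x-x_n)/c_n$) yields
\begin{equation*}
\norm{c_n^{-2}G_\K(\cdot/c_n)\widetilde w_n}_{L^2(\K_{c_n})}^2 = c_n^{-4}\int_{\K}G_\K(y)^2 w_{c_n}(y+x_n/c_n)^2\dd y \lesssim c_n^{-4}\norm{\nabla w_{c_n}}_{L^2(\K)}^2 = O(c_n^{-2})
\end{equation*}
via~\clm{G_K_in_Lp}, the periodic Hardy inequality, and $\norm{\nabla w_{c_n}}_{L^2(\K)}^2=O(c_n^2)$ from~\clm{cvgce_EK_JK}; the convolution part is treated analogously using the crude bound $\norm{w_{c_n}^2\star G_\K}_{L^\infty(\K)}=O(c_n^2)$ (Young's inequality with $G_\K\in L^{3/2}(\K)$) together with $D_\K(w_{c_n}^2,w_{c_n}^2)=O(c_n)$ from~\eqref{D_K_HLS}, which yields an $O(c_n^{-1/2})$ bound in $L^2(\K_{c_n})$.

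Having $\norm{g_n}_{L^2(\K_{c_n})}\to0$, I would cover $\K_{c_n}$ by balls $(B_k)$ of radius $2$ with finite overlap. Since the PDE $(-\Delta+\mu_{\R^3,u})f_n=g_n$ holds on each $2B_k$ (where $\widetilde w_n$ is extended by periodicity and $u$ lives on $\R^3$), interior elliptic regularity gives $\norm{f_n}_{H^2(B_k)}\lesssim\norm{g_n}_{L^2(2B_k)}+\norm{f_n}_{L^2(2B_k)}$ with a constant depending only on the radius; summing over $k$ and using the $L^2$-convergence $\norm{f_n}_{L^2(\K_{c_n})}\to0$ from~\cpr{Main_term_expansion_of_minimum} yields $\norm{f_n}_{H^2(\K_{c_n})}\to0$. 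The $L^\infty$-bound follows by applying the local Sobolev embedding $H^2(B_k)\hookrightarrow L^\infty(B_k)$ on each ball, with a constant independent of $n$: $\norm{f_n}_{L^\infty(\K_{c_n})}\lesssim\sup_k\norm{f_n}_{H^2(B_k)}\le\norm{f_n}_{H^2(\K_{c_n})}\to0$. The main obstacle I anticipate is the control of $\Phi_n$ in $L^2(\K_{c_n})$: although $c_n^{-2}G_\K(\cdot/c_n)$ vanishes pointwise, the $1/|y|$-singularity of $G_\K$ at the nucleus combined with the growing domain after rescaling requires the Hardy estimate to absorb the singularity against $\widetilde w_n$; the bookkeeping for the convolution term demands the sharp scaling of~\eqref{D_K_HLS}, which is precisely why that estimate was established earlier in~\clm{cvgce_EK_JK}.
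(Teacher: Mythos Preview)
Your argument is correct and follows the same skeleton as the paper's---subtract the two Euler--Lagrange equations and show the right-hand side tends to $0$ in $L^2$, then upgrade via elliptic regularity---but the execution differs in two places that are worth noting.

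First, the paper does \emph{not} put the nucleus Coulomb term into $g_n$. Instead it absorbs $-c_n^{-2}G_\K(c_n^{-1}\cdot)$ into the operator and invokes Rellich--Kato: $(-\Delta_{\textrm{per}}-c_n^{-2}G_\K(c_n^{-1}\cdot)+\beta)^{-1}$ is bounded $L^2_{\textrm{per}}(\K_{c_n})\to H^2_{\textrm{per}}(\K_{c_n})$ uniformly in $n$. The only Coulomb contribution that then survives on the right is $c_n^{-2}G_\K(c_n^{-1}\cdot)u_c$ (with $u_c=\zeta_{c_n}u$ a smooth cutoff of the fixed limit), which is estimated by a simple $L^{5/2}\times L^{10}$ H\"older and the scaling $c_n^{-2}\norm{G_\K(c_n^{-1}\cdot)}_{L^{5/2}(\K_{c_n})}=c_n^{-4/5}\norm{G_\K}_{L^{5/2}(\K)}$. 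Your route---leaving $c_n^{-2}G_\K(\cdot/c_n)\widetilde w_n$ on the right and controlling it via periodic Hardy---also works, and your computation for the convolution term (combining the $L^\infty$ bound with $D_\K=O(c_n)$ to get $O(c_n^{-1/2})$ in $L^2$) is correct.

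Second, the paper avoids your ball-cover altogether. By replacing $u$ with the cutoff $u_c=\zeta_{c_n}u$, the difference $\breve w_{c_n}-u_c$ is genuinely $\K_{c_n}$-periodic, so a single global elliptic estimate for the periodic operator gives the $H^2$ bound, and the $L^\infty$ conclusion follows from a uniform Sobolev inequality on $\K_c$ via Fourier series. Your local-interior-regularity approach is fine, but the balls $2B_k$ that protrude beyond $\K_{c_n}$ force you to work on $\K_{c_n+C}$; you should say explicitly that $\norm{f_n}_{L^2(\K_{c_n+C}\setminus\K_{c_n})}$ and $\norm{g_n}_{L^2(\K_{c_n+C}\setminus\K_{c_n})}$ also tend to $0$, which holds because the periodic extension of $\widetilde w_n$ just outside $\K_{c_n}$ repeats values from the boundary strip inside $\K_{c_n}$, and there both $\widetilde w_n$ and $u$ are small in $L^p$ by the convergence of~\cpr{Main_term_expansion_of_minimum}.

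One small slip: your H\"older exponents $7/2$ and $14/5$ do not give $\tfrac12$; you want $p=7/2$ and $q=14/3$ (so that $\norm{|\widetilde w_n|^{4/3}}_{L^{7/2}}=\norm{\widetilde w_n}_{L^{14/3}}^{4/3}$ and $\norm{\widetilde w_n-u}_{L^{14/3}}\to0$, both of which are covered by the Sobolev range $2\le p<6$).
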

\begin{proof}[Proof of~\clm{K_models_Linfinity_convergence_minimizers}] 
For shortness, we omit the spatial translations $\{x_n\}$ in this proof. We define $u_c=\zeta_c u$ where $\zeta_c$ is a smooth function such that $0\leq\zeta_c\leq1$, $\zeta_c\equiv0$ on $\R^3\setminus \K_{c}$ and $\zeta_c\equiv1$ on $\K_{c-1}$. Since $u\in H^2(\R^3)$ by \cth{R3_eff_model_existence_thm} and $\norm{\zeta_c}_\infty+\norm{\nabla\zeta_c}_\infty+\norm{\Delta\zeta_c}_\infty<\infty$, we have to prove $\norm{\breve{w}_{c_n}-u_{c_n}}_{H^2(\K_{c_n})}\underset{n\to+\infty}{\longrightarrow}0$. Moreover, by the Rellich-Kato theorem, the operator $-\Delta_{\textrm{per}}-{c}^{-2} G_\K({c}^{-1}\cdot)$ is self-adjoint of domain $H^2_{\textrm{per}}(\K_c)$ and bounded below. Therefore, there exists $0<C<1$ such that, for any $\beta$ large enough and any $c\geq1$, we have
$$\norm{\breve{w}_{c}-u_{c}}_{H^2_{\textrm{per}}(\K_{c})}\leq C\norm{\left(-\Delta_{\textrm{per}}-c^{-2}G_\K({c}^{-1}\cdot)+\beta\right)(\breve{w}_c-u_c)}_{L^2_{\textrm{per}}(\K_{c})}.$$
Thus, denoting ${\mathscr C}_c^-:=\K_c\setminus\K_{c-1}$ and $\mu_{\R^3}$ the Euler--Lagrange parameter associated with $u$, we have by the Euler--Lagrange equations~\eqref{R3_eff_model_EulerLagrange} and~\eqref{K_complete_model_EulerLagrange_equation} that
\begin{align*}
&\norm{\breve{w}_{c}-u_{c}}_{H^2_{\textrm{per}}(\K_c)}\\
&\;\;\;\;\leq C c_{TF}\norm{{\zeta_c}^{\frac37}u-\breve{w}_c}_{L^4(\K_c)}\norm{{\zeta_c}^{\frac47}|u|^{\frac43}+|\breve{w}_c|^{\frac43}}_{L^4(\K_c)}+ \norm{u}_{L^2({\mathscr C}_{c}^-)}\norm{\Delta\zeta_c}_{L^\infty(\K_c)}\\
	&\;\;\;\;+ C \norm{{\zeta_c}^{\frac35}u-\breve{w}_c}_{L^4(\K_c)}\norm{{\zeta_c}^{\frac25}|u|^{\frac23}+|\breve{w}_c|^{\frac23}}_{L^4(\K_c)}+2\norm{\nabla\zeta_c}_{L^\infty(\K_c)}\norm{\nabla u}_{L^2({\mathscr C}_{c}^-)}\\
	&\;\;\;\;+ C |\mu_{\R^3} -{c}^{-2}\mu_c|\norm{\breve{w}_c}_{L^2(\K_c)} + C (\mu_{\R^3}+\beta)\norm{\zeta_c u-\breve{w}_c}_{L^2(\K_c)}\\
	&\;\;\;\;+ C {c}^{-2}\norm{G_\K({c}^{-1}\cdot)}_{L^{5/2}(\K_c)}\norm{u_c}_{L^{10}(\K_c)} + C {c}^{-2}\norm{|u_c|^2\star G_{\K}}_{L^\infty(\K)}\norm{\breve{w}_c}_{L^2(\K_c)},
\end{align*}
for any $c>0$. Therefore, combining that the $L^\infty(\K_{c})$ norms of $\zeta_c$ and of it derivatives are finite, that $\norm{\nabla u}_{L^2({\mathscr C}_c^-)} + \norm{u}_{L^2({\mathscr C}_c^-)}\to0$, that $c^{-2}\norm{G_\K({c}^{-1}\cdot)}_{L^{5/2}(\K_{c})}={c}^{-\frac45}\norm{G_\K}_{L^{5/2}(\K)}\to0$
and that, for any $\alpha>0$ and $2\leq p\leq6$, we have
$$\norm{{\zeta_{c_n}}^\alpha u-\breve{w}_{c_n}}_{L^p(\K_{c_n})}=\norm{(1-{\zeta_{c_n}}^\alpha) u}_{L^p(\K_{c_n})}+\norm{u-\breve{w}_{c_n}}_{L^p(\K_{c_n})}\to0,$$
all together with~\ccr{convergence_EulerLagrange_multiplier}, we conclude that
$$\norm{\breve{w}_{c_n}-u_{c_n}}_{H^2_{\textrm{per}}(\K_{c_n})}\underset{n\to+\infty}{\longrightarrow}0.$$

The proof for $v_c$ is similar but easier and shorter, we thus omit it.

We then conclude the proof of~\clm{K_models_Linfinity_convergence_minimizers} using that for any $c^*>0$, there exists $C$ such that for any $c\in[c^*; \infty)$ and $f\in H^2(\K_c)$, we have
$\norm{f}_{L^\infty(\K_c)}\leq C \norm{f}_{H^2(\K_c)}$ which can be proved by means of Fourier series.
\end{proof}

\subsection{Location of the concentration points}\label{section_minimizer_localization}
In this section we consider the union of $N^3$ cubes $\K$, each containing one charge $q=1$ --- that we can assume to be at the center of the cube $\K$ --- forming together the cube $\K_N:=N\cdot\K$. The energy of the unit cell $\K_N$ is then
\begin{equation}\label{K_NRJ_u_on_NK}
\begin{aligned}
{\mathscr E}_{\K_N,c}(w)={\mathscr J}_{\K_N,c}(w)+\frac12D_{\K_N}(|w|^2,|w|^2)-\int_{\K_N}{\sum\limits_{i=1}^{N^3} G_{\K_N}(\cdot-R_i)|w|^2},
\end{aligned}
\end{equation}
where $\{R_i\}_{1\leq i\leq N^3}$ denote the positions of the $N^3$ charges.

In this section, we prove a localization type result (\cpr{K_complete_model_localization_minimizer}) --- that any minimizer concentrates around the position of a charge of the lattice --- and a lower bound on the number of distinct minimizers (\cpr{K_complete_model_number_minimizers_lowerbound}).

\begin{prop}[Minimizers' concentration point]\label{K_complete_model_localization_minimizer}
Let $\{R_j\}_{1\leq j\leq N^3}$ be the respective positions of the $N^3$ charges inside $\K_N$. Then the sequence $\{x_n\}\subset c_n\cdot\K_N$ of translations associated with the subsequence $\{w_{c_n}\}$ of minimizers to $E_{\K_N,N^3\lambda}(c_n)$ such that the rescaled sequence $\mathds{1}_{\K_{c_n}}\breve{w}_{c_n}(\cdot+x_n)$ converges to $Q$, a minimizer to $J_{\R^3,N^3\lambda}$, verifies
$$x_n= {c_n}R_i+o(1)$$
as $n\to\infty$, for one $i$.
Consequently, for $2\leq p<+\infty$,
$$\norm{\breve{w}_{c_n}(\cdot+c_n R_i)-Q}_{L^p(\K_{c_n})}\underset{n\to+\infty}{\longrightarrow}0.$$
\end{prop}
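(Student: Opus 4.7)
My plan is to compare an upper bound on $E_{\K_N,N^3\lambda}(c)$ obtained by placing a trial function at a nucleus with a lower bound coming from the concentration of the minimizer; cancelling the $c$-order terms common to both will leave only the nucleus-electron attraction, forcing the concentration site to coincide with a nucleus.

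For the upper bound, fix any nucleus $R_i$ and any minimizer $Q$ of $J_{\R^3}(N^3\lambda)$ (existing by \cth{R3_eff_model_existence_thm}), and test the energy on the cut-off, mass-normalized copy $c^{3/2}(\chi_c Q)(c(\cdot-R_i))$ constructed as in the proof of \clm{R3_eff_model_main_term_expansion_upper_bound}. The exponential decay of $Q$---a standard Agmon-type consequence of its Euler--Lagrange equation with $\mu>0$---together with the expansion $G_{\K_N}(z)=|z|^{-1}+O(1)$ near $0$ from \clm{G_K_in_Lp} give, after scaling,
\[
  \mathscr E_{\K_N,c}\bigl(c^{3/2}(\chi_c Q)(c(\cdot-R_i))\bigr) \;=\; c^{2}J_{\R^{3}}(N^{3}\lambda) \;+\; c\Bigl[\tfrac{1}{2}D_{\R^{3}}(|Q|^2,|Q|^2) - \int\tfrac{|Q|^{2}}{|z|}\dd z\Bigr] + o(c).
\]

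By compactness of $\overline{\K_N}$, $y_n := x_n/c_n \to p$ along a subsequence. If $|y_n-R_j|\geq\delta>0$ for every $j$, I combine the $L^\infty$-convergence of \clm{K_models_Linfinity_convergence_minimizers}, the positivity of $\mu_{\R^3}$ from \ccr{convergence_EulerLagrange_multiplier} and a standard Agmon estimate applied to the Euler--Lagrange equation of $\breve{w}_{c_n}$ to obtain uniform-in-$n$ exponential decay of $\breve{w}_{c_n}(\cdot+x_n)$ away from $0$. Splitting each $\int_{\K_N}G_{\K_N}(\cdot-R_j)|w_{c_n}|^{2}$ into a ball around $y_n$, on which $G_{\K_N}(\cdot-R_j)$ is bounded, and its complement, on which $|w_{c_n}|^{2}$ is exponentially small in $c_n$ against $G_{\K_N}\in L^{q}_{\mathrm{loc}}$ for $q<3$, yields $\sum_j\int G_{\K_N}(\cdot-R_j)|w_{c_n}|^{2}=O(1)$. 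On the other hand, the asymptotic $G_{\K_N}(z)=|z|^{-1}+O(1)$ and the $L^p$-convergence of $\breve{w}_{c_n}(\cdot+x_n)$ to $Q$ yield, after rescaling and a Hardy--Littlewood--Sobolev estimate as in \clm{cvgce_EK_JK}, $D_{\K_N}(|w_{c_n}|^2,|w_{c_n}|^2)=c_n D_{\R^3}(|Q|^2,|Q|^2)+o(c_n)$. Coupling these with the refined lower bound $\mathscr J_{\K_N,c_n}(w_{c_n})\geq c_n^{2}J_{\R^{3}}(N^{3}\lambda)+o(c_n)$---obtained from the strong $H^1$-convergence of \cpr{Main_term_expansion_of_minimum} and the exponential decay of $Q$ across the boundary of $\K_{c_n N}$---gives
\[
  E_{\K_N,N^3\lambda}(c_n) \;\geq\; c_n^{2}J_{\R^{3}}(N^{3}\lambda) \;+\; \tfrac{c_n}{2}D_{\R^{3}}(|Q|^2,|Q|^2) \;+\; o(c_n).
\]
Comparison with the $Q$-evaluated upper bound makes the $\tfrac{c_n}{2}D_{\R^3}$ terms cancel, leaving $0\geq c_n\int|Q|^2/|z|\,\dd z+o(c_n)$, which contradicts $\int|Q|^2/|z|>0$.

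Hence $p=R_i$ for some $i$. Considering the displacement $\delta_n := x_n - c_n R_i$, the same analysis rules out $|\delta_n|\to\infty$: in that case $y_n-R_i=\delta_n/c_n$ is much larger than $1/c_n$, the argument of $G_{\K_N}(\cdot-R_i)$ in the rescaled frame stays of order $|\delta_n|/c_n$ throughout the bubble, the attraction to $R_i$ degrades to $o(c_n)$, and the previous contradiction reappears. Thus $\{\delta_n\}$ is bounded; passing to a further subsequence with $\delta_n\to\delta^\ast\in\R^3$ and replacing $Q$ by $Q(\cdot-\delta^\ast)$---still a minimizer of $J_{\R^3}(N^3\lambda)$ by translation invariance---absorbs the shift to give $x_n=c_n R_i+o(1)$. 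The $L^p$-convergence for every $p\in[2,+\infty)$ then follows from the $L^\infty$-convergence of \clm{K_models_Linfinity_convergence_minimizers}. The main technical obstacle is upgrading the qualitative $o(c_n^2)$ rate of \cpr{Main_term_expansion_of_minimum} to an $o(c_n)$ lower bound on the TFD energy; this relies on the uniform exponential decay of $\breve{w}_{c_n}$, which in turn uses the non-degeneracy statement of \cth{R3_eff_model_existence_and_nondeg}.
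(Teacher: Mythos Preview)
Your strategy is genuinely different from the paper's and considerably heavier. The paper exploits a one-line observation you never use: since $\mathscr J_{\K_N,c}$ and $D_{\K_N}$ are invariant under spatial translations, comparing $\mathscr E_{\K_N,c_n}(w_{c_n})$ with $\mathscr E_{\K_N,c_n}\big(w_{c_n}(\cdot+\tfrac{x_n}{c_n}-R_j)\big)$ cancels \emph{everything} except the nucleus--electron attraction. One is then left with the inequality
\[
\sum_i\!\int G_{\K_N}\!\Big(\tfrac{x}{c_n}+\tfrac{x_n}{c_n}-R_i\Big)|\breve w_{c_n}(x+x_n)|^2\dd x
\;\ge\;
\sum_i\!\int G_{\K_N}\!\Big(\tfrac{x}{c_n}+R_j-R_i\Big)|\breve w_{c_n}(x+x_n)|^2\dd x,
\]
and a single lemma (\clm{K_complete_model_convergence_second_G_K_term}) identifies both sides. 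No control of $D_{\K_N}$ to order $c$, no refined lower bound on $\mathscr J_{\K_N,c_n}(w_{c_n})$, and no uniform Agmon decay for $\breve w_{c_n}$ are needed. Your route instead rebuilds most of Section~\ref{section_second_order_expansion_E} (the second-order expansion) just to locate the bubble, and the Agmon argument for $\breve w_{c_n}$ is delicate because the rescaled Euler--Lagrange equation carries the singular potential $-c_n^{-2}\mathscr G(c_n^{-1}\cdot)\sim -|{\cdot}-c_nR_j|^{-1}$ near each nucleus; you wave this away. (The invocation of the non-degeneracy theorem for exponential decay is also off: decay comes from $\mu_{\R^3}>0$, not from $\Ker L^+_\mu$.)

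There is moreover a genuine gap in your last step. Once $\delta_n=x_n-c_nR_i$ is bounded and $\delta_n\to\delta^\ast$ along a subsequence, ``absorbing'' $\delta^\ast$ into $Q$ does \emph{not} prove the proposition: the statement fixes the pair $(x_n,Q)$ with $\breve w_{c_n}(\cdot+x_n)\to Q$ and asserts $x_n=c_nR_i+o(1)$, which is equivalent to $\delta^\ast=0$. Replacing $Q$ by $Q(\cdot-\delta^\ast)$ changes the pair and leaves the claim unproved; worse, different subsequences could yield different $\delta^\ast$, so you have not even shown that $\delta_n$ converges. What is missing is exactly the argument the paper supplies: with $Q$ radial about the origin, strict rearrangement gives $\int|Q|^2|x|^{-1}>\int|Q|^2|x-\eta|^{-1}$ for every $\eta\neq0$, and comparing the two attractions forces $\eta=0$. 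Your own upper/lower framework would yield this too---your bounds combine to give $\int|Q|^2|z|^{-1}\le\int|Q|^2|z-\delta^\ast|^{-1}$---but you stop short of drawing the conclusion.
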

\begin{proof}[Proof of~\cpr{K_complete_model_localization_minimizer}]
Since the $w_{c_n}$'s are minimizers, we have for any $R_j$ that
$${\mathscr E}_{\K_N,c_n}(w_{c_n})\leq {\mathscr E}_{\K_N,c_n}\Big(w_{c_n}\Big(\cdot+\frac{x_n}{c_n}-R_j\Big)\Big)$$
which leads to
\begin{multline*}
-\sum\limits_{i=1}^{N^3}\int_{\K_{N c_n}}{G_{\K_N}\Big(\frac{x}{c_n}+\frac{x_n}{c_n}-R_i\Big) \left|\breve{w}_{c_n}\left(x+x_n\right)\right|^2\dd x}\\
\leq -\sum\limits_{i=1}^{N^3}\int_{\K_{N c_n}}{ G_{\K_N}\Big(\frac{x}{c_n}+R_j-R_i\Big) \left|\breve{w}_{c_n}\left(x+x_n\right)\right|^2\dd x}
\end{multline*}
since the first four terms of ${\mathscr E}_{\K_N,c}$ are invariant under spatial translations. \clm{K_complete_model_convergence_second_G_K_term} below then gives, on one hand, that the right hand side of this inequality is equal to $-c_n\int_{\R^3}{\frac{Q^2(x)}{|x|}\dd x}+o(c_n)$ because $c_n|R_j-R_i|\to\infty$ for $i\neq j$ and, on the other hand, that $|x_n-c_n R_i|$ must be bounded for one $i$, that we denote $i_0$, because otherwise the left hand side would be equal to $o(c_n)$. Therefore, still by \clm{K_complete_model_convergence_second_G_K_term}, the term for $i_0$ in the left hand side is equal to $-c_n\int_{\R^3}{\frac{Q^2(x)}{|x-\eta|}\dd x}+o(c_n)$ for a given $\eta\in\R^3$ (and up to a subsequence) and the other terms of the sum to $o(c_n)$. However,
$$\int_{\R^3}{\frac{Q^2(x)}{|x|}\dd x}>\int_{\R^3}{\frac{Q^2(x)}{|x-\eta|}\dd x}$$
if $\eta\neq0$, implying that the $w_{c_n}$ are not minimizers for $n$ large enough. Hence $\eta=0$, which means by \clm{K_complete_model_convergence_second_G_K_term} that $x_n= {c_n}R_{i_0}+o(1)$ as $n\to\infty$.

The last result of \cpr{K_complete_model_localization_minimizer} is a direct consequence of the convergence of the $L^p(\K_{c_n})$-norms proved in \cpr{Main_term_expansion_of_minimum} and \clm{K_models_Linfinity_convergence_minimizers} together with the fact that $x_n-c_n R_{i_0}=o(1)$.
\begin{lemme}\label{K_complete_model_convergence_second_G_K_term}
Let $\{y_n\}_n\subset \K$, $\{f_c\}_c\subset L^2_{\textrm{per}}(\K_c)$ and $\{g_c\}_c\subset L^2_{\textrm{per}}(\K_c)$ be two sequences such that $\norm{f_c}_{H^1_{\textrm{per}}(\K_c)}+\norm{g_c}_{H^1_{\textrm{per}}(\K_c)}$ is uniformly bounded. We assume that there exist $f$ and $g$ in $H^1(\R^3)$ and a subsequence $c_n$ such that $\norm{f_{c_n}-f}_{L^2(\\K_{c_n})}\underset{n\to\infty}{\to} 0$ and $\mathds{1}_{\K_{c_n}}g_{c_n}\underset{n\to\infty}{\wto} g$ weakly in $L^2(\R^3)$. Then,
\begin{enumerate}[label=\roman*.,leftmargin=1.4em]
	\item\label{K_complete_model_convergence_second_G_K_term_item_infinity} if $c_n|y_n|\to+\infty$, then ${c_n}^{-1}\int_{\K_{c_n}}{G_\K({c_n}^{-1}\cdot-y_n) f_{c_n} g_{c_n}}\underset{n\to\infty}{\longrightarrow}0$,
	\item\label{K_complete_model_convergence_second_G_K_term_item_0} if $c_n|y_n|\to0$, then ${c_n}^{-1}\int_{\K_{c_n}}{G_\K({c_n}^{-1}\cdot-y_n) f_{c_n} g_{c_n}}\underset{n\to\infty}{\longrightarrow}\int_{\R^3}{\frac{f(x)g(x)}{|x|}\dd x}$,
	\item\label{K_complete_model_convergence_second_G_K_term_item_bounded} otherwise, there exist $\eta\in\R^3\setminus\{0\}$ and a subsequence $n_k$ such that
	$${c_{n_k}}^{-1}\int_{\K_{c_{n_k}}}{G_\K({c_{n_k}}^{-1}\cdot-y_{n_k}) f_{c_{n_k}} g_{c_{n_k}}}\underset{k\to\infty}{\longrightarrow}\int_{\R^3}{\frac{f(x)g(x)}{|x-\eta|}\dd x}.$$
\end{enumerate}
Moreover, replacing $\norm{f_{c_n}-f}_{L^2(\\K_{c_n})}\underset{n\to\infty}{\to} 0$ by $\norm{f_{c_n}-f}_{H^1(\\K_{c_n})}\underset{n\to\infty}{\to} 0$, the uniform bound on $\norm{g_c}_{H^1_{\textrm{per}}(\K_c)}$ by an uniform bound on $\norm{g_c}_{L^2_{\textrm{per}}(\K_c)}$ and $g\in H^1(\R^3)$ by $g\in L^2(\R^3)$, then \ref{K_complete_model_convergence_second_G_K_term_item_infinity} still holds true and, in the special case $y_n=0$, \ref{K_complete_model_convergence_second_G_K_term_item_0} too.
\end{lemme}
\begin{rmq*}
We state the lemma in a more general setting than needed for \cpr{K_complete_model_localization_minimizer} in order for it to be also useful for the proof of \clm{K_complete_model_Lplus_c_elliptic_liminf_ineq}.
\end{rmq*}
\begin{proof}[Proof of~\clm{K_complete_model_convergence_second_G_K_term}]
Using the same notation $\K^{\boldsymbol\sigma}$ as in the proof of~\clm{cvgce_EK_JK}, we notice that $\K-\tau:=\{x\in\R^3|x-\tau\in\K\}\subset\K_2=\K\cup\bigcup_{(0,0,0)\neq\boldsymbol\sigma\in\{0;\pm1\}^3}\K^{\boldsymbol\sigma}$, for any $\tau\in\K$. Therefore, by \clm{G_K_in_Lp}, there exists $C>0$ such that for any $\phi_c\in L^2(\K_c)$, $\psi_c\in H^1(\K_c)$, $y\in\K$ and $c>0$,
$$c^{-1}\left|\int_{\K_c}{G_\K(c^{-1}\cdot-y)\phi_c\psi_c}\right|\leq C\sum_{\boldsymbol\sigma\in\{-1;0;+1\}^3}\norm{\frac{\phi_c\psi_c}{|\cdot-c(y+\boldsymbol\sigma)|}}_{L^1(\K_c)}.$$
Then, by the Hardy inequality on $\K_c$, which is uniform on $[c_*,\infty)$ for any $c_*>0$, there exists $C'$ such that for any $y\in\K$ and any $c\geq1$, we obtain
$$c^{-1}\left|\int_{\K_c}{G_\K(c^{-1}\cdot-y)\phi_c\psi_c}\right|\leq 27C'\norm{\phi_c}_{L^2(\K_c)}\norm{\psi_c}_{H^1(\K_c)}.$$
Therefore, the weak convergence of $g_{c_n}$ and the Hardy inequality to $f$ on $\R^3$ give
\begin{multline*}
{c_n}^{-1}\bigg|\int_{\K_{c_n}}{G_\K({c_n}^{-1}\cdot-y_n)(f_{c_n} g_{c_n}-fg)}\bigg|\\
\leq 27\Big(C'\norm{f_{c_n}-f}_{L^2(\K_{c_n})}\norm{g_{c_n}}_{H^1(\K_{c_n})}+2C\norm{\frac{f(g_{c_n}-g)}{|\cdot-c(y+\boldsymbol\sigma)|}}_{L^1(\K_c)}\Big)\underset{n\to\infty}{\to}0.
\end{multline*}
Replacing $\norm{f_{c_n}-f}_{L^2(\K_{c_n})}\norm{g_{c_n}}_{H^1(\K_{c_n})}$ by $\norm{f_{c_n}-f}_{H^1(\K_{c_n})}\norm{g_{c_n}}_{L^2(\K_{c_n})}$ gives this same convergence to $0$ under the second set of conditions.

We are therefore left with the study of ${c_n}^{-1}\int_{\K_{c_n}}{G_\K({c_n}^{-1}\cdot-y_n)fg}$ as $n\to\infty$ and we start with the case $c_n|y_n|\to+\infty$. For $c>0$, $y\in\K$ and $\boldsymbol\sigma\in\{-1;0;+1\}^3$, we have
\begin{multline*}
c^{-1}\int_{\K_c}{\mathds{1}_{\K^{\boldsymbol\sigma}}(c^{-1}\cdot-y)G_\K(c^{-1}\cdot-y)|fg|}\\
	\lesssim \int_{\R^3}{\frac{\mathds{1}_{B(0,\frac{c}2|y+\boldsymbol\sigma|)}}{|\cdot-c(y+\boldsymbol\sigma)|}|fg|}
		    +\int_{\R^3}{\frac{\mathds{1}_{B(c(y+\boldsymbol\sigma),R)}}{|\cdot-c(y+\boldsymbol\sigma)|}|fg|}
		    +\int_{{}^\complement\!B(0,\frac{c}2|y+\boldsymbol\sigma|)}{\frac{\mathds{1}_{{}^\complement\!B(c(y+\boldsymbol\sigma),R)}}{|\cdot-c(y+\boldsymbol\sigma)|}|fg|}\\
	\lesssim \frac2{c|y+\boldsymbol\sigma|}\norm{fg}_{L^1(\R^3)}+\norm{f}_{H^1(\R^3)}\norm{g}_{L^2(B(c(y+\boldsymbol\sigma),R)}+\frac1R\norm{fg}_{L^1({}^\complement\!B(0,\frac{c}2|y+\boldsymbol\sigma|))},
\end{multline*}
for any $R>0$. Since $f$ is in $H^1(\R^3)$ and $g$ at least in $L^2(\R^3)$, the last two terms tends to $0$ and $\norm{fg}_{L^1(\R^3)}$ is bounded hence, on one hand we obtain, for $\boldsymbol\sigma=(0,0,0)$, the convergence to $0$ (for the subsequence $c_n$) from $c_n|y_n|\to+\infty$ and, on the other hand, there exists $R'>0$ such that $|y+\boldsymbol\sigma|>R'$ for any $\{-1;0;+1\}^3\ni\boldsymbol\sigma\neq(0,0,0)$ and any $y\in\K$, ending the proof that the above tends to $0$. We finally obtain that
$$\frac1{c_n}\int_{\K_{c_n}}{G_\K({c_n}^{-1}\cdot-y_n)|fg|}=\sum\limits_{\boldsymbol\sigma\in\{0;\pm1\}^3} \frac1{c_n}\int_{\K_{c_n}}{\left[\mathds{1}_{\K^{\boldsymbol\sigma}}G_\K\right]({c_n}^{-1}\cdot-y_n)|fg|}\underset{n\to\infty}{\longrightarrow}0,$$
concluding the proof of \emph{\ref{K_complete_model_convergence_second_G_K_term_item_infinity}} under the two sets of hypothesis.

We now suppose that $c_n|y_n|$ does not diverge hence it is bounded  up to a subsequence $n_k$ and, consequently, $y_{n_k}\to0$. However, by \clm{G_K_in_Lp}, there exists $M'>0$ such that $\left||\cdot|^{-1}-G_{\K}\right|\leq M'$ on $\K$, thus there exists $M>0$ such that
\begin{align*}
\left|G_\K-\frac1{|\cdot|}\right|\mathds{1}_{\K-\tau}&\leq \bigg(M'\mathds{1}_{\K}+\frac{\mathds{1}_{{}^\complement\K}}{|\cdot|}+C\sum_{(0,0,0)\neq\boldsymbol\sigma\in\{0;\pm1\}^3}\frac{\mathds{1}_{\K^{\boldsymbol\sigma}}}{|\cdot+\tau-\boldsymbol\sigma|-|\tau|}\bigg)\mathds{1}_{\K-\tau}\\
	&\leq \left(M'+R^{-1}+52CR^{-1}\right)\mathds{1}_{\K-\tau}\leq M\mathds{1}_{\K-\tau}.
\end{align*}
for $\tau\in B(0,R/2)$ and where $R:=\min_{x\in\partial\K}|x|>0$ therefore $B(0,R)\subset\K$. Hence
$$\bigg|\int_{\K_{c_{n_k}}}{\left(\frac1{c_{n_k}}G_\K(\frac\cdot{c_{n_k}}-y_{n_k})-|\cdot-c_{n_k}y_{c_{n_k}}|^{-1}\right)fg}\bigg|\leq \frac{M}{c_{n_k}}\norm{fg}_{L^1(\R^3)}=O(\frac1{c_{n_k}}).$$
Moreover,
$$\bigg|\int_{\R^3}{(1-\mathds{1}_{\K_{c_{n_k}}}(x))\frac{f(x)g(x)}{|x-c_{n_k}y_{c_{n_k}}|}\dd x}\bigg|\lesssim \norm{f}_{L^2({}^\complement\K_{c_{n_k}})}\norm{g}_{H^1(\R^3)}\to0$$
and we are left with the study of
$$\bigg|\int_{\R^3}{\frac{f(x)g(x)}{|x-c_{n_k}y_{c_{n_k}}|}-\frac{f(x)g(x)}{|x-\eta|}\dd x}\bigg|\leq4|\eta-c_{n_k}y_{c_{n_k}}|\norm{f}_{H^1(\R^3)}\norm{g}_{H^1(\R^3)}$$
which tends to $0$ if we choose $\eta$ as the limit (up to another subsequence) of the bounded sequence $c_{n_k}y_{n_k}$. Finally, if we have in fact $c_ny_n\to0$ then $\eta=0$, otherwise, we can find a subsequence such that $c_{n_k}y_{n_k}\to\eta\neq0$.

Under the second set of conditions and if $y_n=0$, we have
$$\bigg|\int_{\K_{c_n}}{({c_n}^{-1}G_\K({c_n}^{-1}x)-|x|^{-1})f(x)g(x)\dd x}\bigg|\leq \frac{M'}{c_n}\norm{fg}_{L^1(\R^3)}=O({c_n}^{-1}).$$

This concludes the proof of \clm{K_complete_model_convergence_second_G_K_term}.
\end{proof}

This concludes the proof of~\cpr{K_complete_model_localization_minimizer}.
\end{proof}

We now prove that $E_{\K_N,N^3\lambda}(c)$ admits at least $N^3$ distinct minimizers.
\begin{prop}\label{K_complete_model_number_minimizers_lowerbound}
For $c_n$ large enough, there exist at least $N^3$ nonnegative minimizers to the minimization problem $E_{\K_N,N^3\lambda}(c_n)$ which are translations one of each other by vectors $R_j-R_k$, $1\leq j\neq k\leq N^3$, where $\{R_i\}_{1\leq i\leq N^3}$ are the respective positions of the $N^3$ charges inside $\K_N$.
\end{prop}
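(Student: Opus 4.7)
The plan is to construct the $N^3$ minimizers as translates of a single nonnegative minimizer $w_{c_n}$ by the integer vectors $R_j-R_{i_0}$, $j=1,\ldots,N^3$, where $R_{i_0}$ is the nucleus around which $\breve w_{c_n}$ concentrates according to~\cpr{K_complete_model_localization_minimizer}.

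As a first step, I will check that $\mathscr E_{\K_N,c}$ is invariant under translations by any $e\in\mathscr L_\K=\Z^3$. The kinetic, Thomas--Fermi, Dirac, and Hartree parts are manifestly translation invariant on the periodic cell, so only the electron--nucleus term $-\int_{\K_N}\sum_{i=1}^{N^3} G_{\K_N}(\cdot-R_i)|w|^2$ requires checking. A change of variable reduces its invariance to the identity $\sum_i G_{\K_N}(y-(R_i-e))=\sum_i G_{\K_N}(y-R_i)$, which holds because the $R_i$'s form a complete set of representatives of $\Z^3/N\Z^3$ (so do the $R_i-e$'s) while $G_{\K_N}$ is $N\Z^3$-periodic. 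Taking a nonnegative minimizer $w_{c_n}$ provided by the analogue of~\cpr{K_complete_model_EulerLagrange} on $\K_N$, it then follows that each translate $w_{c_n}^{(j)}(x):=w_{c_n}(x-(R_j-R_{i_0}))$ is again a nonnegative minimizer of $E_{\K_N,N^3\lambda}(c_n)$.

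The heart of the proof is to show that for $c_n$ large enough these $N^3$ translates are pairwise distinct. I would argue by contradiction: if $w_{c_n}^{(j)}=w_{c_n}^{(k)}$ with $j\neq k$, then $w_{c_n}$ would admit the extra period $\tau:=R_j-R_k\in\Z^3\setminus N\Z^3$, so $\breve w_{c_n}$ would be $c_n\tau$-periodic. Set $p_1:=c_n R_{i_0}$ and let $p_2\in\K_{Nc_n}$ be the representative of $p_1+c_n\tau$ modulo $c_n N\Z^3$; since $\mathrm{dist}(\tau,N\Z^3)\geq1$ we have $|p_1-p_2|_{\K_{Nc_n}}\geq c_n$. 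Choose $R>0$ large enough that $\int_{B(0,R)}Q^2>\tfrac12\|Q\|_{L^2(\R^3)}^2$, which is possible because $\|Q\|_{L^2(\R^3)}^2=N^3\lambda<\infty$. For $c_n>2R$ the balls $B(p_1,R)$ and $B(p_2,R)$ are disjoint in $\K_{Nc_n}$, and the periodicity combined with the strong $L^2$-localization $\mathds 1_{\K_{c_n}}\breve w_{c_n}(\cdot+c_n R_{i_0})\to Q$ from~\cpr{K_complete_model_localization_minimizer} would give
\[
\int_{B(p_1,R)\cup B(p_2,R)}\breve w_{c_n}^{\,2}\xrightarrow[n\to\infty]{}2\int_{B(0,R)}Q^2>\|Q\|_{L^2(\R^3)}^2=N^3\lambda,
\]
contradicting $\int_{\K_{Nc_n}}\breve w_{c_n}^{\,2}=N^3\lambda$.

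The hard part is precisely this distinctness step: the whole contradiction rests on the \emph{sharp} concentration of the minimizers around a single nucleus carrying the \emph{entire} $L^2$-mass $N^3\lambda$. Both ingredients are ultimately consequences of the strict binding inequality of~\cpr{R3_eff_model_strict_binding}, which rules out any non-trivial splitting of the mass; without it, periodicity-induced replicas could a priori share the total mass in a balanced way and no contradiction would arise.
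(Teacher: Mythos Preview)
Your proof is correct and follows essentially the same strategy as the paper: use the $\Z^3$-translation invariance of $\mathscr E_{\K_N,c}$ to generate $N^3$ nonnegative minimizers as translates of a given one, then invoke the concentration result of \cpr{K_complete_model_localization_minimizer} to show they are pairwise distinct for large $c_n$. The paper's proof is terser on the distinctness step --- it simply notes that the $N^3$ translated sequences ``have distinct limits'' (each concentrates at a different nucleus $R_i$), whence they must differ for $n$ large --- whereas you unpack this into an explicit mass-contradiction via the extra $\tau$-periodicity that a coincidence $w_{c_n}^{(j)}=w_{c_n}^{(k)}$ would force. Both arguments rest on the same fact: the full $L^2$-mass $N^3\lambda$ concentrates at a single nucleus, so it cannot simultaneously sit in two far-apart balls.
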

\begin{proof}[Proof of~\cpr{K_complete_model_number_minimizers_lowerbound}]
First, in~\cpr{K_complete_model_localization_minimizer}, we have seen that any sequence $\{w_c\}_{c\to+\infty}$ of minimizers of ${E}_{\K_N,N^3\lambda}(c)$ must concentrate, up to a subsequence, at the position of one nucleus of the unit cell, denoted $R_{j_0}$. Then, given that the four first terms of ${\mathscr E}_{\K_N,c}$ are invariant under any translations and $\int G_\K|w_c|^2$ is invariant under $R_j-R_k$ translations, we have that each $w_c(\cdot+R_i-R_{j_0})$, for $1\leq i\leq N^3$, is also a minimizer of ${E}_{\K_N,N^3\lambda}(c)$. Since, the $N^3$ sequences of minimizers $\left\{w_{c_n}(\cdot+R_i-R_{j_0})\right\}_{i}$ have distinct limits as $n\to\infty$, there are at least $N^3$ distinct minimizers for $n$ large enough.
\end{proof}

\subsection{Second order expansion of \texorpdfstring{$E_{\K,\lambda}(c)$}{the minimum E}}\label{section_second_order_expansion_E}
The goal of this subsection is to prove the expansion~\eqref{K_complete_model_expansion_E}. To do so, we improve the convergence rate of the first order expansion of $J_{\K,\lambda}(c)$ proved in \cpr{Main_term_expansion_of_minimum}. Namely, we prove that there exists $\beta>0$ such that
\begin{equation}\label{K_eff_model_expansion_exponential_rate_equation}
J_{\K,\lambda}(c)= c^2 J_{\R^3}(\lambda)+o(e^{-\beta c}).
\end{equation}
We recall that we have proved in~\clm{R3_eff_model_main_term_expansion_upper_bound} that there exists $\beta>0$ such that 
$$J_{\K,\lambda}(c)\leq c^2 J_{\R^3}(\lambda)+o(e^{-\beta c})$$
and we now turn to the proof of the converse inequality.
\begin{lemme}\label{K_eff_model_expansion_exponential_rate}
There exists $\beta>0$ such that
\begin{equation*}\label{K_eff_model_above_expansion_exponential_rate_equation}
J_{\K,\lambda}(c)\geq c^2 J_{\R^3,\lambda}+o(e^{-\beta c}).
\end{equation*}
\end{lemme}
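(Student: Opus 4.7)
The plan is to take a minimizer $v_c$ of $J_{\K,\lambda}(c)$, rescale and translate its dilation $\breve v_c$ so the concentration point sits at the origin, and then use a smooth cutoff to produce a valid test function for $J_{\R^3,\lambda}$ whose energy differs from $c^{-2}J_{\K,\lambda}(c)$ by only exponentially small terms. This will yield the matching lower bound to the upper bound already proved in \clm{R3_eff_model_main_term_expansion_upper_bound}. The technical heart is a uniform exponential decay estimate for the rescaled translated minimizer.

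I would first establish a uniform bound $0\leq \tilde v_c(x)\leq Ce^{-\beta|x|}$ on $\K_c$ for the function $\tilde v_c(x):=\breve v_c(x+x_c)$ with constants $C,\beta>0$ independent of $c$. The rescaled Euler--Lagrange equation reads
$$-\Delta\tilde v_c+\bigl(c_{TF}\tilde v_c^{4/3}-\tilde v_c^{2/3}+\bar\mu_c\bigr)\tilde v_c=0,\qquad \bar\mu_c:=c^{-2}\mu_c,$$
with $\bar\mu_c\to\mu_{\R^3}>0$ by \ccr{convergence_EulerLagrange_multiplier}, while $\tilde v_c\to Q$ in $L^\infty(\K_c)$ by \clm{K_models_Linfinity_convergence_minimizers}. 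Since $Q$ vanishes at infinity, for any $\delta>0$ there exists $R_0$ independent of $c$ with $|c_{TF}\tilde v_c^{4/3}-\tilde v_c^{2/3}|\leq\delta$ on $\K_c\setminus B(0,R_0)$; choosing $\delta<\mu_{\R^3}/2$ and comparing the positive ground state $\tilde v_c$ with a radial supersolution of $-\Delta+\mu_{\R^3}/4$ on this exterior region gives the desired pointwise exponential decay up to $\partial\K_c$. A Caccioppoli inequality on nested annuli then propagates this bound into an $L^2$ exponential bound for $|\nabla\tilde v_c|$.

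Next, I would pick a smooth cutoff $\chi_c\equiv 1$ on $B(0,c/3)$, supported in $B(0,2c/5)\subset\K_c$, with $\|\nabla\chi_c\|_\infty\lesssim c^{-1}$, set $\phi_c:=\chi_c\tilde v_c$ extended by zero to $\R^3$, and expand
$$\mathscr J_{\R^3}(\phi_c)-\mathscr J_{\K_c,1}(\tilde v_c)=\int_{\K_c\setminus B(0,c/3)}\!\!\!\bigl[(\chi_c^2-1)|\nabla\tilde v_c|^2+2\chi_c\tilde v_c\,\nabla\chi_c\cdot\nabla\tilde v_c+\tilde v_c^2|\nabla\chi_c|^2\bigr]+\mathcal R_c,$$
where $\mathcal R_c$ collects the $L^{10/3}$ and $L^{8/3}$ boundary corrections. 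Each contribution is dominated by $\|\tilde v_c\|_{L^\infty(|x|>c/3)}$, possibly weighted by $\|\nabla\chi_c\|_\infty$, times a polynomial factor in $c$ coming from the volume of $\K_c\setminus B(0,c/3)$, hence is of size $o(e^{-\beta' c})$ for any $\beta'<\beta/3$. Finally, since $\|\phi_c\|_{L^2}^2=\lambda+O(e^{-\beta c})$, I would renormalize $\tilde\phi_c:=(\lambda/\|\phi_c\|_{L^2}^2)^{1/2}\phi_c$ into the admissible set for $J_{\R^3,\lambda}$; the uniform $H^1$, $L^{10/3}$ and $L^{8/3}$ bounds from \ccr{K_eff_dilated_minimizers_norm_unif_bound} give $\mathscr J_{\R^3}(\tilde\phi_c)=\mathscr J_{\R^3}(\phi_c)+O(e^{-\beta c})$. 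Chaining $J_{\R^3,\lambda}\leq\mathscr J_{\R^3}(\tilde\phi_c)$ with the two bounds above and using $c^2\mathscr J_{\K_c,1}(\tilde v_c)=J_{\K,\lambda}(c)$ produces the claim after absorbing the prefactor $c^2$ into a slightly smaller exponent.

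The main obstacle I expect is Step~1: the $L^\infty$ convergence from \clm{K_models_Linfinity_convergence_minimizers} alone is not enough, because I need the decay \emph{rate} to be independent of $c$ and I must contend with the periodic boundary conditions on $\K_c$, whose size grows linearly with $c$, so that in principle the minimizer could re-concentrate near $\partial\K_c$. Two ingredients make the comparison argument legitimate: the fact that $\K_c$ is the fundamental cell centred at the concentration point $x_c$ (so the exterior region $\K_c\setminus B(0,R_0)$ is genuinely far from any other periodic image of that point), and the fact, guaranteed by \ccr{convergence_EulerLagrange_multiplier}, that $\bar\mu_c$ remains uniformly bounded away from zero so that the supersolution $e^{-\sqrt{\mu_{\R^3}/4}\,|x|}$ remains a uniform majorant for $\tilde v_c$ on the whole exterior annulus.
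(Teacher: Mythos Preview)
Your overall strategy matches the paper's exactly: prove uniform exponential decay of the rescaled minimizer away from its concentration point, cut off to produce an $H^1(\R^3)$ test function, and compare energies after renormalizing the mass. The cutoff, energy expansion, and renormalization steps are fine and differ from the paper only cosmetically (you cut off on balls $B(0,c/3)\subset B(0,2c/5)$, the paper on the cubes $\K_{c-1}\subset\K_c$; you invoke Caccioppoli for the gradient, the paper uses the identity $2\chi\tilde v\,\nabla\chi\cdot\nabla\tilde v=\tilde v^2\,\nabla\cdot(\chi\nabla\chi)-\tilde v^2|\nabla\chi|^2$).

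The gap is in your decay argument. You propose to compare $\tilde v_c$ with a \emph{radial} supersolution of $-\Delta+\mu_{\R^3}/4$ on $\K_c\setminus B(0,R_0)$, but $\tilde v_c$ satisfies \emph{periodic} boundary conditions on $\partial\K_c$, so a non-periodic barrier like $e^{-\sqrt{\mu_{\R^3}/4}\,|x|}$ does not directly dominate $\tilde v_c$ on that boundary, and the maximum principle cannot be invoked as stated. Your closing paragraph correctly identifies this as the crux but does not resolve it: knowing that the exterior region is far from other periodic images only tells you the \emph{potential} is small there, not that $\tilde v_c$ is. The paper fixes this by working on the slightly enlarged domain $(1+\epsilon)\K_c\setminus B(0,\alpha)$ and adding to the Yukawa barrier $f_\alpha(x)=\alpha|x|^{-1}e^{\frac{\sqrt{\mu}}{2}(\alpha-|x|)}$ a second, explicitly $\K_c$-adapted barrier
\[
f_0(x)=\sum_{j=1}^3\frac{\cosh\bigl(\tfrac{\sqrt{\mu}}{2}x_j\bigr)}{\cosh\bigl(\tfrac{\sqrt{\mu}}{4}(1+\epsilon)c\bigr)},
\]
which also solves $(-\Delta+\mu/4)f_0=0$ and satisfies $f_0\geq 1$ on $\partial\bigl((1+\epsilon)\K_c\bigr)$; periodicity of $\tilde v_c$ together with $L^\infty$ convergence ensures $\tilde v_c\leq 1$ there, so $g=f_0+f_\alpha$ is a legitimate barrier on the full boundary. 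The resulting bound is $\tilde v_c\leq Ce^{-\gamma c}$ only on the boundary layer $\K_c\setminus\K_{c-1}$, which is all that is needed. An alternative repair of your argument would be to periodize the radial barrier, i.e.\ compare with $C\sum_{k\in c\Z^3}e^{-\beta|x-k|}$, but this requires checking that the lattice sum is still a supersolution and then isolating the $k=0$ term.
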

\begin{proof}[Proof of~\clm{K_eff_model_expansion_exponential_rate}]
As the problems ${J}_{\K,\lambda}(c)$ are invariant by spatial translations, we can suppose that $x_n=0$ in the convergences of the subsequence of rescaled functions $\mathds{1}_{\K_{c_n}}\breve{v}_{c_n}(\cdot+x_n)$. Our proof relies on the exponential decay with $c$ of the minimizers to $J_{\K_c,\lambda}(1)$ close to the border of the cube $\K_c$.
\begin{lemme}[Exponential decrease of minimizers to $J_{\K_c,\lambda}(1)$]\label{K_eff_model_minimizer_exponential_decr}
Let $\{v_c\}_{c}$ be a sequence of nonnegative minimizers to ${J}_{\K,\lambda}(c)$ such that a subsequence of rescaled functions $\mathds{1}_{\K_{c_n}}\breve{v}_{c_n}$ converges weakly to a minimizer of $J_{\R^3}(\lambda)$. Then there exist $C,\gamma>0$ such that for $c$ large enough, we have $0\leq\breve{v}_{c_n}(x)<C e^{- \gamma c}$
for $x\in\K_c\setminus\K_{c-1}$.
\end{lemme}
\begin{proof}[Proof of~\clm{K_eff_model_minimizer_exponential_decr}]
We denote by $u$ the minimizer of $J_{\R^3}(\lambda)$ to which $\mathds{1}_{\K_{c_n}}\breve{v}_{c_n}$ converges strongly and by $\mu_{\R^3}$ the Euler--Lagrange parameter~\eqref{R3_eff_model_EulerLagrange} associated with this specific $u$. The Euler--Lagrange equation associated with $J_{\K_{c_n},\lambda}(1)$ --- solved by $\breve{v}_{c_n}$ --- gives
$$\left(-\Delta+\frac{\mu_{\R^3}}4\right)\breve{v}_{c_n} \leq  \left(|\breve{v}_{c_n}|^{\frac23} +\frac{\mu_{\R^3}}4-{c_n}^{-2}\mu_{c_n}\right)\breve{v}_{c_n}.$$
We now define $\Omega_{c_n}=(1+\epsilon)\K_{c_n}\setminus B(0, \alpha)$ where $\alpha$ is such that $|u|^{\frac23}\leq \min\{\frac12, \frac{\mu_{\R^3}}4\}$ on $\R^3\setminus B(0, \alpha)$. Such $\alpha$ exists by the exponential decay of $u$ at infinity. Therefore, by~\clm{K_models_Linfinity_convergence_minimizers}, for any $c_n$ large enough, we have $|\breve{v}_{c_n}|^{2/3}\leq \min\left\{1, \frac{\mu_{\R^3}}2\right\}$ on $\K_{c_n}\setminus  B(0, \alpha)$ but also on $\Omega_{c_n}$ by periodicity of $\breve{v}_{c_n}$ and for any $c_n$ large enough (depending on $\epsilon$) in order to have
$$(1+\epsilon)\K_{c_n}\cap \bigcup\limits_{k\in {\mathscr L}_\K\setminus\{0\}} B({c_n}k,\alpha)=\emptyset.$$
Together with~\ccr{convergence_EulerLagrange_multiplier}, it gives on $\Omega_{c_n}$, for $c_n$ large enough, that
$$\left(-\Delta+\frac{\mu_{\R^3}}4\right)\breve{v}_{c_n}\leq0 \qquad \textrm{ and } \qquad |\breve{v}_{c_n}|\leq 1.$$

We now define on $\R^3\setminus B(0,\nu)$, for any $\nu>0$, the positive function
$$f_\nu(x)=\nu|x|^{-1}e^{\frac{\sqrt{\mu_{\R^3}}}2(\nu-|x|)}$$
which solves
$$-\Delta f_\nu + \frac{\mu_{\R^3}}4 f_\nu=0$$
on $\R^3\setminus B(0,\nu)$ and verifies $f_\nu=1$ on the boundary $\partial B(0,\nu)$. On each $(1+\epsilon)\K_{c_n}$, we define the positive function
$$f_0(x)=\sum\limits_{j=1}^3 \frac{\cosh\left(\frac{\sqrt{\mu_{\R^3}}}2 x_j\right)}{\cosh\left(\frac{\sqrt{\mu_{\R^3}}}4(1+\epsilon){c_n}\right)}$$
which solves
$$-\Delta f_0 + \frac{\mu_{\R^3}}4 f_0=0$$
on $(1+\epsilon)\K_{c_n}$ and verifies $1\leq f_0\leq3$ on the boundary $\partial \left((1+\epsilon)\K_c\right)$. Denoting by $g$ the function $g:=f_0+ f_\alpha$, we have for $c_n$ large enough that
$$\left(-\Delta+\frac{\mu_{\R^3}}4\right)(\breve{v}_{c_n}-g)\leq0, \textrm{ on } \Omega_{c_n} \qquad \textrm{ and } \qquad \breve{v}_{c_n}-g\leq0, \textrm{ on } \partial\Omega_{c_n},$$
hence the maximum principle implies that $\breve{v}_{c_n}\leq g$ on $\Omega_{c_n}$.

On one hand, since the function $f_0$ is even along each spatial direction of the cube and increasing on $[0;(1+\epsilon)\frac{c_n}2)$ in those directions, we have that for any $x\in \K_{c_n}$, so in particular on $\K_{c_n}\setminus\K_{c_n-1}$, that
$$0<f_0(x)\leq f_0\left(\frac{c_n}2(1,1,1)\right)\leq2\sum\limits_{j=1}^3 e^{-\epsilon\frac{\sqrt{\mu_{\R^3}}}4{c_n}}.$$
On the other hand, $|x|\geq (c_n-1) m>0$ for $x\in\K_{c_n}\setminus\K_{c_n-1}$, with $m:=\min\limits_{\partial \K} |x|$, thus
$$0<f_\alpha(x)\leq\alpha e^{\frac{\sqrt{\mu_{\R^3}}}2(\alpha+m)} m^{-1}(c_n-1)^{-1} e^{-\frac{\sqrt{\mu_{\R^3}}}2 m c_n}$$
on $\K_{c_n}\setminus\K_{c_n-1}$. Hence there exist $C>0$ and $\gamma:=\frac{\sqrt{\mu_{\R^3}}}2\min\{\frac{\epsilon}2;m\}>0$ such that for $c_n$ large enough and any $x\in \K_{c_n}\setminus\K_{c_n-1}$, we conclude that
\begin{equation*}
0\leq\breve{v}_{c_n}(x)\leq g(x)<C e^{- \gamma c}.\qedhere
\end{equation*}
\end{proof}
We now conclude the proof of~\clm{K_eff_model_expansion_exponential_rate}. We define $\chi_c\in C^\infty_c(\R^3)$, $0\leq\chi_c\leq1$, $\chi_c\equiv0$ on $\R^3\setminus \K_c$ and $\chi_c\equiv1$ on $\K_{c-1}$. Since $\left|\K_c\setminus\K_{c-1}\right|\leq |\K_c|={c}^3|\K|$ for any $c>1$ and by~\clm{K_eff_model_minimizer_exponential_decr}, we have that there exist $0<\alpha<\gamma$ such that
\begin{align*}
0\leq \norm{\breve{v}_{c_n}}^p_{L^p(\K_{c_n})}-\norm{\chi_{c_n} \breve{v}_{c_n}}^p_{L^p(\R^3)}&=\int_{\K_{c_n}\setminus\K_{c_n-1}} (1-{\chi_{c_n}}^p)|\breve{v}_{c_n}|^p\\
	&\leq C^p e^{- p \gamma c_n}\left|\K_{c_n}\setminus\K_{c_n-1}\right|=o\left(e^{- p \alpha {c_n}}\right),
\end{align*}
for any $p\in[2;6]$. Moreover, for any $c>1$, we have
\begin{align*}
\left|\int_{\R^3}\chi_c\breve{v}_c\nabla\chi_c\cdot\nabla\breve{v}_c\right|&=\frac12\left|\int_{\R^3}|\breve{v}_c|^2\nabla(\chi_c\nabla\chi_c)\right|\leq \frac12\int_{\K_c\setminus\K_{c-1}} |\breve{v}_c|^2 \left(\chi_c|\Delta\chi_c| +|\nabla\chi_c|^2\right)
\end{align*}
hence
$$\norm{\nabla(\chi_{c_n} \breve{v}_{c_n})}^2_{L^2(\R^3)}=\norm{\chi_{c_n} \nabla\breve{v}_{c_n}}^2_{L^2(\K_{c_n})}+o(e^{-2\alpha c_n})\leq\norm{\nabla\breve{v}_{c_n}}^2_{L^2(\K_{c_n})}+o(e^{-2\alpha c_n}).$$
Consequently, there exists $\beta>0$ such that
$$J_{\R^3}(\lambda)\leq {\mathscr J}_{\R^3}\bigg(\frac{\sqrt{\lambda} \chi_{c_n} u}{\norm{\chi_{c_n} u}_{L^2(\R^3)}}\bigg)\leq{\mathscr J}_{\K_{c_n}}(\breve{v}_{c_n})+o(e^{-\beta c_n})=J_{\K_{c_n}}(\lambda)+o(e^{-\beta c_n}).$$
This concludes the proof of~\clm{K_eff_model_expansion_exponential_rate}.
\end{proof}

We can now turn to the proof of the second-order expansion of the energy.
\begin{prop}[Second order expansion of the energy]\label{K_complete_model_2nd_order_expansion_energy}
We have the expansion
\begin{multline}
E_{\K_N,N^3\lambda}(c)={c}^2 J_{\R^3,N^3\lambda}\\
+c \inf_u\left\{\frac12\int_{\R^3}{\int_{\R^3}{\frac{|u(x)|^2|u(y)|^2}{|x-y|}\dd y}\dd x} - \int_{\R^3} \frac{|u(x)|^2}{|x|} \dd x\right\} +o(c),
\end{multline}
where the infimum is taken over all the minimizers of $J_{\R^3,N^3\lambda}$.
\end{prop}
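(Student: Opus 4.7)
The plan is a matched-asymptotics argument: I prove separately the upper and lower bounds
${E}_{\K_N,N^3\lambda}(c)\le c^2J_{\R^3,N^3\lambda}+c\,S(N^3\lambda)+o(c)$ and
${E}_{\K_N,N^3\lambda}(c)\ge c^2J_{\R^3,N^3\lambda}+c\,S(N^3\lambda)+o(c)$.
The leading $c^2$ term always comes from the effective part ${\mathscr J}_{\K_N,c}$; the next-order term must be extracted from the two Coulomb contributions after rescaling $z:=c(y-R_{j_0})$, using that $G_{\K_N}(\zeta)=|\zeta|^{-1}+O(1)$ near the origin by \clm{G_K_in_Lp}.

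\textbf{Upper bound.} Let $Q\in H^1(\R^3)$ be a minimizer of $J_{\R^3,N^3\lambda}$ realising the infimum $S(N^3\lambda)$, and fix any nucleus position $R_{j_0}\in\K_N$. Following the proof of \clm{R3_eff_model_main_term_expansion_upper_bound}, I would build the localised trial function $w_c(y)=\alpha_c\,\chi_c\bigl(c(y-R_{j_0})\bigr)Q\bigl(c(y-R_{j_0})\bigr)$ with $\chi_c\in C_c^\infty(\K_c)$ a smooth cutoff and $\alpha_c$ chosen to enforce $\|w_c\|_{L^2(\K_N)}^2=N^3\lambda$. The effective part ${\mathscr J}_{\K_N,c}(w_c)$ produces $c^2J_{\R^3,N^3\lambda}+o(e^{-\beta c})$, exactly as in \clm{R3_eff_model_main_term_expansion_upper_bound}, thanks to the exponential decay of $Q$. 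After rescaling, the nucleus-electron term at $i=j_0$ yields $-c\int_{\R^3}|Q|^2/|z|\,dz+O(1)$ by \clm{G_K_in_Lp}, while the contributions from $i\neq j_0$ are $O(1)$ because $G_{\K_N}(\cdot-R_i)$ is bounded near $R_{j_0}$. A strictly analogous rescaling of the double integral, combined with the strong $L^p$-concentration of $|w_c|^2$ near $R_{j_0}$, gives $\tfrac12 D_{\K_N}(|w_c|^2,|w_c|^2)=\tfrac{c}{2}\iint|Q|^2|Q|^2/|z-z'|\,dz\,dz'+o(c)$.

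\textbf{Lower bound.} Let $(w_{c_n})$ be a sequence of minimizers of ${E}_{\K_N,N^3\lambda}(c_n)$ with $c_n\to\infty$. By \cpr{K_complete_model_localization_minimizer}, after extraction and for some $j_0$, the rescaled-and-translated functions $\tilde w_n(z):=c_n^{-3/2}w_{c_n}(R_{j_0}+z/c_n)$ converge to some minimizer $Q_\star$ of $J_{\R^3,N^3\lambda}$ strongly in $L^p_{\mathrm{loc}}(\R^3)$ for $2\le p<\infty$ and in $H^1$. The effective part satisfies ${\mathscr J}_{\K_N,c_n}(w_{c_n})\ge J_{\K_N,N^3\lambda}(c_n)=c_n^2J_{\R^3,N^3\lambda}+o(e^{-\beta c_n})$ by the $\K_N$-analogue of \clm{K_eff_model_expansion_exponential_rate}. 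The nucleus-electron sum is controlled by \clm{K_complete_model_convergence_second_G_K_term}: its case (ii) applied at $i=j_0$ (with $y_n=R_{j_0}-R_{j_0}=0$) and case (i) at $i\neq j_0$ together give $-\sum_i\int G_{\K_N}(\cdot-R_i)|w_{c_n}|^2=-c_n\int|Q_\star|^2/|z|\,dz+o(c_n)$. For $\tfrac12 D_{\K_N}(|w_{c_n}|^2,|w_{c_n}|^2)$ I would prove a two-variable analogue of \clm{K_complete_model_convergence_second_G_K_term} by splitting $G_{\K_N}(\zeta)=|\zeta|^{-1}+(\text{bounded})$ in a neighbourhood of $0$ (and around each other lattice point), using the $L^p$-concentration of $|\tilde w_n|^2$ at $0$ together with the exponential decay of $\tilde w_n$ near $\partial\K_{c_n}$ (the $E$-version of \clm{K_eff_model_minimizer_exponential_decr}); this extracts $\tfrac{c_n}{2}\iint|Q_\star|^2|Q_\star|^2/|z-z'|\,dz\,dz'+o(c_n)$. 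Summing these three expansions, ${E}_{\K_N,N^3\lambda}(c_n)\ge c_n^2J_{\R^3,N^3\lambda}+c_n\,T(Q_\star)+o(c_n)$ with $T(Q):=\tfrac12\iint|Q|^2|Q|^2/|x-y|-\int|Q|^2/|x|$. Comparing with the upper bound forces $T(Q_\star)=\inf T=S(N^3\lambda)$, which simultaneously proves \eqref{K_complete_model_expansion_E} and the additional minimization property of $Q_\star$ claimed in \cth{main_result}.

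\textbf{Main obstacle.} The delicate step is upgrading the Hartree contribution to the exact rate $c_n$, i.e.\ genuinely neutralising the $O(1)$ Madelung-type remainder and the long-range periodic effects against the whole-space Coulomb kernel $|z-z'|^{-1}$. This requires two ingredients already present in the paper: the uniform pointwise bound $|G_{\K_N}(\zeta)-|\zeta|^{-1}|\lesssim 1$ of \clm{G_K_in_Lp} and the exponential localisation of $\tilde w_n$ near $\partial\K_{c_n}$ from \clm{K_eff_model_minimizer_exponential_decr}; their combination should reduce the Hartree analysis to a two-variable version of \clm{K_complete_model_convergence_second_G_K_term}. The remaining ingredients (strong convergence in $L^p$, Euler--Lagrange multiplier convergence, uniform $H^1$ bounds) are soft consequences of the concentration-compactness framework already set up in Section~\ref{section_symm_break_concentration_compactness}.
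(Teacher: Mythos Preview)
Your overall matched-asymptotics structure (upper bound via a localised trial profile, lower bound via the minimizers and \cpr{K_complete_model_localization_minimizer}, then matching to force $T(Q_\star)=S$) is exactly the paper's route; the treatment of ${\mathscr J}_{\K_N,c}$ and of the nucleus--electron term via \clm{K_complete_model_convergence_second_G_K_term} also coincides.

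The one substantive difference is your handling of the Hartree term $D_{\K_N}(|w_{c_n}|^2,|w_{c_n}|^2)$. You propose a two-variable analogue of \clm{K_complete_model_convergence_second_G_K_term} that relies on an ``$E$-version'' of the exponential decay \clm{K_eff_model_minimizer_exponential_decr}. The paper does \emph{not} prove exponential decay for the $E_{\K_N}$-minimizers (only for the $J_{\K_N}$-minimizers), and it does not need to: its \clm{K_complete_model_convergence_D_K} shows that
\[
c^{-1}D_{\K_N}(|w_c|^2,|w_c|^2)\longrightarrow \iint_{\R^3\times\R^3}\frac{|Q|^2|Q|^2}{|x-y|}
\]
follows from the strong convergence of $\mathds{1}_{\K_{c_n}}\breve{w}_{c_n}(\cdot+x_n)$ in $L^2\cap L^{12/5}$ alone, via Hardy--Littlewood--Sobolev and a domain splitting as in the proof of \clm{cvgce_EK_JK}. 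Since this $L^{12/5}$ convergence is already delivered by \cpr{Main_term_expansion_of_minimum}, the Hartree contribution is in fact the \emph{soft} step, not the hard one. Your route would work too, but it requires proving a new pointwise-decay lemma for the full TFDW minimizers (where the singular potential $-G_{\K_N}$ enters the comparison argument), which is extra work that the paper avoids entirely.
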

\begin{proof}[Proof of~\cpr{K_complete_model_2nd_order_expansion_energy}]
In order to deal with the term $D_\K$, we first prove a convergence result similar to what we did in~\clm{K_complete_model_convergence_second_G_K_term} for term $\int G|w|^2$.
\begin{lemme}\label{K_complete_model_convergence_D_K}
Let $v_c$ be such that the rescaled function $\breve{v}_c=c^{-3/2}v_c(c^{-1} x)$ verifies
$$\mathds{1}_{\K_c}\breve{v}_c\underset{c\to\infty}{\longrightarrow} v$$
strongly in $L^2(\R^3)\cap L^{\frac{12}5}(\R^3)$, then
\begin{equation*}
{c}^{-1}D_\K({v_c}^2,{v_c}^2)\to \int_{\R^3}{\int_{\R^3}{\frac{v^2(x)v^2(y)}{|x-y|}\dd y}\dd x}=:D_{\R^3}(v^2,v^2).
\end{equation*}
\end{lemme}
\begin{proof}[Proof of~\clm{K_complete_model_convergence_D_K}]We have
\begin{multline*}
D_{\R^3}(v^2,v^2)- {c}^{-1} D_\K(v_c^2, v_c^2)\\
= D_{\R^3}(v^2, v^2-\mathds{1}_{\K_c}{\breve{v}_c}^2) + D_{\R^3}(v^2-\mathds{1}_{\K_c}{\breve{v}_c}^2,\mathds{1}_{\K_c}{\breve{v}_c}^2)\\
+ {c}^{-1} \int_{\K}{\int_{\K}{v_c^2(x)\left(|x-y|^{-1} - G_\K(x-y)\right)v_c^2(y)\dd y}\dd x}.
\end{multline*}
By the Hardy--Littlewood--Sobolev inequality and the strong convergence of $\mathds{1}_{\K_c}\breve{v}_c$ in $L^{12/5}(\R^3)$, the two first terms of the right hand side vanish.

To prove that the last term vanishes too, we split the double integral over $\K\times\K$ into several parts depending on the location of $x-y$.

We start by proving the convergence for $x-y\in \K$. By~\clm{G_K_in_Lp},
\begin{multline*}
{c}^{-1} \iint_{\substack{\K\times \K\\x-y\in \K}}{v_c^2(x)\left||x-y|^{-1} - G_\K(x-y)\right|v_c^2(y)\dd y\dd x}\\
\leq \frac{M}{c}\iint_{\substack{\K\times \K\\x-y\in \K}}v_c^2(x)v_c^2(y)\dd x\dd y\leq \frac{M}{c}\norm{v_c}_{L^2(\K)}^4=\frac{M}{c}\norm{\breve{v}_c}_{L^2(\K_c)}^4\underset{c\to\infty}{\longrightarrow}0.
\end{multline*}

When $x-y\notin \K$, we treat first the term due to $|\cdot|^{-1}$. We have
\begin{equation*}
{c}^{-1}\iint_{\substack{\K\times \K\\x-y\in 2\K\setminus \K}}{\frac{v_c^2(x)v_c^2(y)}{|x-y|}\dd y\dd x}\leq2{c}^{-1}\norm{v_c}_{L^2(\K)}^4\underset{c\to\infty}{\longrightarrow}0.
\end{equation*}

To deal with the remaining terms due to $G_\K$ when $x-y\notin \K$, we will use the same notation $\K^{\boldsymbol\sigma}$ as in the proof of~\clm{cvgce_EK_JK}. By~\eqref{equation_G_K_in_Lp}, we therefore have to prove, for $\boldsymbol\sigma\in\{-1,0,+1\}^3\setminus(0,0,0)$, the vanishing of
$$\Big|{c}^{-1}\iint_{\substack{\K\times \K\\x-y\in \K^{\boldsymbol\sigma}}}{v_c^2(x) G_\K(x-y) v_c^2(y)\dd y\dd x}\Big|\lesssim \iint_{\substack{\K_c\times \K_c\\x-y\in c\cdot\K^{\boldsymbol\sigma}}}{\frac{{\breve{v}_c}^2(x) {\breve{v}_c}^2(y)}{|x- y-c \boldsymbol\sigma|}\dd y\dd x}.$$
Let $0<\nu<\frac14$. Given that $\boldsymbol\sigma\neq(0,0,0)$, we have
$$\left\{(x,y)\in \K_c\times \K_c \left|\; x-y\in c\cdot\K^{\boldsymbol\sigma}\right.\right\} \cap B(0,c\nu)\times B(0,c\nu) =\emptyset.$$
Hence, using the Hardy--Littlewood--Sobolev inequality, we obtain
$$\Big|\frac1{c}\iint_{\substack{\K\times \K\\x-y\in \K^{\boldsymbol\sigma}}}{v_c^2(x) G_\K(x-y) v_c^2(y)\dd y\dd x}\Big|\lesssim2\norm{\breve{v}_c}_{L^{12/5}(\K_c\setminus B(0,c\nu))}^2 \norm{\breve{v}_c}_{L^{12/5}(\K_c)}^2$$
and the right hand side vanishes when $c\to0$ since $\norm{\breve{v}_c}_{L^{12/5}(\K_c\setminus B(0,c\nu))}^2$ vanishes and $\norm{\breve{v}_c}_{L^{12/5}(\K_c)}^2$ is bounded, both by the $L^{12/5}(\R^3)$-convergence of $\mathds{1}_{\K_c}\breve{v}_c$. This concludes the proof of~\clm{K_complete_model_convergence_D_K}.
\end{proof}

Let $w_c$ be a sequence of minimizers to $E_{\K_N,N^3\lambda}(c)$. By Propositions \ref{Main_term_expansion_of_minimum} and \ref{K_complete_model_localization_minimizer}, the convergence rate \eqref{K_eff_model_expansion_exponential_rate_equation}, and Lemmas \ref{K_eff_model_expansion_exponential_rate} and~\ref{K_complete_model_convergence_D_K}, we obtain
$$E_{\K_N,N^3\lambda}(c)={c}^2 J_{\R^3,N^3\lambda}+c\left(\frac12D_{\R^3}(|Q|^2,|Q|^2) - \int_{\R^3} \frac{|Q(x)|^2}{|x|} \dd x\right) +o(c),$$
where $Q$ is the minimizer of $J_{\R^3,N^3\lambda}$ to which $\mathds{1}_{c_n\cdot\K_N}\breve{w}_{c_n}(\cdot+x_n)$ converges strongly.

Let us now prove that $Q$ must also minimize the term of order $c$. We suppose that there exists a minimizer $u$ of $J_{\R^3,N^3\lambda}$ such that ${\mathscr S}(u)<{\mathscr S}(Q)$, where
$${\mathscr S}(f):=\frac12\int_{\R^3}{\int_{\R^3}{\frac{|f(x)|^2|f(y)|^2}{|x-y|}\dd y}\dd x} - \int_{\R^3} \frac{|f(x)|^2}{|x|} \dd x.$$
By arguing as in Propositions \ref{R3_eff_model_main_term_expansion_upper_bound} and \ref{K_eff_model_expansion_exponential_rate}, and defining, for a fixed small $\eta\in(0;1)$, the smooth function $\chi\in C^\infty_0(\K_N)$ verifying $0\leq\chi\leq1$, ${\chi}_{|(1-\eta)\cdot\K_N}\equiv1$, ${\chi}_{|\R^3\setminus \K_N}\equiv0$, we can prove that there exists $\nu>0$ such that
$${\mathscr J}_{\K_N,c}\left(\sqrt{N^3\lambda}\frac{u(c\cdot) \chi}{\norm{u(c\cdot) \chi}_{L^2(\K_N)}}\right)=c^2 J_{\R^3,N^3\lambda}+o(e^{-\nu c})_{c\to\infty}.$$
On the other hand, since $\frac{\sqrt{N^3\lambda}\chi(c^{-1}\cdot)}{\norm{c^{3/2}u(c\cdot) \chi}_{L^2(\K_N)}}\mathds{1}_{c\cdot\K_N}u\to u$ strongly in $L^2(\R^3)\cap L^4(\R^3)$, we apply Lemmas~\ref{K_complete_model_convergence_second_G_K_term} and~\ref{K_complete_model_convergence_D_K} to it and finally obtain
\begin{align*}
{\mathscr E}_{\K_N,c}\left(\sqrt{N^3\lambda}\frac{[u(c\cdot)\chi](\cdot-R_{j_0})}{\norm{u(c\cdot)\chi}_{L^2(\K_N)}}\right)&=c^2 J_{\R^3,N^3\lambda}+c{\mathscr S}(u)+o(c)\\
	&<c^2 J_{\R^3,N^3\lambda}+c{\mathscr S}(Q)+o(c)=E_{\K_N,N^3\lambda}(c),
\end{align*}
leading to a contradiction which finally proves that $Q$ minimizes ${\mathscr S}$ and thus concludes the proof of~\cpr{K_complete_model_2nd_order_expansion_energy}.
\end{proof}

\cth{main_result} is therefore proved combining the results of~\cpr{Main_term_expansion_of_minimum}, \cpr{K_complete_model_localization_minimizer}, \cpr{K_complete_model_number_minimizers_lowerbound} and~\cpr{K_complete_model_2nd_order_expansion_energy}.

\subsection{\texorpdfstring{Proof of~\cth{R3_eff_model_consequence_conjecture_uniqueness_and_monotony_M} on the number of minimizers}{Number of minimizers}}
\label{section_number_minimizers_under_conjecture}
The arguments developed in this section do not rely on what we have done in Section~\ref{section_second_order_expansion_E}.

We can expand the functional ${\mathscr E}_{\K,c}$ around a minimizer $w_c$ as
\begin{multline}\label{K_complete_model_expansion_around_minimizer_of_energy}
{\mathscr E}_{\K,c}(w_c+f)=E_{\K,\lambda}(c) +\pscalSM{\mathring{L}^+_c f_1,f_1}_{L^2(\K)}+\pscalSM{\mathring{L}^-_c f_2,f_2}_{L^2(\K)}-2\mu_c\pscal{w_c,f_1}_{L^2(\K)}\\
	-\mu_c\norm{f}^2_{L^2(\K)}+2D_\K(\Re(w_c\bar{f}), \Re(w_c\bar{f}))+o(\norm{f}^2_{H^1(\K)}),
\end{multline}
for $f\in H^1_{\textrm{per}}(\K,\C)$, with $f_1:=\Re(f)$, $f_2:=\Im(f)$ and where
\begin{equation}\label{K_complete_model_def_Lmoins}
\mathring{L}^-_c:=-\Delta +c_{TF}|w_c|^{\frac43}-c|w_c|^{\frac23}+\mu_c-{\mathscr G} +|w_c|^2\star G_\K
\end{equation}
and
\begin{equation}\label{K_complete_model_def_Lplus}
\mathring{L}^+_c=-\Delta+\frac73 c_{TF} |w_c|^{\frac43}-\frac53 c |w_c|^{\frac23}+\mu_c-{\mathscr G} +|w_c|^2\star G_\K,
\end{equation}
where ${\mathscr G}$ is defined by
\begin{equation*}
{\mathscr G}:=\sum\limits_{i=1}^{N^3} G_{\K_N}(\cdot-R_i).
\end{equation*}
We have used here that
\begin{multline}\label{Taylor_expansion_power_terms}
\int|w+h|^p-\int|w|^p-p\int|w|^{p-2}\Re(w \bar{h})\\
-\frac{p(p-2)}2\int_{w(\cdot)\neq0}|w|^{p-4}|\Re(w \bar{h})|^2-\frac{p}2\int|w|^{p-2}|h|^2=o\left(\norm{h}_{H^1}^2\right).
\end{multline}
for any complex-valued $w,h\in H^1$ and $2\leq p<4$ (see \cite{Ricaud-PhD} for details).

Let us suppose that~\ccjt{R3_eff_model_conjecture_uniqueness_and_monotony_M} holds and that there exist two sequences $w_c$ and $\nu_c$ of nonnegative minimizers to ${E}_{\K_N,N^3\lambda}(c)$ concentrating around the same nucleus at position $R\in\K$. Then, by \cpr{K_complete_model_localization_minimizer}, we have for $2\leq p<+\infty$ that
$$\norm{\breve{w}_{c_n}(\cdot+c_n R)-Q}_{L^p(\K_{c_n})}+\norm{\breve{\nu}_{c_n}(\cdot+c_n R)-Q}_{L^p(\K_{c_n})}\underset{n\to+\infty}{\longrightarrow}0$$
for a subsequence $c_n$. We define the real-valued $f_n:=w_{c_n}-\nu_{c_n}$, which verifies that $\normSM{\breve{f}_n}_{H^2_{\textrm{per}}(\K_{c_n})}$ uniformly bounded and, for $c_n>0$, the orthogonality properties
\begin{equation}\label{Weinstein_ortho_key_property_1}
\pscalSM{w_{c_n}+\nu_{c_n},f_n}_{L^2_{\textrm{per}}(\K)}=\pscalSM{\breve{w}_{c_n}+\breve{\nu}_{c_n},\breve{f}_n}_{L^2_{\textrm{per}}(\K_{c_n})}=0
\end{equation}
and
\begin{equation}\label{Weinstein_ortho_key_property_2}
\pscalSM{{\mathscr G}({c_n}^{-1}{\cdot}),\nabla((\breve{w}_{c_n}+\breve{\nu}_{c_n})\breve{f}_n)}_{L^2_{\textrm{per}}(\K_{c_n})}=0
\end{equation}
Indeed, the fact that $\nu_c$ and $w_c$ are real-valued gives the orthogonality \eqref{Weinstein_ortho_key_property_1}. Moreover, the orthogonality property stated in the following lemma leads to~\eqref{Weinstein_ortho_key_property_2}.
\begin{lemme}\label{K_complete_model_w_ortho_to_GtimesNabla_w}
If $w_c$ is a real-valued minimizer to $E_{\K,\lambda}(c)$, then $w_c$ is orthogonal to ${\mathscr G}\nabla{w_c}$.
\end{lemme}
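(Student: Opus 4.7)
The plan is to test the Euler--Lagrange equation satisfied by $w_c$ against each partial derivative $\partial_i w_c$, $i\in\{1,2,3\}$, and to exploit the periodicity of $w_c$ together with the parity of the periodic Coulomb kernel in order to reduce the resulting identity to the claimed orthogonality $\int w_c\,{\mathscr G}\,\partial_i w_c=0$. By the equivalent of~\cpr{K_complete_model_EulerLagrange} on the relevant cell, $w_c\in H^2_{\textrm{per}}$, so $\partial_i w_c\in H^1_{\textrm{per}}$ is an admissible test function and all integrals below will make sense.

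First I would dispose of the \emph{local} contributions to the tested identity by noting that each of them is, up to a numerical constant, the integral over a periodic cell of a total derivative of a periodic $L^1$ function, and hence vanishes. For instance the kinetic contribution equals $\tfrac12\int\partial_i|\nabla w_c|^2=0$ and the Thomas--Fermi contribution equals $\tfrac{3}{10}\int\partial_i w_c^{10/3}=0$; the Dirac term, the mass-constraint term, and the Lagrange-multiplier term $\mu_{w_c}\int w_c\partial_i w_c$ are treated in exactly the same way.

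Next I would handle the Hartree self-interaction term using the antisymmetry of $\nabla G_{\K_N}$, where $G_{\K_N}$ denotes the periodic Coulomb kernel on the cell under consideration. Integration by parts and Fubini turn it into a double integral of $\partial_{x_i}G_{\K_N}(x-y)\,w_c^2(x)\,w_c^2(y)$ over $\K_N\times\K_N$. Since~\eqref{Laplacian_G_K_equation} together with the $x\mapsto -x$ invariance of the lattice force $G_{\K_N}$ to be even, its gradient is odd; the swap $x\leftrightarrow y$ then shows that the double integral equals its own negative and hence vanishes. The only surviving contribution in the tested Euler--Lagrange identity is therefore $-\int{\mathscr G}\,w_c\,\partial_i w_c$, which must then vanish --- that is exactly the claimed orthogonality.

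The main technical point to be navigated is justifying the integrations by parts in the presence of the Coulomb singularities of ${\mathscr G}$ and $G_{\K_N}$. I would handle this routinely by mollification: convolving the kernels with a standard mollifier, performing the now-smooth integrations by parts, and passing to the limit using~\clm{G_K_in_Lp} (which gives ${\mathscr G}\in L^p$ for $p<3$ and $\nabla{\mathscr G}\in L^q$ for $q<3/2$) together with $w_c^2\in W^{1,\infty}$ inherited from the Sobolev embedding $H^2\hookrightarrow L^\infty$ in three dimensions.
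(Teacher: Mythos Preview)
Your proof is correct. It is essentially the Euler--Lagrange reformulation of the paper's argument, which instead expands ${\mathscr E}_{\K,c}(w_c(\cdot+\tau))$ directly: the paper observes that the kinetic, Thomas--Fermi, Dirac, and Hartree terms are manifestly translation-invariant, so the first variation in $\tau$ reduces immediately to $-2\tau\cdot\int{\mathscr G}\,w_c\nabla w_c$, which must vanish at a minimizer. Your route---testing the Euler--Lagrange equation against $\partial_i w_c$---is the same computation unpacked term by term; in particular your antisymmetry argument for the Hartree contribution is exactly what underlies the paper's one-word claim that $D_\K$ is translation-invariant. The paper's version is shorter and avoids the mollification discussion, since translation invariance of each term holds at the level of the functional without touching the singularity; your version has the minor advantage of making the mechanism for each term explicit.
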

\begin{proof}[Proof of~\clm{K_complete_model_w_ortho_to_GtimesNabla_w}]
As mentioned in~\cpr{K_complete_model_number_minimizers_lowerbound}, the four first terms of ${\mathscr E}_{\K,c}$ are invariant under any space translations thus we have
$${\mathscr E}_{\K,c}(w_c(\cdot+\mathbf{\tau}))=E_{\K,\lambda}(c) -2\mathbf\tau\boldsymbol{\cdot}\int_\K {\mathscr G} \Re(w_c \nabla \bar{w}_c) + O(|\mathbf\tau|^2).$$
Hence $\pscal{{\mathscr G},\Re\left(w_c\nabla \bar{w}_c\right)}_{L^2(\K)}=0$ for any minimizer $w_c$. Since ${\mathscr G}$ is real-valued, then $\pscal{w_c, {\mathscr G}\nabla w_c}_{L^2(\K)} =0$ if $w_c$ is a real-valued minimizer.
\end{proof}

By property~\eqref{Weinstein_ortho_key_property_2} together with $D_\K(h,h)\geq0$ (\clm{G_K_in_Lp}) and
$$2\pscalSM{\breve{w}_n,\breve{f}_n}_{L^2(\K_{c_n})}+\normSM{\breve{f}_n}^2_{L^2(\K_{c_n})}=\pscalSM{\breve{w}_n+\breve{\nu}_n,\breve{f}_n}_{L^2(\K_{c_n})}=0,$$
we obtain from~\eqref{K_complete_model_expansion_around_minimizer_of_energy} that
$$E_{\K,\lambda}(c_n)={\mathscr E}_{\K,c_n}(\nu_{c_n})\geq E_{\K,\lambda}(c_n) +{c_n}^2\pscalSM{L^+_n \breve{f}_n,\breve{f}_n}_{\K_{c_n}}+o(\norm{f_n}^2_{H^1(\K)})$$
where the operator $L^+_n$ is defined on $L^2(\K_{c_n})$ by
\begin{equation}\label{K_complete_model_def_Lplus_bigbox}
L^+_n=-\Delta+\frac73 c_{TF} |\breve{w}_c|^{\frac43}-\frac53 |\breve{w}_c|^{\frac23}+\frac{\mu_{c_n}}{{c_n}^2}+{c_n}^{-2}[-{\mathscr G}+|w_{c_n}|^2\star G_\K]({c_n}^{-1}\cdot).
\end{equation}
Therefore, by the ellipticity result $\pscalSM{L^+_n \breve{f}_n,\breve{f}_n}_{L^2(\K_{c_n})}\geq C\normSM{\breve{f}_n}_{H^1(\K_{c_n})}^2\geq0$ of the next proposition, which rely on~\ccjt{R3_eff_model_conjecture_uniqueness_and_monotony_M}, we obtain (for $c_n$ large enough) that
\begin{align*}
0&\geq C{c_n}^2\normSM{\breve{f}_n}_{H^1(\K_{c_n})}^2+o(\norm{f_n}^2_{H^1(\K)})=C{c_n}^2\normSM{\breve{f}_n}_{H^1(\K_{c_n})}^2+o({c_n}^2\normSM{\breve{f}_n}_{H^1(\K_{c_n})}^2)
\end{align*}
hence that $f_n\equiv0$ for $c$ large enough, i.e. $w_{c_n}\equiv \nu_{c_n}$. This means that if~\ccjt{R3_eff_model_conjecture_uniqueness_and_monotony_M} holds then there cannot be more than $N^3$ nonnegative minimizers for $c$ large enough and, together with~\cpr{K_complete_model_number_minimizers_lowerbound}, this concludes the proof of~\cth{R3_eff_model_consequence_conjecture_uniqueness_and_monotony_M}. We are thus left with the proof of the following non-degeneracy result.
\begin{prop}\label{K_complete_model_Lplus_c_elliptic}
Let $(w_c)_c$ be a sequence of minimizer to $E_{\K,\lambda}(c)$ and $L^+_n$ the associated operator as in~\eqref{K_complete_model_def_Lplus_bigbox}. Then there exists $C, c_*>0$ such that for any $c>c_*$ and any $f_n\in H^1(\K_c,\C)$ verifying the two orthogonality properties~\eqref{Weinstein_ortho_key_property_1} and~\eqref{Weinstein_ortho_key_property_2}, we have
\begin{equation}
\pscal{L^+_n f_n,f_n}_{L^2(\K_{c_n})}\geq C\norm{f_n}_{H^1(\K_{c_n})}^2.
\end{equation}
\end{prop}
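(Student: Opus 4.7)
The plan is to argue by contradiction. Suppose there exist sequences $c_n\to\infty$ and $f_n\in H^1_{\mathrm{per}}(\K_{c_n},\C)$ with $\norm{f_n}_{H^1(\K_{c_n})}=1$ satisfying both orthogonalities~\eqref{Weinstein_ortho_key_property_1}--\eqref{Weinstein_ortho_key_property_2} yet $\pscalSM{L^+_n f_n, f_n}_{L^2(\K_{c_n})}\to 0$. After a translation, place the concentration point at the origin. By~\clm{K_models_Linfinity_convergence_minimizers}, \ccr{convergence_EulerLagrange_multiplier} and~\clm{K_complete_model_convergence_second_G_K_term}, the coefficients of $L^+_n$ converge: $\breve{w}_{c_n}\to Q$ in $L^\infty_{\mathrm{loc}}$, $\mu_{c_n}/c_n^2\to\mu:=\mu_{\R^3}$, and the Coulomb correction $c_n^{-2}[-\mathscr G + |w_{c_n}|^2\star G_\K](c_n^{-1}\cdot)$ contributes at most $O(c_n^{-1})\norm{f_n}_{H^1(\K_{c_n})}^2$ to the quadratic form by Hardy's inequality on $\K_{c_n}$. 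Extract a subsequence $f_n\wto f$ weakly in $H^1_{\mathrm{loc}}(\R^3)$ (strongly in $L^p_{\mathrm{loc}}$ for $2\leq p<6$); lower semi-continuity then yields $\pscal{L^+_\mu f, f}\leq 0$.

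Next, pass to the limit in the orthogonalities. Setting $g_n:=\breve{w}_{c_n}+\breve{\nu}_{c_n}\to 2Q$, condition~\eqref{Weinstein_ortho_key_property_1} immediately yields $\pscal{Q,f}=0$. Integration by parts transforms condition~\eqref{Weinstein_ortho_key_property_2} into
\[
\int_{\K_{c_n}}(\nabla\mathscr G)(c_n^{-1}y)\,g_n(y)\,f_n(y)\,dy = 0.
\]
Using $\mathscr G(x)=|x-R_{j_0}|^{-1}+(\text{bounded})$ near the selected concentration $R_{j_0}$ (\clm{G_K_in_Lp}) and the exponential decay of $\breve{w}_{c_n}$ away from the origin (\clm{K_eff_model_minimizer_exponential_decr}), the dominant contribution comes from $|y|\lesssim 1$, where $(\nabla\mathscr G)(c_n^{-1}y)=-c_n^2\,y/|y|^3 + O(1)$. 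Substituting yields $\int(y_j/|y|^3)\,g_n f_n\,dy=O(c_n^{-2})$, and passing to the limit (via $L^p_{\mathrm{loc}}$-convergence of $g_n f_n$ and local integrability of $y_j/|y|^3$) produces
\[
\int_{\R^3}\frac{y_j}{|y|^3}\,Q(y)\,f(y)\,dy = 0,\qquad j=1,2,3.
\]

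Under~\ccjt{R3_eff_model_conjecture_uniqueness_and_monotony_M}, $M'(\mu)>0$ is equivalent to the Vakhitov--Kolokolov condition $\pscal{Q,(L^+_\mu)^{-1}Q}<0$. Combined with the spectral structure from~\cth{R3_eff_model_existence_and_nondeg} (one simple negative eigenvalue, $\Ker L^+_\mu=\vect\{\partial_{x_j}Q\}$), the standard Weinstein coercivity gives $\pscal{L^+_\mu\phi,\phi}\geq 0$ on $\{Q\}^\perp$, with equality only on $\vect\{\partial_{x_j}Q\}$. Hence $f=\sum_j\alpha_j\partial_{x_j}Q$. A direct computation in spherical coordinates using that $Q$ is radial with $Q(0)>0$ gives
\[
\int_{\R^3}\frac{y_j}{|y|^3}\,Q\,\partial_{x_k}Q\,dy=-\delta_{jk}\,\frac{2\pi}{3}\,Q(0)^2,
\]
so the limiting orthogonality forces every $\alpha_j=0$, whence $f\equiv 0$. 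Finally, Rellich--Kondrachov gives $f_n\to 0$ in $L^2_{\mathrm{loc}}$, and the exponential decay of $\breve{w}_{c_n}$ away from the origin (\clm{K_eff_model_minimizer_exponential_decr}) makes $\int\breve{w}_{c_n}^{2/3}|f_n|^2\to 0$. Together with $\mu>0$ and the control from Step~1, this yields $\liminf\pscalSM{L^+_n f_n,f_n}_{L^2(\K_{c_n})}\geq\min(1,\mu)>0$, contradicting the assumption.

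The main obstacle is the proper extraction of the limiting orthogonality in the right scaling: a naive limit of $\mathscr G(c_n^{-1}\cdot)$ gives only a trivial condition, satisfied by any translation mode. Exploiting the $c_n^2$-singularity of $(\nabla\mathscr G)(c_n^{-1}\cdot)$ after integration by parts, together with the non-degenerate diagonal matrix $\int(y_j/|y|^3)Q\,\partial_{x_k}Q\,dy$, is precisely what captures the zero-modes of $L^+_\mu$ and delivers the contradiction.
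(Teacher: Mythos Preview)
Your overall strategy coincides with the paper's: argue by contradiction, extract a weak limit $f$, pass the two orthogonality conditions to the limit, invoke a Weinstein--type coercivity for $L^+_\mu$ under \ccjt{R3_eff_model_conjecture_uniqueness_and_monotony_M} to force $f\equiv0$, and then derive a contradiction from $\norm{f_n}_{H^1}=1$. The first orthogonality and the final contradiction step are handled exactly as in the paper, and your reduction to $f=\sum_j\alpha_j\partial_{x_j}Q$ with the explicit computation $\int (y_j/|y|^3)Q\,\partial_{x_k}Q=-\delta_{jk}\tfrac{2\pi}{3}Q(0)^2$ is correct.

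There is, however, a genuine gap in your passage to the limit for the second orthogonality. You invoke \clm{K_eff_model_minimizer_exponential_decr} twice, but that lemma concerns minimizers of $J_{\K,\lambda}(c)$, not of $E_{\K,\lambda}(c)$; no exponential decay for $\breve{w}_{c_n}$ is established in the paper. More seriously, after your integration by parts, the kernel $(\nabla\mathscr G)(c_n^{-1}y)$ carries singularities of order $c_n^2|y-c_n(R_i-R_{j_0})|^{-2}$ at \emph{every} nucleus, not only the selected one. The far-nucleus contributions are therefore formally of the same order $c_n$ as the main term and cannot be dismissed by ``the dominant contribution comes from $|y|\lesssim1$''; you need the strong $H^1$-convergence $g_n\to2Q$ (so that $\|g_n/|y-z_n|\|_{L^2}\to0$ via Hardy and $\nabla g_n\to2\nabla Q$ in $L^2$) to show they are $o(c_n)$. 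Likewise, the final passage $\int(y_j/|y|^3)g_nf_n\to2\int(y_j/|y|^3)Qf$ is not covered by ``$L^p_{\mathrm{loc}}$-convergence and local integrability of $y_j/|y|^3$'': since $y_j/|y|^3\notin L^2_{\mathrm{loc}}(\R^3)$, it does not pair with the weak $L^2$-limit of $f_n$. One needs a three-zone splitting ($|y|<\epsilon$, $\epsilon<|y|<R$, $|y|>R$), using Hardy to make the $|y|<\epsilon$ piece uniformly small in~$n$.

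The paper sidesteps both difficulties by \emph{not} integrating by parts before taking the limit: it applies \clm{K_complete_model_convergence_second_G_K_term} directly to $\pscalSM{\mathscr G(c_n^{-1}\cdot),\nabla(g_nf_n)}$, exploiting that $\mathscr G$ has only an $|x|^{-1}$ singularity (hence pairs with Hardy and one weak/one strong factor), obtains $\int|y|^{-1}\nabla(Qf)=0$, and only then integrates by parts on the limit. If you want to keep your IBP-first route, you must replace the appeal to exponential decay by the strong $H^1$-convergence of $g_n$ to $2Q$ and justify the limit near the origin by the $\epsilon$--$R$ splitting above.
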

\begin{proof}[Proof of~\cpr{K_complete_model_Lplus_c_elliptic}]
Following ideas in~\cite{Weinstein-85}, we define
$$\alpha_n:=\inf\limits_{\substack{f\in H^1(\K_c)\\ \pscalSM{\breve{w}_n+\breve{\nu}_n,f}_{L^2(\K_{c_n})}=0\\ \pscalSM{{\mathscr G}({c_n}^{-1}{\cdot}),\nabla((\breve{w}_{c_n}+\breve{\nu}_{c_n})f)}_{L^2(\K_{c_n})}=0}}\frac{\pscal{L^+_n f,f}_{L^2(\K_{c_n})}}{\norm{f}_{H^1(\K_{c_n})}^2}$$
and we will show that $\alpha_n>0$ for $c$ large enough.
\begin{lemme}\label{K_complete_model_Lplus_c_elliptic_liminf_ineq}
Let $(w_c)_c$ be a sequence of minimizer to $E_{\K,\lambda}(c)$ and $Q$ the positive minimizer of $J_{\R^3,\lambda}$ associated with the converging subsequence $\mathds{1}_{\K_{c_n}}\breve{w}_{c_n}(\cdot+c_nR)$. Define as in~\eqref{R3_eff_model_def_Lplus} the operator $L^+_\mu$ associated with $Q$ and, as in~\eqref{K_complete_model_def_Lplus_bigbox}, $L^+_n$ associated with $w_{c_n}$. Let $(f_n)_n$ be a uniformly bounded sequence of $H^1_{\textrm{per}}(\K_{c_n})$ then
$$\pscalSM{L^+_\mu f,f}_{L^2(\R^3)}\leq \liminf\limits_{n\to\infty} \pscalSM{L^+_n f_{n},f_{n}}_{L^2(\K_{c_n})},$$
with $f$ such that $\mathds{1}_{\K_{c_n}} f_{n}(\cdot+c_n R)\wto f$ weakly converges in $L^2(\R^3)$.
\end{lemme}
\begin{proof}[Proof of~\clm{K_complete_model_Lplus_c_elliptic_liminf_ineq}]
Up to the extraction of a subsequence (that we will omit in the notation), there exists $f$ such that $\mathds{1}_{\K_{c_n}} f_{n}(\cdot+c_n R)\wto f$ weakly in $L^2(\R^3)$ because $f_n(\cdot+c_n R)$ is uniformly bounded in $H^1(\K_{c_n})$. Thus, by~\clm{K_eff_dilated_strongconvergence_results_of_weakly_convergent_sequence},
$$\liminf\limits_{c\to\infty}\norm{\nabla f_n}_{L^2(\K_{c_n})}=\liminf\limits_{c\to\infty}\norm{\nabla f_n(\cdot+c_n R)}_{L^2(\K_{c_n})}\geq\norm{\nabla f}_{L^2(\R^3)}.$$

Moreover, $\norm{f_n}_{H^1(\K_{c_n})}$ is uniformly bounded by hypothesis thus
$${c_n}^{-2}\pscalSM{{\mathscr G}({c_n}^{-1}\cdot) f_n,f_n}\leq {c_n}^{-\frac12}\norm{{\mathscr G}}_{L^2(\K)}\norm{f_n}_{L^4(\K_{c_n})}^2\underset{c\to+\infty}{\longrightarrow}0$$
and, by the same argument as the one to obtain \eqref{D_K_HLS}, we have
$${c_n}^{-2}\pscalSM{|w_{c_n}|^2\star G_\K({c_n}^{-1}\cdot) f_n,f_n}\lesssim {c_n}^{-1}\norm{\breve{w}_{c_n}}_{L^{\frac{12}5}(\K_{c_n})}^2\norm{f_n}_{L^{\frac{12}5}(\K_{c_n})}^2\underset{c\to+\infty}{\longrightarrow}0.$$
Moreover, by \cpr{Main_term_expansion_of_minimum}, $\mathds{1}_{\K_{c_n}}\breve{w}_n(\cdot+c_n R)$ strongly converges in $L^q(\R^3)$ for $2\leq q<6$ hence for $p=\frac23$ and $p=\frac43$ we have
$$\pscalSM{|\breve{w}_{c_n}|^p ,|f_n|^2}_{L^2(\K_{c_n})}=\pscalSM{|\breve{w}_{c_n}(\cdot+c_n R)|^p ,|f_n(\cdot+c_n R)|^2}_{L^2(\K_{c_n})}\to\pscalSM{|Q|^p ,|f|^2}_{L^2(\R^3)}.$$

Finally, by~\ccr{convergence_EulerLagrange_multiplier} and weak convergence in $L^2(\R^3)$ of $\mathds{1}_{\K_{c_n}}f_n(\cdot+c_n R)$,
$$\liminf\limits_{n\to\infty}\frac{\mu_{c_n}}{{c_n}^2}\norm{f_n}^2_{L^2(\K_{c_n})}=\liminf\limits_{n\to\infty}\frac{\mu_{c_n}}{{c_n}^2}\norm{f_n(\cdot+c_n R)}^2_{L^2(\K_{c_n})}\geq\mu\norm{f}^2_{L^2(\R^3)}.$$
This concludes the proof of~\clm{K_complete_model_Lplus_c_elliptic_liminf_ineq}.
\end{proof}
We now prove that $\alpha_n$ cannot tend to zero. Let suppose it does, then there exists a sequence of $f_n \in H^1(\K_{c_n})$ such that $\norm{f_n}_{H^1(\K_{c_n})}=1$, $\pscalSM{\breve{w}_{c_n}+ \breve{\nu}_{c_n}, f_n}_{L^2_{\textrm{per}}(\K_{c_n})}=0$ and $\pscalSM{{\mathscr G}({c_n}^{-1}{\cdot}),\nabla((\breve{w}_{c_n}+\breve{\nu}_{c_n})\breve{f}_n)}_{L^2_{\textrm{per}}(\K_{c_n})}=0$, with $\pscal{L^+_n f_n,f_n}_{L^2(\K_{c_n})}\to0$.

Thus, by the uniform boundedness of $\norm{f_n}_{H^1(\K_{c_n})}$, $\mathds{1}_{\K_{c_n}}f_n$ converges weakly in $L^2(\R^3)\cap L^6(\R^3)$ to a $f$ which verifies $\pscalSM{L^+_\mu f,f}_{L^2(\R^3)}\leq0$, by~\clm{K_complete_model_Lplus_c_elliptic_liminf_ineq}, and $\norm{f}_{H^1(\K_{c_n})}\leq1$. We claim that $f$ also solves the orthogonality properties
$$\pscalSM{f,Q}_{L^2(\R^3)}=0 \quad \textrm{ and } \quad \pscalSM{f,Q\nabla|\cdot|^{-1}}_{L^2(\R^3)}=0.$$
Indeed, on one hand we deduce from the uniqueness of $Q\geq0$ (given by the conjecture), that $\mathds{1}_{\K_{c_n}}(\breve{w}_{c_n}(\cdot+c_n R)+ \breve{\nu}_{c_n}(\cdot+c_n R))\to2Q$ in $L^2(\R^3)\cap L^{6-}(\R^3)$. This, together with \eqref{Weinstein_ortho_key_property_1} and the weak convergence of the $f_n$'s leads to $\pscalSM{f,Q}_{L^2(\R^3)}=0$. On another hand, the uniqueness of $Q$ gives also the $L^2(\R^3)$ strong convergence
$$\mathds{1}_{\K_{c_n}}\nabla(\breve{w}_{c_n}(\cdot+c_n R)+\breve{\nu}_{c_n}(\cdot+c_n R))\to2\nabla Q\in H^1(\R^3).$$
Thus, applying \clm{K_complete_model_convergence_second_G_K_term} on one hand to it and $\mathds{1}_{\K_{c_n}} f_{n}(\cdot+c_n R)\wto f\in H^1(\R^3)$  with the first set of conditions in \clm{K_complete_model_convergence_second_G_K_term} and, on the other hand, to $\mathds{1}_{\K_{c_n}}(\breve{w}_{c_n}(\cdot+c_n R)+ \breve{\nu}_{c_n}(\cdot+c_n R))\to2Q$ and $\mathds{1}_{\K_{c_n}} \nabla f_{n}(\cdot+c_n R)\wto \nabla f\in L^2(\R^3)$ --- which comes from \clm{K_eff_dilated_strongconvergence_results_of_weakly_convergent_sequence} --- with the second set of conditions, we obtain
$$\pscalSM{{\mathscr G}({c_n}^{-1}{\cdot}+R),\nabla[(\breve{w}_{c_n}(\cdot+c_n R)+\breve{\nu}_{c_n}(\cdot+c_n R))\breve{f}_n(\cdot+c_n R)]}_{L^2_{\textrm{per}}(\K_{c_n})}\to2\int_{\R^3}\frac{\nabla(fQ)}{|\cdot|}.$$
Finally, \eqref{Weinstein_ortho_key_property_2} implies that $\pscalSM{f,Q\nabla|\cdot|^{-1}}_{L^2(\R^3)}=-\pscalSM{\nabla(fQ),|\cdot|^{-1}}_{L^2(\R^3)}=0$ and our claim is proved.

As we will prove in \cpr{Weinstein_type_proposition}, if~\ccjt{R3_eff_model_conjecture_uniqueness_and_monotony_M} holds then these two orthogonality properties imply that there exists $\alpha>0$ such that
$$\pscalSM{L^+_\mu f,f}_{L^2(\R^3)}\geq\alpha\norm{f}_{H^1(\R^3)}^2$$
hence $f\equiv0$ due to $\pscalSM{L^+_\mu f,f}_{L^2(\R^3)}\leq0$ obtained previously. Since the terms involving a power of $|w_{c_n}|$ converge and $f\equiv0$, we have
$$o(1)=\pscal{L^+_n f_n,f_n}_{L^2(\K_{c_n})}=\norm{\nabla f_n}_{L^2(\K_{c_n})}^2+\mu\norm{f_n}_{L^2(\K_{c_n})}^2+o(1)$$
hence both norms vanish, since $\mu>0$, which means that $\norm{f_n}_{H^1(\K_{c_n})}\to0$. This contradicts $\norm{f_n}_{H^1(\K_{c_n})}=1$ and concludes the proof that $\alpha_n$ cannot vanish, hence that of~\cpr{K_complete_model_Lplus_c_elliptic}.
\end{proof}

We are left with the proof of~\cpr{Weinstein_type_proposition}.
\begin{prop}\label{Weinstein_type_proposition}
If~\ccjt{R3_eff_model_conjecture_uniqueness_and_monotony_M} holds then there exists $\alpha>0$ such that
\begin{equation}\label{Weinstein_type_proposition_equation}
\pscalSM{L^+_\mu f,f}_{L^2(\R^3)}\geq\alpha\norm{f}_{H^1(\R^3)}^2,
\end{equation}
for all $f\in H^1(\R^3)$ such that $\pscalSM{f,Q}_{L^2(\R^3)}=0$ and $\pscalSM{f,Q\nabla|\cdot|^{-1}}_{L^2(\R^3)}=0$.
\end{prop}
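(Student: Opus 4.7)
The plan is to adapt the Weinstein coercivity strategy of \cite{Weinstein-85,GriShaStra-87}. Three ingredients are needed: the spectral structure of $L^+_\mu$, the sign of $\pscal{(L^+_\mu)^{-1}Q,Q}$ (where \ccjt{R3_eff_model_conjecture_uniqueness_and_monotony_M} enters), and the fact that the extra orthogonality to $Q\nabla|\cdot|^{-1}$ quotients out the translational kernel $\vect\{\partial_{x_i}Q\}$.

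First I would assemble the spectrum of $L^+_\mu$. By \cth{R3_eff_model_existence_and_nondeg}, $\Ker L^+_\mu=\vect\{\partial_{x_1}Q,\partial_{x_2}Q,\partial_{x_3}Q\}$, and the exponential decay of $Q$ (standard for an Euler--Lagrange equation with strictly negative Lagrange multiplier, already used in \clm{R3_eff_model_main_term_expansion_upper_bound}) makes the multiplication operators $Q^{4/3}$ and $Q^{2/3}$ relatively compact, so the essential spectrum of $L^+_\mu$ is $[\mu,+\infty)$. Because $Q$ minimizes $J_{\R^3}(\lambda)$ under a mass constraint, $\pscal{L^+_\mu f,f}\geq0$ on $\{f:\pscal{f,Q}=0\}$, which bounds the Morse index of $L^+_\mu$ by one. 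Differentiating \eqref{R3_eff_model_EulerLagrange} in $\mu$ (using the real-analyticity of $\mu\mapsto Q_\mu$, as in \cpr{R3_eff_model_finite_number_sol_EL}) yields $L^+_\mu(\partial_\mu Q_\mu)=-Q_\mu$, and since $Q_\mu\perp\Ker L^+_\mu$ by integration by parts, $L^+_\mu$ is invertible on $(\Ker L^+_\mu)^\perp$ with
\begin{equation*}
\pscal{(L^+_\mu)^{-1}Q_\mu,Q_\mu}=-\pscal{Q_\mu,\partial_\mu Q_\mu}=-\tfrac12 M'(\mu).
\end{equation*}
Under \ccjt{R3_eff_model_conjecture_uniqueness_and_monotony_M} this is strictly negative, so the Morse index of $L^+_\mu$ equals one.

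The classical Weinstein positivity lemma (cf.\ the appendix of \cite{Weinstein-85}) then yields $\pscal{L^+_\mu f,f}\geq0$ for every $f$ with $\pscal{f,Q}=0$, the quadratic-form null space being exactly $\vect\{\partial_{x_i}Q\}$. To remove this remaining null space, I would compute the $3\times3$ matrix $A_{ij}:=\pscal{\partial_{x_i}Q,Q\,\partial_{x_j}|\cdot|^{-1}}_{L^2(\R^3)}$. Using spherical coordinates $x=r\omega$ and the radiality of $Q$,
\begin{equation*}
A_{ij}=-\Big(\int_0^\infty Q(r)\,Q'(r)\,dr\Big)\Big(\int_{S^2}\omega_i\omega_j\,d\omega\Big)=\frac{2\pi}{3}\,Q(0)^2\,\delta_{ij},
\end{equation*}
which is positive definite since $Q(0)>0$. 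The three additional orthogonalities $\pscal{f,Q\nabla|\cdot|^{-1}}=0$ combined with $\pscal{f,Q}=0$ therefore force the projection of $f$ onto $\vect\{\partial_{x_i}Q\}$ to vanish, so $\pscal{L^+_\mu f,f}>0$ for every nonzero admissible $f$.

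Finally, to upgrade strict positivity to the $H^1$-coercivity \eqref{Weinstein_type_proposition_equation}, I would argue by contradiction and compactness. If no $\alpha>0$ worked, a normalized sequence $(f_n)$ with $\norm{f_n}_{H^1}=1$ and $\pscal{L^+_\mu f_n,f_n}\to 0$ would admit a weak limit $f_n\wto f$ in $H^1(\R^3)$ still satisfying the orthogonalities; the exponential decay of $Q$ makes multiplication by $Q^{4/3}$ and $Q^{2/3}$ relatively compact, so $\int V f_n^2\to\int Vf^2$ with $V=\tfrac73 c_{TF}Q^{4/3}-\tfrac53 Q^{2/3}$, and weak lower semicontinuity combined with the strict positivity just established forces $f=0$. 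Then $\norm{\nabla f_n}^2_{L^2(\R^3)}+\mu\norm{f_n}^2_{L^2(\R^3)}\to 0$, contradicting $\norm{f_n}_{H^1}=1$ since $\mu>0$. The main obstacle is translating the monotonicity conjecture on $M$ into the strict inequality $\pscal{(L^+_\mu)^{-1}Q,Q}<0$ (treated in the excerpt as effectively equivalent to the conjecture itself); once that sign and the invertibility of $A_{ij}$ are in place, the rest is a routine application of the Weinstein framework.
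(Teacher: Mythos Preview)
Your argument is correct and reaches the same conclusion, but it is organised differently from the paper's proof. The paper decomposes $L^2(\R^3)$ into spherical–harmonics sectors ${\mathcal H}_{(\ell)}$ and treats each $\ell$ separately: for $\ell\geq2$ a monotonicity argument $L^+_{\mu,\ell}\geq L^+_{\mu,1}+\tfrac{2\ell}{r^2}$ gives a uniform gap; for $\ell=1$ the Perron--Frobenius property identifies $-Q'>0$ as the ground state of $L^+_{\mu,1}$ at eigenvalue $0$, and then the orthogonality to $Q\,x_i|x|^{-3}$ (exactly your matrix $A_{ij}$ restricted to $\ell=1$) rules out the kernel; only the radial sector $\ell=0$ invokes \ccjt{R3_eff_model_conjecture_uniqueness_and_monotony_M}, via the same Lagrange-multiplier computation $L^+_\mu f=\beta Q\Rightarrow 0=\beta\pscalSM{(L^+_\mu)^{-1}Q,Q}$ that you use. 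Your route bypasses the harmonic splitting: you use Weinstein's lemma globally (Morse index one plus $\pscalSM{(L^+_\mu)^{-1}Q,Q}<0$) to get that the quadratic-form null set on $\{Q\}^\perp$ is exactly $\Ker L^+_\mu$, and then the invertibility of $A_{ij}=\tfrac{2\pi}{3}Q(0)^2\delta_{ij}$ eliminates the translational directions in one stroke. The trade-off is that the paper's decomposition makes explicit that the conjecture is only needed in the radial sector (the $\ell\geq1$ coercivity is unconditional), whereas your argument is shorter but uses the conjecture as a single global input.

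One phrasing to tighten: the additional orthogonalities do not make the \emph{projection} of a general admissible $f$ onto $\vect\{\partial_{x_i}Q\}$ vanish; what they do (and what you actually need) is ensure that the only element of $\vect\{\partial_{x_i}Q\}$ satisfying them is $0$. Combined with the Weinstein step this gives strict positivity on the constraint set, and your compactness upgrade to $H^1$-coercivity is then identical to the paper's.
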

The proof of this proposition uses the celebrated method of Weinstein~\cite{Weinstein-85} and Grillakis--Shatah--Strauss~\cite{GriShaStra-87}. The idea is the following. Using a Perron-Frobenius argument in each spherical harmonics sector as in~\cite{Weinstein-85, Lenzmann-09,LewRot-15}, one obtains that the linearized operator $L^+_\mu$ has only one negative eigenvalue with (unknown) eigenfunction $\phi_0$ in the sector of angular momentum $\ell=0$, and has $0$ as eigenvalue of multiplicity three with corresponding eigenfunctions $\partial_{x_i} Q$. On the orthogonal of these four functions, $L^+_\mu$ is positive definite. In our setting, we have to study $L^+_\mu$ on the orthogonal of $Q$ and the three functions $x_i|x|^{-3}Q(x)$ which are different from the mentioned eigenfunctions. Arguing as in~\cite{Weinstein-85}, we show below that the restriction of $L^+_\mu$ to the angular momentum sector $\ell=1$ is positive definite on the orthogonal of the functions $x_i|x|^{-3}Q(x)$. The argument is general and actually works for functions of the form $\partial_{x_i}(\eta(|x|))Q(x)=x_i|x|^{-1}\eta'(|x|)Q(x)$ where $\eta$ is any non constant monotonic function on $\R$. On the other hand, the argument is more subtle for $Q$ in the angular momentum sector $\ell=0$ and this is where we need~\ccjt{R3_eff_model_conjecture_uniqueness_and_monotony_M}.
\begin{proof}[Proof of~\cpr{Weinstein_type_proposition}]First we note that it is obviously enough to prove it for $f$ real valued but also that it is enough to prove
\begin{equation}\label{Weinstein_type_proposition_equation_L2}
\pscalSM{L^+_\mu f,f}_{L^2(\R^3)}\geq\alpha\norm{f}_{L^2(\R^3)}^2
\end{equation}
with $\alpha>0$. Indeed, if $f$ verifies~\eqref{Weinstein_type_proposition_equation_L2} then, for any $\epsilon>0$, we have
$$\pscalSM{L^+_\mu f,f}_{L^2}\geq \left((1-\epsilon)\alpha+\epsilon\left(\mu - \frac73c_{TF}\norm{Q}_{L^\infty}^{\frac43}-\frac53\norm{Q}_{L^\infty}^{\frac23}\right)\right)\norm{f}_{L^2}^2+\epsilon\norm{\nabla f}_{L^2}^2,$$
hence $f$ verifies~\eqref{Weinstein_type_proposition_equation} too (for a smaller $\alpha>0$).

Since $Q$ is a radial function, the operator $L^+_\mu$ commutes with rotations in $\R^3$ and we will therefore decompose $L^2(\R^3)$ using spherical harmonics: for any $f\in L^2(\R^3)$, 
$$f(x)=\sum\limits_{\ell=0}^\infty\sum\limits_{m=-\ell}^\ell f_{\ell}^{m}(r)Y_{\ell}^{m}(\Omega),$$
where $x=r\Omega$ with $r=|x|$ and $\Omega\in{\mathbb S}^2$. This yields the direct decomposition
$$L^2(\R^3)=\bigoplus\limits_{\ell=0}^\infty {\mathcal H}_{(\ell)}$$
and $L^+_\mu$ maps into itself each
$${\mathcal H}_{(\ell)}:=L^2(\R_+, r^2\dd r)\otimes\vect\{Y_{\ell}^{m}\}_{m=-\ell}^\ell.$$
Using the well-known expression of $-\Delta$ on ${\mathcal H}_{(\ell)}$, we obtain that
$$L^+_\mu=\bigoplus\limits_{\ell=0}^\infty L^+_{\mu,\ell}$$
where the $L^+_{\mu,\ell}$'s are operators acting on $L^2(\R_+, r^2\dd r)$ given by
$$L^+_{\mu,\ell}=-\frac{\dd^2}{\dd r^2}-\frac2r\frac{\dd}{\dd r} +\frac{\ell(\ell+1)}{r^2}+\frac73 c_{TF} |Q_\mu|^{\frac43}-\frac53 |Q_\mu|^{\frac23}+\mu.$$

We thus prove inequality \eqref{Weinstein_type_proposition_equation_L2} by showing that there exists $\alpha>0$ such that for each $\ell$ the inequality holds for any $f\in {\mathcal H}_{(\ell)}\cap H^1(\R^3)$ verifying $\pscalSM{f,Q}=0$ and $\pscalSM{f,Q\nabla|\cdot|^{-1}}_{L^2(\R^3)}=0$.

Arguing as in~\cite{Lenzmann-09}, we have first the following result.
\begin{lemme}[Perron--Frobenius property of the $L^+_{\mu,\ell}$]\label{Perron_Frobenius_property}
Each $L^+_{\mu,\ell}$ has the Perron--Frobenius property: its lowest eigenvalue $e_{\mu,\ell}$ is simple and the corresponding eigenfunction $\phi_\ell(r)$ is positive.
\end{lemme}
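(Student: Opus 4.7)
The plan is to establish the Perron--Frobenius property sector by sector, following the classical strategy for radial Schr\"odinger operators as in \cite{Weinstein-85,Lenzmann-09,LewRot-15}. The first step is to reduce to a one-dimensional problem via the unitary substitution $u(r)=r\phi(r)$, which maps $L^2(\R_+,r^2\,\dd r)$ to $L^2(\R_+,\dd r)$ and conjugates $L^+_{\mu,\ell}$ to the half-line Schr\"odinger operator
$$\tilde L^+_{\mu,\ell}=-\frac{\dd^2}{\dd r^2}+\frac{\ell(\ell+1)}{r^2}+W_\mu(r),\qquad W_\mu(r):=\tfrac73 c_{TF}\,Q_\mu^{4/3}-\tfrac53\,Q_\mu^{2/3}+\mu,$$
acting on $L^2(\R_+,\dd r)$ with Dirichlet boundary condition at $r=0$ (which follows from $\phi\in H^2(\R^3)\subset C(\R^3)$, so that $u(0)=0$). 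Since $Q_\mu\in L^\infty(\R^3)$ by \cth{R3_eff_model_existence_thm}, $W_\mu$ is bounded, so $\tilde L^+_{\mu,\ell}$ is self-adjoint with essential spectrum $[\mu,+\infty)$; any eigenvalue below this threshold, in particular the putative $e_{\mu,\ell}$, is therefore isolated.

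For positivity, I would argue that if $u$ is a real-valued eigenfunction of $\tilde L^+_{\mu,\ell}$ at $e_{\mu,\ell}$, then $|u|$ has the same Dirichlet energy (since $|u|'=\pm u'$ almost everywhere) and the same potential energy, so $|u|$ is also a minimizer of the Rayleigh quotient and satisfies $\tilde L^+_{\mu,\ell}|u|=e_{\mu,\ell}|u|$ weakly. The strong maximum principle, applied to this second-order ODE with locally bounded coefficients on $(0,+\infty)$, then forces $|u|>0$ throughout, so the ground-state eigenfunction $\phi_\ell$ can be taken strictly positive.

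Simplicity then follows from the Sturm--Liouville structure. The eigenvalue equation is a linear second-order ODE on the half-line, so its solution space is two-dimensional; the Dirichlet condition $u(0)=0$ selects at most a one-dimensional subspace of candidate eigenfunctions, and the requirement of square-integrability at infinity leaves at most a one-dimensional eigenspace. Equivalently, two positive eigenfunctions at $e_{\mu,\ell}$ would have Wronskian $u_1 u_2'-u_2 u_1'$ that is constant by the ODE and vanishes at $r=0$, forcing linear dependence.

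The main obstacle is the boundary behavior at $r=0$ in the $\ell=0$ sector, where no centrifugal barrier is present and the substitution $u=r\phi$ must be justified as a true unitary equivalence between the natural operator domains. Once one verifies that $H^2(\R^3)\cap\mathscr{H}_{(0)}$ corresponds to the correct Sobolev space of half-line functions vanishing at the origin, the Perron--Frobenius argument proceeds uniformly across all $\ell\geq 0$.
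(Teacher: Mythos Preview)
Your argument is correct and is precisely the standard Perron--Frobenius argument for radial Schr\"odinger operators that the paper invokes by citation: the paper gives no proof of this lemma beyond the sentence ``Arguing as in~\cite{Lenzmann-09}'', and your reduction via $u=r\phi$ to a half-line Sturm--Liouville problem, followed by the Wronskian argument for simplicity and the energy/maximum-principle argument for positivity, is exactly what that reference does. One small remark: in your Wronskian step you say it vanishes at $r=0$, which is fine, but the argument is perhaps cleaner if one observes that the Wronskian is constant and tends to zero as $r\to\infty$ (from the $L^2$ decay of eigenfunctions and their derivatives below the essential spectrum), which avoids any boundary subtlety and works uniformly in $\ell$.
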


\medskip
\noindent\textbf{Proof for the sector $\ell=1$.} We start with the case $\ell=1$ and prove that
\begin{equation}\label{Weinstein_type_proposition_definition_alpha1}
\alpha_1:=\inf\limits_{\substack{f\in{\mathcal H}_{(1)}\cap H^1(\R^3)\\ \pscalSM{f,Q\nabla|\cdot|^{-1}}_{L^2(\R^3)}=0}}\frac{\pscalSM{L^+_\mu f,f}_{L^2(\R^3)}}{\norm{f}_{L^2(\R^3)}^2}>0.
\end{equation}
Since $Q$ is radial, we have for $i=1,2,3$, that
$$\partial_{x_i}Q(x)=Q'(r)\frac{x_i}r\in {\mathcal H}_{(1)}.$$
Moreover, by the non-degeneracy result of~\cth{R3_eff_model_existence_and_nondeg}, we know that $\partial_{x_i}Q$ is an eigenfunction of $L^+_\mu$ associated with the eigenvalue $0$ hence $Q'(r)$ is an eigenfunction of $L^+_{\mu,1}$ associated with the eigenvalue $e_{\mu,1}=0$. Therefore, the fact that $Q'(r)<0$ (as proved in~\cth{R3_eff_model_existence_thm}) implies, using the Perron-Frobenius property verified by $L^+_{\mu,1}$, that $e_{\mu,1}=0$ is the lowest eigenvalue of $L^+_{\mu,1}$ and is simple with $-Q'>0$ the associated eigenfunction. Consequently, we have for any $f\in{\mathcal H}_{(1)}$ that
$$\pscalSM{L^+_\mu f,f}_{L^2(\R^3)}=\sum\limits_{m=-1}^1 \pscalSM{L^+_{\mu,1} f^{m}(r),f^{m}(r)}_{L^2(\R_+, r^2\dd r)}\geq0$$
and in particular that $\alpha_1\geq0$.

We thus suppose that $\alpha_1=0$ and prove it is impossible. Let $f_n$ be a minimizing sequence to~\eqref{Weinstein_type_proposition_definition_alpha1} with $\norm{f_n}_{L^2(\R^3)}=1$. One has
$$\norm{\nabla f_n}_{L^2(\R^3)}^2\leq\pscalSM{L^+_\mu f_n,f_n}_{L^2(\R^3)}+\frac53\norm{Q}_{L^\infty(\R^3)}^{\frac23}$$
and consequently the sequence $f_n$ is bounded in $H^1(\R^3)$. We denote by $f$ its weak limit in $H^1(\R^3)$, up to a extraction of a subsequence, which is in ${\mathcal H}_{(1)}$. We have
$$0\leq\pscalSM{L^+_\mu f,f}_{L^2(\R^3)}\leq\liminf\pscalSM{L^+_\mu f_n,f_n}_{L^2(\R^3)}=\alpha_1=0,$$
where the second inequality is due to
$$\liminf \norm{\nabla f_n}_{L^2(\R^3)}^2\geq \norm{\nabla f}_{L^2(\R^3)}^2, \qquad \liminf \norm{f_n}_{L^2(\R^3)}^2\geq \norm{f}_{L^2(\R^3)}^2,$$
$\mu>0$ and to $\pscalSM{|Q|^p f_n,f_n}_{L^2(\R^3)}\to\pscalSM{|Q|^p f,f}_{L^2(\R^3)}$, for $p=\frac23$ and $p=\frac43$, obtained by a similar argument to the one in proof of~\clm{K_complete_model_Lplus_c_elliptic_liminf_ineq}. It implies that
$$\pscalSM{L^+_\mu f,f}_{L^2(\R^3)}=0$$
hence, $f=\sum_{i=1}^3 c_i\partial_{x_i}Q$ by the Perron-Frobenius property and since $\{\frac{x_1}r,\frac{x_2}r,\frac{x_3}r\}$ is an orthogonal basis of $\vect\{Y_{1}^{-1},Y_{1}^{0},Y_{1}^{1}\}$. However, since $\pscalSM{f_n,Q\nabla|\cdot|^{-1}}_{L^2(\R^3)}=0$, we have for any $i=1,2,3$ after passing to the weak limit that
$$\int_{\R^3}{\frac{x_i}{|x|^{3}}f(x)Q(x)\dd x}=0.$$
We then remark that, since $Q$ is radial, we have
$$\int_{\R^3}{\frac{x_i}{|x|^{3}}Q(x)\partial_{x_j}Q(x)\dd x}=\int_{\R^3}{\frac{x_jx_i}{|x|^4}Q(x)Q'(x) \dd x}=0, \qquad \forall i\neq j.$$
This gives, for $i=1,2,3$, that
$$0=\int_{\R^3}{\frac{x_i}{|x|^{3}}f(x)Q(x)\dd x}=c_i\int_{\R^3}{\frac{{x_i}^2}{|x|^4}Q(x)Q'(x) \dd x}$$
but $Q>0$ and $Q'<0$, hence $c_i=0$ thus $f\equiv0$. We thus have obtained, if $\alpha_1=0$, that any minimizing sequence $f_n$ to~\eqref{Weinstein_type_proposition_definition_alpha1} converges weakly to $0$ in $H^1(\R^3)$. This gives $\pscalSM{|Q|^p f_n,f_n}_{L^2(\R^3)}\to0$ and
$$\norm{\nabla f_n}_{L^2(\R^3)}^2+\mu\norm{f_n}_{L^2(\R^3)}^2=\pscalSM{L^+_\mu f_n,f_n}_{L^2(\R^3)}+o(1)\to\alpha_1=0$$
therefore $f_n\to0$ strongly in $H^1(\R^3)$, because $\mu>0$, which contradicts the fact that $\norm{f_n}_{L^2(\R^3)}=1$. We have thus proved that $\alpha_1>0$.

\medskip
\noindent\textbf{Proof for the sector $\ell\geq2$.} We now deal with the cases $\ell\geq2$ and prove that there exists $\alpha>0$, independent of $\ell$, such that
\begin{equation}\label{Weinstein_type_proposition_ineq_to_prove_ell_more_2}
\pscalSM{L^+_{\mu,\ell} \phi,\phi}_{L^2(\R_+,r^2\dd r)}\geq\alpha\norm{\phi}_{L^2(\R_+,r^2\dd r)}^2
\end{equation}
for any $\phi\in L^2(\R_+,r^2\dd r)$. Since for such $\phi$ we have
\begin{equation}\label{Comparison_L_plus_ell_s}
\pscalSM{L^+_{\mu,\ell} \phi,\phi}_{L^2(\R_+,r^2\dd r)}=\pscalSM{L^+_{\mu,\ell-1} \phi,\phi}_{L^2(\R_+,r^2\dd r)}+2\ell\norm{\phi/r}_{L^2(\R_+,r^2\dd r)}^2,
\end{equation}
it is then sufficient to prove~\eqref{Weinstein_type_proposition_ineq_to_prove_ell_more_2} in the case $\ell=2$ in order to prove it for all $\ell\geq2$.

For $\ell=2$, we can assume that $\inf\sigma(L^+_{\mu,2})$ is attained because, otherwise,
$$V:=\frac73 c_{TF} |Q_\mu|^{\frac43}-\frac53 |Q_\mu|^{\frac23}$$
being bounded and vanishing as $r\to\infty$, it is well-known that $\sigma(L^+_{\mu,2})=\sigma_{\textrm{ess}}(L^+_{\mu,2})=[\mu;+\infty)$ and~\eqref{Weinstein_type_proposition_ineq_to_prove_ell_more_2} follows. We thus have, by~\eqref{Comparison_L_plus_ell_s} and $L^+_{\mu,1}\geq0$, that the eigenvalue $e_{\mu,2}=\inf\sigma(L^+_{\mu,2})$ and its associated eigenfunction $\phi_2\nequiv0$ verify that
$$e_{\mu,2}=\inf\sigma(L^+_{\mu,2})\geq 2\frac{\norm{\phi_2/r}_{L^2(\R_+,r^2\dd r)}^2}{\norm{\phi_2}_{L^2(\R_+,r^2\dd r)}^2}>0$$
and~\eqref{Weinstein_type_proposition_ineq_to_prove_ell_more_2} is therefore proved. It concludes the case $\ell\geq2$.

\medskip
\noindent\textbf{Proof for the sector $\ell=0$.} We conclude with the case $\ell=0$ and prove that for any $f\in{\mathcal H}_{(0)}$, we have
\begin{equation}\label{Weinstein_type_proposition_definition_alpha0}
\alpha_0:=\inf\limits_{\substack{f\in{\mathcal H}_{(0)}\cap H^1(\R^3)\\ \pscalSM{f,Q}_{L^2(\R^3)}=0}}\frac{\pscalSM{L^+_\mu f,f}_{L^2(\R^3)}}{\norm{f}_{L^2(\R^3)}^2}>0.
\end{equation}
We already know that $\alpha_0\geq0$ because $Q$ is a minimizer. Indeed, for $f\in H^1(\R^3)$ such that $\pscalSM{f,Q}_{L^2(\R^3)}=0$, through a computation similar to~\eqref{K_complete_model_expansion_around_minimizer_of_energy} and using~\eqref{R3_eff_model_EulerLagrange}, \eqref{R3_eff_model_EulerLagrange_mu_formulae}, \eqref{Taylor_expansion_power_terms} and that $Q$ is a minimizer of $J_{\R^3}(\lambda)$, we obtain
\begin{multline*}
{\mathscr J}_{\R^3}(Q)\leq{\mathscr J}_{\R^3}\left(\frac{Q+\epsilon f}{\norm{Q+\epsilon f}_2}\norm{Q}_2\right)\\
	={\mathscr J}_{\R^3}(Q)+\epsilon^2(\pscalSM{L^+_\mu \Re{f},\Re{f}}_{L^2(\R^3)}+\pscalSM{L^-_\mu \Im{f},\Im{f}}_{L^2(\R^3)})+o(\epsilon^2)
\end{multline*}
which implies in particular that $\pscalSM{L^+_\mu f,f}_{L^2(\R^3)}\geq0$ for as soon as $\pscalSM{f,Q}_{L^2(\R^3)}=0$.

We thus suppose $\alpha_0=0$ and prove it is impossible. Let $f_n$ be a minimizing sequence to~\eqref{Weinstein_type_proposition_definition_alpha0} with $\norm{f_n}_{L^2(\R^3)}=1$. As in the proof of case $\ell=1$ above, $f_n$ is in fact bounded in $H^1(\R^3)$ and denoting by $f\in {\mathcal H}_{(0)}$ its weak limit in $H^1(\R^3)$, up to a subsequence, we have $\pscalSM{L^+_\mu f,f}_{L^2(\R^3)}=0$. This leads, to $L^+_\mu f=\beta Q$ thus, using that $L^+_\mu$ is inversible, to $f=\beta(L^+_\mu)^{-1}Q$. Consequently,
$$0=\pscalSM{f,Q}_{L^2(\R^3)}=\beta\pscalSM{Q,(L^+_\mu)^{-1}Q}_{L^2(\R^3)}$$
hence $\beta=0$ since $\pscalSM{Q,(L^+_\mu)^{-1}Q}_{L^2(\R^3)}<0$ by~\ccjt{R3_eff_model_conjecture_uniqueness_and_monotony_M}. We have obtained $f\equiv0$ which is absurd as before.
\end{proof}
This concludes the proof of~\cth{R3_eff_model_consequence_conjecture_uniqueness_and_monotony_M}.\qed

%\bibliographystyle{siam}
%\bibliography{/Users/julien/Dropbox/LaTeX/biblio}

\end{document}